\tikzset{%
  >=latex',%
  single step/.style={>=to,->},%
  etc/.style={edge from parent/.style={-,dotted,thick,draw}},%
  shorten/.style={shorten >=#1, shorten <=#1},%
  level distance=1cm,%
  sibling distance = 10mm,%
  edge from parent/.style={->,draw},%
  fun/.style={>=to,->},%
  node name/.style args={#1:#2}{label={[black!70]#1:\small$#2$}}%
  }%
\newcommand\wsuc\oh
\newcommand\prefix[2]{#1|_{#2}}
\newcommand\nodeAtPos[2]{\symb{node}_{#1}(#2)}
\newcommand\predAcy[2]{\symb{Pre}^a_{#1}(#2)}
\newcommand\nn[2]{#2^{[#1]}}
\newcommand\lebot{\le_\bot}
\newcommand\lebotg{\lebot^\calG}
\newcommand\concat{\cdot}
\newcommand\Concat{\prod}
\newcommand\glb{\sqcap}
\newcommand\Lub{\bigsqcup}
\newcommand\Glb{\bigsqcap}
\newcommand\similar[2]{\symb{sim}(#1,#2)}
\newcommand\similara[2]{\abssim{\trunca{}{}}(#1,#2)}
\newcommand\abssim[1]{\symb{sim}_{#1}}
\newcommand\abssimp[1]{\symb{sim}'_{#1}}
\newcommand\dd{\mathbf{d}}
\newcommand\dda{\dd_{\trunca{}{}}}
\newcommand\trunc[2]{#1\mathclose{\wr}#2}
\newcommand\trunca[2]{#1{\upharpoonright}#2}
\newcommand\truncl[2]{#1{\upharpoonright}#2}
\newcommand\fNodes[2]{N^{#1}_{=#2}}
\newcommand\tNodes[2]{N^{#1}_{<#2}}
\newcommand\nats{{\mathbb N}}
\newcommand\natsp{\nats^+}
\newcommand\reals{{\mathbb R}}
\newcommand\realsp{\reals^+}
\newcommand\realsnn{\reals^+_0}
\renewcommand\epsilon{\varepsilon}
\newcommand\id{\symb{id}}
\newcommand\abs[1]{\left\lvert#1\right\rvert}
\newcommand\oh\widehat
\newcommand\ul\underline
\newcommand\ol\overline
\newcommand\ot\widetilde
\newcommand\canon[1]{\calC(#1)}
\newcommand\unrav[1]{\calU\left(#1\right)}
\newcommand\quotient[2]{#1/\!\raisebox{-.65ex}{\ensuremath{#2}}}
\newcommand\eqc[2]{[#1]_{#2}}
\newcommand\isom{\cong}
\newcommand\calC{\mathcal{C}}
\newcommand\calG{\mathcal{G}}
\newcommand\calN{\mathcal{N}}
\newcommand\calP{\mathcal{P}}
\newcommand\calR{\mathcal{R}}
\newcommand\calT{\mathcal{T}}
\newcommand\calU{\mathcal{U}}
\newcommand\calV{\mathcal{V}}
\newcommand\calPos{\calP}
\mathchardef\mhyphen="2D
\newcommand\fcolon{\colon\,}
\newcommand\funto{\rightarrow}
\newcommand\limto{\rightarrow}
\newcommand\len[1]{\left\lvert #1 \right\rvert}
\newcommand\seq[1]{\langle #1 \rangle}
\newcommand\emptyseq{\seq{}}
\newcommand\srank[1]{\symb{ar}(#1)}
\newcommand\rank[2]{\symb{ar}_{#1}(#2)}
\newcommand\nin{\not\in}
\newcommand\prs{p}
\newcommand\mrs{m}
\newcommand\isoto{\stackrel{}{\ot\rightarrow}}
\newcommand\homto{\rightarrow}
\newcommand{\setcom}[2]{\set{#1\left\vert\vphantom{#1}\,#2\right.}}
\newcommand{\set}[1]{\left\{#1\right\}}
\newcommand\pos[1]{\calPos(#1)}
\newcommand\posBound[2]{\calPos_{\le #1}(#2)}
\newcommand\nodePos[2]{\calPos_{#1}(#2)}
\newcommand\nodePosAcy[2]{\calPos^a_{#1}(#2)}
\newcommand\nodePosMin[2]{\calPos^m_{#1}(#2)}
\newcommand\depth[2]{\symb{depth}_{#1}(#2)}
\newcommand\iptgraphs[1][\Sigma]{\calG^\infty(#1_\bot)}
\newcommand\itgraphs[1][\Sigma]{\calG^\infty(#1)}
\newcommand\ictgraphs[1][\Sigma]{\calG^\infty_\calC(#1)}
\newcommand\ipctgraphs[1][\Sigma]{\calG^\infty_\calC(#1_\bot)}
\newcommand\iterms[1][\Sigma]{%
\calT^\infty(#1)
}
\newcommand\ipterms[1][\Sigma]{%
\calT^\infty(#1_\bot)
}
\newcommand\fterms[1][\Sigma]{%
\calT(#1)
}
\newcommand\vterms[1][\Sigma]{%
\def\termsSig{#1}%
\vtermsI%
}
\newcommand\vtermsI[1][\calV]{%
\def\termsVar{#1}%
\calT(\termsSig,\termsVar)%
}
\newcommand\viterms[1][\Sigma]{%
\def\itermsSig{#1}%
\vitermsI%
}
\newcommand\vitermsI[1][\calV]{%
\def\itermsVar{#1}%
\calT^\infty(\itermsSig,\itermsVar)%
}
\newcommand\symb[1]{\mathsf{#1}}
\newcommand\glab{\symb{lab}}
\newcommand\gsuc{\symb{suc}}
\newcommand\atPos[2]{#1|_{#2}}
\newcommand\substAtPos[3]{#1[#3]_{#2}}
\newcommand\lhs[1]{#1_{l}}
\newcommand\rhs[1]{#1_{r}}
\newcommand\subgraph[2]{#1|_{#2}}
\def\nothing{}
\let\oldTo\to
\newcommand\finleftright{\leftrightarrow}
\newcommand\finright{\oldTo}
\newcommand\finleft{\leftarrow}
\newcommand\weakright{\hookrightarrow}
\newcommand\weakleft{\hookleftarrow}
\newcommand\strongright{\twoheadrightarrow}
\newcommand\strongleft{\twoheadleftarrow}
\newcommand\wmrsright{\mathrel{\hookrightarrow^{\hspace{-10pt}m}\hspace{2pt}}}
\newcommand\wmrsleft{\mathrel{\hookleftarrow^{\hspace{-7pt}m}\hspace{0pt}}}
\newcommand\wprsright{\mathrel{\hookrightarrow^{\hspace{-8pt}p}\hspace{4pt}}}
\newcommand\wprsleft{\mathrel{\hookleftarrow^{\hspace{-5pt}p}\hspace{0pt}}}
\newcommand\mrsright{\mathrel{\twoheadrightarrow^{\hspace{-10pt}m}\hspace{2pt}}}
\newcommand\mrsleft{\mathrel{\twoheadleftarrow^{\hspace{-7pt}m}\hspace{0pt}}}
\newcommand\prsright{\mathrel{\twoheadrightarrow^{\hspace{-8pt}p}\hspace{4pt}}}
\newcommand\prsleft{\mathrel{\twoheadleftarrow^{\hspace{-5pt}p}\hspace{0pt}}}
\newcommand{\RewArr}[2] {
  \RewStmt{#1}{\nothing}{#2}
}
\newcommand{\RewStmt}[3] {
  \def\RewArrArr{#1}
  \def\RewArrRhs{#2}
  \def\RewArrIter{#3}
  \RewArrI
}
\makeatletter \newcommand{\RewArrI}[1][\nothing] { \def\RewArrCxt{#1}
  \ifthenelse{\equal{\RewArrArr}{\oldTo} \OR
    \equal{\RewArrArr}{\finright} \OR
    \equal{\RewArrArr}{\finleftright} \OR
    \equal{\RewArrArr}{\prsright} \OR
    \equal{\RewArrArr}{\mrsright} \OR
    \equal{\RewArrArr}{\wprsright} \OR
    \equal{\RewArrArr}{\wmrsright} \OR
    \equal{\RewArrArr}{\strongright} \OR
    \equal{\RewArrArr}{\weakright}} {
    \RewArrArr\ifthenelse{\equal{\RewArrIter}{\nothing}}{}{^{\RewArrIter}}\ifthenelse{\equal{\RewArrCxt}{\nothing}}{}{_{\RewArrCxt}}
  } { \ifthenelse{\equal{\RewArrArr}{\finleft} \OR
      \equal{\RewArrArr}{\prsleft} \OR
      \equal{\RewArrArr}{\mrsleft} \OR
      \equal{\RewArrArr}{\wprsleft} \OR
      \equal{\RewArrArr}{\wmrsleft} \OR
      \equal{\RewArrArr}{\strongleft} \OR
      \equal{\RewArrArr}{\weakleft}}{
      \RewArrArr\ifthenelse{\equal{\RewArrIter}{\nothing}}{}{^{\RewArrIter}}\ifthenelse{\equal{\RewArrCxt}{\nothing}}{}{_{\RewArrCxt}}
    } { \latex@error{Rewrite arrow not defined}\@ehc } } \RewArrRhs }
\newcommand{\twoheadleftrightarrow}{\twoheadleftarrow\hspace{-7pt}\twoheadrightarrow}
\renewcommand{\to}{\RewArr{\finright}{\nothing}}
\newcommand{\fto}{\RewArr{\finright}}
\newcommand{\pacont}{\RewStmt{\prsright}{\dots}{\nothing}}
\newcommand{\pato}{\RewArr{\prsright}{\nothing}}
\newcommand{\macont}{\RewStmt{\mrsright}{\dots}{\nothing}}
\newcommand{\mato}{\RewArr{\mrsright}{\nothing}}
\newcommand{\wpacont}{\RewStmt{\wprsright}{\dots}{\nothing}}
\newcommand{\wpato}{\RewArr{\wprsright}{\nothing}}
\newcommand{\wmacont}{\RewStmt{\wmrsright}{\dots}{\nothing}}
\newcommand{\wmato}{\RewArr{\wmrsright}{\nothing}}
\newcommand{\join}[1][\nothing]
{
  \ifthenelse{\equal{#1}{\nothing}}{\downarrow}{\downarrow_{#1}}
}
\newcommand{\wconv}[1][\nothing]
{
  \ifthenelse{\equal{#1}{\nothing}}{\leftrightarrow^{w}}{\leftrightarrow^{w}_{#1}}
}
\newcommand{\sconv}[1][\nothing]
{
  \ifthenelse{\equal{#1}{\nothing}}{\twoheadleftrightarrow}{\twoheadleftrightarrow_{#1}}
}
\newcommand\cons{{\,:\,}}
\theoremstyle{plain}\newtheorem{theorem}{Theorem}[section]
\theoremstyle{plain}\newtheorem{proposition}[theorem]{Proposition}
\theoremstyle{plain}\newtheorem{fact}[theorem]{Fact}
\theoremstyle{plain}\newtheorem{lemma}[theorem]{Lemma}
\theoremstyle{plain}\newtheorem{corollary}[theorem]{Corollary}
\theoremstyle{definition}\newtheorem{definition}[theorem]{Definition}
\theoremstyle{definition}\newtheorem{rem}[theorem]{Remark}
\theoremstyle{definition}\newtheorem{example}[theorem]{Example}
\title{Infinitary Term Graph Rewriting}
\author{Patrick Bahr
  \\Department of Computer Science,\\University of Copenhagen
  \\
  Universitetsparken 1, 2100 Copenhagen, Denmark\\
\texttt{paba@diku.dk}}
\begin{document}
\bibliographystyle{plain}%
\maketitle

\begin{abstract}
  \noindent 
  Term graph rewriting provides a formalism for implementing term
  rewriting in an efficient manner by avoiding duplication. Infinitary
  term rewriting has been introduced to study infinite term reduction
  sequences. Such infinite reductions can be used to reason about lazy
  evaluation. In this paper, we combine term graph rewriting and
  infinitary term rewriting thereby addressing both components of lazy
  evaluation: non-strictness and sharing. Moreover, we show how our
  theoretical underpinnings, based on a metric space and a complete
  semilattice, provides a unified framework for both term rewriting
  and term graph rewriting. This makes it possible to study the
  correspondences between these two worlds. As an example, we show how
  the soundness of term graph rewriting w.r.t.\ term rewriting can be
  extended to the infinitary setting.
\end{abstract}

\section*{Introduction}
\label{S:one}

\emph{Infinitary term rewriting} \cite{kennaway03book} extends the
theory of term rewriting by giving a meaning to transfinite reductions
instead of dismissing them as undesired and meaningless
artifacts. \emph{Term graphs}, on the other hand, allow to explicitly
represent and reason about sharing and recursion \cite{ariola97ic} by
dropping the restriction to a tree structure that we have for
terms. Apart from that, term graphs also provide a finite
representation of certain infinite terms, viz.\ \emph{rational
  terms}. As Kennaway et
al.~\cite{kennaway95segragra,kennaway94toplas} have shown, this can be
leveraged in order to finitely represent restricted forms of
infinitary term rewriting using \emph{term graph rewriting}.

However, in order to benefit from this, we need to know for which
class of term rewriting systems the set of rational terms is closed
under (normalising) reductions. One such class of systems -- a rather
restrictive one -- is the class of \emph{regular equation systems}
\cite{courcelle83tcs} which consist of rules having only constants on
their left-hand side. Having an understanding of infinite reductions
over term graphs could help to investigate closure properties of
rational terms in the setting of infinitary term rewriting.

By studying infinitary calculi of term graph rewriting, we can also
expect to better understand calculi with explicit sharing and/or
recursion. Due to the lack of finitary confluence of these systems,
Ariola and Blom \cite{ariola02apal} resort to a notion of skew
confluence in order to be able to define infinite normal forms. An
appropriate infinitary calculus could provide a direct approach to
define infinite normal forms.

Historically, the theory of infinitary term rewriting is mostly based
on the metric space of terms \cite{kennaway03book}. Its notion of
convergence captures ``well-behaved'' transfinite reductions. A more
structured approach, based on the complete semilattice structure of
terms, yields a conservative extension of the metric calculus of
infinitary term rewriting~\cite{bahr10rta2} that allows local
divergence.

In previous work \cite{bahr11rta}, we have carefully devised a
complete metric space and a complete semilattice of term graphs in
order to investigate different modes of convergence for term
graphs. The resulting theory allows to treat infinitary term rewriting
as well as graph rewriting in the same theoretical framework. While
the devised metric and partial order on term graphs manifests the same
compatibility that is known for terms \cite{bahr10rta2}, it is too
restrictive as we will illustrate.

In this paper, we follow a different approach by taking the arguably
simplest generalisation of the metric space and the complete
semilattice of terms to term graphs. While the notion of convergence
in these structures has some oddities which makes them somewhat
incompatible, we will show that these incompatibilities vanish once we
move from the weak notion of convergence that was considered in
\cite{bahr11rta} to the much more well-behaved strong notion of
convergence \cite{kennaway95ic}. More concretely, we will show that,
w.r.t.\ strong convergence, the metric calculus of infinitary term
graph rewriting is the \emph{total fragment} of the partial order
calculus of infinitary term graph rewriting.

We show that our simple approach to infinitary term graph rewriting
yields simple limit constructions that makes them easy to relate to
the limit constructions on terms. As a result of that we are able to
generalise the soundness result as well as a limited completeness
result for term graph rewriting \cite{kennaway94toplas} to the
infinitary setting.

\section{Preliminaries}
\label{sec:preliminaries}

We assume the reader to be familiar with the basic theory of ordinal
numbers, orders and topological spaces \cite{kelley55book}, as well as
term rewriting \cite{terese03book}. In the following, we
briefly recall the most important notions.

\subsection{Sequences}
We use $\alpha, \beta, \gamma, \lambda, \iota$ to denote ordinal
numbers. A \emph{sequence} $S$ of length $\alpha$ in a set $A$,
written $(a_\iota)_{\iota < \alpha}$, is a function from $\alpha$ to
$A$ with $\iota \mapsto a_\iota$ for all $\iota \in \alpha$. We use
$\len{S}$ to denote the length $\alpha$ of $S$. If $\alpha$ is a limit
ordinal, then $S$ is called \emph{open}. Otherwise, it is called
\emph{closed}. If $\alpha$ is a finite ordinal, then $S$ is called
\emph{finite}. Otherwise, it is called \emph{infinite}. For a finite
sequence $(a_i)_{i < n}$ we also use the notation
$\seq{a_0,a_1,\dots,a_{n-1}}$. In particular, $\emptyseq$ denotes an
empty sequence.

The \emph{concatenation} $(a_\iota)_{\iota<\alpha}\concat
(b_\iota)_{\iota<\beta}$ of two sequences is the sequence
$(c_\iota)_{\iota<\alpha+\beta}$ with $c_\iota = a_\iota$ for $\iota <
\alpha$ and $c_{\alpha+\iota} = b_\iota$ for $\iota < \beta$. A
sequence $S$ is a (proper) \emph{prefix} of a sequence $T$, denoted $S
\le T$ (resp.\ $S < T$), if there is a (non-empty) sequence $S'$ with
$S\concat S' = T$. The prefix of $T$ of length $\beta$ is denoted
$\prefix{T}{\beta}$. The binary relation $\le$ forms a complete
semilattice. Similarly, a sequence $S$ is a (proper) \emph{suffix} of
a sequence $T$ if there is a (non-empty) sequence $S'$ with $S'\concat
S = T$.

Let $S = (a_\iota)_{\iota < \alpha}$ be a sequence. A sequence $T =
(b_{\iota})_{\iota < \beta}$ is called a \emph{subsequence} of $S$ if
there is a monotone function $f\fcolon \beta \to \alpha$ such that
$b_\iota = a_{f(\iota)}$ for all $\iota < \beta$. The subsequence $S$
is called \emph{finial} if $f$ is cofinal, i.e.\ if for each $\iota <
\beta$ there is some $\gamma < \alpha$ with $f(\gamma) \ge \iota$.

\subsection{Metric Spaces}
\label{sec:metric-spaces}

A pair $(M,\dd)$ is called a \emph{metric space} if $\dd \fcolon M
\times M \to \realsnn$ is a function satisfying $\dd(x,y) = 0$ iff
$x=y$ (identity), $\dd(x, y) = \dd(y, x)$ (symmetry), and $\dd(x, z)
\le \dd(x, y) + \dd(y, z)$ (triangle inequality), for all $x,y,z\in
M$.  If $\dd$ instead of the triangle inequality, satisfies the
stronger property $\dd(x, z) \le \max \set{ \dd(x, y),\dd(y, z)}$
(strong triangle), then $(M,\dd)$ is called an \emph{ultrametric
  space}. Let $(a_\iota)_{\iota<\alpha}$ be a sequence in a metric
space $(M,\dd)$. The sequence $(a_\iota)_{\iota<\alpha}$
\emph{converges} to an element $a\in M$, written
$\lim_{\iota\limto\alpha} a_\iota$, if, for each $\varepsilon \in
\realsp$, there is a $\beta < \alpha$ such that $\dd(a,a_\iota) <
\varepsilon$ for every $\beta < \iota < \alpha$;
$(a_\iota)_{\iota<\alpha}$ is \emph{continuous} if
$\lim_{\iota\limto\lambda} a_\iota = a_\lambda$ for each limit ordinal
$\lambda < \alpha$. The sequence $(a_\iota)_{\iota<\alpha}$ is called
\emph{Cauchy} if, for any $\varepsilon \in \realsp$, there is a
$\beta<\alpha$ such that, for all $\beta < \iota < \iota' < \alpha$,
we have that $\dd(m_\iota,m_{\iota'}) < \varepsilon$.  A metric space
is called \emph{complete} if each of its non-empty Cauchy sequences
converges.

Note that the limit of a converging sequence is preserved by taking
cofinal subsequences:
\begin{proposition}[invariance of the limit]
  \label{prop:convSubseq}
  Let $(a_i)_{i<\alpha}$ be a sequence in a metric space $(A,\dd)$. If
  $\lim_{\iota\limto\alpha} a_\iota = a$ then $\lim_{\iota \limto
    \beta} b_\iota = a$ for any cofinal subsequence $(b_i)_{i<\beta}$ of
  $(a_i)_{i<\alpha}$.
\end{proposition}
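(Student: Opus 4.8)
The plan is to unwind the definitions directly; the statement is essentially bookkeeping with ordinals. Write $f \fcolon \beta \limto \alpha$ for the monotone function witnessing that $(b_i)_{i<\beta}$ is a subsequence of $(a_i)_{i<\alpha}$, so that $b_\iota = a_{f(\iota)}$ for all $\iota < \beta$, and recall that \emph{cofinal} means that for every $\gamma < \alpha$ there is some $\iota < \beta$ with $f(\iota) \ge \gamma$. Since the limit condition is vacuous when $\alpha$ is a successor ordinal, the case of interest is $\alpha$ a limit ordinal; then cofinality of $f$ also forces $\beta$ to be a limit ordinal, as the image of $f$ has no greatest element.

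Next I would fix an arbitrary $\varepsilon \in \realsp$ and invoke the hypothesis $\lim_{\iota\limto\alpha} a_\iota = a$ to obtain some $\gamma < \alpha$ with $\dd(a, a_\iota) < \varepsilon$ for all $\iota$ with $\gamma < \iota < \alpha$. Since $\alpha$ is a limit ordinal, $\gamma + 1 < \alpha$, so cofinality of $f$ yields some $\delta < \beta$ with $f(\delta) \ge \gamma + 1$, in particular $f(\delta) > \gamma$. For every $\iota$ with $\delta < \iota < \beta$, monotonicity of $f$ then gives $\gamma < f(\delta) \le f(\iota) < \alpha$, whence $\dd(a, b_\iota) = \dd(a, a_{f(\iota)}) < \varepsilon$. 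Since $\varepsilon$ was arbitrary, $\delta$ witnesses $\lim_{\iota\limto\beta} b_\iota = a$, which is the claim.

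I do not expect a genuine obstacle. The only two points that need care are the extraction of the threshold $\delta < \beta$ from cofinality of $f$ — this is where one uses $\gamma + 1 < \alpha$, i.e.\ that $\alpha$ is a limit ordinal — and the trivial but essential observation that $f(\iota) < \alpha$ for all $\iota < \beta$, so that the $\gamma$ chosen from the convergence hypothesis really does govern the distance $\dd(a, a_{f(\iota)})$.
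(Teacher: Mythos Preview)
The paper states this proposition without proof, treating it as a standard fact about convergence in metric spaces; there is therefore no argument in the paper to compare against. Your direct unwinding of the definitions is correct for the case that matters, namely $\alpha$ a limit ordinal, and this is the only case in which the proposition is invoked later (in the proof of Theorem~\ref{thr:mConvSound}). One small remark: your dismissal of the successor case as ``vacuous'' is slightly imprecise, since for successor $\alpha$ the hypothesis $\lim_{\iota\to\alpha} a_\iota = a$ is satisfied by \emph{every} $a$, while a cofinal subsequence of limit length would be eventually constant and hence converge only to one specific value; but this degeneracy is an artefact of the paper's definition of convergence at successor ordinals and is irrelevant to the paper's use of the proposition.
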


\subsection{Partial Orders}
\label{sec:partial-orders}
A \emph{partial order} $\le$ on a set $A$ is a binary relation on $A$
that is \emph{transitive}, \emph{reflexive}, and
\emph{antisymmetric}. The pair $(A,\le)$ is then called a
\emph{partially ordered set}. A subset $D$ of the underlying set $A$
is called \emph{directed} if it is non-empty and each pair of elements
in $D$ has an upper bound in $D$. A partially ordered set $(A, \le)$
is called a \emph{complete partial order} (\emph{cpo}) if it has a
least element and each directed set $D$ has a \emph{least upper bound}
(\emph{lub}) $\Lub D$. A cpo $(A, \le)$ is called a \emph{complete
  semilattice} if every \emph{non-empty} set $B$ has \emph{greatest
  lower bound} (\emph{glb}) $\Glb B$. In particular, this means that
for any non-empty sequence $(a_\iota)_{\iota<\alpha}$ in a complete
semilattice, its \emph{limit inferior}, defined by $\liminf_{\iota
  \limto \alpha}a_\iota = \Lub_{\beta<\alpha} \left(\Glb_{\beta \le
    \iota < \alpha} a_\iota\right)$, always exists.

It is easy to see that the limit inferior of closed sequences is
simply the last element of the sequence. This is, however, only a
special case of the following more general proposition:
\begin{proposition}[invariance of the limit inferior]
  \label{prop:liminfSuffix}
  Let $(a_\iota)_{\iota < \alpha}$ be a sequence in a partially
  ordered set and $(b_\iota)_{\iota< \beta}$ a non-empty suffix of
  $(a_\iota)_{\iota < \alpha}$. Then $\liminf_{\iota \limto \alpha}
  a_\iota = \liminf_{\iota \limto \beta} b_\iota$.
\end{proposition}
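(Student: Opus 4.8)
The plan is to rewrite both limits inferior so that they become literally the same expression, using two elementary facts about the family of inner greatest lower bounds.

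First I would make the suffix relation explicit: since $(b_\iota)_{\iota<\beta}$ is a non-empty suffix of $(a_\iota)_{\iota<\alpha}$, there is a sequence $(a_\iota)_{\iota<\gamma}$ with $(a_\iota)_{\iota<\gamma}\concat(b_\iota)_{\iota<\beta} = (a_\iota)_{\iota<\alpha}$; thus $\alpha = \gamma + \beta$, we have $b_\iota = a_{\gamma+\iota}$ for all $\iota<\beta$, and $\gamma<\alpha$ because $\beta>0$. Abbreviate $g_\delta = \Glb_{\delta\le\iota<\alpha} a_\iota$ for $\delta<\alpha$ and $g'_\delta = \Glb_{\delta\le\iota<\beta} b_\iota$ for $\delta<\beta$ (we take these to be defined, e.g.\ because we work in a complete semilattice; the argument only uses that the inner greatest lower bounds and the resulting least upper bounds exist). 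Two observations drive the proof. (i) \emph{Monotonicity}: $\delta\mapsto g_\delta$ is monotone on $[0,\alpha)$, since $\delta\le\delta'<\alpha$ implies $\setcom{a_\iota}{\delta'\le\iota<\alpha}\subseteq\setcom{a_\iota}{\delta\le\iota<\alpha}$, so $g_\delta$ is a lower bound of the smaller (non-empty) family and hence $g_\delta\le g_{\delta'}$. (ii) \emph{Reindexing}: for $\gamma\le\delta<\alpha$ write $\delta=\gamma+\delta'$ with $\delta'<\beta$; then the cancellation properties of ordinal addition give $\setcom{\iota}{\delta\le\iota<\alpha} = \setcom{\gamma+\iota'}{\delta'\le\iota'<\beta}$, and since $a_{\gamma+\iota'}=b_{\iota'}$ we obtain $g_\delta = g'_{\delta'}$.

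Finally I would combine the pieces. Because $\gamma<\alpha$, the index set $\setcom{\delta}{\gamma\le\delta<\alpha}$ is cofinal in $\alpha$, and for a monotone family this means that every upper bound of $(g_\delta)_{\gamma\le\delta<\alpha}$ is already an upper bound of $(g_\delta)_{\delta<\alpha}$ (given $\delta<\alpha$, pick $\delta'$ in the cofinal set with $\delta'\ge\delta$ and use $g_\delta\le g_{\delta'}$); hence the two families have the same set of upper bounds and therefore the same least upper bound. Together with the reindexing this yields
\[
  \liminf_{\iota\limto\alpha} a_\iota = \Lub_{\delta<\alpha} g_\delta = \Lub_{\gamma\le\delta<\alpha} g_\delta = \Lub_{\delta'<\beta} g'_{\delta'} = \liminf_{\iota\limto\beta} b_\iota,
\]
which is the claim. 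The only mildly delicate point is the cofinality step: instead of appealing to the metric-space Proposition~\ref{prop:convSubseq}, one must argue directly, via upper bounds, that restricting a monotone family to a cofinal index set preserves the least upper bound. Everything else is routine bookkeeping with ordinal addition, and the special case $\beta=1$ recovers the remark preceding the proposition, that the limit inferior of a closed sequence is its last element.
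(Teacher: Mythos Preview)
Your proof is correct and follows essentially the same approach as the paper's: both arguments rest on the monotonicity of the inner greatest lower bounds $g_\delta$ and the observation that restricting to the cofinal index set starting at the suffix position does not change the least upper bound. Your version is in fact more carefully stated---you distinguish the starting index $\gamma$ from the suffix length $\beta$ and make the reindexing $g_{\gamma+\delta'} = g'_{\delta'}$ explicit, whereas the paper silently conflates the two uses of $\beta$---but the mathematical content is the same.
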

\begin{proof}
  We have to show that $\Lub_{\gamma<\alpha} \Glb_{\gamma \le \iota <
    \alpha} a_\iota = \Lub_{\beta \le \gamma<\alpha} \Glb_{\gamma \le
    \iota < \alpha} a_\iota = \ol a'$ holds for each $\beta < \alpha$.
  Let $b_\gamma = \Glb_{\gamma \le \iota < \alpha} a_\iota$ for each
  $\gamma < \alpha$, $A = \setcom{b_\gamma}{\gamma < \alpha}$ and $A'
  = \setcom{b_\gamma}{\beta \le \gamma < \alpha}$. Note that $\ol a =
  \Lub A$ and $\ol a' = \Lub A'$. Because $A' \subseteq A$, we have
  that $\ol a' \le \ol a$. On the other hand, since $b_\gamma \le
  b_{\gamma'}$ for $\gamma \le \gamma'$, we find, for each $b_\gamma
  \in A$, some $b_{\gamma'} \in A'$ with $b_\gamma \le
  b_{\gamma'}$. Hence, $\ol a \le \ol a'$. Therefore, due to the
  antisymmetry of $\le$, we can conclude that $\ol a = \ol a'$.
\end{proof} 
Note that the limit in a metric space has the same behaviour as the
one for the limit inferior described by the proposition
above. However, one has to keep in mind that -- unlike the limit --
the limit inferior is not invariant under taking cofinal subsequences!

With the prefix order $\le$ on sequences we can generalise
concatenation to arbitrary sequences of sequences: Let
$(S_\iota)_{\iota < \alpha}$ be a sequence of sequences in a common
set. The concatenation of $(S_\iota)_{\iota < \alpha}$, written
$\Concat_{\iota < \alpha} S_\iota$, is recursively defined as the
empty sequence $\emptyseq$ if $\alpha = 0$, $\left(\Concat_{\iota <
    \alpha'} S_\iota\right) \concat S_{\alpha'}$ if $\alpha = \alpha'
+ 1$, and $\Lub_{\gamma < \alpha} \Concat_{\iota < \gamma} S_\iota$ if
$\alpha$ is a limit ordinal.

\subsection{Terms}
\label{sec:terms}

Since we are interested in the infinitary calculus of term rewriting,
we consider the set $\iterms$ of \emph{infinitary terms} (or simply
\emph{terms}) over some \emph{signature} $\Sigma$. A \emph{signature}
$\Sigma$ is a countable set of symbols. Each symbol $f$ is associated
with its arity $\srank{f}\in \nats$, and we write $\Sigma^{(n)}$ for
the set of symbols in $\Sigma$ which have arity $n$. The set $\iterms$
is defined as the \emph{greatest} set $T$ such that $t \in T$ implies
$t = f(t_1,\dots, t_k)$, where $f \in \Sigma^{(k)}$, and
$t_1,\dots,t_k\in T$. For each constant symbol $c\in \Sigma^{(0)}$, we
write $c$ for the term $c()$. We consider $\iterms$ as a superset of
the set $\fterms$ of \emph{finite terms}. For a term $t \in \iterms$
we use the notation $\pos{t}$ to denote the \emph{set of positions} in
$t$. $\pos{t}$ is the least subset of $\nats^{*}$ such that $\emptyseq
\in \pos{t}$ and $\pi\concat\seq{i} \in \pos{t}$ if $t =
f(t_1,\dots,t_k)$ with $0 \le i < k$.  For terms $s,t \in \iterms$ and
a position $\pi \in \pos{t}$, we write $\atPos{t}{\pi}$ for the
\emph{subterm} of $t$ at $\pi$, $t(\pi)$ for the function symbol in
$t$ at $\pi$, and $\substAtPos{t}{\pi}{s}$ for the term $t$ with the
subterm at $\pi$ replaced by $s$. A position is also called an
\emph{occurrence} if the focus lies on the subterm at that position
rather than the position itself.

On $\iterms$ a similarity measure $\similar{\cdot}{\cdot} \in \nats
\cup \set{\infty}$ can be
defined by setting
\[
\similar{s}{t} = \min \setcom{\len{\pi}}{\pi \in \pos{s}\cap\pos{t}, s(\pi) \neq
  t(\pi)} \cup \set{\infty} \qquad \text {for } s,t\in \iterms
\]
That is, $\similar{s}{t}$ is the minimal depth at which $s$ and $t$
differ, resp.\ $\infty$ if $s = t$. Based on this, a distance function
$\dd$ can be defined by $\dd(s,t) = 2^{-\similar{s}{t}}$, where we
interpret $2^{-\infty}$ as $0$. The pair $(\iterms, \dd)$ is known to
form a complete ultrametric space \cite{arnold80fi}. \emph{Partial
  terms}, i.e.\ terms over signature $\Sigma_\bot = \Sigma \uplus
\set{\bot}$ with $\bot$ a fresh constant symbol, can be endowed with a
binary relation $\lebot$ by defining $s \lebot t$ iff $s$ can be
obtained from $t$ by replacing some subterm occurrences in $t$ by
$\bot$. Interpreting the term $\bot$ as denoting ``undefined'',
$\lebot$ can be read as ``is less defined than''. The pair
$(\ipterms,\lebot)$ is known to form a complete semilattice
\cite{goguen77jacm}. To explicitly distinguish them from partial
terms, we call terms in $\iterms$ \emph{total}.

\subsection{Term Rewriting Systems}
\label{sec:abstr-reduct-syst}

For term rewriting systems, we have to consider terms with
variables. To this end, we assume a countably infinite set $\calV$ of
variables and extend a signature $\Sigma$ to a signature $\Sigma_\calV
= \Sigma \uplus \calV$ with variables in $\calV$ as nullary
symbols. Instead of $\iterms[\Sigma_\calV]$ we also write $\viterms$.
A \emph{term rewriting system} (TRS) $\calR$ is a pair $(\Sigma, R)$
consisting of a signature $\Sigma$ and a set $R$ of \emph{term rewrite
  rules} of the form $l \to r$ with $l \in \vterms \setminus \calV$
and $r \in \viterms$ such that all variables in $r$ are contained in
$l$. Note that the left-hand side must be a finite term
\cite{kennaway03book}! We usually use $x,y,z$ and primed resp.\
indexed variants thereof to denote variables in $\calV$.

As in the finitary setting, every TRS $\calR$ defines a rewrite
relation $\to[\calR]$:
\[
s \to[\calR] t \iff \exists \pi \in \pos{s}, l\to r \in
R, \sigma\colon\; \atPos{s}{\pi} = l\sigma, t = \substAtPos{s}{\pi}{r\sigma}
\]
Instead of $s \to[\calR] t$, we sometimes write $s \to[\pi,\rho] t$ in
order to indicate the applied rule $\rho$ and the position $\pi$, or
simply $s \to t$. The subterm $\atPos{s}{\pi}$ is called a
\emph{$\rho$-redex} or simply \emph{redex}, $r\sigma$ its
\emph{contractum}, and $\atPos{s}{\pi}$ is said to be
\emph{contracted} to $r\sigma$.

Let $\rho_1\fcolon l_1 \to r_1$, $\rho_2\fcolon l_2 \to r_2$ be rules
in a TRS $\calR$ with variables renamed apart. The rules
$\rho_1,\rho_2$ are said to \emph{overlap} if there is a non-variable
position $\pi$ in $l_1$ such that $\atPos{l_1}{\pi}$ and $l_2$ are
unifiable and $\pi$ is not the root position $\emptyseq$ in case
$\rho_1,\rho_2$ are renamed copies of the same rule. A TRS is called
\emph{non-overlapping} if none of its rules overlap. A term $t \in
\vterms$ is called \emph{linear} if each variable occurs at most once
in $t$. The TRS $\calR$ is called \emph{left-linear} if the left-hand
side of every rule in $\calR$ is linear. It is called
\emph{orthogonal} if it is left-linear and non-overlapping.

\section{Infinitary Term Rewriting}
\label{sec:preliminaries}

Before pondering over the right approach to an infinitary calculus of
term graph rewriting, we want to provide a brief overview of
infinitary term graph
rewriting~\cite{kennaway03book,bahr10rta2,blom04rta}. This should give
a insight into the different approaches to deal with infinite
reductions.

A \emph{(transfinite) reduction} in a term rewriting system $\calR$,
is a sequence $S = (t_\iota \to_{\pi_\iota} t_{\iota +1})_{\iota <
  \alpha}$ of rewriting steps in $\calR$. The reduction $S$ is called
\emph{weakly $\mrs$-continuous}, written $S\fcolon t_0 \wmacont$, if
the sequence of terms $(t_\iota)_{\iota < \wsuc\alpha}$ is continuous,
i.e.\ $\lim_{\iota\limto\lambda} t_\iota = t_\lambda$ for each limit
ordinal $\lambda < \alpha$. The reduction $S$ is said to \emph{weakly
  $\mrs$-converge} to a term $t$, written $S\fcolon t_0 \wmato t$, if
it is weakly $\mrs$-continuous and $\lim_{\iota\limto\alpha} t_\iota =
t$.

For strong convergence, also the positions $\pi_\iota$ at which
reductions take place are taken into consideration: A reduction $S$ is
called strongly \emph{$\mrs$-continuous}, written $S\fcolon t_0
\macont $, if it is weakly $\mrs$-continuous and the depths of redexes
$(\len{\pi_\iota})_{\iota < \lambda}$ tend to infinity for each limit
ordinal $\lambda < \alpha$, i.e.\ $liminf_{\iota \limto \lambda}
\len{\pi_\iota} = \omega$. A reduction $S$ is said to strongly
$\mrs$-converge to $t$, written $S\fcolon t_0 \mato t$, if it weakly
$\mrs$-converges to $t$ and the depths of redexes
$(\len{\pi_\iota})_{\iota < \lambda}$ tend to infinity for each limit
ordinal $\lambda \le \alpha$.

\begin{example}
  \label{ex:termRewr}
  Consider the term rewriting system $\calR$ containing the rule
  $\rho_1\fcolon a \cons x \to b \cons a \cons x$, where $\cons$ is a
  binary symbol that we write infix and assume to associate to the
  right. That is, the right-hand side of the rule is parenthesised as
  $b \cons (a \cons x)$. Think of the $\cons$ symbol as the list
  constructor \emph{cons}. Using $\rho_1$. we have the infinite
  reduction
  \[
  S\fcolon a \cons c \to b \cons a \cons c \to b \cons b \cons a \cons c \to \dots
  \]
  The position at which two consecutive terms differ moves deeper and
  deeper during the reduction $S$. Hence, $S$ weakly $\mrs$-converges
  to the infinite term $s$ satisfying the equation $s = b \cons s$,
  i.e.\ $s = b \cons b \cons b \cons \dots$. Since also the position
  at which the reductions take place moves deeper and deeper, $S$ also
  strongly $\mrs$-converges to $s$.

  Now consider a TRS with the slightly different rule $\rho_2\fcolon a
  \cons x \to a \cons b \cons x$. This yields a reduction
  \[
  S'\fcolon a \cons c \to a \cons b \cons c \to a \cons b \cons b \cons c \to \dots
  \]
  The reduction $S'$ weakly $\mrs$-converges to the term $s' = a \cons
  b \cons b \cons \dots$. However, since in each step in $S'$ takes
  place at the root, it is not strongly $\mrs$-converging.
\end{example}

Strong $\mrs$-convergence is determined by the depth of the redexes
only. The metric space is only used to determine the limit term.
\begin{proposition}[{\cite[Prop.~5.5]{bahr10rta}}]
  \label{prop:strConvDepth}
  Let $S = (t_\iota \to[\pi_\iota] t_{\iota +1})_{\iota < \lambda}$ be
  a strongly $\mrs$-continuous open reduction in a TRS. Then $S$ is
  strongly $\mrs$-convergent iff the sequence
  $(\len{\pi}_\iota)_{\iota<\lambda}$ of redex depths tends to
  infinity.
\end{proposition}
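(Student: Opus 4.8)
The forward implication will be immediate from the definitions: if $S$ strongly $\mrs$-converges, then, by the clause requiring the redex depths to tend to infinity for every limit ordinal $\le\alpha$ (here $\alpha = \lambda$ itself), the sequence $(\len{\pi_\iota})_{\iota<\lambda}$ tends to infinity. So the work lies in the converse. Assume $(\len{\pi_\iota})_{\iota<\lambda}$ tends to infinity. Since $S$ is already strongly $\mrs$-continuous by hypothesis, it remains only to show that the sequence of terms $(t_\iota)_{\iota<\lambda}$ converges in $(\iterms,\dd)$: its limit $t$ then witnesses $S\fcolon t_0 \mato t$, because the redex-depth condition at $\lambda$ is exactly the hypothesis, and at the smaller limit ordinals it is part of strong $\mrs$-continuity.

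To show convergence I would invoke completeness of $(\iterms,\dd)$ and verify the Cauchy condition. Given $\varepsilon \in \realsp$, choose $d \in \nats$ with $2^{-d} \le \varepsilon$, and, using that the redex depths tend to infinity, pick $\beta < \lambda$ with $\len{\pi_\iota} > d$ for all $\beta \le \iota < \lambda$. The key lemma, to be proved by transfinite induction on $\iota$ ranging over $[\beta,\lambda)$, is that $\similar{t_\beta}{t_\iota} > d$, equivalently $\dd(t_\beta,t_\iota) \le 2^{-(d+1)}$. The base case is trivial; in the successor case the single step $t_\gamma \to[\pi_\gamma] t_{\gamma+1}$ takes place at depth $\len{\pi_\gamma} > d$, so $t_\gamma$ and $t_{\gamma+1}$ agree on all positions of depth at most $d$, and transitivity of ``agreeing down to depth $d$'' together with the induction hypothesis closes the case; in the limit case $\iota = \mu$ the induction hypothesis gives $\dd(t_\beta,t_\gamma) \le 2^{-(d+1)}$ for all $\gamma \in [\beta,\mu)$, while strong $\mrs$-continuity gives $t_\mu = \lim_{\gamma\limto\mu} t_\gamma$, so $t_\mu$ lies in the closed ball of radius $2^{-(d+1)}$ about $t_\beta$, i.e.\ $\dd(t_\beta,t_\mu) \le 2^{-(d+1)}$. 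With the lemma in place, the strong triangle inequality yields $\dd(t_\iota,t_{\iota'}) \le \max\set{\dd(t_\beta,t_\iota),\dd(t_\beta,t_{\iota'})} \le 2^{-(d+1)} < \varepsilon$ for all $\beta < \iota < \iota' < \lambda$, which is precisely the Cauchy condition.

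The main obstacle is the limit step of the transfinite induction: this is the only place where strong $\mrs$-continuity genuinely enters, and it relies on the elementary topological fact that closed balls in a metric space are closed, so that the property ``distance to $t_\beta$ at most $2^{-(d+1)}$'', which holds for all $\gamma$ cofinally below $\mu$, is inherited by the limit $t_\mu$. Everything else — the successor step, the reduction of convergence to the Cauchy property via completeness, and the forward implication — is routine unwinding of the definitions.
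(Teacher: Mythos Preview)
The paper does not supply a proof of this proposition; it is imported verbatim from \cite{bahr10rta} (Proposition~5.5 there), and the analogous statement for term graphs, Lemma~\ref{lem:mContConv}, is likewise deferred to that reference. Your argument is a correct self-contained proof: the forward direction is definitional, and for the converse you reduce convergence to the Cauchy property via completeness of $(\iterms,\dd)$, verifying Cauchyness by a transfinite induction whose successor step uses that a rewrite at depth $>d$ leaves positions of depth $\le d$ unchanged and whose limit step uses strong $\mrs$-continuity together with closedness of metric balls. There is nothing in the present paper to compare your approach against.
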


In the partial order model of infinitary rewriting, convergence is
modelled by the limit inferior: A reduction $S = (t_\iota
\to_{\pi_\iota} t_{\iota +1})_{\iota < \alpha}$ of \emph{partial
  terms} is called \emph{weakly $\prs$-continuous}, written $S\fcolon
t_0 \wpacont$, if $\liminf_{\iota<\lambda} t_\iota = t_\lambda$ for
each limit ordinal $\lambda < \alpha$. The reduction $S$ is said to
\emph{weakly $\prs$-converge} to a term $t$, written $S\fcolon t_0
\wpato t$, if it is weakly $\prs$-continuous and
$\liminf_{\iota<\wsuc\alpha} t_\iota = t$.

Again, for strong convergence, the positions $\pi_\iota$ at which
reductions take place are taken into consideration. In particular, we
consider for a reduction step $t_\iota \to[\pi_\iota] t_{\iota+1}$ the
\emph{reduction context} $c_\iota =
\substAtPos{t_\iota}{\pi}{\bot}$. To indicate the reduction context of
a reduction step, we also write $t_\iota \to[c_\iota] t_{\iota+1}$. A
reduction $S = (t_\iota \to_{c_\iota} t_{\iota +1})_{\iota < \alpha}$
is called \emph{strongly $\prs$-continuous}, written $S\fcolon t_0
\pacont$, if $\liminf_{\iota<\lambda} c_\iota = t_\lambda$ for each
limit ordinal $\lambda < \alpha$. The reduction $S$ is said to
\emph{weakly $\prs$-converge} to a term $t$, written $S\fcolon t_0
\pato t$, if it is weakly $\prs$-continuous and either $T$ is closed
with $t = t_\alpha$, or $\liminf_{\iota<\wsuc\alpha} c_\iota = t$.

The distinguishing feature of the partial order approach is that,
given a complete semilattice, each continuous reduction also
converges.  This provides a conservative extension to
$\mrs$-convergence that allows rewriting modulo \emph{meaningless
  terms}~\cite{bahr10rta2} by essentially mapping those parts of the
reduction to $\bot$ that are divergent according to the metric model.

Intuitively, weak $\prs$-convergence on terms describes an
approximation process. To this end, the partial order $\lebot$
captures a notion of \emph{information preservation}, i.e.\ $s \lebot
t$ iff $t$ contains at least the same information as $s$ does but
potentially more. A monotonic sequence of terms $t_0 \lebot t_1 \lebot
\dots$ thus approximates the information contained in $\Lub_{i<\omega}
t_i$. Given this reading of $\lebot$, the glb $\Glb T$ of a set of
terms $T$ captures the common (non-contradicting) information of the
terms in $T$. Leveraging this, a sequence that is not necessarily
monotonic can be turned into a monotonic sequence $t_j = \Glb_{j \le i
  < \omega} s_j$ such that each $t_j$ contains exactly the information
that remains stable in $(s_i)_{i<\omega}$ from $j$ onwards. Hence, the
limit inferior $\liminf_{i \limto \omega} s_i = \Lub_{j <
  \omega}\Glb_{j \le i < \omega} s_i$ is the term that contains the
accumulated information that eventually remains stable in
$(s_i)_{i<\omega}$. This is expressed as an approximation of the
monotonically increasing information that remains stable from some
point on. For the strong variant, instead of the terms $s_\iota$, the
reduction contexts $c_\iota$ are considered. Each reduction context
$c_\iota$ is an underapproximation of the shared structure $s_\iota
\glb s_{\iota+1}$ between two consecutive terms $s_\iota,s_{\iota+1}$.

\begin{example}
  \label{ex:termRewr2}
  Reconsider the system from Example~\ref{ex:termRewr}. The reduction
  $S$ also weakly and strongly $\prs$-converges to $s$. Its sequence
  of stable information $\bot \cons \bot \lebot b \cons \bot \cons
  \bot \lebot b \cons b \cons \bot \cons \bot \lebot \dots$
  approximates $s$. The same also applies to the stricter
  underapproximation $\bot \lebot b \cons \bot \lebot b \cons b \cons
  \bot \lebot \dots$ by reduction contexts. Now consider the rule
  $\rho_1$ together with the rule $\rho_3\fcolon b \cons x \to a \cons
  b \cons x$. Starting with the same term, but applying the two rules
  alternately at the root, we obtain the reduction sequence
  \[
  T\fcolon a \cons c \to b \cons a \cons c \to a \cons b \cons a \cons c \to b \cons a \cons b \cons a \cons c \to \dots
  \]
  Now the differences between two consecutive terms occur right below
  the root symbol ``$\cons$''. Hence, $T$ does not even weakly
  $\mrs$-converge. This, however, only affects the left argument of
  ``$\cons$''. Following the right argument position, the bare list
  structure becomes eventually stable. The sequence of stable
  information $\bot \cons \bot \lebot \bot \cons \bot \cons \bot
  \lebot \bot \cons \bot \cons \bot \cons \bot \lebot \dots$
  approximates the term $t = \bot \cons \bot \cons \bot \dots$. Hence,
  $T$ weakly $\prs$-converge to $t$. Since each reduction takes place
  at the root, each reduction context is $\bot$. Therefore, $T$
  strongly $\prs$-converges to the term $\bot$.
\end{example}

Note that in both the metric and the partial order setting continuity
is simply the convergence of every proper prefix:
\begin{proposition}[{\cite{bahr10rta}}]
  \label{prop:contConv}
  Let $S = (t_\iota \to t_{\iota +1})_{\iota < \alpha}$ be a reduction
  in a TRS. Then $S$ is strongly $\mrs$-continuous iff every proper
  prefix $\prefix{S}{\beta}$ strongly $\mrs$-converges to $t_\beta$
  The same holds for strong $\prs$-continuity/-convergence and weak
  counterparts.
\end{proposition}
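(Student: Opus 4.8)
This proposition is essentially a matter of unfolding the definitions, so I would begin by recording the following reformulation. For a reduction $S = (t_\iota \to[\pi_\iota] t_{\iota+1})_{\iota<\alpha}$ with reduction contexts $c_\iota$ and a limit ordinal $\lambda$, let $P_t(\lambda)$ abbreviate the assertion $\lim_{\iota\limto\lambda} t_\iota = t_\lambda$ in the metric case (resp.\ $\liminf_{\iota\limto\lambda} c_\iota = t_\lambda$ for strong $\prs$-convergence, resp.\ $\liminf_{\iota\limto\lambda} t_\iota = t_\lambda$ for weak $\prs$-convergence), and let $P_d(\lambda)$ abbreviate $\liminf_{\iota\limto\lambda}\len{\pi_\iota} = \omega$. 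Then strong $\mrs$-continuity of $S$ amounts to $P_t(\lambda) \wedge P_d(\lambda)$ holding for every limit $\lambda < \alpha$, while strong $\mrs$-convergence of $S$ to $t$ amounts to $P_t(\lambda) \wedge P_d(\lambda)$ holding for every limit $\lambda \le \alpha$ when $\alpha$ is a limit (reading $t_\alpha := t$), and to strong $\mrs$-continuity together with the trivial $t = t_\alpha$ when $\alpha$ is a successor. The weak $\mrs$ case is obtained by deleting $P_d$, and the two $\prs$ cases by using the corresponding variant of $P_t$ above; here too, for a closed reduction the endpoint condition $t = t_\alpha$ from the definition is trivially satisfied by the actual last term.

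Granting this, the proof of the strong $\mrs$-case splits into the two implications. For ``$\Rightarrow$'', assume $S$ is strongly $\mrs$-continuous and fix $\beta < \alpha$; the prefix $\prefix{S}{\beta}$ has length $\beta$, inherits all terms, contexts and redex positions of $S$ below index $\beta$, and has endpoint term $t_\beta$. Every limit $\lambda < \beta$ is in particular a limit $<\alpha$, so $P_t(\lambda) \wedge P_d(\lambda)$ is already available. If $\beta$ is a successor, nothing further is required; if $\beta$ is a limit, the only additional conditions needed for strong $\mrs$-convergence of $\prefix{S}{\beta}$ to $t_\beta$ are $P_t(\beta)$ and $P_d(\beta)$, which hold since $\beta < \alpha$. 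For ``$\Leftarrow$'', assume every proper prefix strongly $\mrs$-converges to its endpoint, and fix a limit $\lambda < \alpha$. Then $\prefix{S}{\lambda}$ is a proper prefix of limit length $\lambda$ that strongly $\mrs$-converges to $t_\lambda$; reading off its endpoint conditions gives precisely $P_t(\lambda)$ and $P_d(\lambda)$. As $\lambda$ was an arbitrary limit below $\alpha$, $S$ is strongly $\mrs$-continuous.

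The three remaining claims are proved by the verbatim argument with the substitutions listed in the first paragraph; at no point is any structural property of the rewrite relation used beyond these reformulations, so the four proofs run in complete parallel. I do not anticipate a real obstacle: the only points demanding care are the successor-versus-limit case distinction on the prefix length $\beta$ (so that the endpoint condition of $\prefix{S}{\beta}$ is correctly identified as trivial in the first case and as a continuity condition of $S$ in the second) and keeping the quantifier ranges straight, namely ``limit $\lambda < \alpha$'' for continuity against ``limit $\lambda \le \alpha$'' for convergence.
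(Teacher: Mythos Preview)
The paper does not prove this proposition; it merely cites \cite{bahr10rta}. Your argument is correct and is exactly the kind of definition-unfolding one would expect here: continuity is a condition at every limit $\lambda<\alpha$, convergence of a prefix of length $\beta$ is the same condition at every limit $\lambda\le\beta$ (the endpoint condition being vacuous for successor $\beta$), and the equivalence follows by matching quantifier ranges. Your handling of the successor/limit case split on $\beta$ and of the four variants via the substitutions for $P_t$ and $P_d$ is accurate.
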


Moreover, the relation between $\mrs$- and $\prs$-convergence
illustrated in the examples above is characteristic:
$\prs$-convergence is a conservative extension of $\mrs$-convergence.
\begin{theorem}[total $\prs$-convergence = $\mrs$-convergence]
  \label{thr:strongExt}
  For every reduction $S$ in a TRS the following equivalences hold:
  \begin{center}
    \begin{inparaenum}[(i)]
    \item $S\fcolon s \wpato t$ is total iff $S\fcolon s \wmato t$,
      and \label{item:strongExtI}%
      \quad%
    \item $S\fcolon s \pato t$ is total iff $S\fcolon s \mato
      t$. \label{item:strongExtII}
    \end{inparaenum}
  \end{center}
  The same also holds for continuity instead of convergence.
\end{theorem}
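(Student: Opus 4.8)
The plan is to prove the convergence case and then observe that the continuity case follows from it via Proposition~\ref{prop:contConv}, which reduces continuity to convergence of all proper prefixes. For the convergence case, I would prove the four implications by relating the metric limit to the limit inferior, using the known fact that on partial terms the two notions of limit coincide precisely when the sequence converges to a total term. The key bridging observation is: for a sequence $(t_\iota)_{\iota < \alpha}$ of partial terms, if $\liminf_{\iota \limto \alpha} t_\iota = t$ with $t$ total, then $(t_\iota)_{\iota < \alpha}$ converges to $t$ in the metric as well, and conversely if it metrically converges to a total term $t$ then its limit inferior is also $t$. This is essentially because, for a total term $t$, having $t \lebot s$ forces $t = s$, so the stabilising information captured by the $\liminf$ must actually be all of each $t_\iota$ from some point on at every fixed depth --- which is exactly metric convergence.

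For part~(\ref{item:strongExtI}), the weak case: suppose $S\fcolon s \wpato t$ is total, i.e.\ $t \in \iterms$, $S$ is weakly $\prs$-continuous, and $\liminf_{\iota \limto \wsuc\alpha} t_\iota = t$. Since $t$ is total, by the bridging observation the underlying sequence of terms metrically converges to $t$; it remains to check weak $\mrs$-continuity, i.e.\ that $\lim_{\iota \limto \lambda} t_\iota = t_\lambda$ for each limit $\lambda < \alpha$. Here I would use that $t_\lambda \lebot \liminf_{\iota \limto \lambda} t_\iota$ is not automatic --- rather, $\prs$-continuity gives $\liminf_{\iota \limto \lambda} t_\iota = t_\lambda$ exactly, so I need $t_\lambda$ to be total. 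This follows because $t_\lambda$ is a term occurring in a reduction that converges to a total term and every rewrite step preserves totality when started from a total term (the contractum $r\sigma$ of a rule with total $r$ applied via a total substitution is total, and $s$ is total); so by transfinite induction along $S$ every $t_\iota$ is total, hence $\liminf = \lim$ at each limit stage. Conversely, if $S\fcolon s \wmato t$, then all $t_\iota$ are total (same induction), $\lim = \liminf$ everywhere, so $S$ is weakly $\prs$-continuous and weakly $\prs$-converges to $t$, which is total. Part~(\ref{item:strongExtII}), the strong case, adds the redex-depth condition: here I would invoke Proposition~\ref{prop:strConvDepth}, which says strong $\mrs$-convergence of an open reduction is equivalent to strong $\mrs$-continuity plus redex depths tending to infinity; and note that in the $\prs$-setting strong continuity is stated via the reduction contexts $c_\iota = \substAtPos{t_\iota}{\pi_\iota}{\bot}$, with $\liminf_{\iota \limto \lambda} c_\iota = t_\lambda$. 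Since $c_\iota \lebot t_\iota$ and the $c_\iota$ differ from $t_\iota$ only at a single occurrence of depth $\len{\pi_\iota}$, the condition that the $c_\iota$ approximate a total limit $t_\lambda$ is equivalent to $\lim_{\iota \limto \lambda} t_\iota = t_\lambda$ together with $\len{\pi_\iota} \to \infty$ --- which is precisely strong $\mrs$-continuity.

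The main obstacle I expect is the careful handling of the limit stages and making sure the totality propagates both ways. In the $\prs\Rightarrow\mrs$ direction, the subtle point is that $\prs$-convergence only asserts something about $\liminf$ at the \emph{final} stage $\wsuc\alpha$ (plus at each limit stage via continuity), and one must argue that totality of the final limit forces totality of all intermediate terms; this uses the reduction-step-preserves-totality argument threaded through a transfinite induction, and for limit stages one needs that the $\liminf$ of totals that "keeps enough stable information to be total in the limit" is itself reached, i.e.\ actually equals each term from some point on at each depth. In the strong case the extra bookkeeping is translating between reduction contexts and redexes, for which Proposition~\ref{prop:strConvDepth} does most of the work. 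Everything else --- transitivity bookkeeping, the reductions from continuity to prefix-convergence --- is routine given the cited propositions.
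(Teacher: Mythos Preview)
The paper does not actually prove Theorem~\ref{thr:strongExt}; it is stated in Section~2 as a background result imported from the literature (cf.\ \cite{bahr10rta2,bahr10rta}), with no accompanying proof. So there is no proof in the paper to compare your proposal against.

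Your outline is nonetheless the standard argument and is sound. One clarification: you appear to read ``$S\fcolon s \wpato t$ is total'' as asserting only that the final term $t$ is total, and you then invest effort in a transfinite induction to show that every intermediate $t_\lambda$ is total. In the paper's convention (made explicit for GRSs in Section~\ref{sec:strong-convergence}, and implicitly the same for TRSs), a reduction being \emph{total} already means that the limit \emph{and every intermediate term} lie in $\iterms$. Hence the totality of each $t_\lambda$ is part of the hypothesis, and your induction is unnecessary, though harmless. With that simplification, your bridging observation (for sequences in $\iterms$, the metric limit and the limit inferior coincide whenever either equals a total term) together with Proposition~\ref{prop:contConv} for the continuity clause and Proposition~\ref{prop:strConvDepth} for translating between reduction contexts and redex depths in the strong case gives exactly the argument one finds in \cite{bahr10rta2}.
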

Kennaway \cite{kennaway92rep} and Bahr \cite{bahr10rta} investigated
abstract models of infinitary rewriting based on metric spaces resp.\
partially ordered sets. We will take these abstract models as a basis
to formulate a theory of infinitary term graph reductions. The key
question that we have to address is what an appropriate metric space
resp.\ partial order on term graphs looks like.

\section{Graphs and Term Graphs}
\label{sec:term-graphs}

This section provides the basic notions for term graphs and more
generally for graphs.  Terms over a signature, say $\Sigma$, can be
thought of as rooted trees whose nodes are labelled with symbols from
$\Sigma$. Moreover, in these trees a node labelled with a $k$-ary
symbol is restricted to have out-degree $k$ and the outgoing edges are
ordered. In this way the $i$-th successor of a node labelled with a
symbol $f$ is interpreted as the root node of the subtree that
represents the $i$-th argument of $f$. For example, consider the term
$f(a,h(a,b))$. The corresponding representation as a tree is shown in
Figure~\ref{fig:exTermTree}.

\begin{figure}
  \centering
  \subfloat[$f(a,h(a,b))$.]{
    \label{fig:exTermTree}
    \begin{tikzpicture}[->,baseline=(b.base)]
      \node (r1) at (0,0)  {$f$}
      child {
        node  {$a$}
      } child {
        node {$h$}
        child {
          node {$a$}
        } child {
          node (b) {$b$}
        }
      };
    \end{tikzpicture}
  }
  \qquad
  \subfloat[A graph.]{
    \label{fig:exGraph}
    \begin{tikzpicture}[->,baseline=(a.base)]
      \node (f) {$f$}
      child {
        node (g) {$h$}
        child [missing]
        child {
          node (a) {$a$}
        }
      } child {
        node (b) {$b$}
      };
      \node [node distance=1cm,right=of f] {$h$}
      edge (b)
      edge [bend left=50] (a);
      \path[use as bounding box] (-1.3,0);
      \draw (g) edge[out=-115,in=180, min distance=1.5cm] (f); 
    \end{tikzpicture}
  }
  \qquad
  \subfloat[A term graph.]{%
    \label{fig:exTermGraph}%
    \begin{tikzpicture}[baseline=(n4.base)]
      \node (n1) {$f$}
      child {
        node (n2) {$h$}
        child [missing]
        child {
          node (n4) {$a$}
        }
      }
      child {
        node (n3) {$f$}
      };
      \draw[->] (n3) edge[out=245, in=-25] (n2);
      \draw[->] (n3) edge[out=295, in=25] (n4);
      \path[use as bounding box] (-1.5,0);
      \draw[->] (n2) edge[out=245, in=180, min distance=2cm] (n1);
    \end{tikzpicture}
    }%
  \caption{Example for a tree representation of a term; generalisation
    to (term) graphs.}
\end{figure}
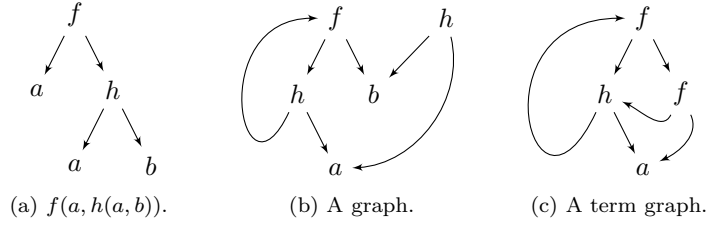

In term graphs, the restriction to a tree structure is abolished. The
notion of term graphs we are using is taken from Barendregt et al.\
\cite{barendregt87parle}.

\begin{definition}[graph]
  \label{def:graph}
  Let $\Sigma$ be a signature. A \emph{graph} over $\Sigma$ is a tuple
  $g = (N,\glab,\gsuc)$ consisting of a set $N$ (of \emph{nodes}), a
  \emph{labelling function} $\glab\fcolon N \funto \Sigma$, and a
  \emph{successor function} $\gsuc\fcolon N \funto N^*$ such that
  $\len{\gsuc(n)} = \srank{\glab(n)}$ for each node $n\in N$, i.e.\ a
  node labelled with a $k$-ary symbol has precisely $k$ successors. If
  $\gsuc(n) = \seq{n_0,\dots,n_{k-1}}$, then we write $\gsuc_{i}(n)$
  for $n_i$. Moreover, we use the abbreviation $\rank{g}{n}$ for the
  arity $\srank{\glab(n)}$ of $n$.
\end{definition}

\begin{example}
  Let $\Sigma = \set{f/2,h/2,a/0,b/0}$ be a signature. The graph over
  $\Sigma$, depicted in Figure~\ref{fig:exGraph}, is given by the
  triple $(N, \glab, \gsuc)$ with $N = \set{n_0,n_1,n_2,n_3,n_4}$,
  $\glab(n_0) = f, \glab(n_1) = \glab(n_4) = h, \glab(n_2) = b,
  \glab(n_3)=a$ and $\gsuc(n_0) = \seq{n_1,n_2},
  \gsuc(n_1)=\seq{n_0,n_3}, \gsuc(n_2) = \gsuc(n_3) =\seq{},
  \gsuc(n_4) = \seq{n_2,n_3}$.
\end{example}

\begin{definition}[path, reachability]
  \label{def:graphPath}
  Let $g = (N,\glab,\gsuc)$ be a graph and $n,n' \in N$.
  \begin{enumerate}[(i)]
  \item A \emph{path} in $g$ from $n$ to $n'$ is a finite sequence
    $(p_i)_{i<l}$ in $\nats$ such that either
    \begin{itemize}
    \item $n = n'$ and $(p_i)_{i<l}$ is empty, i.e.\ $l = 0$, or
    \item $0 \le p_0 < \rank{g}{n}$ and the suffix $(p_i)_{1\le i <
        l}$ is a path in $g$ from $\gsuc_{p_0}(n)$ to $n'$.
    \end{itemize}
  \item If there exists a path from $n$ to $n'$ in $g$, we say that
    $n'$ is \emph{reachable} from $n$ in $g$.
  \end{enumerate}
\end{definition}

\begin{definition}[term graph]
  \label{def:tgraph}
  Given a signature $\Sigma$, a \emph{term graph} $g$ over $\Sigma$ is
  a tuple $(N,\glab,\gsuc,r)$ consisting of an \emph{underlying} graph
  $(N,\glab,\gsuc)$ over $\Sigma$ whose nodes are all reachable from
  the \emph{root node} $r\in N$. The class of all term graphs over
  $\Sigma$ is denoted $\itgraphs$. We use the notation $N^{g}$,
  $\glab^{g}$, $\gsuc^{g}$ and $r^{g}$ to refer to the respective
  components $N$,$\glab$, $\gsuc$ and $r$ of $g$. Given a graph or a
  term graph $h$ and a node $n$ in $h$, we write $\subgraph{h}{n}$ to
  denote the sub-term graph of $h$ rooted in $g$.
\end{definition}

\begin{example}
  Let $\Sigma = \set{f/2,h/2,c/0}$ be a signature. The term graph over
  $\Sigma$, depicted in Figure~\ref{fig:exTermGraph}, is given by the
  quadruple $(N, \glab, \gsuc,r)$, where $N = \set{r, n_1, n_2,
    n_3}$, $\gsuc(r) = \seq{n_1, n_2}$, $\gsuc(n_1) = \seq{r, n_3}$,
  $\gsuc(n_2) = \seq{n_1, n_3}$, $\gsuc(n_3) = \emptyseq$ and
  $\glab(r) = \glab(n_2) = f$, $\glab(n_1) = h$, $\glab(n_3) = c$.
\end{example}

Paths in a graph are not absolute but relative to a starting node. In
term graphs, however, we have a distinguished root node from which
each node is reachable. Paths relative to the root node are central
for dealing with term graphs:
\begin{definition}[position, depth, cyclicity, tree]
  \label{def:tgraphOcc}
  Let $g \in \itgraphs$ and $n \in N$.
  \begin{enumerate}[(i)]
  \item A \emph{position} of $n$ is a path in the underlying graph
    of $g$ from $r^g$ to $n$. The set of all positions in $g$ is
    denoted $\pos{g}$; the set of all positions of $n$ in $g$ is denoted
    $\nodePos{g}{n}$.\footnote{The notion/notation of positions is
      borrowed from terms: Every position $\pi$ of a node $n$
      corresponds to the subterm represented by $n$ occurring at
      position $\pi$ in the unravelling of the term graph to a term.}
  \item The \emph{depth} of $n$ in $g$, denoted $\depth{g}{n}$, is the
    minimum of the lengths of the positions of $n$ in $g$, i.e.\
    $\depth{g}{n} = \min \setcom{\len{\pi}}{\pi \in \nodePos{g}{n}}$.
  \item For a position $\pi \in \pos{g}$, we write
    $\nodeAtPos{g}{\pi}$ for the unique node $n\in N^g$ with $\pi \in
    \nodePos{g}{n}$ and $g(\pi)$ for its symbol $\glab^g(n)$.
  \item A position $\pi\in\pos{g}$ is called \emph{cyclic} if there
    are paths $\pi_1 <\pi_2 \le \pi$ with $\nodeAtPos{g}{\pi_1} =
    \nodeAtPos{g}{\pi_2}$. The non-empty path $\pi'$ with
    $\pi_1\concat \pi' = \pi_2$ is then called a \emph{cycle} of
    $\nodeAtPos{g}{\pi_1}$. A position that is not cyclic is called
    \emph{acyclic}.
  \item The term graph $g$ is called a \emph{term tree} if each node
    in $g$ has exactly one position.
  \end{enumerate}
\end{definition}

Note that the labelling function of graphs -- and thus term graphs --
is \emph{total}. In contrast, Barendregt et al.\
\cite{barendregt87parle} considered \emph{open} (term) graphs with a
\emph{partial} labelling function such that unlabelled nodes denote
holes or variables. This is reflected in their notion of homomorphisms
in which the homomorphism condition is suspended for unlabelled nodes.

\subsection{Homomorphisms}
\label{sec:homomorphisms}

Instead of a partial node labelling function, we chose a
\emph{syntactic} approach that is closer to the representation in
terms: Variables, holes and ``bottoms'' are represented as
distinguished syntactic entities. We achieve this on term graphs by
making the notion of homomorphisms dependent on a distinguished set of
constant symbols $\Delta$ for which the homomorphism condition is
suspended:

\begin{definition}[$\Delta$-homomorphism]
  \label{def:D-hom}
  Let $\Sigma$ be a signature, $\Delta\subseteq \Sigma^{(0)}$, and
  $g,h \in \itgraphs$.
  \begin{enumerate}[(i)]
  \item A function $\phi\fcolon N^g \funto N^h$ is called
    \emph{homomorphic}\ in $n \in N^g$ if the following holds:
    \begin{align*}
      \glab^g(n) &= \glab^h(\phi(n))
      \tag{labelling}\\
      \phi(\gsuc^g_i(n)) &= \gsuc^h_i(\phi(n)) \quad \text{ for all } 0 \le i <
      \rank{g}{n} \tag{successor}
    \end{align*}
  \item A \emph{$\Delta$-homomorphism} $\phi$ from $g$ to $h$, denoted
    $\phi\fcolon g \homto_\Delta h$, is a function $\phi\fcolon N^g
    \funto N^h$ that is homomorphic in $n$ for all $n \in N^g$ with
    $\glab^g(n) \nin \Delta$ and satisfies $\phi(r^g) = r^h$.
  \end{enumerate}
\end{definition}

It should be obvious that we get the usual notion of homomorphisms on
term graphs if $\Delta = \emptyset$. The $\Delta$-nodes can be thought
of as holes in the term graphs which can be filled with other term
graphs. For example, if we have a distinguished set of variable
symbols $\calV \subseteq \Sigma^{(0)}$, we can use
$\calV$-homomorphisms to formalise the matching step of term graph
rewriting which requires the instantiation of variables.

\begin{proposition}[$\Delta$-homomorphism preorder]
  \label{prop:catgraph}
  The $\Delta$-homomorphisms on $\itgraphs$ form a category which is a
  preorder. That is, there is at most one $\Delta$-homomorphism from
  one term graph to another.
\end{proposition}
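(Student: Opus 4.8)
The plan is to establish two things: that $\Delta$-homomorphisms compose (and that the identity is a $\Delta$-homomorphism), which gives the category structure, and that between any two term graphs there is at most one $\Delta$-homomorphism, which is the preorder condition. The second is the substantive part; the first is routine.

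First I would check the categorical axioms. The identity function $\id_{N^g}\fcolon N^g \funto N^g$ is trivially homomorphic in every node and fixes the root, so it is a $\Delta$-homomorphism $g \homto_\Delta g$. For composition, suppose $\phi\fcolon g \homto_\Delta h$ and $\psi\fcolon h \homto_\Delta k$. I claim $\psi \circ \phi\fcolon g \homto_\Delta k$. It clearly maps $r^g$ to $r^k$. For a node $n \in N^g$ with $\glab^g(n) \nin \Delta$: from $\phi$ homomorphic in $n$ we get $\glab^h(\phi(n)) = \glab^g(n) \nin \Delta$, so $\psi$ is homomorphic in $\phi(n)$; chaining the labelling and successor equations then shows $\psi\circ\phi$ is homomorphic in $n$. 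Associativity and the identity laws are inherited from function composition.

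The key step is uniqueness. Let $\phi,\psi\fcolon g \homto_\Delta h$; I want $\phi = \psi$. The natural approach is induction on the depth $\depth{g}{n}$ of a node $n \in N^g$, exploiting that every node of a term graph is reachable from the root. For the base case, $\depth{g}{n} = 0$ means $n = r^g$ (the root is the unique node at depth $0$, since any position of length $0$ is the empty path from $r^g$), and $\phi(r^g) = r^h = \psi(r^g)$ by the root condition. For the inductive step, take $n$ with $\depth{g}{n} = d+1$ and pick a position $\pi$ of $n$ of length $d+1$; write $\pi = \pi'\concat\seq{i}$ and let $m = \nodeAtPos{g}{\pi'}$, so that $n = \gsuc^g_i(m)$ and $\depth{g}{m} \le d$. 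By the induction hypothesis $\phi(m) = \psi(m)$. Now I need the successor equation at $m$ — but that requires $\glab^g(m) \nin \Delta$. This is where the argument needs care: since $\Delta \subseteq \Sigma^{(0)}$ consists only of constants (nullary symbols), a node labelled by a symbol in $\Delta$ has no successors, hence cannot be the predecessor $m$ of any node. So $\glab^g(m) \nin \Delta$ is automatic, $\phi$ and $\psi$ are both homomorphic in $m$, and $\phi(n) = \phi(\gsuc^g_i(m)) = \gsuc^h_i(\phi(m)) = \gsuc^h_i(\psi(m)) = \psi(\gsuc^g_i(m)) = \psi(n)$.

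The main obstacle is precisely the observation just used: one must not lose track of the fact that the homomorphism condition is suspended on $\Delta$-nodes, and the induction only goes through because $\Delta$-nodes are necessarily leaves, so they never obstruct the inductive propagation along successor edges. Once that is noted, the induction on depth closes cleanly and, together with the composition check, yields that $\itgraphs$ with $\Delta$-homomorphisms is a preorder-category. I would remark that this also shows the relation ``there exists a $\Delta$-homomorphism $g \homto_\Delta h$'' is a preorder on $\itgraphs$, justifying the name.
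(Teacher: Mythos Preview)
Your proof is correct and follows essentially the same route as the paper: induction on the depth of a node, using the root condition for the base case and the successor condition at a predecessor for the inductive step. You even spell out explicitly why the successor condition is applicable at the predecessor $m$ (namely that $\Delta$-nodes are nullary and hence leaves), a point the paper's proof uses silently.
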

\begin{proof}
  The identity $\Delta$-homomorphism is obviously the identity mapping
  on the set of nodes. Moreover, an easy equational reasoning reveals
  that the composition of two $\Delta$-homomorphisms is again a
  $\Delta$-homomorphism. Associativity of this composition is obvious
  as $\Delta$-homomorphisms are functions.

  In order to show that the category is a preorder assume that there
  are two $\Delta$-homomorphisms $\phi_1,\phi_2\fcolon g \homto_\Delta
  h$. We prove that $\phi_1 = \phi_2$ by showing that $\phi_1(n) =
  \phi_2(n)$ for all $n \in N^g$ by induction on the depth of $n$.

  Let $\depth{g}{n} = 0$, i.e.\ $n = r^g$. By the root condition, we
  have that $\phi_1(r^g) = r^h = \phi_2(r^g)$. Let $\depth{g}{n} = d >
  0$. Then $n$ has a position $\pi \concat \seq i$ in $g$ such that
  $\depth{g}{n'}<d$ for $n' = \nodeAtPos{g}{\pi}$. Hence, we can
  employ the induction hypothesis for $n'$ to obtain the following:
  \begin{align*}
    \phi_1(n) &= \gsuc^h_i(\phi_1(n'))
    \tag{successor condition for $\phi_1$}\\
    &= \gsuc^h_i(\phi_2(n'))
    \tag{ind. hyp.} \\
    &= \phi_2(n)
    \tag{successor condition for $\phi_2$}
  \end{align*}
\end{proof}
As a consequence, each $\Delta$-homomorphism is both monic and epic,
and whenever there are two $\Delta$-homomorphisms $\phi\fcolon g
\homto_\Delta h$ and $\psi\fcolon h \homto_\Delta g$, they are
inverses of each other, i.e.\ \emph{$\Delta$-isomorphisms}. If two
term graphs are \emph{$\Delta$-isomorphic}, we write $g \isom_\Delta
h$.

Note that injectivity is in general different from both being monic
and the existence of left-inverses. The same holds for surjectivity
and being epic resp.\ having right-inverses. However, each
$\Delta$-homomorphism is a $\Delta$-isomorphism iff it is bijective.

For the two special cases $\Delta = \emptyset$ and $\Delta =
\set{\sigma}$, we write $\phi\fcolon g \homto h$ resp.\ $\phi\fcolon g
\homto_\sigma h$ instead of $\phi\fcolon g \homto_\Delta h$ and call
$\phi$ a homomorphism resp.\ $\sigma$-homomorphism. The same
convention applies to $\Delta$-isomorphisms.

\begin{lemma}[homomorphisms are surjective]
  \label{lem:homSurj}
  Every homomorphism $\phi\fcolon g \to h$, with $g, h \in \itgraphs$,
  is surjective.
\end{lemma}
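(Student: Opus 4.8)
The plan is to exploit the fact that, in a term graph, every node is reachable from the root, together with the two homomorphism conditions (labelling and successor) and the root condition $\phi(r^g) = r^h$. The key observation is that a homomorphism maps positions to positions: if $\pi$ is a position of a node $n$ in $g$, then $\pi$ is also a position of $\phi(n)$ in $h$. I would prove this auxiliary claim by induction on the length of $\pi$. For $\pi = \emptyseq$ we have $n = r^g$, and the root condition gives $\phi(n) = r^h$, which has $\emptyseq$ as a position. For $\pi = \pi' \concat \seq{i}$, let $n' = \nodeAtPos{g}{\pi'}$, so that $n = \gsuc^g_i(n')$ and $0 \le i < \rank{g}{n'}$; by the induction hypothesis $\pi'$ is a position of $\phi(n')$ in $h$, and since $\phi$ is homomorphic in $n'$ (here we use $\Delta = \emptyset$, so there are no exceptions), the labelling condition gives $\rank{h}{\phi(n')} = \rank{g}{n'}$, hence $0 \le i < \rank{h}{\phi(n')}$, and the successor condition gives $\gsuc^h_i(\phi(n')) = \phi(\gsuc^g_i(n')) = \phi(n)$. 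Thus $\pi' \concat \seq{i} = \pi$ is a position of $\phi(n)$ in $h$.

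With this claim in hand, surjectivity is immediate: let $m \in N^h$ be arbitrary. Since $h$ is a term graph, $m$ is reachable from $r^h$, so it has some position $\pi \in \nodePos{h}{m}$. Let $n = \nodeAtPos{g}{\pi}$ — this node exists because, by a routine induction on $\len{\pi}$ using the same labelling/successor argument as above (going forward through $g$ rather than comparing with $h$), every prefix of a position of a reachable node in $h$ corresponds to a well-defined node in $g$; more directly, one shows by induction that $\pos{g} = \pos{h}$, or at least $\pos{h} \subseteq \pos{g}$, for a homomorphism $\phi\fcolon g \to h$. Then by the claim, $\pi$ is a position of $\phi(n)$ in $h$, and since positions determine nodes uniquely (Definition~\ref{def:tgraphOcc}), we get $\phi(n) = \nodeAtPos{h}{\pi} = m$. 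Hence $\phi$ is surjective.

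The only delicate point is establishing that $\nodeAtPos{g}{\pi}$ is defined whenever $\pi \in \pos{h}$, i.e.\ that every position of $h$ is also a position of $g$. This follows by induction on $\len{\pi}$: the empty position is in $\pos{g}$ since $g$ is a term graph with root $r^g$; and if $\pi' \concat \seq{i} \in \pos{h}$ with $n' = \nodeAtPos{g}{\pi'}$ already established, then $\pi' \in \nodePos{h}{\phi(n')}$ by the claim, so $\nodeAtPos{h}{\pi'} = \phi(n')$, and since $\pi'\concat\seq i \in \pos h$ we have $0 \le i < \rank h{\phi(n')} = \rank g{n'}$ by the labelling condition, so $\pi' \concat \seq{i} \in \pos{g}$ with $\nodeAtPos{g}{\pi'\concat\seq i} = \gsuc^g_i(n')$. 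I expect this bookkeeping — keeping straight which graph a position lives in and using the labelling condition to transfer arities — to be the main (though still routine) obstacle; everything else is a direct application of reachability and the homomorphism conditions.
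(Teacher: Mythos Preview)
Your proof is correct and follows essentially the same approach as the paper, which also proceeds by an induction exploiting reachability from the root together with the root, labelling, and successor conditions. The paper phrases it more tersely as ``an easy induction on the depth of the nodes in $h$'': for $m = r^h$ use the root condition, and for $m = \gsuc^h_i(m')$ with $m'$ of smaller depth, the induction hypothesis gives $m' = \phi(n')$, the labelling condition gives $\rank{g}{n'} = \rank{h}{m'}$, and the successor condition then yields $\phi(\gsuc^g_i(n')) = m$ --- which is exactly your two position-based inductions folded into one.
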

\begin{proof}
  Follows from an easy induction on the depth of the nodes in $h$.
\end{proof}

Note that a bijective $\Delta$-homomorphism is not necessarily a
$\Delta$-isomorphism. To realise this, consider two term graphs $g,h$,
each with one node only. Let the node in $g$ be labelled with $a$ and
the node in $h$ with $b$ then the only possible $a$-homomorphism from
$g$ to $h$ is clearly a bijection but not an $a$-isomorphism. On the
other hand, bijective homomorphisms are isomorphisms.

\begin{lemma}[bijective homomorphisms are isomorphisms]
  \label{lem:isomBij}
  Let $g,h \in \itgraphs$ and $\phi\fcolon g \homto h$. Then the following are
  equivalent
  \begin{enumerate}[(a)]
  \item $\phi$ is an isomorphism. \label{item:isomBij-a}
  \item $\phi$ is bijective. \label{item:isomBij-b}
  \item $\phi$ is injective. \label{item:isomBij-c}
  \end{enumerate}
\end{lemma}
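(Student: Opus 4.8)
The plan is to close the cycle of implications $(\ref{item:isomBij-a}) \Rightarrow (\ref{item:isomBij-b}) \Rightarrow (\ref{item:isomBij-c}) \Rightarrow (\ref{item:isomBij-a})$, using Proposition~\ref{prop:catgraph} (the $\Delta$-homomorphism preorder) and Lemma~\ref{lem:homSurj} (ordinary homomorphisms are surjective) as the only external inputs. The implication $(\ref{item:isomBij-b}) \Rightarrow (\ref{item:isomBij-c})$ is of course trivial, and $(\ref{item:isomBij-a}) \Rightarrow (\ref{item:isomBij-b})$ is almost as easy: if $\phi$ is an isomorphism there is a homomorphism $\psi\fcolon h \homto g$ such that $\psi\circ\phi$ and $\phi\circ\psi$ are homomorphisms $g \homto g$ and $h \homto h$; since the identity mappings are also such homomorphisms and, by Proposition~\ref{prop:catgraph}, there is at most one homomorphism between two given term graphs, we get $\psi\circ\phi = \id$ and $\phi\circ\psi = \id$, so $\phi$ is bijective.

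The substantive implication is $(\ref{item:isomBij-c}) \Rightarrow (\ref{item:isomBij-a})$. Assume $\phi$ is injective. By Lemma~\ref{lem:homSurj} it is also surjective, hence a bijection, so the set-theoretic inverse $\phi^{-1}\fcolon N^h \funto N^g$ is well defined. The key step is to verify that $\phi^{-1}$ is itself a homomorphism $h \homto g$, and this is done by transporting the homomorphism conditions for $\phi$ along the bijection. Given $m \in N^h$, write $n = \phi^{-1}(m)$; from $\glab^g(n) = \glab^h(\phi(n))$ we read off $\glab^h(m) = \glab^g(\phi^{-1}(m))$, which also shows $\rank{g}{\phi^{-1}(m)} = \rank{h}{m}$, so the successor condition for $\phi^{-1}$ at $m$ ranges over the correct indices. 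Applying $\phi^{-1}$ to both sides of $\phi(\gsuc^g_i(n)) = \gsuc^h_i(\phi(n))$ and substituting $n = \phi^{-1}(m)$ yields $\gsuc^g_i(\phi^{-1}(m)) = \phi^{-1}(\gsuc^h_i(m))$ for $0 \le i < \rank{h}{m}$; and $\phi(r^g) = r^h$ gives $\phi^{-1}(r^h) = r^g$. Thus $\phi^{-1}$ is a homomorphism, and $\phi, \phi^{-1}$ are mutually inverse homomorphisms, i.e.\ $\phi$ is an isomorphism.

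I expect no real obstacle here; the one place that needs care is the equality-transport step, where one must invoke well-definedness of $\phi^{-1}$ together with $\phi\circ\phi^{-1} = \id_{N^h}$ at exactly the right moments so that the rewritten identities genuinely have $m$ (and not $\phi(n)$) on the outside. It is worth noting in passing that this transport works only because a plain homomorphism satisfies the labelling and successor conditions at \emph{every} node; this is precisely why the analogous equivalence fails for $\Delta$-homomorphisms with $\Delta \neq \emptyset$, as the one-node example after Lemma~\ref{lem:homSurj} shows.
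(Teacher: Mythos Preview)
Your proof is correct and follows essentially the same approach as the paper: the paper also reduces everything to showing that the set-theoretic inverse $\phi^{-1}$ is a homomorphism, with Lemma~\ref{lem:homSurj} supplying surjectivity. The only cosmetic difference is that the paper organises the equivalences as $(\ref{item:isomBij-a})\Rightarrow(\ref{item:isomBij-b})$, $(\ref{item:isomBij-b})\Leftrightarrow(\ref{item:isomBij-c})$, $(\ref{item:isomBij-b})\Rightarrow(\ref{item:isomBij-a})$ rather than your cycle, and dismisses $(\ref{item:isomBij-a})\Rightarrow(\ref{item:isomBij-b})$ as trivial where you spell out the uniqueness argument via Proposition~\ref{prop:catgraph}.
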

\begin{proof}
  The implication (\ref{item:isomBij-a}) $\Rightarrow$
  (\ref{item:isomBij-b}) is trivial. The equivalence
  (\ref{item:isomBij-b}) $\Leftrightarrow$ (\ref{item:isomBij-c})
  follows from Lemma~\ref{lem:homSurj}. For the implication
  (\ref{item:isomBij-b}) $\Rightarrow$ (\ref{item:isomBij-a}), consider
  the inverse $\phi^{-1}$ of $\phi$. We need to show that $\phi^{-1}$
  is a homomorphism from $h$ to $g$. The root condition follows
  immediately from the root condition for $\phi$. Similarly, an easy
  equational reasoning reveals that the fact that $\phi$ is
  homomorphic in $N^g$ implies that $\phi^{-1}$ is homomorphic in
  $N^h$
\end{proof}

\subsection{Canonical Term Graphs}
\label{sec:canon-term-graphs}

In this section, we introduce a canonical representation of
isomorphism classes of term graphs. We use a well-known trick to
achieve this \cite{plump99hggcbgt}. As we shall see at the end of this
section, this will also enable us to construct term graphs modulo
isomorphism very easily.
\begin{definition}[canonical term graph]
  \label{def:canTgraph}
  A term graph $g$ is called \emph{canonical} if $n = \nodePos{g}{n}$
  holds for each $n \in N^g$. That is, each node is the set of its
  positions in the term graph. The set of all canonical term graphs
  over $\Sigma$ is denoted $\ictgraphs$.
\end{definition}

This structure allows a convenient characterisation of
$\Delta$-homomorphisms:
\begin{lemma}[characterisation of $\Delta$-homomorphisms]
  \label{lem:canhom}
  For $g,h\in \ictgraphs$, a function $\phi\fcolon N^g \funto N^h$ is
  a $\Delta$-homomorphism $\phi\fcolon g \homto_\Delta h$ iff the
  following holds for all $n \in N^g$:
  \begin{center}
    \begin{inparaenum}[(a)]
    \item $n \subseteq \phi(n)$,\quad and
      \label{item:canhom1}
      \qquad
    \item $\glab^g(n) = \glab^h(\phi(n))$ \quad whenever \quad
      $\glab^g(n)\nin\Delta$.
      \label{item:canhom2}
    \end{inparaenum}
  \end{center}
\end{lemma}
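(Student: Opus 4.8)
The plan is to prove the two implications separately, in both directions using only the defining property of canonical term graphs — namely that a node literally \emph{equals} its set of positions — together with the facts that in any term graph a position determines a unique node, that every node is reachable from the root (so has at least one position), and that $\Delta \subseteq \Sigma^{(0)}$ contains only nullary symbols.

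For the ``only if'' direction, assume $\phi\fcolon g \homto_\Delta h$. Part (\ref{item:canhom2}) is then exactly the labelling clause of the homomorphism condition, which holds at every node with label outside $\Delta$. For part (\ref{item:canhom1}) I would prove, by induction on $\len{\pi}$, that $\pi \in \nodePos{g}{n}$ implies $\pi \in \nodePos{h}{\phi(n)}$; since in the canonical setting $n = \nodePos{g}{n}$ and $\phi(n) = \nodePos{h}{\phi(n)}$, this is precisely $n \subseteq \phi(n)$. The base case $\pi = \emptyseq$ forces $n = r^g$ and is handled by the root condition $\phi(r^g) = r^h$. In the step $\pi = \pi'\concat\seq{i}$, write $n' = \nodeAtPos{g}{\pi'}$, so that $n = \gsuc^g_i(n')$; then $\rank{g}{n'} > i \ge 0$, hence $\glab^g(n') \nin \Delta$ because every symbol in $\Delta$ is nullary, so the successor clause applies at $n'$ and gives $\phi(n) = \gsuc^h_i(\phi(n'))$. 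Combining this with the induction hypothesis $\pi' \in \nodePos{h}{\phi(n')}$ yields $\pi \in \nodePos{h}{\phi(n)}$.

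For the ``if'' direction, assume (\ref{item:canhom1}) and (\ref{item:canhom2}). The root condition follows from $\emptyseq \in \nodePos{g}{r^g} = r^g \subseteq \phi(r^g)$: thus $\emptyseq$ is a position of $\phi(r^g)$ in $h$, and the only node of $h$ with the empty position is $r^h$. The labelling clause is (\ref{item:canhom2}). For the successor clause, fix $n$ with $\glab^g(n) \nin \Delta$ and $0 \le i < \rank{g}{n}$, let $m = \gsuc^g_i(n)$, and pick any $\pi \in \nodePos{g}{n}$ (which exists since $n$ is reachable from the root). Then $\pi\concat\seq{i} \in \nodePos{g}{m}$, so by (\ref{item:canhom1}) $\pi\concat\seq{i} \in \nodePos{h}{\phi(m)}$. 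On the other hand $\pi \in \nodePos{h}{\phi(n)}$ by (\ref{item:canhom1}), and (\ref{item:canhom2}) gives $\rank{h}{\phi(n)} = \rank{g}{n} > i$, so $\pi\concat\seq{i} \in \nodePos{h}{\gsuc^h_i(\phi(n))}$ as well. Uniqueness of the node at a given position then forces $\phi(m) = \gsuc^h_i(\phi(n))$, which is the successor condition.

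I do not anticipate a real obstacle: the argument is essentially an unfolding of the definitions of $\nodePos{g}{\cdot}$ and of canonical term graph. The only two spots that need a moment's care are the use of nullarity of $\Delta$-symbols — to guarantee that the successor clause is actually available at every interior node in the forward direction — and the appeal to uniqueness of the node at a position, used repeatedly in the backward direction to translate statements about positions back into statements about nodes.
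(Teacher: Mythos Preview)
Your proposal is correct and follows essentially the same route as the paper's own proof: both directions proceed by translating between the homomorphism clauses and the inclusion $n \subseteq \phi(n)$ via induction on position length, using the root condition for the base case and the successor condition for the step. You are in fact slightly more careful than the paper in explicitly invoking the nullarity of $\Delta$-symbols to justify applying the successor clause at interior nodes, and in spelling out the uniqueness-of-node-at-a-position step in the backward direction; these are points the paper leaves implicit.
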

\begin{proof}
  \def\itema{(\ref{item:canhom1})}%
  \def\itemb{(\ref{item:canhom2})}%
  For the ``only if'' direction, assume that $\phi\fcolon g
  \homto_\Delta h$. \itemb{} is the labelling condition and is
  therefore satisfied by $\phi$. To establish \itema{}, we show the
  equivalent statement
  \[
  \forall \pi \in \pos{g}.\; \forall n \in N^g. \; \pi \in n \implies
  \pi \in \phi(n)
  \]
  We do so by induction on the length of $\pi$: If $\pi = \emptyseq$,
  then $\pi \in n$ implies $n = r^g$. By the root condition, we have
  $\phi(r^g)=r^h$ and, therefore, $\pi = \emptyseq \in r^h$. If $\pi =
  \pi' \concat \seq i$, then let $n' = \nodeAtPos{g}{\pi'}$. Consequently,
  $\pi' \in n'$ and, by induction hypothesis, $\pi' \in
  \phi(n')$. Since $\pi = \pi'\concat \seq i$, we have $\gsuc^g_i(n') =
  n$. By the successor condition we can conclude $\phi(n) =
  \gsuc^h_i(\phi(n'))$. This and $\pi' \in \phi(n')$ yields that
  $\pi'\concat \seq i \in \phi(n)$.

  For the ``if'' direction, we assume \itema{} and \itemb{}. The
  labelling condition follows immediately from \itemb{}. For the root
  condition, observe that since $\emptyseq \in r^g$, we also have
  $\emptyseq \in \phi(r^g)$. Hence, $\phi(r^g) = r^h$. In order to show
  the successor condition, let $n, n' \in N^g$ and $0 \le i <
  \rank{g}{n}$ such that $\gsuc^g_i(n) = n'$. Then there is a
  position $\pi \in n$ with $\pi \concat \seq i \in n'$. By \itema{}, we
  can conclude that $\pi \in \phi(n)$ and $\pi \concat \seq i \in \phi(n')$
  which implies that $\gsuc^h_i(\phi(n)) = \phi(n')$.
\end{proof}

By Proposition~\ref{prop:catgraph}, there is at most one
$\Delta$-homomorphism between two term graphs. The lemma above
uniquely defines this $\Delta$-homomorphism: If there is a
$\Delta$-homomorphism from $g$ to $h$, it is defined by $\phi(n) =
n'$, where $n'$ is the unique node $n' \in N^h$ with $n \subseteq n'$.

\begin{rem}
  \label{rem:canhom}
  Note that the lemma above is also applicable to non-canonical term
  graphs. It simply has to be rephrased such that instead of just
  referring to a node $n$, its set of positions $\nodePos{g}{n}$ is
  referred to whenever the ``inner structure'' of $n$ is used.
\end{rem}

The set of nodes in a canonical term graph forms a partition of the
set of positions. Hence, it defines an equivalence relation on the
set of positions. For a canonical term graph $g$, we write $\sim_g$
for this equivalence relation on $\pos{g}$. According to
Remark~\ref{rem:canhom}, we can extend this to arbitrary term graphs:
$\pi_1 \sim_g \pi_2$ iff $\nodeAtPos{g}{\pi_1} =
\nodeAtPos{g}{\pi_2}$. The characterisation of $\Delta$-homomorphisms
can thus be recast to obtain the following lemma that characterises
the \emph{existence} of $\Delta$-homomorphisms:
\begin{lemma}[characterisation of $\Delta$-homomorphisms]
  \label{lem:occrephom}
  Given $g,h\in \itgraphs$, there is a $\Delta$-homomorphism
  $\phi\fcolon g \homto_\Delta h$ iff, for all $\pi,\pi'\in\pos{g}$,
  we have
  \begin{center}
    \begin{inparaenum}[(a)]
    \item $\pi \sim_g \pi' \quad \implies \quad \pi \sim_h \pi'$, and
      \quad
      \label{item:occrephom1}
    \item $g(\pi) = h(\pi)$ \quad whenever \quad $g(\pi) \nin \Delta$.
      \label{item:occrephom2}
    \end{inparaenum}
  \end{center}

\end{lemma}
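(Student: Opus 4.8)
The plan is to reduce the statement to Lemma~\ref{lem:canhom}, applied in the form valid for non-canonical term graphs as licensed by Remark~\ref{rem:canhom}: a function $\phi\fcolon N^g \funto N^h$ is a $\Delta$-homomorphism $\phi\fcolon g \homto_\Delta h$ iff, for every $n\in N^g$, we have $\nodePos{g}{n} \subseteq \nodePos{h}{\phi(n)}$ and $\glab^g(n) = \glab^h(\phi(n))$ whenever $\glab^g(n)\nin\Delta$. All that remains is to translate between statements about nodes and statements about their position sets, using two observations: first, $\pi \sim_g \pi'$ says exactly that $\pi$ and $\pi'$ are positions of one and the same node $\nodeAtPos{g}{\pi}$; second, the family $\set{\nodePos{g}{n}}$ over $n\in N^g$ partitions $\pos{g}$ (each node is reachable from $r^g$, hence has at least one position, and each position determines a unique node).

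For the ``only if'' direction I would assume $\phi\fcolon g\homto_\Delta h$. Given $\pi\sim_g\pi'$, put $n = \nodeAtPos{g}{\pi} = \nodeAtPos{g}{\pi'}$; then $\pi,\pi'\in\nodePos{g}{n}\subseteq\nodePos{h}{\phi(n)}$ by Lemma~\ref{lem:canhom}, so $\nodeAtPos{h}{\pi} = \phi(n) = \nodeAtPos{h}{\pi'}$, i.e.\ $\pi\sim_h\pi'$, which is~(a). For~(b), if $g(\pi)\nin\Delta$ then with $n = \nodeAtPos{g}{\pi}$ Lemma~\ref{lem:canhom} gives $\glab^g(n) = \glab^h(\phi(n))$, and since $\pi\in\nodePos{g}{n}\subseteq\nodePos{h}{\phi(n)}$ we obtain $h(\pi) = \glab^h(\phi(n)) = \glab^g(n) = g(\pi)$. (In particular the root condition is subsumed, as $\emptyseq$ forces $\phi(r^g)=r^h$.)

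For the ``if'' direction I would assume (a) and (b) and construct $\phi$ directly: for each $n\in N^g$ choose some $\pi_n\in\nodePos{g}{n}$ and set $\phi(n) = \nodeAtPos{h}{\pi_n}$. The step needing the most care is checking that this is a well-defined total function: it presupposes $\pi_n\in\pos{h}$ and independence of the choice of $\pi_n$, and both must be extracted from~(a). Indeed, instantiating~(a) with $\pi'=\pi$ shows that $\pi\sim_g\pi$ forces $\pi\sim_h\pi$, so $\pos{g}\subseteq\pos{h}$; and if $\pi,\pi'\in\nodePos{g}{n}$ then $\pi\sim_g\pi'$, whence $\pi\sim_h\pi'$ by~(a), whence $\nodeAtPos{h}{\pi} = \nodeAtPos{h}{\pi'}$. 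Once $\phi$ is defined, I verify the two conditions of Lemma~\ref{lem:canhom}: for any $\pi\in\nodePos{g}{n}$ we have $\pi\sim_g\pi_n$, hence $\pi\sim_h\pi_n$ by~(a), hence $\nodeAtPos{h}{\pi}=\phi(n)$, so $\nodePos{g}{n}\subseteq\nodePos{h}{\phi(n)}$; and if $\glab^g(n)\nin\Delta$ then $g(\pi_n)=\glab^g(n)\nin\Delta$, so by~(b) $h(\pi_n)=g(\pi_n)$, i.e.\ $\glab^h(\phi(n))=\glab^g(n)$. By Lemma~\ref{lem:canhom}, $\phi$ is a $\Delta$-homomorphism from $g$ to $h$.

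The only genuine obstacle is the bookkeeping about the domains of $\sim_g$ and $\sim_h$ in the ``if'' direction, i.e.\ making explicit that condition~(a) already entails $\pos{g}\subseteq\pos{h}$ and the coherence of the choices $\pi_n$; the rest is routine unfolding of Lemma~\ref{lem:canhom} and the definition of $\sim$.
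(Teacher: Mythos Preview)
Your proposal is correct and follows essentially the same approach as the paper: both directions are reduced to Lemma~\ref{lem:canhom}, with the ``if'' direction defining $\phi$ by sending each node to the $h$-node containing (one of) its positions and checking well-definedness via condition~(a). The only cosmetic difference is that the paper passes w.l.o.g.\ to canonical term graphs (so that nodes literally are their position sets and $\phi(n)=n'$ iff $n\subseteq n'$), whereas you work directly with $\nodePos{g}{n}$ via Remark~\ref{rem:canhom}; the underlying argument is the same.
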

\begin{proof}
  \def\itema{(\ref{item:occrephom1})}%
  \def\itemb{(\ref{item:occrephom2})}%
  \def\itemca{(\ref{item:canhom1})}%
  \def\itemcb{(\ref{item:canhom2})}%
  \def\itemap{(\ref{item:canhom1}')}%
  \def\itembp{(\ref{item:canhom2}')}%
  W.l.o.g.\ we assume $g$ and $h$ to be canonical. For the ``only if''
  direction, assume that $\phi$ is a $\Delta$-homomorphism from $g$ to
  $h$. Then we can use the properties \itemca{} and \itemcb{} of
  Lemma~\ref{lem:canhom}, which we will refer to as \itemap{} and \itembp{} to
  avoid confusion. In order to show \itema{}, assume $\pi \sim_g \pi'$. Then
  there is some node $n \in N^g$ with $\pi,\pi' \in n$. \itemap{} yields
  $\pi,\pi' \in \phi(n)$ and, therefore, $\pi \sim_g \pi'$. To show
  \itemb{}, we assume some $\pi \in \pos{g}$ with $g(\pi) \nin \Delta$. Then
  we can reason as follows:
  \[
  g(\pi) = \glab^g(\nodeAtPos{g}{\pi}) \stackrel{\text{\itembp}}{=}
  \glab^h(\phi(\nodeAtPos{g}{\pi}))  \stackrel{\text{\itemap}}{=}
  \glab^h(\nodeAtPos{h}{\pi}) = h(\pi)
  \]

  For the converse direction, assume that both \itema{} and \itemb{}
  hold. Define the function $\phi\fcolon N^g \funto N^h$ by $\phi(n) =
  n'$ iff $n \subseteq n'$ for all $n \in N^g$ and $n' \in N^h$. To
  see that this is well-defined, we show at first that, for each $n \in
  N^g$, there is at most one $n' \in N^h$ with $n \subseteq
  n'$. Suppose there is another node $n'' \in N^h$ with $n \subseteq
  n''$. Since $n \neq \emptyset$, this implies $n' \cap n'' \neq
  \emptyset$. Hence, $n' = n''$. Secondly, we show that there is at
  least one such node $n'$. Choose some $\pi^* \in n$. Since then
  $\pi^* \sim_g \pi^*$ and, by \itema{}, also $\pi^* \sim_h \pi^*$
  holds, there is some $n' \in N^h$ with $\pi^* \in n'$. For each $\pi
  \in n$, we have $\pi^* \sim_g \pi$ and, therefore, $\pi^* \sim_h \pi$
  by \itema{}. Hence, $\pi \in n'$. So we know that $\phi$ is
  well-defined. By construction, $\phi$ satisfies \itemap{}. Moreover,
  because of \itemb{}, it is also easily seen to satisfy
  \itembp{}. Hence, $\phi$ is a homomorphism from $g$ to $h$.
\end{proof}

Intuitively, (a) means that $h$ has at least as
much sharing of nodes as $g$ has, whereas (b) means that $h$ has at
least the same non-$\Delta$-symbols as $g$.

\begin{corollary}[characterisation of $\Delta$-isomorphisms]
  \label{cor:isomOcc}
  Given $g,h \in \itgraphs$, the following holds:
  \begin{enumerate}[(i)]
  \item $\phi\fcolon N^g \funto N^h$ is a $\Delta$-isomorphism \quad
    iff \quad
    for all $n \in N^g$\\
    \begin{inparaenum}[(a)]
    \item $\nodePos{h}{\phi(n)} = \nodePos{g}{n}$, and \quad
    \item $\glab^g(n) = \glab^h(\phi(n))$ or
      $\glab^g(n),\glab^h(\phi(n))\in\Delta$.
    \end{inparaenum}
    \label{item:isomOcc1}
  \item  $g \isom_\Delta h$ \quad iff \quad
    \begin{inparaenum}[(a)]
    \item $\sim_g\ =\ \sim_h$, and \qquad
    \item $g(\pi) = h(\pi)$ or $g(\pi),h(\pi) \in \Delta$.
    \end{inparaenum}
    \label{item:isomOcc2}
  \end{enumerate}
\end{corollary}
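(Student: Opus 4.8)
The plan is to derive both statements directly from the earlier characterisations of $\Delta$-homomorphisms, exploiting that a $\Delta$-isomorphism is precisely a $\Delta$-homomorphism whose inverse is again a $\Delta$-homomorphism and that, by Proposition~\ref{prop:catgraph}, any two $\Delta$-homomorphisms in opposite directions compose to the identity and are hence mutually inverse. Thus $g \isom_\Delta h$ holds iff there are $\Delta$-homomorphisms both ways, and a function $\phi\fcolon N^g \funto N^h$ is a $\Delta$-isomorphism iff both $\phi$ and $\phi^{-1}$ are $\Delta$-homomorphisms.

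For part~(\ref{item:isomOcc2}) I would apply Lemma~\ref{lem:occrephom} to each direction: the existence of a $\Delta$-homomorphism $g \homto_\Delta h$ amounts to $\pi \sim_g \pi' \Rightarrow \pi \sim_h \pi'$ together with $g(\pi) = h(\pi)$ whenever $g(\pi)\nin\Delta$ (for all $\pi,\pi'\in\pos g$), and symmetrically for $h \homto_\Delta g$. Instantiating the two implications at $\pi = \pi'$ already yields $\pos g = \pos h$, and together they give $\sim_g\ =\ \sim_h$, i.e.\ condition~(a); combining the two labelling conditions in contrapositive form yields that $g(\pi)\neq h(\pi)$ implies $g(\pi),h(\pi)\in\Delta$, i.e.\ condition~(b). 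For the converse, (a) and (b) are readily seen to imply the hypotheses of Lemma~\ref{lem:occrephom} in both directions, so the two $\Delta$-homomorphisms exist and $g \isom_\Delta h$.

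For part~(\ref{item:isomOcc1}), the ``only if'' direction is the same kind of combination: if $\phi$ is a $\Delta$-isomorphism then $\phi\fcolon g \homto_\Delta h$ and $\phi^{-1}\fcolon h \homto_\Delta g$ are $\Delta$-homomorphisms, so Lemma~\ref{lem:canhom} (read through Remark~\ref{rem:canhom} for possibly non-canonical term graphs) gives $\nodePos g n \subseteq \nodePos h {\phi(n)}$ from the first and $\nodePos h {\phi(n)} \subseteq \nodePos g n$ from the second, hence equality — condition~(a) — while, if $\glab^g(n)\nin\Delta$ the labelling condition for $\phi$ gives $\glab^g(n) = \glab^h(\phi(n))$ and if $\glab^h(\phi(n))\nin\Delta$ the one for $\phi^{-1}$ does, so that $\glab^g(n)\neq\glab^h(\phi(n))$ forces both labels into $\Delta$ — condition~(b). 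For the ``if'' direction, suppose $\phi\fcolon N^g \funto N^h$ satisfies (a) and (b). Condition (a) alone makes $\phi$ injective, since distinct nodes have disjoint non-empty position sets, and Lemma~\ref{lem:canhom}/Remark~\ref{rem:canhom} shows from (a) and (b) that $\phi\fcolon g \homto_\Delta h$ is a $\Delta$-homomorphism. The step that actually requires work is surjectivity: I would first prove $\pos h \subseteq \pos g$ by induction on the length of a position $\pi$, the crux being that if $\pi\concat\seq{i}\in\pos h$ and (by the induction hypothesis) $\pi\in\pos g$ with $n = \nodeAtPos g \pi$, then (a) identifies the node at $\pi$ in $h$ as $\phi(n)$, whose arity exceeds $i$; since every symbol of $\Delta$ is nullary, condition (b) forces $\glab^g(n) = \glab^h(\phi(n))$, whence $\rank g n$ also exceeds $i$ and $\pi\concat\seq{i}\in\pos g$. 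Surjectivity then follows, since any $m\in N^h$ has a position $\pi$, which lies in $\pos h \subseteq \pos g$, and by (a) the node $\nodeAtPos g \pi$ is mapped onto $m$ by $\phi$. Thus $\phi$ is bijective, $\phi^{-1}$ is a function, and (a), (b) show via Lemma~\ref{lem:canhom}/Remark~\ref{rem:canhom} that $\phi^{-1}\fcolon h \homto_\Delta g$ is a $\Delta$-homomorphism; by Proposition~\ref{prop:catgraph}, $\phi$ and $\phi^{-1}$ are mutual inverses, so $\phi$ is a $\Delta$-isomorphism.

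I expect the surjectivity argument in part~(\ref{item:isomOcc1}) to be the only real obstacle. Without condition~(b) and the nullarity of $\Delta$-symbols a position-preserving map $\phi$ need not be onto — its image could correspond to a proper ``sub-position-structure'' of $h$ — so the induction on positions comparing the arities of $n$ and $\phi(n)$ is where the substance lies; everything else is bookkeeping with Lemmas~\ref{lem:canhom} and~\ref{lem:occrephom} and Proposition~\ref{prop:catgraph}.
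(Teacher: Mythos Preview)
Your proposal is correct and follows essentially the same route as the paper, which simply records the result as an ``immediate consequence of Lemma~\ref{lem:canhom} resp.\ Lemma~\ref{lem:occrephom} and Proposition~\ref{prop:catgraph}.'' You invoke exactly these three ingredients and unpack them faithfully; in particular, your observation that surjectivity in part~(\ref{item:isomOcc1}) genuinely needs the nullarity of $\Delta$-symbols (via the arity comparison in the induction on positions) is a detail the paper leaves implicit but which is indeed required for the ``if'' direction to go through.
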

\begin{proof}
  Immediate consequence of Lemma~\ref{lem:canhom} resp.\
  Lemma~\ref{lem:occrephom} and
  Proposition~\ref{prop:catgraph}.
\end{proof}
From (\ref{item:isomOcc2}) we immediately obtain the following
equivalence:
\begin{corollary}
  \label{cor:sig-isom-isom}
  Given $g,h \in \itgraphs$ and $\sigma \in \Sigma^{(0)}$, we have $g
  \isom h$ iff $g \isom_\sigma h$.
\end{corollary}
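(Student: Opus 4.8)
The plan is to derive this directly from Corollary~\ref{cor:isomOcc}(\ref{item:isomOcc2}), which reduces both $g \isom h$ and $g \isom_\sigma h$ to conditions on the sharing relation and the labelling. The sharing condition, $\sim_g\ =\ \sim_h$, is literally the same in both characterisations, so the entire content of the corollary is the equivalence of the two labelling conditions under the assumption that $\sim_g\ =\ \sim_h$ already holds. Concretely, I must show that ``$g(\pi) = h(\pi)$ for all $\pi \in \pos{g}$'' is equivalent to ``$g(\pi) = h(\pi)$ or $g(\pi), h(\pi) \in \set{\sigma}$ for all $\pi \in \pos{g}$'', given that $g$ and $h$ have the same positions (which follows from $\sim_g\ =\ \sim_h$, since $\pos{g} = \bigcup \sim_g$ as a set of positions).

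The forward direction is trivial: if $g(\pi) = h(\pi)$ everywhere, then the weaker disjunction clearly holds. For the converse, suppose the $\sigma$-labelling condition holds, and fix $\pi \in \pos{g}$. Either $g(\pi) = h(\pi)$, in which case we are done, or else $g(\pi), h(\pi) \in \set{\sigma}$; but $\set{\sigma}$ is a singleton, so $g(\pi) = \sigma = h(\pi)$ anyway. Hence $g(\pi) = h(\pi)$ in all cases, which is exactly the $\Delta = \emptyset$ labelling condition. Combining this with the (identical) sharing conditions and applying Corollary~\ref{cor:isomOcc}(\ref{item:isomOcc2}) in both directions yields $g \isom h \iff g \isom_\sigma h$.

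There is no real obstacle here; the only thing to be careful about is making explicit that the position sets of $g$ and $h$ coincide, so that the two labelling conditions are quantified over the same domain. This is immediate from $\sim_g\ =\ \sim_h$, since in a canonical term graph the underlying set of positions is recovered as the union of the equivalence classes, and the general case reduces to the canonical one as in the proof of Lemma~\ref{lem:occrephom}. I would therefore write the proof in two or three sentences, invoking Corollary~\ref{cor:isomOcc} and the observation that $\Delta = \set{\sigma}$ being a singleton forces the disjunction $g(\pi) = h(\pi)$ or $g(\pi),h(\pi) \in \Delta$ to collapse to $g(\pi) = h(\pi)$.

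\begin{proof}
  By Corollary~\ref{cor:isomOcc}(\ref{item:isomOcc2}), $g \isom h$ iff
  $\sim_g\ =\ \sim_h$ and $g(\pi) = h(\pi)$ for all $\pi \in \pos{g}$,
  while $g \isom_\sigma h$ iff $\sim_g\ =\ \sim_h$ and, for all $\pi
  \in \pos{g}$, either $g(\pi) = h(\pi)$ or $g(\pi),h(\pi) \in
  \set{\sigma}$. Note that $\sim_g\ =\ \sim_h$ implies $\pos{g} =
  \pos{h}$, so the two labelling conditions range over the same set of
  positions. It therefore suffices to show that, assuming $\sim_g\ =\
  \sim_h$, these two labelling conditions are equivalent. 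The forward
  implication is immediate. For the converse, fix $\pi \in \pos{g}$
  and suppose $g(\pi) = h(\pi)$ or $g(\pi),h(\pi) \in \set{\sigma}$.
  In the first case there is nothing to prove; in the second case,
  since $\set{\sigma}$ is a singleton, $g(\pi) = \sigma = h(\pi)$.
  Hence $g(\pi) = h(\pi)$ for all $\pi \in \pos{g}$, which together
  with $\sim_g\ =\ \sim_h$ gives $g \isom h$ by
  Corollary~\ref{cor:isomOcc}(\ref{item:isomOcc2}).
\end{proof}
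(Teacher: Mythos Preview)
Your proof is correct and follows exactly the approach the paper intends: the paper simply states that the result follows immediately from Corollary~\ref{cor:isomOcc}(\ref{item:isomOcc2}), and your argument spells out precisely why, namely that the singleton $\set{\sigma}$ collapses the disjunctive labelling condition to equality.
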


Now we can revisit the notion of canonical term graphs using the above
characterisation of $\Delta$-isomorphisms. We will define a function
$\canon{\cdot}\fcolon \itgraphs \funto \ictgraphs$ that maps a term
graph to its canonical representation. To this end, let $g =
(N,\glab,\gsuc,r)$ be a term graph and define
$\canon{g}=(N',\glab',\gsuc',r')$ as follows:
\begin{align*}
  &N' = \setcom{\nodePos{g}{n}}{n \in N} & &r' = \nodePos{g}{r} \\
  &\glab'(\nodePos{g}{n}) = \glab(n) & &\gsuc'_i(\nodePos{g}{n}) =
  \nodePos{g}{\gsuc_i(n)} & \text{for all } n \in N, 0 \le i < \rank{g}{n}
\end{align*}
$\canon{g}$ is obviously a well-defined canonical term graph.  With
this definition we indeed capture the idea of a canonical
representation of isomorphism classes:

\begin{proposition}[canonical partial term graphs are a canonical representation]
  \label{prop:canon}
  Given $g \in \itgraphs$, the term graph $\canon{g}$ canonically
  represents the equivalence class $\eqc{g}{\isom}$. More precisely,
  it holds that
  \begin{center}
    \begin{inparaenum}[(i)]
    \item $\eqc{g}{\isom} = \eqc{\canon{g}}{\isom}$, and\qquad
    \item $\eqc{g}{\isom} = \eqc{h}{\isom}$ \quad iff \quad $\canon{g}
      = \canon{h}$.
    \end{inparaenum}
  \end{center}
  In particular, we have, for all canonical term graphs $g,h$, that $g =
  h$ iff $g \isom h$.
\end{proposition}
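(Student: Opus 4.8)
The plan is to derive everything from Corollary~\ref{cor:isomOcc}(\ref{item:isomOcc2}), which characterises $\isom$ on arbitrary term graphs (with $\Delta = \emptyset$) in terms of the equivalence $\sim_g$ on positions and the symbol-at-position function $g(\cdot)$. The first observation to establish is that, for \emph{canonical} term graphs, these two data determine the term graph completely: if $g$ is canonical, then $N^g$ is by definition the set of $\sim_g$-equivalence classes, $\glab^g$ is read off from $g(\cdot)$, $\gsuc^g$ is recovered by $\gsuc^g_i(n) = \{\pi\concat\seq{i} \mid \pi \in n\}$ (which is exactly an equivalence class, by the path structure of positions), and $r^g = \emptyseq$'s class. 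Hence for canonical $g,h$ we have $g = h$ iff $\sim_g\,=\,\sim_h$ and $g(\pi) = h(\pi)$ for all positions, i.e.\ (by the corollary) iff $g \isom h$. This already gives the final ``in particular'' clause.

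Next I would verify that $\canon{g} \isom g$ for every $g \in \itgraphs$. The natural candidate isomorphism is $n \mapsto \nodePos{g}{n}$; by Corollary~\ref{cor:isomOcc}(\ref{item:isomOcc1}) it suffices to check that $\nodePos{\canon{g}}{\nodePos{g}{n}} = \nodePos{g}{n}$ and that the labelling is preserved. The labelling is immediate from the definition $\glab'(\nodePos{g}{n}) = \glab(n)$. For the position condition, one shows by induction on $\len{\pi}$ that $\pi \in \nodePos{\canon{g}}{\nodePos{g}{n}}$ iff $\pi \in \nodePos{g}{n}$, using the defining clauses $r' = \nodePos{g}{r}$ and $\gsuc'_i(\nodePos{g}{n}) = \nodePos{g}{\gsuc_i(n)}$ together with the recursive definition of positions; this is the routine core of the argument. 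From $\canon{g} \isom g$, part~(i) follows, since $\isom$ is an equivalence relation: $\eqc{g}{\isom} = \eqc{\canon{g}}{\isom}$.

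Finally, part~(ii): if $\eqc{g}{\isom} = \eqc{h}{\isom}$, i.e.\ $g \isom h$, then $\canon{g} \isom g \isom h \isom \canon{h}$, so $\canon{g} \isom \canon{h}$, and since both are canonical the ``in particular'' clause already proved gives $\canon{g} = \canon{h}$. Conversely, if $\canon{g} = \canon{h}$ then trivially $g \isom \canon{g} = \canon{h} \isom h$, so $\eqc{g}{\isom} = \eqc{h}{\isom}$. I expect the only mildly delicate point to be the inductive verification that the positions of a node $\nodePos{g}{n}$ inside $\canon{g}$ are exactly the positions of $n$ inside $g$ — in particular that the recursion in the definition of $\canon{g}$'s successor function is consistent, i.e.\ that $\nodePos{g}{\gsuc_i(n)}$ genuinely collects all positions $\pi\concat\seq{i}$ with $\pi \in \nodePos{g}{n}$ and nothing more; everything else is bookkeeping with the already-proved characterisation lemmas.
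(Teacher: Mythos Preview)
Your proposal is correct and follows the same approach as the paper, which proves the proposition in a single sentence: ``Straightforward consequence of Corollary~\ref{cor:isomOcc}.'' What you have written is simply a careful unpacking of that one-liner---in particular, your direct reconstruction of a canonical term graph from $\sim_g$ and $g(\cdot)$, and the inductive verification that $\nodePos{\canon{g}}{\nodePos{g}{n}} = \nodePos{g}{n}$, are exactly the details the paper leaves implicit (and which it later packages as Lemma~\ref{lem:occrep}).
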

\begin{proof}
  Straightforward consequence of Corollary~\ref{cor:isomOcc}.
\end{proof}

\begin{rem}
  \label{rem:catTgraphs}
  $\Delta$-homomorphisms can be naturally lifted to
  $\quotient{\itgraphs}{\isom}$: We say that two
  $\Delta$-homomorphisms $\phi\fcolon g \homto_\Delta h$,
  $\phi'\fcolon g' \homto_\Delta h'$, are isomorphic, written $\phi
  \isom \phi'$ iff there are isomorphisms $\psi_1\fcolon g \isoto g'$
  and $\psi_2\fcolon h' \isoto h$ such that $\phi = \psi_2 \circ \phi'
  \circ \psi_1$. Given a $\Delta$-homomorphism $\phi\fcolon g
  \homto_\Delta h$ in $\itgraphs$, $\eqc{\phi}\isom\fcolon
  \eqc{g}\isom \homto_\Delta \eqc{h}\isom$ is a $\Delta$-homomorphism
  in $\quotient{\itgraphs}{\isom}$. These $\Delta$-homomorphisms then
  form a category which can easily be show to be isomorphic to the
  category of $\Delta$-homomorphisms on $\ictgraphs$ via the mapping
  $\eqc{\cdot}\isom$.
\end{rem}

Corollary~\ref{cor:isomOcc} has shown that term graphs can be
characterised up to isomorphism by only giving the equivalence
$\sim_g$ and the labelling $g(\cdot)\fcolon \pi \mapsto g(\pi)$. This
observation gives rise to the following definition:

\begin{definition}[labelled quotient tree]
  \label{def:occRep}
  A \emph{labelled quotient tree} over signature $\Sigma$ is a triple
  $(P,l,\sim)$ consisting of a non-empty set $P \subseteq \nats^*$, a
  function $l\fcolon P \funto \Sigma$, and an equivalence relation
  $\sim$ on $P$ that satisfies the following conditions for all
  $\pi,\pi' \in P$ and $i \in \nats$:
  \begin{align*}
    \pi\concat \seq i \in P \quad &\implies \quad \pi \in P \quad \text{
      and } \quad i < \srank{l(\pi)}
    \tag{reachability} \\
    \pi \sim \pi' \quad &\implies \quad
    \begin{cases}
      l(\pi) = l(\pi') &\text{ and }\\
      \pi\concat \seq j \sim \pi' \concat \seq j &\text{ for all }\; j <
      \srank{l(\pi)}
    \end{cases}
    \tag{congruence}
  \end{align*}
\end{definition}

The following lemma confirms that labelled quotient trees uniquely
characterise any term graph up to isomorphism:
\begin{lemma}
  \label{lem:occrep}
  Each term graph $g \in \itgraphs$ induces a \emph{canonical labelled
    quotient tree} $(\pos{g},g(\cdot),\sim_g)$ over $\Sigma$. Vice
  versa, for each labelled quotient tree $(P,l,\sim)$ over $\Sigma$
  there is a unique canonical term graph $g\in \ictgraphs$ whose
  canonical labelled quotient tree is $(P,l,\sim)$, i.e.\ $\pos{g} =
  P$, $g(\pi) = l(\pi)$ for all $\pi \in P$, and $\sim_g\ =\ \sim$.
\end{lemma}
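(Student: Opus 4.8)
The plan is to prove the two directions of the equivalence separately. For the first part (every term graph induces a canonical labelled quotient tree), I would take an arbitrary $g \in \itgraphs$ and verify that the triple $(\pos{g}, g(\cdot), \sim_g)$ satisfies the two defining conditions of a labelled quotient tree. Reachability is essentially a restatement of Definition~\ref{def:graphPath} together with the definition of $\pos{g}$: if $\pi \concat \seq i \in \pos{g}$, then there is a path from $r^g$ to some node passing through $\nodeAtPos{g}{\pi}$, so $\pi \in \pos{g}$, and the path step forces $i < \rank{g}{\nodeAtPos{g}{\pi}} = \srank{g(\pi)}$. For congruence, suppose $\pi \sim_g \pi'$, i.e.\ $\nodeAtPos{g}{\pi} = \nodeAtPos{g}{\pi'} =: n$. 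Then $g(\pi) = \glab^g(n) = g(\pi')$, and for each $j < \srank{g(\pi)}$ the nodes $\nodeAtPos{g}{\pi \concat \seq j}$ and $\nodeAtPos{g}{\pi' \concat \seq j}$ both equal $\gsuc^g_j(n)$, so $\pi \concat \seq j \sim_g \pi' \concat \seq j$. This part is routine unfolding of definitions.

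For the converse, given a labelled quotient tree $(P, l, \sim)$, I would explicitly build a canonical term graph $g$ and check it works. Set $N^g$ to be the set of $\sim$-equivalence classes of $P$, set $r^g = [\emptyseq]_\sim$ (note $\emptyseq \in P$ since $P$ is non-empty and reachability forces every prefix of a member of $P$, in particular $\emptyseq$, to lie in $P$), define $\glab^g([\pi]_\sim) = l(\pi)$ and $\gsuc^g_i([\pi]_\sim) = [\pi \concat \seq i]_\sim$. The congruence condition is exactly what is needed to see that $\glab^g$ and $\gsuc^g$ are well-defined on equivalence classes, and that $\len{\gsuc^g(n)} = \srank{\glab^g(n)}$; reachability guarantees $\pi \concat \seq i \in P$ whenever it is needed and that every node is reachable from $r^g$ (induct on the length of a representative $\pi$ of a class). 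So $g$ is a genuine term graph. Then I would check that $g$ is canonical, i.e.\ that $\nodePos{g}{[\pi]_\sim} = [\pi]_\sim$, by showing $\pi' \in \nodePos{g}{n}$ iff $\pi' \in n$: one direction is an induction on $\len{\pi'}$ using the successor definition, the other is the same induction run backwards. This immediately yields $\pos{g} = P$, $g(\cdot) = l$, and $\sim_g\ =\ \sim$.

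For uniqueness, I would appeal to Corollary~\ref{cor:isomOcc}(\ref{item:isomOcc2}): any two canonical term graphs with the same induced labelled quotient tree have the same $\sim$ and the same labelling, hence are isomorphic, hence (by Proposition~\ref{prop:canon}) equal. The main obstacle — really the only place where care is required — is the well-definedness of $\gsuc^g$ on equivalence classes and the verification that $g$ is canonical; both hinge on using the congruence condition in the right direction and keeping the induction on position-length clean. Everything else is bookkeeping that mirrors the correspondence already established in Corollary~\ref{cor:isomOcc} and Proposition~\ref{prop:canon}.
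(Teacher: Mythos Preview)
Your proposal is correct and follows essentially the same approach as the paper: the paper dismisses the first direction as trivial, constructs the term graph from $(P,l,\sim)$ exactly as you do (nodes are $\sim$-classes, root is the class of $\emptyseq$, labelling and successors are defined via representatives with well-definedness coming from the congruence condition), observes canonicity via an easy induction showing $\nodePos{g}{n} = n$, and derives uniqueness from Corollary~\ref{cor:isomOcc} and Proposition~\ref{prop:canon}. If anything, you spell out more detail than the paper does.
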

\begin{proof}
  The first part is trivial: $(\pos{g},g(\cdot),\sim_g)$ satisfies the
  conditions from Definition~\ref{def:occRep}.
  
  Let $(P,l, \sim)$ be a labelled quotient tree. Define the term graph
  $g = (N,\glab,\gsuc,r)$ by
  \begin{align*}
    N &= \quotient{P}{\sim} &
    \glab(n) = f \quad &\text{ iff } \quad \exists \pi \in n.\;l(\pi)
    = f\\
    r = n \quad &\text{ iff } \quad \emptyseq \in n &
    \gsuc_i(n) = n' \quad &\text{ iff } \quad \exists \pi \in
    n.\; \pi\concat \seq i \in n'
  \end{align*}
  The functions $\glab$ and $\gsuc$ are well-defined due to the
  congruence condition satisfied by $(P,l, \sim)$. Since $P$ is
  non-empty and closed under prefixes, it contains $\emptyseq$. Hence,
  $r$ is well-defined. Moreover, by the reachability condition, each
  node in $N$ is reachable from the root node. An easy induction proof
  shows that $\nodePos{g}{n} = n$ for each node $n \in N$. Thus, $g$
  is a well-defined canonical term graph. The canonical labelled
  quotient tree of $g$ is obviously $(P,l, \sim)$. Whenever there are
  two canonical term graphs with labelled quotient tree $(P,l, \sim)$,
  they are isomorphic due to Corollary~\ref{cor:isomOcc} and,
  therefore, have to be identical by Proposition~\ref{prop:canon}.
\end{proof}

Labelled quotient trees provide a valuable tool for constructing
canonical term graphs. Nevertheless, the original graph representation
remains convenient for practical purposes as it allows a
straightforward formalisation of term graph rewriting and provides a
finite representation of finite cyclic term graphs which induce an
infinite labelled quotient tree.

Before we continue, it is instructive to make the correspondence
between terms and term graphs clear. Note, that there is an obvious
one-to-one correspondence between canonical term \emph{trees} and
terms. For example, the term tree depicted in
Figure~\ref{fig:exTermTree} corresponds to the term $f(a,h(a,b))$. We
thus consider the set of terms $\iterms$ to be the subset of canonical
term trees of $\ictgraphs$.

With this correspondence in mind, we can define the \emph{unravelling}
of a term graph $g$ as the unique term $t$ such that there is a
homomorphism $\phi\fcolon t \homto g$. The unravelling of cyclic term
graphs yields infinite terms, e.g.\ in Figure~\ref{fig:gredEx} on
page~\pageref{fig:gredEx}, the term $h_\omega$ is the unravelling of
the term graph $g_2$. We use the notation $\unrav{g}$ for the
unravelling of $g$.

Another convenience for dealing with term graphs is a linear notation
that makes it easy to write down (canonical) term graphs instead of
using the formal definition or a drawing. The notation that we use is
based on the linear notation for graphs by Barendregt et
al.\cite{barendregt87parle}:
\begin{definition}
  \label{def:linGraphNot}
  Let $\Sigma$ be a signature, $\calN$ a countably infinite set (of
  names) disjoint from $\Sigma$ and $\oh \Sigma$ a signature such $n
  \in \oh\Sigma^{(0)}$ and $f, f^n \in \oh\Sigma^{(k)}$ for each
  $n\in\calN$, $k \in \nats$ and $f \in \Sigma^{(k)}$. A \emph{linear
    notation for a canonical term graph} in $\ictgraphs$ is a term $t
  \in \iterms[\oh\Sigma]$ such that for each $n\in\calN$ that occurs
  in $t$, there is exactly one occurrence of a function symbol of the
  form $\nn n f$ in $t$.

  For each such linear notation $t$ we define the corresponding
  canonical term graph $g$ as follows: Consider the term tree
  representation of $t$ with the root node $r$. Redirect every edge to
  a node labelled $n$ to the unique node labelled $\nn n f$. Then,
  change all labellings of the form $f^n$ to $f$. After removing all
  nodes not reachable from the node $r$, define $g$ as the canonical
  term graph of the thus obtained term graph rooted in $r$.

  We use $n,m$ and primed resp.\ indexed variants thereof to denote
  names in $\calN$.
\end{definition}

Intuitively, in a linear notation for a term graph, a subterm $n$
denotes a pointer to a subterm with the corresponding name $n$, i.e.\
a subterm of the form $f^n(t_1,\dots,t_k)$.

\begin{example}
  Consider the term graph in Figure~\ref{fig:exTermGraph}. This term
  graph can be described by the linear notation
  $\nn{n_1}f(\nn{n_2}h(n_1,\nn{n_3}c),f(n_2,n_3))$. On the other hand,
  $\nn{n_1}f(n_1,n_2)$ and $f(\nn n a, \nn n b)$ are not valid linear
  notations.
\end{example}

Note that every term $t\in \iterms$ is a linear notation for the
corresponding term tree in $\ictgraphs$.

\section{Partial Order on Term Graphs}
\label{sec:partial-order-lebot1}

In this section, we want to establish a partial order suitable for
formalising convergence of sequences of canonical term graphs
similarly to $\prs$-convergence on terms.

In previous work, we have studied several different partial orders on
term graphs and the notion of convergence they induce
\cite{bahr11rta}. All of these partial orders have in common that they
are based on $\bot$-homomorphisms. This approach is founded on the
observation that if we consider terms as term trees, then
$\bot$-homomorphisms characterise the partial order on terms:
\[
s \lebot t \iff \text{ there is a $\bot$-homomorphism } \phi\fcolon s
\homto_\bot t.
\]
Thus $\bot$-homomorphisms constitute the ideal tool to define a
partial order on partial term graphs, i.e.\ term graphs over the
signature $\Sigma_\bot = \Sigma \uplus \set{\bot}$.

In this paper, we focus on the simplest among these partial orders on
term graphs:
\begin{definition}
  The relation $\lebotg$ on $\iptgraphs$ is defined as follows: $g
  \lebotg h$ iff there is a $\bot$-homomorphism $\phi\fcolon g
  \homto_\bot h$.
\end{definition}
\begin{proposition}[partial order $\lebotg$]
  The relation $\lebotg$ is a partial order on $\ipctgraphs$.
\end{proposition}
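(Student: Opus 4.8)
The plan is to verify the three defining properties of a partial order—reflexivity, transitivity, antisymmetry—for $\lebotg$ on $\ipctgraphs$, exploiting the categorical structure of $\bot$-homomorphisms established in Proposition~\ref{prop:catgraph}. For reflexivity, the identity mapping $\id_{N^g}\fcolon N^g \funto N^g$ is trivially a $\bot$-homomorphism $g \homto_\bot g$ (it is homomorphic in every node and preserves the root), so $g \lebotg g$. For transitivity, if $g \lebotg h$ via $\phi\fcolon g \homto_\bot h$ and $h \lebotg k$ via $\psi\fcolon h \homto_\bot k$, then by Proposition~\ref{prop:catgraph} (composition of $\Delta$-homomorphisms is again a $\Delta$-homomorphism) $\psi\circ\phi\fcolon g \homto_\bot k$ is a $\bot$-homomorphism, hence $g \lebotg k$.

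The only real content is antisymmetry, and here I would use the restriction to \emph{canonical} term graphs crucially. Suppose $g \lebotg h$ and $h \lebotg g$ with $g,h \in \ipctgraphs$, witnessed by $\phi\fcolon g \homto_\bot h$ and $\psi\fcolon h \homto_\bot g$. First I would invoke Lemma~\ref{lem:occrephom}: from $\phi$ we get $\pi \sim_g \pi' \implies \pi \sim_h \pi'$ for all $\pi,\pi'\in\pos g$, together with $\pos g \subseteq \pos h$ (since $\phi$ being a $\bot$-homomorphism forces every position of $g$ to be a position of $h$—this is implicit in the characterisation via Lemma~\ref{lem:canhom}\,(\ref{item:canhom1}), $n\subseteq\phi(n)$), and $g(\pi) = h(\pi)$ whenever $g(\pi)\neq\bot$. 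Symmetrically, from $\psi$ we get $\pos h \subseteq \pos g$, $\pi\sim_h\pi'\implies\pi\sim_g\pi'$, and $h(\pi)=g(\pi)$ whenever $h(\pi)\neq\bot$. Hence $\pos g = \pos h$ and $\sim_g\ =\ \sim_h$. For the labelling: if $g(\pi)\neq\bot$ then $h(\pi)=g(\pi)$; if $g(\pi)=\bot$, then either $h(\pi)=\bot=g(\pi)$, or $h(\pi)\neq\bot$ in which case the $\psi$-condition gives $g(\pi)=h(\pi)$, contradicting $g(\pi)=\bot$; so in all cases $g(\pi)=h(\pi)$. Thus $g$ and $h$ have identical canonical labelled quotient trees, so by Lemma~\ref{lem:occrep} (or directly Proposition~\ref{prop:canon}, since $g\isom h$ implies $g=h$ for canonical term graphs) we conclude $g=h$.

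The main obstacle is making precise the claim $\pos g \subseteq \pos h$ from the existence of a $\bot$-homomorphism, and more generally cleanly transporting the homomorphism-level data to the quotient-tree level; this is exactly what Lemma~\ref{lem:occrephom} and Corollary~\ref{cor:isomOcc} are designed to handle, so once those are cited the argument is short. A cleaner packaging: observe that $g \lebotg h$ and $h \lebotg g$ give mutually inverse $\bot$-homomorphisms (by the preorder property, Proposition~\ref{prop:catgraph}), i.e.\ a $\bot$-isomorphism $g \isom_\bot h$; then Corollary~\ref{cor:isomOcc}\,(\ref{item:isomOcc2}) yields $\sim_g\ =\ \sim_h$ and $g(\pi),h(\pi)$ agreeing or both in $\set\bot$, and one disposes of the ``both in $\set\bot$'' case exactly as above (here $\Delta=\set\bot$ is a singleton, so ``both in $\Delta$'' just means both equal $\bot$). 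Either route concludes $g = h$ by Proposition~\ref{prop:canon}.
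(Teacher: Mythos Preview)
Your proof is correct and follows essentially the same approach as the paper. The paper's argument is slightly more streamlined: for antisymmetry it observes (via Proposition~\ref{prop:catgraph}) that $g \isom_\bot h$, then invokes Corollary~\ref{cor:sig-isom-isom} directly to conclude $g \isom h$, and finishes with Proposition~\ref{prop:canon}---this is precisely your ``cleaner packaging'' route, except that Corollary~\ref{cor:sig-isom-isom} packages the ``both in $\set\bot$ just means both equal $\bot$'' step you spell out by hand.
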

\begin{proof}
  Transitivity and reflexivity of $\lebotg$ follows immediately from
  Proposition~\ref{prop:catgraph}. For antisymmetry, consider $g,h \in
  \ipctgraphs$ with $g \lebotg h$ and $h \lebotg g$. Then, by
  Proposition~\ref{prop:catgraph}, $g \isom_\bot h$. This is
  equivalent to $g \isom h$ by Corollary~\ref{cor:sig-isom-isom} from which we
  can conclude $g = h$ using Proposition~\ref{prop:canon}.
\end{proof}

In our previous attempts to formalise convergence on term graphs
\cite{bahr11rta}, this partial order was rejected as the induced
notion of convergence manifests some unintuitive behaviour. However,
as we will show in Section~\ref{sec:strong-convergence}, theses quirks
will vanish when we move to strong convergence.

Before we study the properties of the partial order $\lebotg$, it is
helpful to make its characterisation in terms of labelled quotient
trees explicit:
\begin{corollary}[characterisation of $\lebotg$]
  \label{cor:chaTgraphPoA}
  Let $g,h \in \ipctgraphs$. Then $g \lebotg h$ iff the following
  conditions are met:
  \begin{enumerate}[(a)]
  \item $\pi \sim_g \pi' \quad \implies \quad  \pi \sim_h \pi'$ \quad for all
    $\pi,\pi' \in \pos{g}$
    \label{item:chaTGraphPoA1}
 \item $g(\pi) = h(\pi)$ \quad for all $\pi\in \pos{g}$ with $g(\pi) \in
   \Sigma$.
    \label{item:chaTGraphPoA2}
  \end{enumerate}
\end{corollary}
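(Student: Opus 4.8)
The plan is to obtain the statement as an immediate specialisation of Lemma~\ref{lem:occrephom}, the characterisation of $\Delta$-homomorphisms, to the case $\Delta = \{\bot\}$. Recall that, by definition, $g \lebotg h$ holds iff there is a $\bot$-homomorphism $\phi\fcolon g \homto_\bot h$, and that $\homto_\bot$ is merely the shorthand for $\homto_{\{\bot\}}$ in the sense of Definition~\ref{def:D-hom}. Since $\ipctgraphs \subseteq \iptgraphs = \itgraphs[\Sigma_\bot]$, Lemma~\ref{lem:occrephom} is directly applicable with signature $\Sigma_\bot$ and $\Delta = \{\bot\}$.

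Concretely, I would first invoke Lemma~\ref{lem:occrephom} to get that there is a $\bot$-homomorphism $\phi\fcolon g \homto_\bot h$ iff, for all $\pi, \pi' \in \pos{g}$, we have (i) $\pi \sim_g \pi' \implies \pi \sim_h \pi'$, and (ii) $g(\pi) = h(\pi)$ whenever $g(\pi) \nin \{\bot\}$. Clause (i) is literally condition~(\ref{item:chaTGraphPoA1}). For clause (ii), I would observe that the underlying signature is $\Sigma_\bot = \Sigma \uplus \{\bot\}$, so the side condition $g(\pi) \nin \{\bot\}$ is equivalent to $g(\pi) \in \Sigma$; hence clause (ii) is exactly condition~(\ref{item:chaTGraphPoA2}). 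Chaining the two directions of the biconditional supplied by the lemma then yields the corollary.

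There is no genuine obstacle here: the result is a direct instance of the already-established Lemma~\ref{lem:occrephom}, and the only step worth spelling out is the rewriting of the side condition $g(\pi) \nin \Delta$ as $g(\pi) \in \Sigma$, which relies precisely on $\Delta = \{\bot\}$ being the complement of $\Sigma$ within the extended signature $\Sigma_\bot$.
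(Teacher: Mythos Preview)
Your proposal is correct and follows exactly the paper's approach: the paper's proof consists of the single sentence ``This follows immediately from Lemma~\ref{lem:occrephom}.'' Your additional observation that $g(\pi)\nin\{\bot\}$ is equivalent to $g(\pi)\in\Sigma$ because $\Sigma_\bot = \Sigma \uplus \{\bot\}$ is precisely the small unpacking needed to match the lemma's condition to clause~(\ref{item:chaTGraphPoA2}).
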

\begin{proof}
  This follows immediately from Lemma~\ref{lem:occrephom}.
\end{proof}%

Note that the partial order $\lebot$ on terms is entirely
characterised by (\ref{item:chaTGraphPoA2}). That is, the partial
order $\lebotg$ is simply the partial order $\lebot$ on its underlying
tree structure (i.e.\ its unravelling) plus the preservation of
sharing as stipulated by (\ref{item:chaTGraphPoA1}).

Next, we will show that the partial order on term graphs has the
properties that make it suitable as a basis for $\prs$-convergence,
i.e.\ that it forms a complete semilattice. At first we show its cpo
structure:
\begin{theorem}
  \label{thr:lebot1cpo}
  The relation $\lebotg$ is a complete partial order on
  $\ipctgraphs$. In particular, it has the least element $\bot$, and
  the least upper bound of a directed set $G$ is given by the
  following labelled quotient tree $(P,l,\sim)$:
  \begin{gather*}
    P = \bigcup\limits_{g\in G} \pos{g} \hspace{30pt}%
    \sim\ = \bigcup\limits_{g\in G} \sim_g \hspace{30pt}%
    l(\pi) =
    \begin{cases}
      f & \text{ if } f \in \Sigma \text{ and } \exists g \in G. \;
      g(\pi) = f \\
      \bot &\text{ otherwise }
    \end{cases}
  \end{gather*}
\end{theorem}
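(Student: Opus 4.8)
I would prove Theorem~\ref{thr:lebot1cpo} in three stages: first verify that the triple $(P,l,\sim)$ proposed for the least upper bound is actually a well-defined labelled quotient tree (so that by Lemma~\ref{lem:occrep} it denotes a genuine canonical term graph, call it $\widehat g$); then show that $\widehat g$ is an upper bound of the directed set $G$; and finally show it is the least such. The least-element claim is immediate, since the term graph consisting of a single $\bot$-labelled root has a trivial $\bot$-homomorphism into every term graph in $\ipctgraphs$ by the characterisation in Corollary~\ref{cor:chaTgraphPoA} (condition (b) is vacuous and (a) holds trivially).

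\textbf{Well-definedness of $(P,l,\sim)$.} I need to check the reachability and congruence conditions of Definition~\ref{def:occRep}. Reachability: if $\pi\concat\seq i\in P$ then $\pi\concat\seq i\in\pos g$ for some $g\in G$, hence $\pi\in\pos g\subseteq P$, and $i<\srank{g(\pi)}$; I must argue $\srank{l(\pi)}=\srank{g(\pi)}$, which follows once I know $l(\pi)=g(\pi)$ whenever $g(\pi)\neq\bot$, and when $g(\pi)=\bot$ there are no successors to worry about — but I must rule out the possibility that some $g'\in G$ has $g'(\pi)\in\Sigma$ with a smaller arity that would make $i$ out of range; directedness handles this. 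The key lemma, which I would isolate, is: for a directed set $G$ and $\pi\in\pos g\cap\pos{g'}$ with $g,g'\in G$, if $g(\pi)\in\Sigma$ then $g'(\pi)\in\{g(\pi),\bot\}$, and similarly $\sim_g$ and $\sim_{g'}$ agree on positions where both are defined in the relevant sense. This follows by taking an upper bound $g''\in G$ of $g,g'$ and applying Corollary~\ref{cor:chaTgraphPoA} to $g\lebotg g''$ and $g'\lebotg g''$: both labels must equal $g''(\pi)$ unless they are $\bot$. From this, $l$ is well-defined (the ``$f$'' case is unambiguous) and congruence follows: if $\pi\sim\pi'$, it is witnessed by $\sim_g$ for some $g$, so $g(\pi)=g(\pi')$ and $\pi\concat\seq j\sim_g\pi'\concat\seq j$; I then lift these facts about $g$ to facts about $l$ and $\sim$ using the agreement lemma, noting transitivity of $\sim$ (as a union of equivalence relations over a directed family, it is itself an equivalence — again using directedness to compose witnesses). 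I also need $\emptyseq\in P$, which holds since each $g$ has $\emptyseq\in\pos g$.

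\textbf{Upper bound and leastness.} That $\widehat g$ is an upper bound: for each $g\in G$, conditions (a) and (b) of Corollary~\ref{cor:chaTgraphPoA} for $g\lebotg\widehat g$ hold by construction — $\sim_g\subseteq\,\sim$ directly, and if $g(\pi)\in\Sigma$ then $l(\pi)=g(\pi)$ by definition of $l$ (here the agreement lemma guarantees the ``$f$'' branch picks $g(\pi)$). For leastness: suppose $h\in\ipctgraphs$ with $g\lebotg h$ for all $g\in G$. I must show $\widehat g\lebotg h$, i.e.\ check (a) and (b) of Corollary~\ref{cor:chaTgraphPoA}. For (a): if $\pi\sim\pi'$ then $\pi\sim_g\pi'$ for some $g\in G$, and since $g\lebotg h$ gives $\sim_g\subseteq\,\sim_h$, we get $\pi\sim_h\pi'$. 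For (b): if $l(\pi)\in\Sigma$, then by definition of $l$ there is $g\in G$ with $g(\pi)=l(\pi)$, and since $g\lebotg h$ and $g(\pi)\in\Sigma$, we get $h(\pi)=g(\pi)=l(\pi)$. Also I should note $\pos{\widehat g}=P\subseteq\pos h$, which is needed implicitly: each $\pos g\subseteq\pos h$ since a $\bot$-homomorphism is surjective onto its target's reachable nodes — actually this follows from (a) combined with the fact that positions are determined by $\sim$ and the domain, but cleanest is to observe $\phi\fcolon g\homto_\bot h$ sends positions of $g$ to positions of $h$, so $\pos g\subseteq\pos h$, hence $P\subseteq\pos h$.

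\textbf{Main obstacle.} The delicate point is the well-definedness of $\sim$ as an equivalence relation and the verification of the congruence condition, because both rely essentially on directedness to ``merge'' witnesses from different members of $G$ — transitivity of $\sim$ needs that if $\pi\sim_g\pi'$ and $\pi'\sim_{g'}\pi''$ then some $g''\geq g,g'$ in $G$ has both pairs related (via $\sim_g,\sim_{g'}\subseteq\,\sim_{g''}$), hence $\pi\sim_{g''}\pi''$. Everything else is a routine unwinding of Corollary~\ref{cor:chaTgraphPoA} and Lemma~\ref{lem:occrep}. I would state the ``compatibility within a directed set'' observation as an explicit sub-claim at the start of the proof and reuse it throughout.
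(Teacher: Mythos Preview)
Your proposal is correct and follows essentially the same route as the paper's proof: both verify that $(P,l,\sim)$ is a well-defined labelled quotient tree (well-definedness of $l$, equivalence-relation properties of $\sim$, reachability, congruence), then establish the upper-bound and least-upper-bound claims via Corollary~\ref{cor:chaTgraphPoA}, with directedness used at exactly the same places (well-definedness of $l$, transitivity of $\sim$, and the congruence condition). Your idea of isolating the ``compatibility within a directed set'' observation as an explicit sub-claim is a clean organisational choice, but the underlying argument is the same; note also that your worry about $P\subseteq\pos h$ is already handled by condition~(a) of Corollary~\ref{cor:chaTgraphPoA} applied with $\pi=\pi'$, so no separate argument is needed.
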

\begin{proof}
  The least element of $\lebotg$ is obviously $\bot$. Hence, it
  remains to be shown that each each directed subset of $\ipctgraphs$
  has a least upper bound. To this end, suppose that $G$ is a directed
  subset of $\ipctgraphs$. We define a canonical term graph $\ol g$ by
  giving the labelled quotient tree $(P,l,\sim)$

  \def\lebota{(\ref{item:chaTGraphPoA1})}
  \def\lebotb{(\ref{item:chaTGraphPoA2})}

  In order to show that the canonical term graph $\ol g$ given by the
  labelled quotient tree $(P,l,\sim)$ above is indeed the lub of $G$,
  we will make extensive use of
  Corollary~\ref{cor:chaTgraphPoA}. Therefore, we use \lebota{} and
  \lebotb{} to refer to the conditions mentioned there.

  At first we need to show that $l$ is indeed well-defined. For this
  purpose, let $g_1,g_2 \in G$ and $\pi\in \pos{g_1}\cap\pos{g_2}$
  with $g_1(\pi), g_2(\pi) \in\Sigma$. Since $G$ is directed, there is
  some $g \in G$ such that $g_1, g_2 \lebotg g$. By \lebotb{}, we can
  conclude $g_1(\pi) = g(\pi) = g_2(\pi)$.
  
  Next we show that $(P,l,\sim)$ is indeed a labelled quotient
  tree. Recall that $\sim$ needs to be an equivalence relation. For
  the reflexivity, assume that $\pi \in P$. Then there is some $g \in
  G$ with $\pi \in \pos{g}$. Since $\sim_g$ is an equivalence
  relation, $\pi \sim_g \pi$ must hold and, therefore, $\pi \sim
  \pi$. For the symmetry, assume that $\pi_1 \sim \pi_2$. Then there
  is some $g\in G$ such that $\pi_1 \sim_g \pi_2$. Hence, we get
  $\pi_2 \sim_g \pi_1$ and, consequently, $\pi_2 \sim \pi_1$. In order
  to show transitivity, assume that $\pi_1 \sim \pi_2, \pi_2 \sim
  \pi_3$. That is, there are $g_1, g_2\in G$ with $\pi_1 \sim_{g_1}
  \pi_2$ and $\pi_2 \sim_{g_2} \pi_3$. Since $G$ is directed, we find
  some $g\in G$ such that $g_1,g_2\lebotg g$. By \lebota{}, this
  implies that also $\pi_1 \sim_{g} \pi_2$ and $\pi_2 \sim_{g}
  \pi_3$. Hence, $\pi_1 \sim_{g} \pi_3$ and, therefore, $\pi_1 \sim
  \pi_3$.

  For the reachability condition, let $\pi\concat \seq i \in P$. That is,
  there is a $g \in G$ with $\pi \concat \seq i \in \pos{g}$. Hence, $\pi
  \in \pos{g}$, which in turn implies $\pi \in P$. Moreover, $\pi
  \concat \seq i \in \pos{g}$ implies that $i < \srank{g(\pi)}$. Since
  $g(\pi)$ cannot be a nullary symbol and in particular not $\bot$, we
  obtain that $l(\pi) = g(\pi)$. Hence, $i < \srank{l(\pi)}$.

  For the congruence condition, assume that $\pi_1 \sim \pi_2$ and that
  $l(\pi_1) = f$. If $f \in \Sigma$, then there are $g_1,g_2 \in G$
  with $\pi_1 \sim_{g_1} \pi_2$ and $g_2(\pi_1) = f$. Since $G$ is
  directed, there is some $g\in G$ such that $g_1, g_2 \lebotg
  g$. Hence, by \lebota{} resp.\ \lebotb{}, we have $\pi_1 \sim_{g}
  \pi_2$ and $g(\pi_1) = f$. Using Lemma~\ref{lem:occrep} we can
  conclude that $g(\pi_2) = g(\pi_1) = f$ and that $\pi_1 \concat \seq i
  \sim_g \pi_2 \concat \seq i$ for all $i < \srank{g(\pi_1)}$. Because $g \in
  G$, it holds that $l(\pi_2) = f$ and that $\pi_1 \concat \seq i \sim \pi
  \concat \seq i$ for all $i < \srank{l(\pi_1)}$. If $f = \bot$, then also
  $l(\pi_2) = \bot$, for if $l(\pi_2) = f'$ for some $f' \in\Sigma$,
  then, by the symmetry of $\sim$ and the above argument (for the case
  $f\in\Sigma$), we would obtain $f = f'$ and, therefore, a
  contradiction. Since $\bot$ is a nullary symbol, the remainder of
  the condition is vacuously satisfied.

  This shows that $(P,l,\sim)$ is a labelled quotient tree which, by
  Lemma~\ref{lem:occrep}, uniquely defines a canonical term graph. In
  order to show that the thus obtained term graph $\overline{g}$ is an
  upper bound for $G$, we have to show that $g \lebotg \overline{g}$ by
  establishing \lebota{} and \lebotb. This is an immediate consequence
  of the construction.

  In the final part of this proof, we will show that $\ol g$ is the
  least upper bound of $G$. For this purpose, let $\hat{g}$ be an
  upper bound of $G$, i.e.\ $g \lebotg \hat{g}$ for all $g\in G$. We
  will show that $\overline{g} \lebotg \hat{g}$ by establishing \lebota{}
  and \lebotb. For \lebota{}, assume that $\pi_1 \sim \pi_2$. Hence,
  there is some $g \in G$ with $\pi_1 \sim_g \pi_2$. Since, by
  assumption, $g \lebotg \oh g$, we can conclude $\pi_1 \sim_{\oh g}
  \pi_2$ using \lebota{}. For \lebotb{}, assume $\pi \in P$ and $l(\pi)
  = f \in\Sigma$. Then there is some $g\in G$ with $g(\pi) =
  f$. Applying \lebotb{} then yields $\oh g(\pi) = f$ since $g \lebotg
  \oh g$.
\end{proof}

The following proposition shows that the partial order $\lebotg$ also
admits glbs of arbitrary non-empty sets:
\begin{proposition}
  \label{prop:lebot1glb}
  In the partially ordered set $(\ipctgraphs,\lebotg)$ every non-empty
  set has a glb. In particular, the glb of a non-empty set $G$ is
  given by the following labelled quotient tree $(P,l,\sim)$:
  \begin{align*}
    P &= \setcom{\pi \in \bigcap_{g \in G} \pos{g}}{\forall \pi' <
      \pi\exists f \in \Sigma_\bot\forall g \in G: g(\pi') = f}\\
    l(\pi) &=
    \begin{cases}
      f &\text{if } \forall g \in G: f = g(\pi)\\
      \bot &\text{otherwise}
    \end{cases}\qquad\qquad
    \sim\ = \bigcap_{g\in G} \sim_g \cap\ P \times P
  \end{align*}
\end{proposition}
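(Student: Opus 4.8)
The plan is to mimic the structure of the proof of Theorem~\ref{thr:lebot1cpo}. First I would verify that the triple $(P,l,\sim)$ is a well-defined labelled quotient tree in the sense of Definition~\ref{def:occRep}, so that by Lemma~\ref{lem:occrep} it uniquely determines a canonical term graph, which I shall call $\ol g$. Well-definedness of $l$ is immediate: if $\forall g \in G: f = g(\pi)$ for two symbols $f$, they must agree. Checking that $\sim$ is an equivalence relation on $P$ is routine, using that each $\sim_g$ is one and that the definition of $P$ is designed exactly so that the common prefixes behave coherently; I would note that reflexivity needs $\pi \sim_g \pi$ for \emph{every} $g \in G$, which holds because $P \subseteq \bigcap_{g} \pos g$. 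For the reachability condition, if $\pi \concat \seq i \in P$ then $\pi \concat \seq i \in \pos g$ for all $g$, hence $\pi \in \pos g$ for all $g$; moreover all $g$ agree on the symbol at each proper prefix of $\pi \concat \seq i$, in particular at $\pi$, so $l(\pi) = g(\pi)$ for that common value and $i < \srank{l(\pi)}$ follows. The congruence condition is the analogue of the corresponding step in Theorem~\ref{thr:lebot1cpo}: from $\pi_1 \sim \pi_2$ we get $\pi_1 \sim_g \pi_2$ for all $g$, hence (Lemma~\ref{lem:occrep}) $g(\pi_1) = g(\pi_2)$ and $\pi_1 \concat \seq j \sim_g \pi_2 \concat \seq j$ for all $g$ and all $j$; one then has to check $\pi_1 \concat \seq j, \pi_2 \concat \seq j \in P$, which follows because all $g$ agree along the prefixes (any proper prefix of $\pi_i \concat \seq j$ is either a prefix of $\pi_i$, handled by $\pi_i \in P$, or is $\pi_i$ itself, where agreement is just shown).

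Next I would show $\ol g$ is a lower bound of $G$, i.e.\ $\ol g \lebotg g$ for each $g \in G$, using the characterisation in Corollary~\ref{cor:chaTgraphPoA}. Condition~(\ref{item:chaTGraphPoA1}) is immediate since $\sim\ \subseteq\ \sim_g$ by construction. For condition~(\ref{item:chaTGraphPoA2}), suppose $\pi \in P$ with $l(\pi) \in \Sigma$; by definition of $l$ this means $g'(\pi) = l(\pi)$ for all $g' \in G$, in particular $g(\pi) = l(\pi)$, as required. (One should also remark $P \subseteq \pos g$ so that the statement is meaningful, which holds by definition.)

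Finally I would show $\ol g$ is the \emph{greatest} lower bound: let $\hat g \in \ipctgraphs$ be any lower bound, so $\hat g \lebotg g$ for all $g \in G$; I must show $\hat g \lebotg \ol g$ via Corollary~\ref{cor:chaTgraphPoA}. The key intermediate claim is that $\pos{\hat g} \subseteq P$. To see this, take $\pi \in \pos{\hat g}$; then $\pi \in \pos g$ for every $g$ (since $\hat g \lebotg g$ means there is a $\bot$-homomorphism, whose existence, via Corollary~\ref{cor:chaTgraphPoA}\,(\ref{item:chaTGraphPoA1}) applied to $\pi \sim_{\hat g} \pi$, gives $\pi \sim_g \pi$, hence $\pi \in \pos g$). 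Moreover for each proper prefix $\pi' < \pi$ we have $\pi' \in \pos{\hat g}$, and I claim all $g \in G$ carry the same symbol at $\pi'$: if $\hat g(\pi') \in \Sigma$ then $g(\pi') = \hat g(\pi')$ for all $g$ by~(\ref{item:chaTGraphPoA2}), giving the common value $f = \hat g(\pi')$; if $\hat g(\pi') = \bot$, then $\pi'$ has a successor position $\pi' \concat \seq i \le \pi$ in $\hat g$ — but $\bot$ is nullary, contradiction, so this case cannot arise for a proper prefix of $\pi \in \pos{\hat g}$. Hence $\pi \in P$. With $\pos{\hat g} \subseteq P$ in hand, condition~(\ref{item:chaTGraphPoA1}) for $\hat g \lebotg \ol g$ follows: if $\pi \sim_{\hat g} \pi'$ then (since $\hat g \lebotg g$ for any fixed $g$) $\pi \sim_g \pi'$, and as $\pi, \pi' \in \pos{\hat g} \subseteq \bigcap_g \pos g$ this holds for all $g \in G$, so $(\pi,\pi') \in \bigcap_g \sim_g \cap\, P \times P\ =\ \sim$; and condition~(\ref{item:chaTGraphPoA2}) follows because if $\hat g(\pi) \in \Sigma$ then $g(\pi) = \hat g(\pi)$ for all $g$, so $l(\pi) = \hat g(\pi)$ by definition of $l$.

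The main obstacle I anticipate is the greatest-lower-bound part, specifically the proof that $\pos{\hat g} \subseteq P$: one has to argue carefully that a $\bot$-homomorphism out of $\hat g$ forces every position of $\hat g$ to survive in every $g \in G$ and that the symbols along proper prefixes are pinned down to a common value — the subtlety being the interplay between "$\bot$ is nullary, so $\bot$-labelled nodes have no proper descendants" and the definition of $P$, which quantifies over proper prefixes only. The rest is bookkeeping that parallels Theorem~\ref{thr:lebot1cpo} closely.
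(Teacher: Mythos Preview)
Your proposal is correct and follows essentially the same route as the paper's proof: verify that $(P,l,\sim)$ is a labelled quotient tree, then use the characterisation of $\lebotg$ (the paper cites Lemma~\ref{lem:occrephom}, you cite its specialisation Corollary~\ref{cor:chaTgraphPoA}) to check lower-bound and greatest-lower-bound, including the key step $\pos{\hat g}\subseteq P$ via the observation that proper prefixes in $\hat g$ carry non-$\bot$ labels. One small point to tighten when you write it up: in the congruence check, ``agreement is just shown'' conflates $g(\pi_1)=g(\pi_2)$ for each fixed $g$ with agreement of all $g\in G$ at $\pi_1$; the paper handles this by an explicit case split on whether all $g$ agree at $\pi_1$ (if not, $l(\pi_1)=\bot$ and there is nothing to check for successors), and you will need the same split to conclude $\pi_k\concat\seq j\in P$.
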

\begin{proof}
  At first we need to prove that $(P,l,\sim)$ is in fact a
  well-defined labelled quotient tree. That $\sim$ is an equivalence
  relation follows straightforwardly from the fact that each $\sim_g$
  is an equivalence relation.

  Next, we show the reachability and congruence properties from
  Definition~\ref{def:occRep}. In order to show the reachability
  property, assume some $\pi \concat \seq i \in P$. Then, for each
  $\pi' \le \pi$ there is some $f_{\pi'} \in \Sigma_\bot$ such that
  $g(\pi') = f_{\pi'}$ for all $g\in G$. Hence, $\pi \in P$. Moreover,
  we have in particular that $i < \srank{f_{\pi}} = \srank{l(\pi)}$.

  For the congruence condition, assume that $\pi_1 \sim \pi_2$. Hence,
  $\pi_1 \sim_g \pi_2$ for all $g \in G$. Consequently, we have for
  each $g \in G$ that $g(\pi_1) = g(\pi_2)$ and that $\pi_1 \concat
  \seq i \sim_g \pi_2 \concat \seq i$ for all $i < \srank{g(\pi_1)}$. We
  distinguish two cases: At first assume that there are some $g_1, g_2
  \in G$ with $g_1(\pi_1) \neq g_2(\pi_1)$. Hence, $l(\pi_2) =
  \bot$. Since, we also have that $g_1(\pi_2) = g_1(\pi_1) \neq
  g_2(\pi_1) = g_2(\pi_2)$ we can conclude that $l(\pi_2) = \bot =
  l(\pi_1)$. Since $\srank{\bot} = 0$ we are done for this case. Next,
  consider the alternative case that there is some $f \in \Sigma_\bot$
  such that $g(\pi_1) = f$ for all $g \in G$. Consequently, $l(\pi_1)
  = f$ and since also $g(\pi_2) = g(\pi_1) = f$ for all $g \in G$, we
  can conclude that $l(\pi_2) = f = l(\pi_1)$. Moreover, we obtain
  from the initial assumption for this case, that $\pi_1\concat \seq i,
  \pi_2 \concat \seq i \in P$ for all $i < \srank{f}$ which implies that
  $\pi_1 \concat \seq i \sim \pi_2 \concat \seq i$ for all $i < \srank{f} =
  \srank{l(\pi_1)}$.

  Next, we show that the term graph $\ol g$ defined by $(P,l,\sim)$ is
  a lower bound of $G$, i.e.\ that $\ol g \lebotg g$ for all $g \in
  G$. By Lemma~\ref{lem:occrephom}, it suffices to show $\sim \cap\ P
  \times P \subseteq\ \sim_g$ and $l(\pi) = g(\pi)$ for all $\pi \in
  P$ with $l(\pi) \in\Sigma$. Both conditions follow immediately from
  the construction of $\ol g$.
  
  Finally, we show that $\ol g$ is the greatest lower bound of $G$. To
  this end, let $\oh g \in \ipctgraphs$ with $\oh g \lebotg g$ for
  each $g \in G$. We will show that then $\oh g \lebotg \ol g$ using
  Lemma~\ref{lem:occrephom}. At first we show that $\pos{\oh g}
  \subseteq P$. Let $\pi \in \pos{\oh g}$. We know that $\oh g(\pi')
  \in\Sigma$ for all $\pi' < \pi$. According to
  Lemma~\ref{lem:occrephom}, using the assumption that $\oh g \lebotg
  g$ for all $g \in G$, we obtain that $g(\pi') = \oh g(\pi')$ for all
  $\pi' < \pi$. Consequently, $\pi \in P$. Next, we show part
  (\ref{item:occrephom1}) of Lemma~\ref{lem:occrephom}. Let
  $\pi_1,\pi_2 \in \pos{\oh g} \subseteq P$ with $\pi_1 \sim_{\oh g}
  \pi_2$. Hence, using the assumption that $\oh g$ is a lower bound of
  $G$, we have $\pi_1 \sim_g \pi_2$ for all $g\in G$ according to
  Lemma~\ref{lem:occrephom}. Consequently, $\pi_1 \sim \pi_2$. For
  part (\ref{item:occrephom2}) of Lemma~\ref{lem:occrephom} let $\pi
  \in \pos{\oh g} \subseteq P$ with $\oh g(\pi) = f \in\Sigma$. Using
  Lemma~\ref{lem:occrephom}, we obtain that $g(\pi) = f$ for all $g\in
  G$. Hence, $l(\pi) = f$.
\end{proof}

From this we can immediately derive the complete semilattice structure
of $\lebotg$:
\begin{theorem}
  \label{thr:complSemilattice}
  The partially ordered set $(\ipctgraphs,\lebotg)$ forms a complete
  semilattice.
\end{theorem}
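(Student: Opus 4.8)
The plan is simply to assemble the two structural results that have just been established. Recall from Section~\ref{sec:partial-orders} that a partially ordered set is a complete semilattice precisely when it is a cpo --- i.e.\ it has a least element and lubs of all directed sets --- and, in addition, every non-empty subset has a greatest lower bound. Both ingredients are already in hand.

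First I would invoke Theorem~\ref{thr:lebot1cpo}, which states that $(\ipctgraphs,\lebotg)$ is a complete partial order: it has $\bot$ as its least element, and every directed set has a least upper bound (given explicitly by a labelled quotient tree). This discharges the cpo part of the definition. Then I would invoke Proposition~\ref{prop:lebot1glb}, which shows that in $(\ipctgraphs,\lebotg)$ every non-empty set has a greatest lower bound, again presented by an explicit labelled quotient tree. Combining these two facts yields exactly the definition of a complete semilattice, so there is nothing further to prove.

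Consequently the ``hard part'' of this theorem has in fact already been carried out, namely in the proofs of Theorem~\ref{thr:lebot1cpo} and Proposition~\ref{prop:lebot1glb}: verifying that the candidate labelled quotient trees for lubs of directed sets and glbs of arbitrary non-empty sets genuinely satisfy the reachability and congruence conditions of Definition~\ref{def:occRep}, and that they are extremal with respect to $\lebotg$ via the characterisation in Corollary~\ref{cor:chaTgraphPoA}. The present statement only records that these two facts together match the definition of a complete semilattice; the proof is therefore a one-line combination of the preceding results and needs no new argument.
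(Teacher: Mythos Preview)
Your proposal is correct and matches the paper's own proof exactly: the paper's proof is the single line ``Follows from Theorem~\ref{thr:lebot1cpo} and Proposition~\ref{prop:lebot1glb}.'' Your account just spells out why these two results together instantiate the definition of a complete semilattice, which is precisely the intended one-line combination.
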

\begin{proof}
  Follows from Theorem~\ref{thr:lebot1cpo} and
  Proposition~\ref{prop:lebot1glb}.
\end{proof}

In particular, this means that the limit inferior is defined for every
sequence of term graphs. Moreover, from the constructions given in
Theorem~\ref{thr:lebot1cpo} and Proposition~\ref{prop:lebot1glb}, we
can derive the following direct construction of the limit inferior:
\begin{corollary}
  \label{cor:lebot1Liminf}%
  The limit inferior of a sequence $(g_\iota)_{\iota < \alpha}$ over
  $\ipctgraphs$ is given by the following labelled quotient tree
  $(P,\sim,l)$:
  \begin{align*}
    P &= \bigcup_{\beta<\alpha}
    \setcom{\pi\in\pos{g_\beta}}{\forall\pi'<\pi\forall\beta\le\iota<\alpha\colon
      g_\iota(\pi') = g_\beta(\pi')}\\
    \sim\ &= \left(\bigcup_{\beta<\alpha}
      \bigcap_{\beta\le\iota<\alpha} \sim_{g_\iota}\right) \cap P \times P\\
    l(\pi) &=
    \begin{cases}
      g_\beta(\pi) & \text{if } \exists \beta< \alpha \forall \beta
      \le \iota <
      \alpha\colon g_\iota(\pi) = g_\beta(\pi)\\
      \bot &\text{otherwise}
    \end{cases}
     \qquad \text{for all } \pi \in P
  \end{align*}
  In particular, given $\beta<\alpha$ and $\pi\in \pos{g_\beta}$, we
  have that $g(\pi) = g_\beta(\pi)$ if $g_\iota(\pi') = g_\beta(\pi')$
  for all $\pi' \le \pi$ and $\beta \le \iota < \alpha$.
\end{corollary}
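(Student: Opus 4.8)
The plan is to unfold the definition $\liminf_{\iota \limto \alpha} g_\iota = \Lub_{\beta<\alpha}\bigl(\Glb_{\beta\le\iota<\alpha} g_\iota\bigr)$ and push it through the two explicit constructions already at our disposal: Proposition~\ref{prop:lebot1glb} for the inner greatest lower bounds and Theorem~\ref{thr:lebot1cpo} for the outer least upper bound. (If $\alpha = 0$ the limit inferior is the empty lub $\bot$ and there is nothing to do, so I assume $\alpha > 0$.) Concretely, I would fix for each $\beta < \alpha$ the term graph $h_\beta = \Glb_{\beta\le\iota<\alpha} g_\iota$; this glb exists since $\set{g_\iota \mid \beta \le \iota < \alpha}$ is non-empty, and Proposition~\ref{prop:lebot1glb} hands us its labelled quotient tree $(P_\beta, l_\beta, \sim_\beta)$ verbatim.

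Next I would observe that $(h_\beta)_{\beta<\alpha}$ is $\lebotg$-monotone: for $\beta \le \beta'$ we have $\set{g_\iota \mid \beta' \le \iota < \alpha} \subseteq \set{g_\iota \mid \beta \le \iota < \alpha}$, so $h_\beta$ is a lower bound of the smaller set and hence $h_\beta \lebotg h_{\beta'}$. Thus $\set{h_\beta \mid \beta < \alpha}$ is a non-empty chain, in particular directed, so Theorem~\ref{thr:lebot1cpo} applies and yields the labelled quotient tree of $\Lub_{\beta<\alpha} h_\beta$ as $P = \bigcup_{\beta<\alpha} P_\beta$, $\sim\ = \bigcup_{\beta<\alpha} \sim_\beta$, and $l(\pi) = f$ iff $f \in \Sigma$ and $l_\beta(\pi) = f$ for some $\beta$. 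It then remains to rewrite this data in the stated closed form. For the positions, I would show that the description of $P_\beta$ coming from Proposition~\ref{prop:lebot1glb}, namely $\setcom{\pi \in \bigcap_{\beta\le\iota<\alpha}\pos{g_\iota}}{\forall \pi' < \pi\, \exists f \in \Sigma_\bot\, \forall \iota \in [\beta,\alpha)\colon g_\iota(\pi') = f}$, coincides with $\setcom{\pi \in \pos{g_\beta}}{\forall \pi' < \pi\, \forall \iota \in [\beta,\alpha)\colon g_\iota(\pi') = g_\beta(\pi')}$: once $\pi \in \pos{g_\beta}$ and all the $g_\iota$ agree on every $\pi' < \pi$, an easy induction along $\pi$ using the reachability condition forces $\pi \in \pos{g_\iota}$ for every $\iota \ge \beta$, and ``some common value $f$'' is then the same as ``the value $g_\beta(\pi')$''. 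The label clause is pure rewriting: $l_\beta(\pi)$ is a symbol $f \in \Sigma$ precisely when $g_\iota(\pi) = f$ for all $\iota \ge \beta$, so $l(\pi)$ is the eventually stable symbol at $\pi$ when that symbol lies in $\Sigma$, and $\bot$ otherwise — which is exactly the case split in the statement (an eventually stable value of $\bot$ also gives $l(\pi) = \bot$). Finally the ``in particular'' clause drops out by instantiating the label formula at $\pi$ itself, using $\pi \in P_\beta \subseteq P$.

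The one remaining point — and the step I expect to be slightly delicate — is the $\sim$-component: I would first record that $P_\beta \subseteq P_{\beta'}$ for $\beta \le \beta'$ (immediate from $h_\beta \lebotg h_{\beta'}$ via Corollary~\ref{cor:chaTgraphPoA}), and then check that $\bigcup_{\beta}\bigl(\bigcap_{\beta\le\iota<\alpha}\sim_{g_\iota} \cap\ P_\beta\times P_\beta\bigr)$ equals $\bigl(\bigcup_{\beta}\bigcap_{\beta\le\iota<\alpha}\sim_{g_\iota}\bigr)\cap P\times P$. The inclusion ``$\subseteq$'' is trivial; for ``$\supseteq$'' one takes a pair lying in $\bigcap_{\iota\ge\beta}\sim_{g_\iota}$ with both components in $P$, picks $\beta_1,\beta_2$ witnessing the $P$-membership of the two components, sets $\gamma = \max(\beta,\beta_1,\beta_2)$, and uses monotonicity of $(P_\beta)_\beta$ together with the fact that passing to a cofinal tail of indices only enlarges $\bigcap_{\iota\ge\beta}\sim_{g_\iota}$ (so the pair lies in $\bigcap_{\iota\ge\gamma}\sim_{g_\iota}$) to place the pair in $\sim_\gamma$. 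Apart from these two reconciliations — identifying the two forms of $P_\beta$ and distributing $\sim$ over the union — everything is routine bookkeeping with the constructions of Theorem~\ref{thr:lebot1cpo} and Proposition~\ref{prop:lebot1glb}.
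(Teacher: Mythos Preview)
Your proposal is correct and follows exactly the approach the paper indicates: the corollary is stated without proof, with the remark that it is derived ``from the constructions given in Theorem~\ref{thr:lebot1cpo} and Proposition~\ref{prop:lebot1glb}'', and you have simply spelled out that derivation in detail. The bookkeeping you flag as delicate (reconciling the two descriptions of $P_\beta$ via an induction on positions, and distributing the intersection for $\sim$ over the union using monotonicity of $(P_\beta)_\beta$) is precisely what is needed and is carried out correctly.
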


\begin{example}
  \label{ex:liminf}
  Figure~\ref{fig:gtransRed} on page~\pageref{fig:gtransRed} illustrates a sequence of term graphs
  $(h_\iota)_{\iota<\omega}$. Except for the edge to the root that
  closes a cycle each term graph $h_\iota$ as a tree structure. Since
  this edge is pushed down as the sequence progresses, it vanishes in
  the the limit inferior of $(h_\iota)_{\iota<\omega}$, depicted as
  $h_\omega$ in Figure~\ref{fig:gtransRed}.

  Changing acyclic sharing on the other hand exposes an oddity of the
  partial order $\lebotg$. Let $(g_\iota)_{\iota<\omega}$ be the
  sequence of term graphs illustrated in
  Figure~\ref{fig:convWeird}. The sequence alternates between $g_0$
  and $g_1$ which differ only in the sharing of the two arguments of
  the $f$ function symbol. Hence, there is an obvious homomorphism
  from $g_0$ to $g_1$ and we thus have $g_0 \lebotg g_1$. Therefore,
  $g_0$ is the greatest lower bound of every suffix of
  $(g_\iota)_{\iota<\omega}$, which means that
  $\liminf_{\iota\limto\omega} g_\iota = g_0$.
\end{example}
\begin{figure}
  \centering
  \begin{tikzpicture}[node distance=15mm]%
    \node (r1) {$f$} %
    child{ node (n1) {$c$} }%
    child{ node (n2) {$c$} };%
    \node[right=of r1] (r2) {$f$}%
    child {%
      node (n2) {$c$}%
      edge from parent[transparent] %
    };%
    \draw[->] (r2)%
    edge [bend right=25] (n2)%
    edge [bend left=25] (n2);%
    
    \draw[single step,shorten=5mm] (r1) -- (r2);%
    
    \node[right=of r2] (r3) {$f$} %
    child{ node (n1) {$c$} }%
    child{ node (n2) {$c$} };%
    
    \draw[single step,shorten=5mm] (r2) -- (r3);%
    
    \node[right=of r3] (r4) {$f$}%
    child {%
      node (n2) {$c$}%
      edge from parent[transparent] %
    };%
    \draw[->] (r4)%
    edge [bend right=25] (n2)%
    edge [bend left=25] (n2);%
    \draw[single step,shorten=5mm] (r3) -- (r4);%

    \node[node distance=25mm,right=of r4] (r5) {$f$} %
    child{ node (n1) {$c$} }%
    child{ node (n2) {$c$} };%
    
    \draw[dotted,thick,shorten=10mm] (r4) -- (r5);%
    \begin{scope}[node distance=1cm]
      \node[below=of r1] {$(g_0)$};
      \node[below=of r2] {$(g_1)$};
      \node[below=of r3] {$(g_2)$};
      \node[below=of r4] {$(g_4)$};
      \node[below=of r5] {$(g_\omega)$};
    \end{scope}
  \end{tikzpicture}
  \caption{Limit inferior in the presence of acyclic sharing.}
  \label{fig:convWeird}
\end{figure}
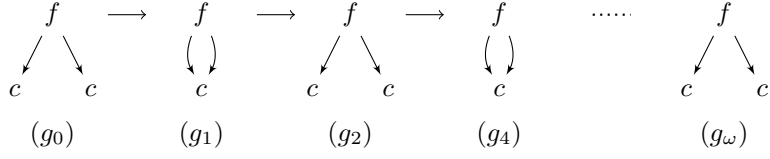

\section{Metric Spaces}
\label{sec:alternative-metric}

In this section, we shall define a metric space on canonical term
graphs. We base our approach to defining a metric distance on the
definition of the metric distance $\dd$ on terms.

Originally, Arnold and Nivat~\cite{arnold80fi} used a truncation
$\trunca{t}{d}$ of terms to define the metric on terms. The truncation
of a term $t$ at depth $d$ replaces all subterms at depth $d$ by
$\bot$:
\begin{align*}
  \trunca{t}{0} = \bot, \quad
  \trunca{f(t_1,\dots,t_k)}{d+1} =
  f(\trunca{t_1}{d},\dots,\trunca{t_k}{d}),\quad
  \trunca{t}{\infty} = t
\end{align*}

The similarity of two terms, on which the metric distance $\dd$ is
based, can thus be characterised via truncations:
\[
\similar{s}{t} = \max\setcom{d\in\nats\cup\set\infty}{\trunca{s}{d} =
  \trunca{t}{d}}
\]

We will adopt this approach for term graphs as well. To this end, we
will first define abstractly what a truncation on term graphs is and
how a metric distance can be derived from it. Then we show a concrete
truncation and show that the induced metric space is in fact
complete. We will conclude the section by showing that the metric
space we considered is robust in the sense that it is invariant under
small changes to the definition of truncation.

\subsection{Truncation Functions}
\label{sec:truncation-functions}

As we have seen above, the truncation on terms is a function that,
depending on a depth value $d$, transforms a term $t$ to a term
$\trunca{t}{d}$. We shall generalise this to term graphs and stipulate
some axioms that ensure that we can derive a metric distance in the
style of Arnold and Nivat~\cite{arnold80fi}:
\begin{definition}[truncation function]
  \label{def:truncFun}
  A family $\tau = (\tau_d\fcolon \iptgraphs \funto \iptgraphs)_{d \in \nats
    \cup \set{\infty}}$ of functions on term graphs is called a
  \emph{truncation function} if it satisfies the following properties
  for all $g,h \in \iptgraphs$ and $d \in \nats\cup\set{\infty}$:
  \begin{flushleft}
    \begin{inparaenum}[(a)]
    \item $\tau_0(g) \isom \bot$,\quad
      \label{item:truncFunI}
    \item $\tau_\infty(g) \isom g$, and\quad
      \label{item:truncFunII}
    \item $\tau_d(g) \isom \tau_d(h) \; \implies \; \tau_e(g) \isom
      \tau_e(h) \quad$ for all $e < d$.%
      \label{item:truncFunIII}
    \end{inparaenum}
  \end{flushleft}
\end{definition}

Note that from axioms (\ref{item:truncFunII}) and
(\ref{item:truncFunIII}) it follows that truncation functions must be
defined modulo isomorphism, i.e.\ $g\isom h$ implies $\tau_d(g) \isom
\tau_d(h)$ for all $d \in \nats\cup\set\infty$.

Given a truncation function, we can define a distance measure in the
style of Arnold and Nivat:
\begin{definition}[truncation-based similarity/distance]
  Let $\tau$ be a truncation function. The \emph{$\tau$-similarity} is
  the function $\abssim{\tau}\fcolon \iptgraphs \times \iptgraphs \funto
  \nats \cup \set{\infty}$ defined by
  \[
  \abssim{\tau}(g,h) =  \max\setcom{d \in \nats\cup\set\infty}{\tau_d(g) \isom \tau_d(h)}
  \]
  The \emph{$\tau$-distance} is the function $\dd_\tau\fcolon \iptgraphs
  \times \iptgraphs \funto \realsp$ defined by $\dd_\tau(g,h) =
  2^{-\abssim{\tau}(g,h)}$, where $2^{-\infty}$ is interpreted as
  $0$.
\end{definition}

Observe, that the similarity $\abssim{\tau}(g,h)$ induced by a
truncation function $\tau$ is well-defined since the
axiom~(\ref{item:truncFunI}) of Definition~\ref{def:truncFun}
guarantees that the set $\setcom{d \in \nats\cup\set\infty}{\tau_d(g)
  \isom \tau_d(h)}$ is not empty. The following proposition confirms
that the $\tau$-distance restricted to $\ictgraphs$ is indeed an
ultrametric:
\begin{proposition}[truncation-based ultrametric]
  \label{prop:truncMetric}
  For each truncation function $\tau$, the $\tau$-distance $\dd_\tau$
  constitutes an ultrametric on $\ictgraphs$.
\end{proposition}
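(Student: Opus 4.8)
The plan is to check, for $g, h \in \ictgraphs$, the three axioms of an ultrametric for $\dd_\tau(g,h) = 2^{-\abssim{\tau}(g,h)}$, reasoning throughout in terms of the set $D_{g,h} = \setcom{d \in \nats \cup \set\infty}{\tau_d(g) \isom \tau_d(h)}$, so that $\abssim{\tau}(g,h) = \max D_{g,h}$. First I would record an auxiliary observation: $D_{g,h}$ is non-empty -- by property~(a) of Definition~\ref{def:truncFun} it contains $0$, since $\tau_0(g) \isom \bot \isom \tau_0(h)$ -- and, by property~(c), it is downward closed in $\nats \cup \set\infty$ (with $d' = \infty$ allowed in~(c), so $\infty \in D_{g,h}$ forces every finite $d$ into $D_{g,h}$). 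Together with the well-definedness of $\abssim{\tau}$ already noted after its definition, this gives the convenient reformulation that, for $d \in \nats$, one has $\abssim{\tau}(g,h) \ge d$ if and only if $d \in D_{g,h}$, i.e.\ $\tau_d(g) \isom \tau_d(h)$. Symmetry of $\dd_\tau$ is then immediate, since $\isom$ is symmetric and hence $D_{g,h} = D_{h,g}$.

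For the identity axiom I would argue that $\dd_\tau(g,h) = 0$ holds iff $\abssim{\tau}(g,h) = \infty$ iff $\infty \in D_{g,h}$, i.e.\ $\tau_\infty(g) \isom \tau_\infty(h)$; by property~(b) this is equivalent to $g \isom h$, and since $g$ and $h$ are canonical term graphs, Proposition~\ref{prop:canon} upgrades this to $g = h$. This part is routine bookkeeping.

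The only step with genuine content is the strong triangle inequality, $\dd_\tau(g,h) \le \max\set{\dd_\tau(g,k), \dd_\tau(k,h)}$. Since $x \mapsto 2^{-x}$ is order-reversing and sends $\min$ to $\max$, this is equivalent to $\abssim{\tau}(g,h) \ge \min\set{\abssim{\tau}(g,k), \abssim{\tau}(k,h)}$; write $d$ for that minimum. If $d = \infty$, then by the identity argument $g = k = h$ and the claim is trivial. Otherwise $d \in \nats$ with $d \le \abssim{\tau}(g,k)$ and $d \le \abssim{\tau}(k,h)$, so by the auxiliary observation $\tau_d(g) \isom \tau_d(k)$ and $\tau_d(k) \isom \tau_d(h)$; transitivity of $\isom$ yields $\tau_d(g) \isom \tau_d(h)$, hence $d \in D_{g,h}$ and $\abssim{\tau}(g,h) \ge d$, as needed. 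I expect no real obstacle here: the whole argument stays at the level of the abstract truncation axioms and never inspects the structure of the term graphs. The one thing to be careful about is that property~(c) is pulling its weight in two places -- making $D_{g,h}$ downward closed (needed for the ``$\ge d$ iff $\tau_d$ agree'' reformulation), and thereby making the transitivity step go through -- exactly as the analogous coherence of term truncations ($\trunca{s}{e}$ determined by $\trunca{s}{d}$ for $e < d$) underlies the classical Arnold--Nivat ultrametric on terms.
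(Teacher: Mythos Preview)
Your proof is correct and follows essentially the same route as the paper's own proof: symmetry is immediate, the identity axiom is obtained from axiom~(\ref{item:truncFunII}) together with Proposition~\ref{prop:canon}, and the strong triangle inequality is reduced to $\abssim{\tau}(g,h) \ge \min\set{\abssim{\tau}(g,k),\abssim{\tau}(k,h)}$ and then dispatched via axiom~(\ref{item:truncFunIII}) and transitivity of $\isom$. The only difference is cosmetic: you make the downward closure of $D_{g,h}$ and the case $d=\infty$ explicit, whereas the paper invokes axiom~(\ref{item:truncFunIII}) directly without naming the set.
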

\begin{proof}
  The identity resp.\ the symmetry condition follow by
  \begin{gather*}
    \dd_\tau(g,h) = 0 \iff \abssim{\tau}(g,h) = \infty \iff \tau_{\infty}(g)
    \isom \tau_{\infty}(h) \stackrel{(*)}\iff g \isom h
    \stackrel{\text{Prop.~\ref{prop:canon}}}\iff g = h,\quad\text{and}\\
    \dd_\tau(g,h) = 2^{-\abssim{\tau}(g,h)} = 2^{-\abssim{\tau}(h,g)} =
    \dd_\tau(h,g).
  \end{gather*}
  The equivalence ($*$) is valid by axiom~(\ref{item:truncFunII}) of
  Definition~\ref{def:truncFun}. For the strong triangle condition we
  have to show that
  \[
  \abssim{\tau}(g_1,g_3)  \ge \min\set{\abssim{\tau}(g_1,g_2), \abssim{\tau}(g_2,g_3)}.
  \]
  With $d = \min\set{\abssim{\tau}(g_1,g_2), \abssim{\tau}(g_2,g_3)}$
  we have, by axiom~(\ref{item:truncFunIII}) of
  Definition~\ref{def:truncFun}, that $\tau_d(g_1) \isom \tau_d(g_2)$
  and $\tau_d(g_2) \isom \tau_d(g_3)$. Since we have that $\tau_d(g_1)
  \isom \tau_d(g_3)$ then, we can conclude that
  $\abssim{\tau}(g_1,g_2) \ge d$.
\end{proof}

Given their particular structure, we can reformulate the definition of
Cauchy sequences and convergence in metric spaces induced by
truncation functions in terms of the truncation function itself:
\begin{lemma}
  \label{lem:cauchyTruncFun}%
  For each truncation function $\tau$, each $g \in (\ictgraphs,\dd_\tau)$,
  and each sequence $(g_\iota)_{\iota < \alpha}$ in
  $(\ictgraphs,\dd_\tau)$ the following holds:
  \begin{enumerate}[(i)]
  \item $(g_\iota)_{\iota < \alpha}$ is Cauchy iff for each
    $d\in\nats$ there is some $\beta < \alpha$ such that
    $\tau_d(g_\gamma) \isom \tau_d(g_{\iota})$ for all $\beta \le
    \gamma,\iota < \alpha$.
  \item $(g_\iota)_{\iota < \alpha}$ converges to $g$ iff for each
    $d\in\nats$ there is some $\beta < \alpha$ such that $\tau_d(g)
    \isom \tau_d(g_{\iota})$ for all $\beta \le \iota < \alpha$.
  \end{enumerate}
\end{lemma}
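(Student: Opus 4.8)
The plan is to prove both characterisations by unfolding the definitions of Cauchy sequence and convergence in the metric $\dd_\tau$ and then translating the $\epsilon$-conditions on $\dd_\tau$ into $\isom$-conditions on the truncations $\tau_d$, using the monotonicity axiom (\ref{item:truncFunIII}) of Definition~\ref{def:truncFun} to bridge the gap between ``agree up to depth $d$'' and ``$\dd_\tau < 2^{-d}$''.

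For part (i), first I would recall that $(g_\iota)_{\iota<\alpha}$ is Cauchy iff for every $\epsilon\in\realsp$ there is $\beta<\alpha$ with $\dd_\tau(g_\gamma,g_\iota)<\epsilon$ for all $\beta\le\gamma,\iota<\alpha$. The key observation is that $\dd_\tau(g_\gamma,g_\iota) = 2^{-\abssim{\tau}(g_\gamma,g_\iota)}$, so $\dd_\tau(g_\gamma,g_\iota)<2^{-d}$ is equivalent to $\abssim{\tau}(g_\gamma,g_\iota)>d$, which by definition of $\abssim{\tau}$ as a maximum (and by axiom~(\ref{item:truncFunIII}), which guarantees that $\tau_e$ agrees whenever $\tau_d$ does for $e<d$) is equivalent to $\tau_{d+1}(g_\gamma)\isom\tau_{d+1}(g_\iota)$, and in particular implies $\tau_d(g_\gamma)\isom\tau_d(g_\iota)$. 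Conversely, if $\tau_d(g_\gamma)\isom\tau_d(g_\iota)$ then $\abssim{\tau}(g_\gamma,g_\iota)\ge d$, so $\dd_\tau(g_\gamma,g_\iota)\le 2^{-d}$. Since the $\epsilon$'s of the form $2^{-d}$ for $d\in\nats$ are cofinal in $\realsp$ going to $0$, quantifying over all such $d$ is equivalent to quantifying over all $\epsilon\in\realsp$; so the Cauchy condition is equivalent to: for each $d\in\nats$ there is $\beta<\alpha$ with $\tau_d(g_\gamma)\isom\tau_d(g_\iota)$ for all $\beta\le\gamma,\iota<\alpha$. Part (ii) is entirely analogous, replacing the pair $g_\gamma,g_\iota$ by the pair $g,g_\iota$ and using the definition of convergence instead of the Cauchy condition: $\dd_\tau(g,g_\iota)<2^{-d}$ is, up to shifting $d$ by one using axiom~(\ref{item:truncFunIII}), equivalent to $\tau_d(g)\isom\tau_d(g_\iota)$.

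I do not expect any serious obstacle here; the proof is essentially bookkeeping. The only point requiring a little care is the off-by-one between ``$\dd_\tau<2^{-d}$'', i.e.\ $\abssim{\tau}>d$, i.e.\ $\tau_{d+1}$ agree, and the cleaner statement ``$\tau_d$ agree'', i.e.\ $\abssim{\tau}\ge d$, i.e.\ $\dd_\tau\le 2^{-d}$. This is harmless because both families of conditions, indexed over $d\in\nats$, describe the same eventual behaviour: from a $\beta$ that works for $d+1$ in one formulation one obtains a $\beta$ that works for $d$ in the other, and axiom~(\ref{item:truncFunIII}) ensures the downward closure of ``$\tau_d$ agree'' in $d$, so it suffices to verify the condition for arbitrarily large $d$. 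I would phrase the argument so that this shift never has to be made explicit: simply note that $\dd_\tau(x,y)<2^{-d}$ implies $\tau_d(x)\isom\tau_d(y)$, and $\tau_{d+1}(x)\isom\tau_{d+1}(y)$ implies $\dd_\tau(x,y)\le 2^{-(d+1)}<2^{-d}$, and then run the cofinality argument on $d$.
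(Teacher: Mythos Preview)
Your proposal is correct and follows essentially the same approach as the paper's proof: unfold the metric definitions, translate between $\dd_\tau(g,h) < 2^{-d}$ and $\tau_d(g)\isom\tau_d(h)$ using axiom~(\ref{item:truncFunIII}), and use cofinality of $\{2^{-d} : d\in\nats\}$ in $\realsp$. The paper only spells out~(i) and declares~(ii) analogous, exactly as you do; you are in fact slightly more careful than the paper about the off-by-one between $\abssim{\tau} > d$ and $\abssim{\tau} \ge d$.
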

\begin{proof}
  We only show (i) as (ii) is essentially the same. For ``only if''
  direction assume that $(g_\iota)_{\iota < \alpha}$ is Cauchy and
  that $d\in\nats$. We then find some $\beta < \alpha$ such that
  $\dd_\tau(g_\gamma,g_{\iota}) < 2^{-d}$ for all $\beta \le
  \gamma,\iota < \alpha$. Hence, we obtain that
  $\abssim{\tau}(g_\gamma,g_{\iota}) > d$ for all $\beta \le
  \gamma,\iota < \alpha$. That is, $\tau_e(g_\gamma) \isom
  \tau_e(g_{\iota})$ for some $e > d$. According to axiom
  (\ref{item:truncFunIII}) of Definition~\ref{def:truncFun}, we can
  then conclude that $\tau_d(g_\gamma) \isom \tau_d(g_{\iota})$ for
  all $\beta \le \gamma,\iota < \alpha$.

  For the ``if'' direction assume some $\epsilon\in \realsp$. Then there
  is some $d \in \nats$ with $2^{-d}\le\epsilon$. By the initial
  assumption we find some $\beta < \alpha$ with $\tau_d(g_\gamma) \isom
  \tau_d(g_{\iota})$ for all $\beta \le \gamma,\iota < \alpha$, i.e.\
  $\abssim{\tau}(g_\gamma,g_\iota) \ge d$. Hence, we have that
  $\dd_\tau(g_\gamma,g_\iota) = 2^{\abssim{\tau}(g_\gamma,g_\iota)} <
  2^{-d}\le \epsilon$ for all $\beta \le \gamma,\iota < \alpha$.
\end{proof}

\subsection{The Strict Truncation and its Metric Space}
\label{sec:strict-trunc-funct}

In this section, we consider a straightforward truncation function
that simply cuts off all nodes at the given depth $d$.

\begin{definition}[strict truncation]
  \label{def:trunca}
  Let $g \in \iptgraphs$ and $d \in \nats \cup \set{\infty}$. The
  \emph{strict truncation} $\trunca{g}{d}$ of $g$ at $d$ is a term
  graph defined by
  \begin{align*}
    N^{\trunca{g}{d}} &= \setcom{n \in N^g}{\depth{g}{n} \le d}
    & r^{\trunca{g}{d}} &= r^g
    \\
    \glab^{\trunca{g}{d}}(n) &= 
    \begin{cases}
      \glab^g(n) &\text{if } \depth{g}{n} < d \\
      \bot  &\text{if } \depth{g}{n} = d
    \end{cases} &
    \gsuc^{\trunca{g}{d}}(n) &=
    \begin{cases}
      \gsuc^g(n) &\text{ if }\depth{g}{n} < d\\
      \emptyseq &\text{ if }\depth{g}{n} = d
    \end{cases}
  \end{align*}
\end{definition}

Figure~\ref{fig:exTrunc} on page~\pageref{fig:exTrunc} shows a term
graph $g$ and its strict truncation at depth $2$. Note that a node can
get truncated even though its successor is retained.

One can easily see that the truncated term graph $\trunca{g}{d}$ is
obtained from $g$ by relabelling all nodes at depth $d$ to $\bot$,
removing all their outgoing edges and then removing all nodes that
thus become unreachable from the root. This makes the strict
truncation a straightforward generalisation of the truncation on
terms.

The strict truncation indeed induces a truncation function:
\begin{proposition}
  \label{prop:truncaDown}
  Let $\trunca{}{}$ be the function with $\trunca{}{}_d(g) =
  \trunca{g}{d}$. Then $\trunca{}{}$ is a truncation function.
\end{proposition}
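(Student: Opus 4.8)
The plan is to verify directly that the family $\trunca{}{}$ satisfies the three axioms of a truncation function from Definition~\ref{def:truncFun}, working with the explicit description of $\trunca{g}{d}$ given in Definition~\ref{def:trunca}. Axioms (\ref{item:truncFunI}) and (\ref{item:truncFunII}) are immediate: for $d = 0$ the only node of depth $\le 0$ is the root $r^g$, which gets relabelled to $\bot$ and stripped of successors, so $\trunca{g}{0}$ is a single $\bot$-labelled node, hence $\isom \bot$; and for $d = \infty$ no node has depth $\ge \infty$, so every node keeps its label and successors, giving $\trunca{g}{\infty} = g$, in particular $\isom g$.

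The substance is axiom (\ref{item:truncFunIII}): assuming $\trunca{g}{d} \isom \trunca{h}{d}$, show $\trunca{g}{e} \isom \trunca{h}{e}$ for every $e < d$. First I would observe that, although the statement is about general term graphs in $\iptgraphs$, by the remark following Definition~\ref{def:truncFun} truncation functions are automatically isomorphism-invariant, so it suffices to argue for canonical term graphs and use the labelled-quotient-tree characterisation of isomorphism from Corollary~\ref{cor:isomOcc}(\ref{item:isomOcc2}). The key point to establish is the compatibility of strict truncation with positions and depth: I would show that for a canonical $g$ and $e \le d$, the positions of $\trunca{g}{d}$ of length $< e$, together with their labels and the restriction of $\sim_g$, coincide with those of $\trunca{g}{e}$; more precisely, $\trunca{(\trunca{g}{d})}{e} \isom \trunca{g}{e}$ whenever $e \le d$. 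This ``idempotency/nesting'' identity holds because a node $n$ with $\depth{g}{n} < e \le d$ retains in $\trunca{g}{d}$ exactly the same label and successors it had in $g$, and its depth is unchanged (no shorter position can appear, and its minimal position of length $< e < d$ survives); nodes at depth exactly $e$ in $g$ survive in $\trunca{g}{d}$ with correct label and are then relabelled to $\bot$ by the outer truncation; nodes at depth $\ge e$ that happen to survive into $\trunca{g}{d}$ get cut by the outer $e$-truncation just as in $\trunca{g}{e}$. Granting this identity, the axiom follows by a one-line computation: $\trunca{g}{e} \isom \trunca{(\trunca{g}{d})}{e} \isom \trunca{(\trunca{h}{d})}{e} \isom \trunca{h}{e}$, where the middle isomorphism is obtained by applying the (isomorphism-invariant) operation $\trunca{\cdot}{e}$ to the hypothesised isomorphism $\trunca{g}{d} \isom \trunca{h}{d}$.

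The main obstacle, and the step deserving the most care, is the claim that depths are preserved under truncation, i.e.\ that $\depth{\trunca{g}{d}}{n} = \depth{g}{n}$ for every node $n$ surviving in $\trunca{g}{d}$. The inequality $\ge$ is clear since $\pos{\trunca{g}{d}} \subseteq \pos{g}$, so no new (shorter) positions are created. For $\le$ one must check that a \emph{shortest} position $\pi$ of $n$ in $g$, which has length $\depth{g}{n} \le d$, actually lies in $\trunca{g}{d}$: every proper prefix $\pi' < \pi$ is the position of a node of depth $< \depth{g}{n} \le d$, hence that node keeps its successor function in $\trunca{g}{d}$, so the whole path $\pi$ is a legal position there. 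This argument also shows that the node set and position set of $\trunca{g}{d}$ are exactly $\setcom{n}{\depth{g}{n} \le d}$ and $\setcom{\pi \in \pos{g}}{\len{\pi} \le d, \ \forall \pi' < \pi\colon \len{\pi'} < d}$ respectively — effectively the positions of $g$ truncated at depth $d$ — which is precisely what makes the nesting identity $\trunca{(\trunca{g}{d})}{e} \isom \trunca{g}{e}$ go through for $e \le d$. Once these position/depth bookkeeping facts are nailed down, the verification of all three axioms is routine.
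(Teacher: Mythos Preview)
Your argument has a circularity: you invoke the remark after Definition~\ref{def:truncFun} to conclude that $\trunca{\cdot}{e}$ is isomorphism-invariant, but that remark is a \emph{consequence} of axioms~(\ref{item:truncFunII}) and~(\ref{item:truncFunIII}), and axiom~(\ref{item:truncFunIII}) is exactly what you are trying to prove. The ``middle isomorphism'' in your chain $\trunca{g}{e} \isom \trunca{(\trunca{g}{d})}{e} \isom \trunca{(\trunca{h}{d})}{e} \isom \trunca{h}{e}$ therefore needs an independent justification. This is easily patched: once you have your depth-preservation lemma, isomorphism invariance of $\trunca{\cdot}{e}$ follows directly, since an isomorphism $\phi\fcolon g' \to h'$ preserves positions (Corollary~\ref{cor:isomOcc}) and hence depths, so its restriction to nodes of depth $\le e$ is an isomorphism $\trunca{g'}{e} \to \trunca{h'}{e}$.

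With that fix your approach is correct, but note that this last step \emph{is} the paper's entire argument for~(\ref{item:truncFunIII}): the paper proves depth preservation under truncation (the same lemma you identify as the main obstacle), observes that isomorphisms preserve depth via Corollary~\ref{cor:isomOcc}, and then simply restricts the given isomorphism $\phi\fcolon \trunca{g}{d} \to \trunca{h}{d}$ to nodes of depth $\le e$. Your route through the nesting identity $\trunca{(\trunca{g}{d})}{e} \isom \trunca{g}{e}$ is an unnecessary detour: it is correct (indeed, with depth preservation in hand the two sides are literally equal, same node set, same labels, same successors), but once you have the ingredients to prove it \emph{and} to prove isomorphism invariance of $\trunca{\cdot}{e}$, you already have everything needed for the paper's direct restriction argument. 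Both approaches hinge on the same depth-preservation lemma; the paper's packaging is just shorter.
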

\begin{proof}
  (\ref{item:truncFunI}) and (\ref{item:truncFunII}) of
  Definition~\ref{def:truncFun} follow immediately from the
  construction of the truncation. For (\ref{item:truncFunIII}) assume
  that $\trunca{g}{d} \isom \trunca{h}{d}$. Let $0 \le e < d$ and let
  $\phi\fcolon \trunca{g}{d} \homto \trunca{h}{d}$ be the witnessing
  isomorphism. Note that strict truncations preserve the depth of
  nodes, i.e.\ $\depth{\trunca{g}{d}}{n} = \depth{g}{n}$ for all $n\in
  N^{\trunca{g}{d}}$. This can be shown by a straightforward induction
  on $\depth{g}{n}$. Moreover, by Corollary~\ref{cor:isomOcc} also
  isomorphisms preserve the depth of nodes. Hence, 
  \[
  \depth{h}{\phi(n)}
  = \depth{\trunca{h}{d}}{\phi(n)} = \depth{\trunca{g}{d}}{n} =
  \depth{g}{n}\quad\text{ for all } n \in N^{\trunca{g}{d}}
  \]
  Restricting $\phi$ to the nodes in $\trunca{g}{e}$ thus yields an
  isomorphism from $\trunca{g}{e}$ to $\trunca{h}{e}$.
\end{proof}

Next we show that the metric space $(\ictgraphs,\dda)$ that is induced
by the truncation function $\trunca{}{}$ is in fact complete. To do
this, we give a characterisation of the strict truncation in terms of
labelled quotient trees.
\begin{lemma}[labelled quotient tree of a strict truncation]
  \label{lem:truncaOccRep}
  Let $g \in \iptgraphs$ and $d \in \nats \cup \set{\infty}$. The
  strict truncation $\trunca{g}{d}$ is uniquely determined up to
  isomorphism by the labelled quotient tree $(P,l,\sim)$ with
  \begin{enumerate}[(a)]
  \item $P = \setcom{\pi \in \pos{g}}{\forall \pi_1 < \pi\exists \pi_2
      \sim_g \pi_1 \text{ with }\len{\pi_2} <
      d}$,\label{item:truncaOccRepI}
  \item $l(\pi) =
    \begin{cases}
      g(\pi) & \text{ if }\exists \pi'\sim_g\pi \text{ with }
      \len{\pi'} < d\\
      \bot &\text{ otherwise}
    \end{cases}$
    \label{item:truncaOccRepII}
  \item $\sim\ =\ \sim_g \cap\ P \times P$\label{item:truncaOccRepIII}
  \end{enumerate}
\end{lemma}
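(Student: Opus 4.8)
The plan is to show that the triple $(P,l,\sim)$ displayed in the statement is precisely the canonical labelled quotient tree $(\pos{\trunca{g}{d}},\trunca{g}{d}(\cdot),\sim_{\trunca{g}{d}})$ of the term graph $\trunca{g}{d}$, and then to invoke Lemma~\ref{lem:occrep}. The first half of that lemma guarantees that this triple is a genuine labelled quotient tree, so the reachability and congruence conditions of Definition~\ref{def:occRep} need not be checked by hand; its second half then produces a unique canonical term graph inducing it, namely $\canon{\trunca{g}{d}}$, which is $\isom$-equivalent to $\trunca{g}{d}$. Since any two term graphs with the same canonical labelled quotient tree are isomorphic, this is exactly the assertion that $\trunca{g}{d}$ is uniquely determined up to isomorphism by $(P,l,\sim)$.

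The technical core is a description of the positions of $\trunca{g}{d}$ in terms of those of $g$. I would establish, by induction on $\len\pi$, the claim: for every finite sequence $\pi$ over $\nats$ one has $\pi\in\pos{\trunca{g}{d}}$ iff $\pi\in\pos{g}$ and $\depth{g}{\nodeAtPos{g}{\pi_1}}<d$ for every proper prefix $\pi_1<\pi$; and in that case $\nodeAtPos{\trunca{g}{d}}{\pi}=\nodeAtPos{g}{\pi}$. The inductive step unfolds the definition of a path (Definition~\ref{def:graphPath}) one edge at a time against the defining equations of the strict truncation (Definition~\ref{def:trunca}): a node $m$ has outgoing edges in $\trunca{g}{d}$ exactly when $\depth{g}{m}<d$, and in that case $\trunca{g}{d}$ agrees with $g$ on the label, arity, and successors of $m$; when $\depth{g}{m}=d$ the node $m$ is relabelled to $\bot$ and loses its successors; and nodes of $g$-depth exceeding $d$ are absent altogether. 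Hence a surviving position is one all of whose strict prefixes stay at $g$-depth below $d$, and along it the two term graphs traverse literally the same nodes.

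From this claim the three components fall out once ``$\depth{g}{n}<d$'' is rewritten in terms of $\sim_g$: since $\nodePos{g}{n}$ is exactly the $\sim_g$-class of any position of $n$, we have $\depth{g}{\nodeAtPos{g}{\pi_1}}<d$ iff there is some $\pi_2\sim_g\pi_1$ with $\len{\pi_2}<d$. Applying this to the strict prefixes of $\pi$ turns $\pos{\trunca{g}{d}}$ into the set $P$ of (\ref{item:truncaOccRepI}). For $\sim$: if $\pi,\pi'\in P$ then $\pi\sim_{\trunca{g}{d}}\pi'$ iff $\nodeAtPos{\trunca{g}{d}}{\pi}=\nodeAtPos{\trunca{g}{d}}{\pi'}$, which by the ``in that case'' part of the claim is $\nodeAtPos{g}{\pi}=\nodeAtPos{g}{\pi'}$, i.e.\ $\pi\sim_g\pi'$; this gives (\ref{item:truncaOccRepIII}). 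For $l$: if $\pi\in P$ and $n=\nodeAtPos{g}{\pi}=\nodeAtPos{\trunca{g}{d}}{\pi}$, then Definition~\ref{def:trunca} gives $\trunca{g}{d}(\pi)=\glab^{\trunca{g}{d}}(n)$, which equals $g(\pi)$ when $\depth{g}{n}<d$ and $\bot$ when $\depth{g}{n}=d$ (the case $\depth{g}{n}>d$ being excluded since $n\in N^{\trunca{g}{d}}$); translating the depth condition through $\sim_g$ as before yields the $l$ of (\ref{item:truncaOccRepII}). Putting the three together shows $(P,l,\sim)$ is the canonical labelled quotient tree of $\trunca{g}{d}$, and Lemma~\ref{lem:occrep} finishes the argument.

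I expect the only real obstacle to be the bookkeeping in the inductive claim — keeping straight that a position survives the truncation precisely when all of its strict prefixes lead to nodes of $g$-depth below $d$, and that along such a position $\trunca{g}{d}$ and $g$ visit the same nodes. (The auxiliary fact $\depth{\trunca{g}{d}}{n}=\depth{g}{n}$, already recorded in the proof of Proposition~\ref{prop:truncaDown}, is implicit here but is not needed in that form.) Everything after the claim is a mechanical rewriting of the depth condition in terms of $\sim_g$, after which Lemma~\ref{lem:occrep} closes the proof.
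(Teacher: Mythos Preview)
Your proposal is correct and follows essentially the same approach as the paper: both show that $(P,l,\sim)$ coincides with the canonical labelled quotient tree of $\trunca{g}{d}$ and then appeal to Lemma~\ref{lem:occrep}. The only cosmetic difference is that the paper first proves the one-directional fact that positions of $\trunca{g}{d}$ are positions of $g$ with the same underlying node, and then argues the two inclusions $P\subseteq\pos{\trunca{g}{d}}$ and $P\supseteq\pos{\trunca{g}{d}}$ separately, whereas you package the full biconditional characterisation of $\pos{\trunca{g}{d}}$ together with the node identity into a single inductive claim; the content is the same.
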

\begin{proof}
  We just have to show that $(P,l,\sim)$ is the canonical labelled
  quotient tree induced by $\trunca{g}{d}$. Then the lemma follows from
  Lemma~\ref{lem:occrep}. The case $d = \infty$ is trivial. In the
  following we assume that $d\in \nats$.

  Before continuing the proof, note that
  \begin{gather}
    \text{for each $\pi \in \pos{\trunca{g}{d}}$ we have that $\pi \in
      \pos{g}$ and $\nodeAtPos{\trunca{g}{d}}{\pi} =
      \nodeAtPos{g}{\pi}$.}  \tag{$*$}\label{eq:truncaOccRep}
  \end{gather}
  This can be shown by an induction on the length of $\pi$: The case
  $\pi = \emptyseq$ is trivial. If $\pi = \pi'\concat\seq{i}$, let $n
  = \nodeAtPos{\trunca{g}{d}}{\pi'}$ and $m =
  \nodeAtPos{\trunca{g}{d}}{\pi}$. Hence, $m =
  \gsuc_i^{\trunca{g}{d}}(n)$ and, by construction of $\trunca{g}{d}$,
  also $m = \gsuc_i^g(n)$. Since by induction hypothesis $n =
  \nodeAtPos{g}{\pi'}$, we can thus conclude that $\pi \in \pos{g}$
  and that $\nodeAtPos{g}{\pi} = m = \nodeAtPos{\trunca{g}{d}}{\pi}$.
  
  (\ref{item:truncaOccRepI}) $P = \pos{\trunca{g}{d}}$. For the ``$\subseteq$'' direction let $\pi \in
  P$. To show that $\pi \in \pos{\trunca{g}{d}}$, assume a $\pi_1 <
  \pi$ and let $n = \nodeAtPos{g}{\pi_1}$. Since $\pi \in P$, there is
  some $\pi_2 \sim_g \pi_1$ with $\len{\pi_2} < d$. That is,
  $\depth{g}{n} < d$. Therefore, we have that $n\in N^{\trunca{g}{d}}$
  and $\gsuc^{\trunca{g}{d}}(n) = \gsuc^g(n)$. Hence, each node on the
  path $\pi$ in $g$ is also a node in $\trunca{g}{d}$ and has the same
  successor nodes as in $g$. That is, $\pi \in \pos{\trunca{g}{d}}$.

  For the ``$\supseteq$'' direction, assume some $\pi \in
  \pos{\trunca{g}{d}}$. By \eqref{eq:truncaOccRep}, $\pi$ is also a
  position in $g$. To show that $\pi \in P$, let $\pi_1 < \pi$. Since
  only nodes of depth smaller than $d$ can have a successor node in
  $\trunca{g}{d}$, the node $\nodeAtPos{\trunca{g}{d}}{\pi_1}$ in
  $\trunca{g}{d}$ is at depth smaller than $d$. Hence, there is some
  $\pi_2 \sim_{\trunca{g}{d}} \pi_1$ with $\len{\pi_2} < d$. Because
  $\pi_2 \sim_{\trunca{g}{d}} \pi$ implies that $\pi_2 \sim_{g} \pi$,
  we can conclude that $\pi \in P$.

  (\ref{item:truncaOccRepII}) $l(\pi) = g(\pi)$ for all $\pi \in
  P$. Let $\pi \in P$ and $n = \nodeAtPos{g}{\pi}$. We distinguish two
  cases. At first suppose that there is some $\pi' \sim_g \pi$ with
  $\len{\pi'} < d$. Then $l(\pi) = g(\pi)$. Since $n =
  \nodeAtPos{g}{\pi'}$, we have that $\depth{g}{n} < d$. Consequently,
  $\glab^{\trunca{g}{d}}(n) = \glab^g(n)$ and, therefore,
  $\trunca{g}{d}(n) = g(\pi) = l(\pi)$. In the other case that there
  is no $\pi' \sim_g \pi$ with $\len{\pi} < d$, we have $l(\pi) =
  \bot$. This also means that $\depth{g}{n} = d$. Consequently,
  $\trunca{g}{d}(\pi) = \glab^{\trunca{g}{d}}(n) = \bot = l(\pi)$.

  (\ref{item:truncaOccRepIII}) $\sim\ =\ \sim_{\trunca{g}{d}}$. Using
  the fact that $P = \pos{\trunca{g}{d}}$, we can conclude for all
  $\pi_1,\pi_2 \in P$ that
  \[
  \pi_1 \sim_{\trunca{g}{d}} \pi_2 \iff
  \nodeAtPos{\trunca{g}{d}}{\pi_1} = \nodeAtPos{\trunca{g}{d}}{\pi_2}
  \stackrel{\text{\eqref{eq:truncaOccRep}}}\iff \nodeAtPos{g}{\pi_1} = \nodeAtPos{g}{\pi_2} \iff \pi_1 \sim_g
  \pi_2
  \]
\end{proof}

Notice that a position $\pi$ is retained by a truncation, i.e.\
$\pi\in P$, iff each node that $\pi$ passes through is at a depth
lower than $d$ (and is thus not truncated or relabelled).

From this characterisation we immediately obtain the following
relation between a term graph and its strict truncations:
\begin{corollary}
  \label{cor:truncaOccRep}
  Let $g \in \iptgraphs$ and $d \in \nats \cup \set{\infty}$. Then
  \begin{enumerate}[(i)]
  \item $\pi \in \pos{g}$ iff $\pi \in \pos{\trunca{g}{d}}$ for all
    $\pi$ with $\len{\pi} \le d$, and
    \label{item:truncaOccRep1}
  \item $\trunca{g}{d}(\pi) = g(\pi)$ for all $\pi \in \pos{g}$ with
    $\len{\pi} < d$.
    \label{item:truncaOccRep2}
  \end{enumerate}
\end{corollary}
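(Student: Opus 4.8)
The plan is to read both claims directly off the labelled-quotient-tree characterisation of $\trunca{g}{d}$ supplied by Lemma~\ref{lem:truncaOccRep}. Write $(P,l,\sim)$ for that labelled quotient tree, so that $\pos{\trunca{g}{d}} = P$ and $\trunca{g}{d}(\pi) = l(\pi)$ for all $\pi \in P$. The case $d = \infty$ is immediate, since then $\trunca{g}{d} \isom g$ by axiom~(\ref{item:truncFunII}) of a truncation function, so I would assume $d \in \nats$ from here on.

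For part~(\ref{item:truncaOccRep1}), the direction from $\pos{\trunca{g}{d}}$ to $\pos{g}$ needs no restriction on the length of $\pi$: by definition of $P$ in Lemma~\ref{lem:truncaOccRep} we have $P \subseteq \pos{g}$ outright. For the converse, suppose $\pi \in \pos{g}$ with $\len{\pi} \le d$; to see $\pi \in P$, take any proper prefix $\pi_1 < \pi$. Then $\len{\pi_1} < \len{\pi} \le d$, so $\pi_1$ itself witnesses the defining condition of $P$ via reflexivity of $\sim_g$, i.e.\ $\pi_1 \sim_g \pi_1$ and $\len{\pi_1} < d$. Hence $\pi \in P = \pos{\trunca{g}{d}}$.

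For part~(\ref{item:truncaOccRep2}), let $\pi \in \pos{g}$ with $\len{\pi} < d$. By part~(\ref{item:truncaOccRep1}) we already have $\pi \in \pos{\trunca{g}{d}} = P$, and by the clause defining $l$ in Lemma~\ref{lem:truncaOccRep} we get $l(\pi) = g(\pi)$ as soon as there is some $\pi' \sim_g \pi$ with $\len{\pi'} < d$; once more $\pi' = \pi$ does the job, as $\len{\pi'} = \len{\pi} < d$. Therefore $\trunca{g}{d}(\pi) = l(\pi) = g(\pi)$.

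I do not anticipate any real obstacle: the corollary is essentially the specialisation of Lemma~\ref{lem:truncaOccRep} obtained by instantiating the existential quantifiers with the trivial witness $\pi$ (resp.\ $\pi_1$), exploiting that a proper prefix of a path of length $\le d$ has length $< d$. The only point that requires a little care is keeping the strict and non-strict inequalities aligned — $\len{\pi} \le d$ in~(\ref{item:truncaOccRep1}) against $\len{\pi} < d$ in~(\ref{item:truncaOccRep2}) — which is precisely what the $<$ versus $\le$ appearing in the definitions of $P$ and $l$ dictates.
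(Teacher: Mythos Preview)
Your proof is correct and follows exactly the approach of the paper, which simply appeals to Lemma~\ref{lem:truncaOccRep} together with the reflexivity of $\sim_g$; you have merely spelled out the details that the paper leaves implicit.
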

\begin{proof}
  Using the reflexivity of $\sim_g$, (\ref{item:truncaOccRep1})
  follows immediately from
  Lemma~\ref{lem:truncaOccRep}~(\ref{item:truncaOccRepI}), and
  (\ref{item:truncaOccRep2}) follows immediately from
  Lemma~\ref{lem:truncaOccRep}~(\ref{item:truncaOccRepII}).
\end{proof}

We can now show that the metric space induced by the strict truncation
is complete:
\begin{theorem}
  \label{thr:smetricComplete}%
  The metric space $(\ictgraphs,\dda)$ is complete. In particular,
  each Cauchy sequence $(g_\iota)_{\iota < \alpha}$ in
  $(\ictgraphs,\dda)$ converges to the canonical term graph given by
  the following labelled quotient tree $(P,l,\sim)$:
  \begin{align*}
    P &= \liminf_{\iota \limto \alpha} \pos{g_\iota} = \bigcup_{\beta
      < \alpha}\bigcap_{\beta \le \iota < \alpha} \pos{g_\iota}
    \hspace{20pt} \sim\ = \liminf_{\iota \limto \alpha}
    \sim_{g_\iota}\ =
    \bigcup_{\beta < \alpha}\bigcap_{\beta \le \iota < \alpha} \sim_{g_\iota} \\
    l(\pi) &= g_\beta(\pi) \quad \text{for some } \beta<\alpha\text{
      with } g_\iota(\pi) = g_\beta(\pi) \text{ for each } \beta\le\iota<\alpha
    \qquad \text{for all } \pi \in P
  \end{align*}
\end{theorem}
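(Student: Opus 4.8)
The plan is to prove the two claims separately: that $(P,l,\sim)$ is a well-defined labelled quotient tree — so that by Lemma~\ref{lem:occrep} it determines a unique canonical term graph $g$ — and that every non-empty Cauchy sequence $(g_\iota)_{\iota<\alpha}$ (with $\alpha>0$) converges to this $g$. Throughout I would work with the truncation reformulations of Lemma~\ref{lem:cauchyTruncFun}: Cauchyness of $(g_\iota)_{\iota<\alpha}$ yields, for every $d\in\nats$, an index $\beta_d<\alpha$ with $\trunca{g_\gamma}{d}\isom\trunca{g_\iota}{d}$ for all $\beta_d\le\gamma,\iota<\alpha$; and convergence to $g$ amounts to producing, for every $d$, an index $\beta$ with $\trunca{g}{d}\isom\trunca{g_\iota}{d}$ for all $\beta\le\iota<\alpha$.

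For the first claim the only genuinely non-routine point is that $l$ is well-defined on all of $P$, i.e.\ that for every $\pi\in P$ there is a $\beta$ with $g_\iota(\pi)=g_\beta(\pi)$ for all $\iota\ge\beta$. This I would obtain by combining the two stabilisations available: since $\pi\in P=\liminf_\iota\pos{g_\iota}$, eventually $\pi\in\pos{g_\iota}$; and with $k=\len\pi$, for $\iota,\gamma\ge\beta_{k+1}$ the truncations $\trunca{g_\iota}{k+1}$ and $\trunca{g_\gamma}{k+1}$ are isomorphic, hence carry the same label at every position (Corollary~\ref{cor:isomOcc}), while by Corollary~\ref{cor:truncaOccRep} a position of length $<k+1$ is retained by truncation at $k+1$ with its original label; so $g_\iota(\pi)$ is eventually constant and $l(\pi)$ is well-defined. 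The remaining conditions — that $\sim$ is an equivalence relation, reachability, and congruence — then follow by propagating the corresponding properties of each $g_\iota$ across the ``eventually''-quantifier; note that $\pi\sim\pi'$ already forces $\pi,\pi'\in P$, so no extra restriction to $P\times P$ is needed, matching the stated formula.

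For the second claim, fix $d\in\nats$ and put $\beta=\beta_d$; I would show $\trunca{g}{d}\isom\trunca{g_\beta}{d}$, which combined with $\trunca{g_\beta}{d}\isom\trunca{g_\iota}{d}$ for $\iota\ge\beta$ finishes it. The key observation I would isolate first is that a strict truncation $\trunca{h}{d}$ is determined up to isomorphism by the finite data $\pos h\cap\nats^{\le d}$, the labels $h(\pi)$ for $\pi$ in that set, and the restriction of $\sim_h$ to it: strict truncation preserves depths (as noted in the proof of Proposition~\ref{prop:truncaDown}), so every node of $\trunca{h}{d}$ has a position of length $\le d$ and the successors of every node with a non-$\bot$ label sit at positions of length $\le d$; moreover by Lemma~\ref{lem:truncaOccRep} and Corollary~\ref{cor:truncaOccRep} the positions of $\trunca{h}{d}$ of length $\le d$, their sharing, and their labels at depth $<d$ agree with those of $h$, while a length-$d$ position carries label $\bot$ or $h(\pi)$ according as the node at $\pi$ has a position of length $<d$ — again decidable from the short data. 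It then remains to check that $g$ and $g_\beta$ carry the same short data at depth $d$: positions of length $\le d$ in $P$ coincide with those in $\pos{g_\beta}$ because both equal $\pos{g_\iota}\cap\nats^{\le d}$ for any $\iota\ge\beta$ (using $P=\liminf_\iota\pos{g_\iota}$, Corollary~\ref{cor:truncaOccRep}, and the stabilisation at level $d$); the labels agree for length $<d$ by the stable-value description of $l$ from the previous paragraph together with the level-$d$ stabilisation; and the sharing agrees because $\sim$ restricted to $\nats^{\le d}$ equals $\sim_{g_\iota}$ restricted to $\nats^{\le d}$ for $\iota\ge\beta$, again by the level-$d$ stabilisation and Lemma~\ref{lem:truncaOccRep}. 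Hence $\trunca{g}{d}\isom\trunca{g_\beta}{d}$, and by Lemma~\ref{lem:cauchyTruncFun} the sequence converges to $g$; in particular the space is complete.

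The main obstacle I anticipate is the presence of cycles: both the limit $g$ and the truncations $\trunca{h}{d}$ may have infinitely many positions, so one cannot literally read off a truncation from a finite prefix of positions, and the ``deep'' positions of $\trunca{g}{d}$ arising by running around a cycle must still be shown to match those of $\trunca{g_\beta}{d}$. The device that tames this is exactly the observation above that $\trunca{h}{d}$ is nevertheless pinned down up to isomorphism by the finitely much data living at positions of length $\le d$, because the congruence condition of a labelled quotient tree forces the shallow identifications to propagate to every depth; together with the fact that, for a fixed $d$, only finitely many positions and pairs of positions are relevant, this collapses the pointwise ``eventually''-quantifiers hidden in $P$, $l$ and $\sim$ into the single stage $\beta_d$.
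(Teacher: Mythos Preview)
Your proposal is correct. The first half---well-definedness of $(P,l,\sim)$---matches the paper's argument essentially verbatim. The second half takes a genuinely different route.

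For convergence, the paper fixes $d$, takes $\beta$ from the Cauchy property, and then compares the \emph{full} labelled quotient trees of $\trunca{g}{d}$ and $\trunca{g_\iota}{d}$ as given by Lemma~\ref{lem:truncaOccRep}: it shows $P_1=P_2$, $\sim_1\,=\,\sim_2$, $l_1=l_2$ by a case analysis that must explicitly handle the (possibly infinite) position sets arising from cycles. You instead factor through the observation that the isomorphism class of $\trunca{h}{d}$ is already pinned down by the finite ``short data'' $\pos{h}\cap\nats^{\le d}$, the restriction of $\sim_h$ to that set, and the labels at depth $<d$---because every node of $\trunca{h}{d}$ has a position of length $\le d$, and its label and successors are recoverable from that position. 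This reduces the comparison to checking agreement of finite data, which then collapses the various ``eventually'' quantifiers in $P$, $\sim$, $l$ to the single stage $\beta_d$ by the level-$d$ stabilisation (via Corollary~\ref{cor:truncaOccRep} and Lemma~\ref{lem:truncaOccRep}). The paper's route avoids stating the auxiliary ``short data determines truncation'' lemma but pays for it with a longer verification on infinite objects; your route isolates exactly the finiteness that tames cycles and makes the remaining checks mechanical. Both are sound; yours is the more conceptual packaging of the same content.
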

\begin{proof}
  We need to check that $(P,l,\sim)$ is a well-defined labelled
  quotient tree. At first we show that $l$ is a well-defined function
  on $P$. In order to show that $l$ is functional, assume that there
  are $\beta_1,\beta_2 < \alpha$ such that there is a $\pi$ with
  $g_\iota(\pi) = g_{\beta_k}(\pi)$ for all $\beta_k \le \iota <
  \alpha$, $=1,2$. but then $g_{\beta_1}(\pi) = g_\beta(\pi) =
  g_{\beta_2}(\pi)$ for $\beta = \max\set{\beta_1,\beta_2}$.

  To show that $l$ is total on $P$, let $\pi \in P$ and $d =
  \len{\pi}$. By Lemma~\ref{lem:cauchyTruncFun}, there is some $\beta
  < \alpha$ such that $\trunca{g_\gamma}{d+1} \isom
  \trunca{g_{\iota}}{d+1}$ for all $\beta \le \gamma, \iota <
  \alpha$. According to Corollary~\ref{cor:truncaOccRep}, this means
  that all $g_\iota$ for $\beta \le \iota < \alpha$ agree on positions
  of length smaller than $d + 1$, in particular $\pi$. Hence,
  $g_\iota(\pi) = g_\beta(\pi)$ for all $\beta \le \iota <
  \alpha$, and we have $l(\pi) = g_\beta(\pi)$.

  One can easily see that $\sim$ is a binary relation on $P$: If
  $\pi_1 \sim \pi_2$, then there is some $\beta < \alpha$ with $\pi_1
  \sim_{g_\iota} \pi_2$ for all $\beta \le \iota < \alpha$. Hence,
  $\pi_1,\pi_2 \in \pos{g_\iota}$ for all $\beta \le \iota < \alpha$
  and thus $\pi_1,\pi_2 \in P$.

  Similarly follows that $\sim$ is an equivalence relation on $P$: To
  show reflexivity, assume $\pi \in P$. Then there is some $\beta <
  \alpha$ such that $\pi \in \pos{g_\iota}$ for all $\beta \le \iota <
  \alpha$. Hence, $\pi \sim_{g_\iota} \pi$ for all $\beta \le \iota <
  \alpha$ and, therefore, $\pi \sim \pi$. In the same way symmetry and
  transitivity follow from the symmetry and transitivity of
  $\sim_{g_\iota}$.

  Finally, we have to show the reachability and the congruence
  property from Definition~\ref{def:occRep}. To show reachability
  assume some $\pi\concat \seq i \in P$. Then there is some $\beta <
  \alpha$ such that $\pi\concat \seq i \in \pos{g_\iota}$ for all $\beta
  \le \iota < \alpha$. Hence, since then also $\pi \in \pos{g_\iota}$
  for all $\beta \le \iota < \alpha$, we have $\pi \in P$. According
  to the construction of $l$, there is also some $\beta \le \gamma <
  \alpha$ with $g_\gamma(\pi) = l(\pi)$. Since $\pi\concat \seq i \in
  \pos{g_\gamma}$ we can conclude that $i < \srank{l(\pi)}$.

  To establish congruence assume that $\pi_1 \sim
  \pi_2$. Consequently, there is some $\beta < \gamma$ such that
  $\pi_1 \sim_{g_\iota} \pi_2$ for all $\beta \le \iota <
  \alpha$. Therefore, we also have for each $\beta \le \iota < \alpha$
  that $\pi_1 \concat \seq i \sim_{g_\iota} \pi_2 \concat \seq i$ and that
  $g_\iota(\pi_1) = g_\iota(\pi_2)$. From the former we can
  immediately derive that $\pi_1\concat \seq i \sim \pi_2\concat \seq
  i$. Moreover, according to the construction of $l$, there some
  $\beta \le \gamma < \alpha$ such that $l(\pi_1) = g_\gamma(\pi_1) =
  g_\gamma(\pi_2) = l(\pi_2)$.

  This concludes the proof that $(P,l,\sim)$ is indeed a labelled
  quotient tree. Next, we show that the sequence $(g_\iota)_{\iota <
    \alpha}$ converges to the thus defined canonical term graph
  $g$. By Lemma~\ref{lem:cauchyTruncFun}, this amounts to giving for
  each $d \in \nats$ some $\beta < \alpha$ such that $\trunca{g}{d}
  \isom \trunca{g_\iota}{d}$ for each $\beta\le \iota < \alpha$.

  To this end, let $d \in \nats$. Since $(g_\iota)_{\iota < \alpha}$
  is Cauchy, there is, according to Lemma~\ref{lem:cauchyTruncFun},
  some $\beta < \alpha$ such that
  \begin{gather}
    \trunca{g_\iota}{d} \isom \trunca{g_{\iota'}}{d}\quad \text{ for all }
    \beta\le \iota,\iota' < \alpha.\tag{1}\label{eq:cauchy}
  \end{gather}
  In order to show that this implies that $\trunca{g}{d} \isom
  \trunca{g_\iota}{d}$ for each $\beta \le\iota < \alpha$, we show
  that the respective labelled quotient trees of $\trunca{g}{d}$ and
  $\trunca{g_\iota}{d}$ as characterised by
  Lemma~\ref{lem:truncaOccRep} coincide. The labelled quotient tree
  $(P_1,l_1,\sim_1)$ for $\trunca{g}{d}$ is given by
  \begin{gather*}
    \begin{aligned}
      P_1 &= \setcom{\pi \in P}{\forall \pi_1 < \pi\exists \pi_2 \sim
        \pi_1:\len{\pi_2} < d}\\%
      \sim_1\ &=\ \sim \cap\ P_1 \times P_1
    \end{aligned}%
    \hspace{1cm}
    l_1(\pi) =
    \begin{cases}
      l(\pi) &\text{if } \exists \pi'\sim \pi: \len{\pi'} < d\\
      \bot &\text{otherwise}
    \end{cases}
  \end{gather*}
  The labelled quotient tree $(P^\iota_2,l^\iota_2,\sim^\iota_2)$ for
  each $\trunca{g_\iota}{d}$ is given by
  \begin{align*}
    \begin{aligned}
      P^\iota_2 &= \setcom{\pi \in \pos{g_\iota}}{\forall \pi_1 <
        \pi\exists \pi_2 \sim_{g_\iota} \pi_1:\len{\pi_2} < d}\\%
      \sim^\iota_2\ &=\ \sim \cap\ P^\iota_2 \times P^\iota_2%
    \end{aligned}%
    \hspace{5mm}
    l^\iota_2(\pi) &=
    \begin{cases}
      g_\iota(\pi) &\text{if } \exists \pi'\sim_{g_\iota} \pi: \len{\pi'} < d\\
      \bot &\text{otherwise}
    \end{cases}
  \end{align*}
  Due to \eqref{eq:cauchy}, all $(P^\iota_2,l^\iota_2,\sim^\iota_2)$
  with $\beta \le \iota < \alpha$ are pairwise equal. Therefore, we
  write $(P_2,l_2,\sim_2)$ for this common labelled quotient
  tree. That is, it remains to be shown that $(P_1,l_1,\sim_1)$ and
  $(P_2,l_2,\sim_2)$ are equal.

  (a) $P_1 = P_2$. For the ``$\subseteq$'' direction let $\pi \in
  P_1$. If $\pi = \emptyseq$, we immediately have that $\pi \in
  P_2$. Hence, we can assume that $\pi$ is non-empty.  Since $\pi \in
  P_1$ implies $\pi \in P$, there is some $\beta \le \beta' < \alpha$
  with $\pi \in \pos{g_\iota}$ for all $\beta' \le \iota <
  \alpha$. Moreover this means that for each $\pi_1 < \pi$ there is
  some $\pi_2 \sim \pi_1$ with $\len{\pi_2} < d$. That is, there is
  some $\beta' \le \gamma_{\pi_1} < \alpha$ such that $\pi_2
  \sim_{g_\iota} \pi_1$ for all $\gamma_{\pi_1} \le \iota <
  \alpha$. Since there are only finitely many proper prefixes $\pi_1 <
  \pi$ but at least one, we can define $\gamma =
  \max\setcom{\gamma_{\pi_1}}{\pi_1 < \pi}$ such that we have for each
  $\pi_1 < \pi$ some $\pi_2 \sim_{g_\gamma} \pi_1$ with $\len{\pi_2} <
  d$. Hence, $\pi \in P^\gamma_2 = P_2$.

  To show the converse direction, assume that $\pi \in P_2$. Then $\pi
  \in P^\iota_2 \subseteq \pos{g_\iota}$ for all $\beta \le \iota <
  \alpha$. Hence, $\pi \in P$. To show that $\pi \in P_1$, assume some
  $\pi_1 < \pi$. Since $\pi \in P^\beta_2$, there is some $\pi_2
  \sim_{g_\beta} \pi_1$ with $\len{\pi_2} < d$. Then $\pi_1 \in P_2$
  because $P_2$ is closed under prefixes and $\pi_2 \in P_2$ because
  $\len{\pi_2} < d$. Thus, $\pi_2 \sim_2 \pi_1$ which implies $\pi_2
  \sim_{g_\iota} \pi_1$ for all $\beta \le \iota<
  \alpha$. Consequently, $\pi_2 \sim \pi_1$, which means that $\pi \in
  P_1$.

  (c) $\sim_1\ =\ \sim_2$. For the ``$\subseteq$'' direction assume
  $\pi_1 \sim_1 \pi_2$. Hence, $\pi_1 \sim \pi_2$ and $\pi_1,\pi_2 \in
  P_1 = P_2$. This means that there is some $\beta \le \gamma <
  \alpha$ with $\pi_1 \sim_{g_\gamma} \pi_2$. Consequently, $\pi_1
  \sim_2 \pi_2$. For the converse direction assume that $\pi_1 \sim_2
  \pi_2$. Then $\pi_1,\pi_2 \in P_2 = P_1$ and $\pi_1 \sim_{g_\iota}
  \pi_2$ for all $\beta \le \iota<\alpha$. Hence, $\pi_1 \sim \pi_2$
  and we can conclude that $\pi_1 \sim_1 \pi_2$.

  (b) $l_1 = l_2$. We show this by proving that, for all $\beta \le
  \iota < \alpha$, the condition $\exists \pi'\sim \pi: \len{\pi'} <
  d$ from the definition of $l_1$ is equivalent to the condition
  $\exists \pi'\sim_{g_\iota} \pi: \len{\pi'} < d$ from the definition
  of $l_2$ and that $l(\pi) = g_\iota(\pi)$ if either condition is
  satisfied. The latter is simple: Whenever there is some $\pi' \sim
  \pi$ with $\len{\pi'} < d$, then $g_\iota(\pi) = l^\iota_2(\pi) =
  l^\beta_2(\pi) = g_\beta(\pi)$ for all $\beta \le \iota <
  \alpha$. Hence, $l(\pi) = g_\beta(\pi) = g_\iota(\pi)$ for all
  $\beta \le \iota < \alpha$. For the former, we first consider the
  ``only if'' direction of the equivalence. Let $\pi \in P_1$ and
  $\pi' \sim \pi$ with $\len{\pi'} < d$. Then also $\pi' \in P_1$
  which means that $\pi' \sim_1 \pi$. Since then $\pi' \sim_2 \pi$, we
  can conclude that $\pi' \sim_{g_\iota} \pi$ for all $\beta \le \iota
  < \alpha$. For the converse direction assume that $\pi \in P_2$,
  $\pi' \sim_{g_\iota} \pi$ and $\len{\pi'} < d$. Then also $\pi' \in
  P_2$ which means that $\pi' \sim_2 \pi$ and, therefore, $\pi' \sim
  \pi$.
\end{proof}

\begin{example}
  \label{ex:limit}
  Reconsider the sequence of term graphs $(h_\iota)_{\iota<\omega}$
  Figure~\ref{fig:gtransRed} on page~\pageref{fig:gtransRed}. As we have noticed in
  Example~\ref{ex:liminf}, the edge that loops back to the root node
  is pushed down as the sequence progresses. Thus, we have for each $n
  \in \nats$, that the strict truncations of the term graphs $h_\iota$
  with $n \le \iota < \omega$ at depth $n+1$ coincide. Therefore, by
  Lemma~\ref{lem:cauchyTruncFun}, $(h_\iota)_{\iota<\omega}$ is
  Cauchy. In particular, we have that $(h_\iota)_{\iota<\omega}$
  converges to $h_\omega$.

  The limit inferior induced by $\lebotg$ showed some curios behaviour
  as soon as acyclic sharing changes as we have seen in
  Example~\ref{ex:liminf} with the convergence illustrated in
  Figure~\ref{fig:convWeird}. This is not the case for the metric
  $\dda$. In fact, there is no topological space in which
  $(g_\iota)_{\iota<\omega}$ from Figure~\ref{fig:convWeird} converges
  to a unique limit.
\end{example}

\subsection{Other Truncation Functions and Their Metric Spaces}
\label{sec:other-trunc-funct}

Generalising concepts from terms to term graphs is not a
straightforward matter as we have to decide how to deal with
additional sharing that term graphs offer. The definition of strict
truncation seems to be an obvious choice for a generalisation of tree
truncation. In this section, we shall formally argue that it is in
fact the case. More specifically, we show that no matter how we define
the sharing of the $\bot$-nodes that fill the holes caused by the
truncation, we obtain the same topology. We will then contrast this to
the metric that we have used in previous work~\cite{bahr11rta} by
showing that small changes to its definition also change the induced
topology.

The following lemma is a handy tool for comparing metric spaces
induced by truncation functions:
\begin{lemma}
  \label{lem:truncCont}
  Let $\tau,\upsilon$ be two truncation functions on $\iptgraphs$ and $f\fcolon
  \ictgraphs \funto \ictgraphs$ a function on $\ictgraphs$. Then the
  following are equivalent
  \begin{enumerate}[(i)]
  \item $f$ is a continuous mapping $f\fcolon (\ictgraphs,\dd_\tau)\homto
    (\ictgraphs,\dd_\upsilon)$
  \item For each $g \in \ictgraphs$ and $d \in \nats$ there is some $e
    \in \nats$ such that 
    \[
    \abssim{\tau}(g,h) \ge e \quad \implies \quad \abssim{\upsilon}(f(g),f(h)) \ge d \quad
    \text{for all } h \in \ictgraphs
    \]
  \item For each $g \in \ictgraphs$ and $d \in \nats$ there is some $e
    \in \nats$ such that 
    \[
    \tau_e(g) \isom \tau_e(h) \quad \implies \quad \upsilon_d(f(g)) \isom \upsilon_d(f(h))
    \quad \text{for all } h \in \ictgraphs
    \]
  \end{enumerate}
\end{lemma}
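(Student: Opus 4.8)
The plan is to route everything through one elementary observation about truncation-based similarity: for any truncation function $\tau$ and any $e \in \nats$,
\[
\abssim{\tau}(g,h) \ge e \quad\Longleftrightarrow\quad \tau_e(g) \isom \tau_e(h).
\]
The ``$\Leftarrow$'' direction is immediate, since $\abssim{\tau}(g,h)$ is defined as the maximum of the set $\setcom{d\in\nats\cup\set\infty}{\tau_d(g)\isom\tau_d(h)}$, which then contains $e$. For ``$\Rightarrow$'', the value $\abssim{\tau}(g,h)$ is by definition some $d \in \nats\cup\set\infty$ with $\tau_d(g)\isom\tau_d(h)$, and $d \ge e$; if $d = e$ we are done, and if $e < d$ then axiom~(\ref{item:truncFunIII}) of Definition~\ref{def:truncFun} yields $\tau_e(g)\isom\tau_e(h)$.

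Applying this equivalence to both $\tau$ and $\upsilon$ turns the hypothesis and the conclusion of the implication in (ii) into those of (iii), and conversely, so (ii) and (iii) are logically equivalent clause by clause; hence I would dispatch (ii)~$\Leftrightarrow$~(iii) in one line. It then remains to prove (i)~$\Leftrightarrow$~(ii), which is just the standard $\epsilon$-$\delta$ characterisation of continuity of $f$ at each point $g$, transported across the identity $\dd_\tau(g,h) = 2^{-\abssim{\tau}(g,h)}$ and its $\upsilon$-counterpart, using that both distances take values only in $\set{0}\cup\setcom{2^{-n}}{n\in\nats}$ so that it suffices to work with thresholds of the form $2^{-n}$. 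Concretely: for (i)~$\Rightarrow$~(ii), given $g$ and $d\in\nats$, continuity at $g$ with $\epsilon = 2^{-d}$ provides some $\delta\in\realsp$; choosing $e\in\nats$ with $2^{-e} < \delta$ and noting that $\abssim{\tau}(g,h)\ge e$ forces $\dd_\tau(g,h)\le 2^{-e} < \delta$ and hence $\dd_\upsilon(f(g),f(h)) < 2^{-d}$, i.e.\ $\abssim{\upsilon}(f(g),f(h))\ge d$. For (ii)~$\Rightarrow$~(i), given $g$ and $\epsilon\in\realsp$, pick $d\in\nats$ with $2^{-d} < \epsilon$, take the $e$ supplied by (ii), and put $\delta = 2^{-e}$; then $\dd_\tau(g,h) < \delta$ implies $\abssim{\tau}(g,h)\ge e$, hence $\abssim{\upsilon}(f(g),f(h))\ge d$ and $\dd_\upsilon(f(g),f(h))\le 2^{-d} < \epsilon$.

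I expect no real obstacle: the content is entirely the bookkeeping of strict-versus-nonstrict inequalities when moving between similarity values and distances, together with the single appeal to axiom~(\ref{item:truncFunIII}) for the displayed equivalence above. Nothing in the argument uses the specific combinatorics of term graphs; it would go through verbatim for the metric space induced by any family of functions satisfying the truncation-function axioms.
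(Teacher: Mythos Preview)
Your proposal is correct and follows essentially the same approach as the paper: the paper's proof simply reads ``Analogous to Lemma~\ref{lem:cauchyTruncFun}'', and your argument is precisely that analogy spelled out, hinging on the same use of axiom~(\ref{item:truncFunIII}) to pass from $\abssim{\tau}(g,h)\ge e$ to $\tau_e(g)\isom\tau_e(h)$ and on the discreteness of the range of $\dd_\tau$ to trade $\epsilon$'s and $\delta$'s for natural-number thresholds. If anything, your handling of the strict/non-strict inequalities is tidier than the corresponding passage in the proof of Lemma~\ref{lem:cauchyTruncFun}.
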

\begin{proof}
  Analogous to Lemma~\ref{lem:cauchyTruncFun}.
\end{proof}

An easy consequence of the above lemma is that if two truncation
functions only differ by a constant depth, they induce the same
topology:
\begin{proposition}
  \label{prop:truncTopology}
  Let $\tau,\upsilon$ be two truncation functions on $\iptgraphs$ such that
  there is a $\delta \in \nats$ with $\abs{\abssim{\tau}(g,h) -
    \abssim{\upsilon}(g,h)} \le \delta$ for all $g,h\in \ictgraphs$. Then
  $(\ictgraphs,\dd_\tau)$ and $(\ictgraphs,\dd_\upsilon)$ are topologically
  equivalent, i.e.\ induce the same topology.
\end{proposition}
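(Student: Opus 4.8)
The plan is to apply Lemma~\ref{lem:truncCont} to the identity function $\id\fcolon \ictgraphs \funto \ictgraphs$. Two metrics on a common underlying set induce the same topology precisely when the identity map is a homeomorphism between the two metric spaces, i.e.\ when it is continuous in both directions. Since the hypothesis $\abs{\abssim{\tau}(g,h) - \abssim{\upsilon}(g,h)} \le \delta$ is symmetric in $\tau$ and $\upsilon$, it suffices to show that $\id\fcolon (\ictgraphs,\dd_\tau) \homto (\ictgraphs,\dd_\upsilon)$ is continuous; continuity of the reverse map then follows verbatim by exchanging the roles of $\tau$ and $\upsilon$.

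To establish continuity I would verify condition (ii) of Lemma~\ref{lem:truncCont}. Fix $g \in \ictgraphs$ and $d \in \nats$ and put $e = d + \delta \in \nats$. If $h \in \ictgraphs$ satisfies $\abssim{\tau}(g,h) \ge e$, then the hypothesis yields $\abssim{\upsilon}(g,h) \ge \abssim{\tau}(g,h) - \delta \ge e - \delta = d$, hence $\abssim{\upsilon}(\id(g),\id(h)) = \abssim{\upsilon}(g,h) \ge d$, which is exactly what condition (ii) demands. By Lemma~\ref{lem:truncCont} this makes $\id$ continuous from $(\ictgraphs,\dd_\tau)$ to $(\ictgraphs,\dd_\upsilon)$, and the symmetric argument gives continuity of $\id$ in the other direction, so the two topologies coincide.

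The only point that needs a little care is the treatment of $\infty$: the bound $\abs{\abssim{\tau}(g,h) - \abssim{\upsilon}(g,h)} \le \delta$ must be read with the convention that $\infty - \infty = 0$ while $\infty - n = \infty$ for finite $n$, so that the finiteness of $\delta$ forces $\abssim{\tau}(g,h) = \infty$ iff $\abssim{\upsilon}(g,h) = \infty$; under this reading the estimate $\abssim{\upsilon}(g,h) \ge \abssim{\tau}(g,h) - \delta$ used above is valid (and trivially so when $\abssim{\tau}(g,h) = \infty$). Beyond this, there is no substantial obstacle: the whole content is the translation between the $\delta$-bound on the similarity measures and the $e$--$d$ formulation of continuity provided by Lemma~\ref{lem:truncCont}, combined with the elementary fact that topological equivalence of two metrics is the same as bicontinuity of the identity map.
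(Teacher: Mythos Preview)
Your proposal is correct and follows essentially the same approach as the paper: show that $\id$ is a homeomorphism by verifying continuity in one direction via Lemma~\ref{lem:truncCont} with $e = d + \delta$, and invoke symmetry for the other direction. Your extra remark about the $\infty$ case is a nice bit of added care that the paper leaves implicit.
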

\begin{proof}
  We show that the identity function $\id\fcolon \ictgraphs \funto
  \ictgraphs$ is a homeomorphism from $(\ictgraphs,\dd_\tau)$ to
  $(\ictgraphs,\dd_\upsilon)$, i.e.\ both $\id$ and $\id^{-1}$ are
  continuous. Due to the symmetry of the setting it suffices to show
  that $\id$ is continuous. To this end, let $g \in \ictgraphs$ and
  $d\in \nats$. Define $e = d + \delta$ and assume some $h \in
  \ictgraphs$ such that $\abssim{\tau}(g,h) \ge e$. By
  Lemma~\ref{lem:truncCont}, it remains to be shown that then
  $\abssim{\upsilon}(g,h) \ge d$. Indeed, we have $\abssim{\upsilon}(g,h) \ge
  \abssim{\tau}(g,h) - \delta \ge e - \delta = d$.
\end{proof}
This shows that metric spaces induced by truncation functions are
essentially invariant under changes in the truncation function bounded
by a constant margin.

\begin{rem}
  We should point out that the original definition of the metric on
  terms by Arnold and Nivat~\cite{arnold80fi} was slightly different
  from the one we showed here. Recall that we defined similarity as
  the maximum depth of truncation that ensures equality:
  \[
  \abssim{\tau}(g,h) =  \max\setcom{d \in \nats\cup\set\infty}{\tau_d(g) \isom \tau_d(h)}
  \]
  Arnold and Nivat, on the other hand, defined it as the minimum
  truncation depth that still shows inequality:
  \[
  \abssimp{\tau}(g,h) =  \min\setcom{d \in \nats\cup\set\infty}{\tau_d(g) \not\isom \tau_d(h)}
  \]
  However, it is easy to see that either both $\abssim{\tau}(g,h)$ and
  $\abssimp{\tau}(g,h)$ are $\infty$ or $\abssimp{\tau}(g,h) =
  \abssim{\tau}(g,h) + 1$. Hence, by
  Proposition~\ref{prop:truncTopology}, both definitions yield the
  same topology.
\end{rem}

Proposition~\ref{prop:truncTopology} also shows that two truncation
functions induce the same topology if they only differ in way they
treat ``fringe nodes'', i.e.\ nodes that are introduced in place of
the nodes that have been cut off. Since the definition of truncation
functions requires that $\tau_0(g) \isom \bot$ and $\tau_\infty(g)
\isom g$, we usually do not give the explicit construction of the
truncation for the depths $0$ and $\infty$.
\begin{example}
  Consider the following variant $\tau$ of the strict truncation
  function $\trunca{}{}$. Given a term graph $g \in \iptgraphs$ and
  depth $d \in \natsp$ we define the truncation $\tau_d(g)$ as
  follows:
  \begin{align*}
    N^g_{<d} &= \setcom{n \in N^g}{\depth{g}{n} < d}\\%
    N^g_{=d} &= \setcom{n^i}{n\in N^g_{<d},0\le i <
      \rank{g}{n},\gsuc^g_i(n) \nin N^g_{<d}}\\%
    N^{\tau_d(g)} &= N^g_{<d}\uplus N^g_{=d}\\%
    \glab^{\tau_d(g)} (n) &=
    \begin{cases}
      \glab^g(n) &\text{if } n \in N^g_{<d}\\
      \bot &\text{if } n \in N^g_{=d}
    \end{cases} \hspace{40pt}%
    \gsuc^{\tau_d(g)}_i(n) =
    \begin{cases}
      \gsuc^g_i(n) &\text{if } n^i \nin N^g_{=d}\\
      n^i &\text{if } n^i \in N^g_{=d}
    \end{cases}
  \end{align*}
  One can easily show that $\tau$ is in fact a truncation function.
  The difference between $\trunca{}{}$ and $\tau$ is that in the
  latter we create a fresh node $n^i$ whenever a node $n$ has a
  successor $\gsuc^g_i(n)$ that lies at the fringe, i.e.\ at depth
  $d$. Since this only affects the nodes at the fringe and, therefore,
  only nodes at the same depth $d$ we get the following:
  \begin{align*}
    \trunca{g}{d} \isom \trunca{h}{d} \quad &\implies \quad \tau_d(g)
    \isom \tau_d(h),\text{ and}\\%
    \tau_d(g) \isom \tau_d(h) \quad &\implies \quad \trunca{g}{d-1} \isom
    \trunca{h}{d-1}.
  \end{align*}
  Hence, the respectively induced similarities only differ by a
  constant margin of $1$, i.e.\ we have that $\abs{\similara{g}{h} -
    \abssim{\tau}(g,h)} = 1$. According to
  Proposition~\ref{prop:truncaDown}, this means that
  $(\ictgraphs,\dda)$ and $(\ictgraphs, \dd_\tau)$ are topologically
  equivalent.

  Consider another variant $\upsilon$ of the strict truncation
  function $\trunca{}{}$.  Given a term graph $g \in \iptgraphs$ and
  depth $d \in \natsp$, we define the truncation $\upsilon_d(g)$ as
  follows:
  \begin{align*}
    N^g_{<d} &= \setcom{n \in N^g}{\depth{g}{n} < d}\\%
    N^g_{=d} &= \setcom{n^i}{
        \begin{aligned}
          n \in N^g,\depth{g}{n} = d - 1, 0\le i < \rank{g}{n} \text{ with} \quad
          &\gsuc^g_i(n)\nin \tNodes{g}{d}\\\text{or }\quad
          &n\nin \predAcy{g}{\gsuc^g_i(n)}
        \end{aligned}}\\%
    N^{\upsilon_d(g)} &= N^g_{<d}\uplus N^g_{=d}\\%
    \glab^{\upsilon_d(g)} (n) &=
    \begin{cases}
      \glab^g(n) &\text{if } n \in N^g_{<d}\\
      \bot &\text{if } n \in N^g_{=d}
    \end{cases} \hspace{40pt}%
    \gsuc^{\upsilon_d(g)}(n) =
    \begin{cases}
      \gsuc^g_i(n) &\text{if } n^i \nin N^g_{=d}\\
      n^i &\text{if } n^i \in N^g_{=d}
    \end{cases}
  \end{align*}
  Also $\upsilon$ forms a truncation function as one can easily show.
  In addition to creating fresh nodes $n^i$ for each successor that is
  not in the retained nodes $N^g_{<d}$, the truncation function
  $\upsilon$ creates such new nodes $n^i$ for each cycle that created
  by a node just above the fringe. Again, as for the truncation
  function $\tau$, only the nodes at the fringe, i.e.\ at depth $d$
  are affected by this change. Hence, the respectively induced
  similarities of $\trunca{}{}$ and $\upsilon$ only differ by a
  constant margin of $1$, which makes the metric spaces
  $(\ictgraphs,\dda)$ and $(\ictgraphs, \dd_\upsilon)$ topologically
  equivalent as well.
\end{example}
The robustness of the metric space $(\ictgraphs,\dda)$ under the
changes illustrated above is due to the uniformity of the core
definition of the strict truncation which only takes into account the
depth. By simply increasing the depth by a constant number, we can
compensate for changes in the way fringe nodes are dealt with.

This is much different for the truncation function $\trunc{g}{d}$ that
induces the metric space considered in \cite{bahr11rta}:

\begin{definition} [truncation of term graphs]
  \label{def:truncGraph}
  Let $g \in \iptgraphs$ and $d \in \nats$.
  \begin{enumerate}[(i)]
  \item Given $n,m\in N^g$, $m$ is an \emph{acyclic predecessor} of
    $n$ in $g$ if there is an acyclic occurrence $\pi \concat \seq i \in
    \nodePosAcy{g}{n}$ with $\pi \in \nodePos{g}{m}$. The set of
    acyclic predecessors of $n$ in $g$ is denoted $\predAcy{g}{n}$.
  \item The set of \emph{retained nodes} of $g$ at $d$, denoted
    $\tNodes{g}{d}$, is the least subset $M$ of $N^g$ satisfying the
    following conditions for all $n\in N^g$:
    \begin{center}
      \begin{inparaenum}[(T1)]
        \def\theenumi{T}
      \item $\depth{g}{n} < d  \implies  n \in
        M$  \label{eq:truncNodes1} \qquad
      \item $n \in M  \implies  \predAcy{g}{n} \subseteq M$
        \label{eq:truncNodes2}
      \end{inparaenum}
    \end{center}
  \item For each $n\in N^g$ and $i\in \nats$, we use $n^i$ to denote a
    fresh node, i.e.\ $\setcom{n^i}{n \in N^g, i\in \nats}$ is a set
    of pairwise distinct nodes not occurring in $N^g$.  The set of
    \emph{fringe nodes} of $g$ at $d$, denoted $\fNodes{g}{d}$, is
    defined as the singleton set $\set{r^g}$ if $d = 0$, and otherwise
    as the set
    \begin{gather*}
      \setcom{n^i}{
        \begin{aligned}
          n \in \tNodes{g}{d}, 0\le i < \rank{g}{n} \text{ with} \quad
          &\gsuc^g_i(n)\nin \tNodes{g}{d}\\\text{or }\quad
          &\depth{g}{n} \ge d - 1, n\nin \predAcy{g}{\gsuc^g_i(n)}
        \end{aligned}
}
    \end{gather*}
  \item The \emph{truncation} of $g$ at $d$, denoted $\trunc{g}{d}$,
    is the term graph defined by
    \begin{align*}
      N^{\trunc{g}{d}} &= \tNodes{g}{d} \uplus \fNodes{g}{d}
      & r^{\trunc{g}{d}} &= r^g
      \\
      \glab^{\trunc{g}{d}}(n) &= 
      \begin{cases}
        \glab^g(n) &\text{if } n \in \tNodes{g}{d} \\
        \bot &\text{if } n \in \fNodes{g}{d}
      \end{cases} &
      \gsuc_i^{\trunc{g}{d}}(n) &= 
      \begin{cases}
        \gsuc_i^g(n) &\text{if } n^i \nin \fNodes{g}{d} \\
        n^i &\text{if } n^i \in \fNodes{g}{d}
      \end{cases}
    \end{align*}
    Additionally, we define $\trunc{g}{\infty}$ to be the term graph
    $g$ itself.
  \end{enumerate}
\end{definition}
The idea of this definition of truncation is that not only each node
at depth $<d$ is kept \eqref{eq:truncNodes1} but also every acyclic
predecessor of such a node \eqref{eq:truncNodes2}. In sum, every node
on an acyclic path from the root to a node at depth smaller than $d$
is kept. The difference between the two truncation functions
$\trunca{}{}$ and $\trunc{}{}$ are illustrated in
Figure~\ref{fig:exTrunc}.
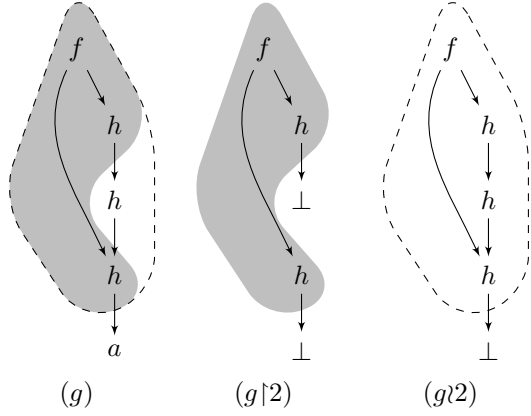
\begin{figure}
  \center
  \begin{tikzpicture}[->,node distance=2cm]
    \node[alias=r1] (f) {$f$}
    child [missing]
    child {
      node (g1) {$h$}
      child {
        node (g2) {$h$}
        child {
          node (g) {$h$}
          child {
            node {$a$}
          }
        }
      }
    };
    \draw (f) edge[out=-115,in=115] (g);
    \begin{pgfonlayer}{background}
      \fill[lightgray,rounded corners=.5cm] ($(f.north)+(0,.6)$) --
      ($(g1.east)+(.3,0)$) -- ($(g2.west)-(.3,0)$) -- ($(g.south
      east)+(.3,0)$) --($(g.south west)+(-.2,-.4)$) -- ($(f)+(-1,-2)$) --
      cycle;

      \draw[dashed ,rounded corners=.5cm] ($(f.north)+(0,.6)$) --
      ($(g1.east)+(.3,0)$)  -- ($(g.south
      east)+(.3,0)$) --($(g.south west)+(-.2,-.4)$) -- ($(f)+(-1,-2)$) --
      cycle;
    \end{pgfonlayer}

    \node[alias=r2,right=of f] (f) {$f$}
    child [missing]
    child {
      node (g1) {$h$}
      child {
        node (g2) {$\bot$}
        child[edge from parent/.style={}] {
          node (g) {$h$}
          child[edge from parent/.style={draw}] {
            node {$\bot$}
          }
        }
      }
    };
    \draw (f) edge[out=-115,in=115] (g);
    \begin{pgfonlayer}{background}
      \fill[lightgray,rounded corners=.5cm] ($(f.north)+(0,.6)$) --
      ($(g1.east)+(.3,0)$) -- ($(g2.west)-(.3,0)$) -- ($(g.south
      east)+(.3,0)$) --($(g.south west)+(-.2,-.4)$) -- ($(f)+(-1,-2)$) --
      cycle;
    \end{pgfonlayer}

    \node[alias=r3,right=of f] (f) {$f$}
    child [missing]
    child {
      node (g1) {$h$}
      child {
        node (g2) {$h$}
        child {
          node (g) {$h$}
          child {
            node {$\bot$}
          }
        }
      }
    };
    \draw (f) edge[out=-115,in=115] (g);
    \begin{pgfonlayer}{background}
      \draw[dashed ,rounded corners=.5cm] ($(f.north)+(0,.6)$) --
      ($(g1.east)+(.3,0)$)  -- ($(g.south
      east)+(.3,0)$) --($(g.south west)+(-.2,-.4)$) -- ($(f)+(-1,-2)$) --
      cycle;
    \end{pgfonlayer}
    \begin{scope}[node distance=4cm]
      \node[below=of r1] {($g$)};
      \node[below=of r2] {($\trunca{g}{2}$)};
      \node[below=of r3] {($\trunc{g}{2}$)};
    \end{scope}
  \end{tikzpicture}
  \caption{Comparison to strict truncation.}
  \label{fig:exTrunc}
\end{figure}

In contrast to $\trunca{}{}$, the truncation function $\trunc{}{}$ is
quite vulnerable to small changes:
\begin{example}
  Consider the following variant $\tau$ of the truncation function
  $\trunc{}{}$. Given a term graph $g \in \iptgraphs$ and depth $d \in
  \natsp$, we define the truncation $\tau_d(g)$ as follows: The set of
  retained nodes $N^g_{<d}$ is defined as for the truncation
  $\trunc{g}{d}$. For the rest we define
  \begin{align*}
    N^g_{=d} &= \setcom{\gsuc^g_i(n)}{n\in N^g_{<d},0\le i <
      \rank{g}{n},\gsuc^g_i(n) \nin N^g_{<d}}\\%
    N^{\tau_d(g)} &= N^g_{<d}\uplus N^g_{=d}\\%
    \glab^{\tau_d(g)} (n) &=
    \begin{cases}
      \glab^g(n) &\text{if } n \in N^g_{<d}\\
      \bot &\text{if } n \in N^g_{=d}
    \end{cases} \hspace{40pt}%
    \gsuc^{\tau_d(g)}(n) =
    \begin{cases}
      \gsuc^g(n) &\text{if } n \in N^g_{<d}\\
      \emptyseq &\text{if } n \in N^g_{=d}
    \end{cases}
  \end{align*}

  In this variant of truncation, some sharing of the retained nodes
  is preserved. Instead of creating fresh nodes for each successor
  node that is not in the set of retained nodes, we simply keep the
  successor node. Additionally loops back into the retained nodes
  are not cut off. This variant of the truncation deals with its
  retained nodes in essentially the same way as the strict
  truncation. However, opposed the strict truncation and their
  variants, this truncation function yields a topology different from
  the metric space $(\ictgraphs,\dd_{\trunc{}{}})$! To see this, consider the two
  families of term graphs $g_n$ and $h_n$ indicated in
  Figure~\ref{fig:truncFringe}. For both families we have that the
  $\tau$-truncations at depth $2$ to $n+2$ are the same, i.e.\ $\tau_d(g_n)
  = \tau_2(g_n)$ and $\tau_d(h_n) = \tau_2(h_n)$ for all $ 2 \le d \le
  n+2$. The same holds for the truncation function
  $\trunc{}{}$. Moreover, since the two leftmost successors of the
  $h$-node are not shared in $g_n$, both truncation functions coincide
  on $g_n$, i.e.\ $\trunc{g_n}{d} = \tau_d(g_n)$. This is not the case
  for $h_n$. In fact, they only coincide up to depth $1$. However, we
  have that $\trunc{h_n}{d} = \tau_d(g_n)$. In total, we can observe that
  $\similar{g_n}{h_n} = n + 2$ but $\abssim{\tau}(g_n,h_n) = 1$. This
  means, however, that the sequence $\seq{g_0,h_0,g_1,h_1,\dots}$
  converges in $(\ictgraphs,\dd_{\trunc{}{}})$ but not in $(\ictgraphs,\dd_\tau)$!
  
  A similar example can be constructed that uses the difference in the
  way the two truncation functions deal with fringe nodes created by
  cycles back into the set of retained nodes.
\end{example}
\begin{figure}
  \centering
  \begin{tikzpicture}[node distance=3cm, allow upside down]
    \node[alias=r1] (g) {$f$}
    child {
      node (f1) {$g$}
      child {
        node {$\vdots$}
        child {
          node (f2) {$g$}
          child {
            node (f3) {$h$}
            child {
              node (a) {$a$}
            }
          }
        }
      }
    };
    \draw[->] (g) edge[bend left=35] (a);
    \begin{scope}[node distance=10mm]
      \node [left of=f3] (a1) {$a$};
      \node [below left of=f3] (a2) {$a$};
      \draw[->] (f3) edge (a1) edge (a2);
    \end{scope}

    \draw[decorate,decoration=brace] (f2.south west) -- (f1.north
    west) node[midway,above,sloped] {$n$ times};

    \node[alias=r2, right of=r1] (g) {$f$}
    child {
      node (f1) {$g$}
      child {
        node {$\vdots$}
        child {
          node (f2) {$g$}
          child {
            node (f3) {$h$}
            child {
              node (a) {$a$}
            }
          }
        }
      }
    };
    \draw[->] (g) edge[bend left=35] (a);
    \begin{scope}[node distance=10mm]
      \node [left of=f3] (a1) {$\bot$};
      \node [below left of=f3] (a2) {$\bot$};
      \draw[->] (f3) edge (a1) edge (a2);
    \end{scope}

    \draw[decorate,decoration=brace] (f2.south west) -- (f1.north
    west) node[midway,above,sloped] {$n$ times};

    \node[alias=r3, right of=r2] (g) {$f$}
    child {
      node (f1) {$g$}
      child {
        node {$\vdots$}
        child {
          node (f2) {$g$}
          child {
            node (f3) {$h$}
            child {
              node (a) {$a$}
            }
          }
        }
      }
    };
    \draw[->] (g) edge[bend left=35] (a);
    \begin{scope}[node distance=10mm]
      \node [left of=f3] (a1) {$a$};
      \draw[->] (f3) edge[bend left] (a1) edge[bend right] (a1);
    \end{scope}

    \draw[decorate,decoration=brace] (f2.south west) -- (f1.north
    west) node[midway,above,sloped] {$n$ times};

    \node[alias=r4, right of=r3] (g) {$f$}
    child {
      node (f1) {$g$}
      child {
        node {$\vdots$}
        child {
          node (f2) {$g$}
          child {
            node (f3) {$h$}
            child {
              node (a) {$a$}
            }
          }
        }
      }
    };
    \draw[->] (g) edge[bend left=35] (a);
    \begin{scope}[node distance=10mm]
      \node [left of=f3] (a1) {$\bot$};
      \draw[->] (f3) edge[bend left] (a1) edge[bend right] (a1);
    \end{scope}

    \draw[decorate,decoration=brace] (f2.south west) -- (f1.north
    west) node[midway,above,sloped] {$n$ times};
    \begin{scope}[node distance=5.5cm]
      \node[below=of r1] {$(g_n)$};
      \node[below=of r2] {$(\tau_{2}(g_n) = \tau_{n+2}(g_n))$};
      \node[below=of r3] {$(h_n)$};
      \node[below=of r4] {$(\tau_{2}(h_n) = \tau_{n+2}(h_n))$};
    \end{scope}
  \end{tikzpicture}
  \caption{Variations in fringe nodes.}
\label{fig:truncFringe}
\end{figure}
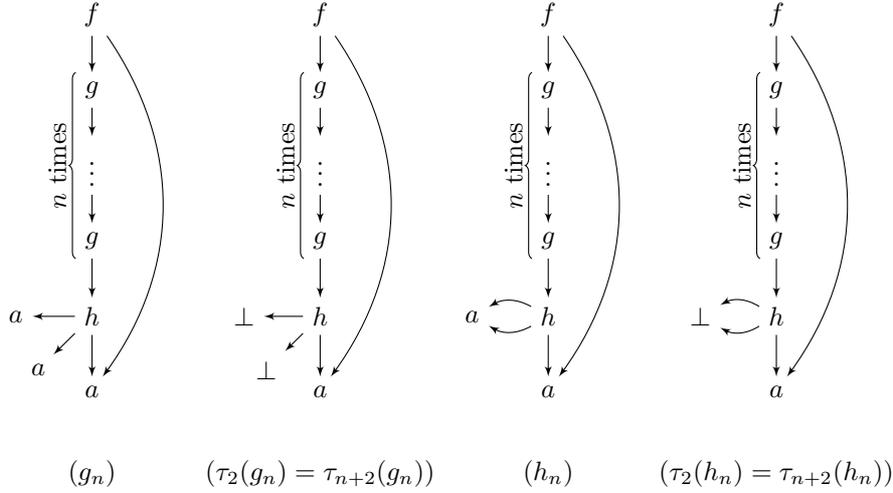

\section{Partial Order vs.\ Metric Space}
\label{sec:partial-order-vs}

Recall that $\prs$-convergence in term rewriting is a conservative
extension of $\mrs$-convergence (cf.\
Theorem~\ref{thr:strongExt}). The key property that makes this
possible is that for each sequence $(t_\iota)_{\iota<\alpha}$ in
$\iterms$, we have that $\lim_{\iota\limto\alpha} t_\iota =
\liminf_{\iota\limto\alpha} t_\iota$ whenever
$(t_\iota)_{\iota<\alpha}$ converges, or $\liminf_{\iota\limto\alpha}
t_\iota \in \iterms$.

Unfortunately, this is not the case for the metric space and the
partial order that we consider on term graphs. As we have shown in
Example~\ref{ex:limit}, the sequence of term graphs depicted in
Figure~\ref{fig:convWeird} has a total term graph as its limit
inferior although it does not converge in the metric space. This
example shows that we cannot hope to generalise the compatibility
property that we have for terms: Even if a sequence of total term
graphs has a total term graph as its limit inferior, it might not
converge. However, the other direction of the compatibility does hold
true:
\begin{theorem}
  \label{thr:limLiminf}
  If $(g_\iota)_{\iota<\alpha}$ converges, then
  $\lim_{\iota\limto\alpha} g_\iota = \liminf_{\iota\limto\alpha}
  g_\iota$.
\end{theorem}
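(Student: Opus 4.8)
The plan is to compare the two explicit labelled quotient tree descriptions that are already available. Since $(g_\iota)_{\iota<\alpha}$ converges it is in particular Cauchy, so by Theorem~\ref{thr:smetricComplete} the metric limit $\lim_{\iota\limto\alpha}g_\iota$ is the canonical term graph given by the labelled quotient tree $(P,l,\sim)$ displayed there, while by Corollary~\ref{cor:lebot1Liminf} the limit inferior $\liminf_{\iota\limto\alpha}g_\iota$ (which always exists, by Theorem~\ref{thr:complSemilattice}) is the canonical term graph given by the labelled quotient tree $(P',l',\sim')$ displayed there. Because a labelled quotient tree uniquely determines its canonical term graph (Lemma~\ref{lem:occrep}), it suffices to prove the three identities $P=P'$, $l=l'$ and $\sim\ =\ \sim'$.

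First I would settle $P=P'$. The inclusion $P'\subseteq P$ should be a routine path-chasing argument: if $\pi\in P'$ is witnessed by some $\beta<\alpha$, then $g_\iota(\pi')=g_\beta(\pi')$ for every proper prefix $\pi'<\pi$ and every $\beta\le\iota<\alpha$, so the path $\pi$ is followable in each such $g_\iota$ exactly as in $g_\beta$, whence $\pi\in\bigcap_{\beta\le\iota<\alpha}\pos{g_\iota}\subseteq P$. The converse $P\subseteq P'$ is the one step that genuinely uses convergence. Given $\pi\in P$, say $\pi\in\pos{g_\iota}$ for all $\beta\le\iota<\alpha$, put $d=\len{\pi}$ and apply Lemma~\ref{lem:cauchyTruncFun} to obtain some $\delta<\alpha$ with $\trunca{g_\mu}{d}\isom\trunca{g_\nu}{d}$ for all $\delta\le\mu,\nu<\alpha$; by Corollary~\ref{cor:truncaOccRep}(\ref{item:truncaOccRep2}) (together with part~(\ref{item:truncaOccRep1}) of the same corollary to handle the position sets) this forces $g_\mu(\pi')=g_\nu(\pi')$ for all positions $\pi'$ of length less than $d$, in particular for all $\pi'<\pi$. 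Taking $\gamma=\max\set{\beta,\delta}$ then yields $\pi\in\pos{g_\gamma}$ and $g_\iota(\pi')=g_\gamma(\pi')$ for all $\pi'<\pi$ and all $\gamma\le\iota<\alpha$, i.e.\ $\pi\in P'$.

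With $P=P'$ secured, the two remaining identities are bookkeeping. For the labelling, the proof of Theorem~\ref{thr:smetricComplete} already shows that $l$ is a \emph{total} function on $P$, i.e.\ that for each $\pi\in P=P'$ there really is a $\beta<\alpha$ with $g_\iota(\pi)=g_\beta(\pi)$ for all $\beta\le\iota<\alpha$; but that is exactly the condition selecting the first branch in the definition of $l'$, so the $\bot$-branch is never taken on $P'$ and $l'(\pi)=g_\beta(\pi)=l(\pi)$ (the value being independent of the chosen witness $\beta$, again as established there). For the sharing relation, note that $\sim\ =\ \bigcup_{\beta<\alpha}\bigcap_{\beta\le\iota<\alpha}\sim_{g_\iota}$ is already contained in $P\times P$ by Theorem~\ref{thr:smetricComplete}, whereas $\sim'$ is precisely this same union intersected with $P'\times P'$; since $P=P'$ this gives $\sim'\ =\ \sim$. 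Hence the two labelled quotient trees coincide and so do the canonical term graphs they represent.

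I expect the inclusion $P\subseteq P'$ to be the main obstacle — it is the only place where the hypothesis that $(g_\iota)_{\iota<\alpha}$ converges is used, and it is used only through the truncation-based characterisation of Cauchy sequences (Lemma~\ref{lem:cauchyTruncFun}) and the behaviour of the strict truncation (Corollary~\ref{cor:truncaOccRep}); everything else reduces either to elementary path-chasing or to well-definedness facts already contained in the proof of Theorem~\ref{thr:smetricComplete}.
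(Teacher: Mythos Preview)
Your proposal is correct and follows essentially the same route as the paper: compare the labelled quotient tree of the metric limit (Theorem~\ref{thr:smetricComplete}) with that of the limit inferior (Corollary~\ref{cor:lebot1Liminf}), using the Cauchy property via Lemma~\ref{lem:cauchyTruncFun} and Corollary~\ref{cor:truncaOccRep} to handle the only nontrivial inclusion $P\subseteq P'$. The paper's proof is slightly terser on the remaining two equalities (dispatching $\sim$ and $l$ in one sentence once $P=P'$ is known), but the substance is the same.
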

\begin{proof}
  In order to prove this property, we will use the construction of the
  limit resp.\ the limit inferior of a sequence of term graphs which
  we have shown in Theorem~\ref{thr:smetricComplete} resp.\
  Corollary~\ref{cor:lebot1Liminf}.

  According to Theorem~\ref{thr:smetricComplete}, we have that the
  canonical term graph $\lim_{\iota\limto\alpha}g_\iota$ is given by
  the following labelled quotient tree $(P,\sim,l)$:
  \begin{gather*}
    P = \bigcup_{\beta<\alpha}\bigcap_{\beta\le\iota<\alpha}
    \pos{g_\iota}%
    \qquad
    {\sim} = \bigcup_{\beta<\alpha}\bigcap_{\beta\le\iota<\alpha}
    {\sim_{g_\iota}}%
    \\%
    l(\pi) = f \quad \text{iff}\quad \exists \beta<\alpha\forall
    \beta\le\iota<\alpha: g_\iota(\pi) = f%
  \end{gather*}
  We will show that $g = \liminf_{\iota\limto\alpha}g_\iota$ induces
  the same labelled quotient tree.
  
  From Corollary~\ref{cor:lebot1Liminf}, we immediately obtain that
  $\pos{g} \subseteq P$. To show the converse direction
  $\pos{g}\supseteq P$, we assume some $\pi\in P$. According to
  Corollary~\ref{cor:lebot1Liminf}, in order to show that $\pi \in
  \pos{g}$, we have to find a $\beta < \alpha$ such that
  $\pi\in\pos{g_\beta}$ and for each $\pi'<\pi$ there is some $f
  \in\Sigma_\bot$ such that $g_\iota(\pi') = f$ for all
  $\beta\le\iota<\alpha$.

  Because $\pi \in P$, there is some $\beta_1 < \alpha$ such that $\pi
  \in \pos{g_\iota}$ for all $\beta_1 \le \iota < \alpha$. Since
  $(g_\iota)_{\iota<\alpha}$ converges, it is also Cauchy. Hence, by
  Lemma~\ref{lem:cauchyTruncFun}, for each $d \in \nats$, there is
  some $\beta_2<\alpha$ such that $\trunca{g_\gamma}{d} \isom
  \trunca{g_\iota}{d}$ for all $\beta_2\le \gamma,\iota <
  \alpha$. Specialising this to $d = \len{\pi}$, we obtain some
  $\beta_2<\alpha$ with $\trunca{g_\gamma}{\len{\pi}} \isom
  \trunca{g_\iota}{\len{\pi}}$ for all $\beta_2\le \gamma,\iota <
  \alpha$. Let $\beta = \max\set{\beta_1,\beta_2}$. Then we have $\pi
  \in \pos{g_\iota}$ and $\trunca{g_\beta}{\len{\pi}} \isom
  \trunca{g_\iota}{\len{\pi}}$ for each $\beta \le \iota <
  \alpha$. Hence, for each $\pi' < \pi$, the symbol $f =
  g_\beta(\pi')$ is well-defined, and, according to
  Corollary~\ref{cor:truncaOccRep}, we have that $g_\iota(\pi') = f$
  for each $\beta \le \iota < \alpha$.

  The equalities ${\sim} = {\sim_g}$ and $l = g(\cdot)$ follow from
  Corollary~\ref{cor:lebot1Liminf} as $P = \pos{g}$.
\end{proof}

\section{Infinitary Term Graph Rewriting}
\label{sec:infin-term-graph}

In the previous sections, we have constructed and investigated the
necessary metric and partial order structures upon which the
infinitary calculus of term graph rewriting that we shall introduce in
this section is based. After describing the framework of term graph
rewriting that we consider, we will explore different modes
convergence on term graphs. In the same way that infinitary term
rewriting is based on the abstract notions of $\mrs$- and
$\prs$-convergence~\cite{bahr10rta}, infinitary term graph rewriting
is an instantiation of these abstract modes of convergence to term
graphs.

\subsection{Term Graph Rewriting Systems}
\label{sec:graph-rewr-syst}

The framework of term graph rewriting that we consider is that of
Barendregt et al.~\cite{barendregt87parle}. Similarly to term
rewriting systems, we have to deal with variables. That is, we
consider a signature $\Sigma_\calV$ extended with a set of variable
symbols $\calV$.
\begin{definition}[term graph rewriting system]
  \quad
  \begin{enumerate}[(i)]
  \item Given a signature $\Sigma$, a \emph{term graph rule} $\rho$
    over $\Sigma$ is a triple $(g,l,r)$ where $g$ is a graph over
    $\Sigma_\calV$ and $l,r \in N^g$, such that all nodes in $g$
    reachable from $l$ or $r$. We write $\lhs\rho$ resp.\ $\rhs\rho$
    to denote the left- resp.\ right-hand side of $\rho$, i.e.\ the
    term graph $\subgraph{g}{l}$ resp.\
    $\subgraph{g}{r}$. Additionally, we require that $\lhs\rho$ is
    finite and that for each variable $v\in\calV$ there is at most one
    node $n$ in $g$ labelled $v$ and $n$ is different but still
    reachable from $l$.
  \item A \emph{term graph rewriting system (GRS)} $\calR$ is a pair
    $(\Sigma,R)$ with $\Sigma$ a signature and $R$ a set of term graph
    rules.
  \end{enumerate}
\end{definition}

The requirement that the root $l$ of the left-hand side is not
labelled with a variable symbol is analogous to the requirement that
the left-hand side of a term rule is not a variable. Similarly, the
restriction that nodes labelled with variable symbols must be
reachable from the root of the left-hand side corresponds to the
restriction on term rules that every variable occurring on the
right-hand side must also occur on the left-hand side.

Term graphs can be used to compactly represent term. This
representation of terms is defined by the unravelling. This notion can
be extended to term graph rules.  Figure~\ref{fig:grules} illustrates
two term graph rules that both represent the term rule $a : x \to b :
a : x$ from Example~\ref{ex:termRewr} to which they unravel.
\begin{definition}[unravelling of term graph rules]
  Let $\rho$ be a term graph rule with $\rho_l$ and $\rho_r$ left-
  resp.\ right-hand side term graph. The \emph{unravelling} of $\rho$,
  denoted $\unrav{\rho}$ is the term rule $\unrav{\rho_l} \to
  \unrav{\rho_r}$. Let $\calR = (\Sigma,R)$ be a GRS. The unravelling
  of $\calR$, denoted $\unrav\calR$ is the TRS $(\Sigma,\unrav R)$
  with $\unrav R = \setcom{\unrav{\rho}}{\rho\in G}$.
\end{definition}
We will investigate the aspect of how term graph rules simulate their
unravellings in Section~\ref{sec:terms-vs.-term}.

The application of a rewrite rule $\rho$ (with root nodes $l$ and $r$)
to a term graph $g$ is performed in four steps: At first a suitable
sub-term graph of $g$ rooted in some node $n$ of $g$ is \emph{matched}
against the left-hand side of $\rho$. This amounts to finding a
$\calV$-homomorphism $\phi$ from the term graph rooted in $l$ to the
sub-term graph rooted in $n$, the \emph{redex}. The
$\calV$-homomorphism $\phi$ allows to instantiate variables in the
rule with sub-term graphs of the redex. In the second step, nodes and
edges in $\rho$ that are not reachable from $l$ are copied into $g$,
such that edges pointing to nodes in the term graph rooted in $l$ are
redirected to the image under $\phi$. In the last two steps, all edges
pointing to $n$ are redirected to (the copy of) $r$ and all nodes not
reachable from the root of (the now modified version of) $g$ are
removed.

\begin{definition}[application of a term graph rewrite
  rule, \cite{barendregt87parle}]
  \label{def:termGraphApp}
  Let $\rho = (N^\rho,\glab^\rho,\gsuc^\rho,l^\rho,r^\rho)$ be a term
  graph rewrite rule over signature $\Sigma$, $g \in \itgraphs$ and $n
  \in N^g$. $\rho$ is called \emph{applicable} to $g$ at $n$ if there
  is a $\calV$-homomorphism $\phi\fcolon\rho_l \homto_\calV
  \subgraph{g}{n}$. $\phi$ is called the \emph{matching
    $\calV$-homomorphism} of the rule application, and
  $\subgraph{g}{n}$ is called a \emph{$\rho$-redex}. Next, we define
  the \emph{result} of the application of the rule $\rho$ to $g$ at
  $n$ using the $\calV$-homomorphism $\phi$. This is done by
  constructing the intermediate graphs $g_1$ and $g_2$, and the final
  result $g_3$.
  \begin{enumerate}[(i)]
  \item The graph $g_1$ is obtained from $g$ by adding the part of
    $\rho$ not contained in the left-hand side:
    \begin{align*}
      N^{g_1} &= N^g \uplus (N^\rho \setminus N^{\rho_l})\\
      \glab^{g_1}(m) &= 
      \begin{cases}
        \glab^g(m) & \text{if } m \in N^g\\
        \glab^\rho(m) & \text{if } m \in N^\rho \setminus N^{\rho_l}
      \end{cases}\\
      \gsuc^{g_1}_i(m) &= 
      \begin{cases}
        \gsuc^g_i(m) & \text{if } m \in N^g\\
        \gsuc^\rho_i(m) & \text{if } m, \gsuc^\rho_i(m) \in N^\rho \setminus N^{\rho_l}\\
        \phi(\gsuc^\rho_i(m)) & \text{if } m \in N^\rho \setminus
        N^{\rho_l}, \gsuc^\rho_i(m) \in N^{\rho_l}
      \end{cases}
    \end{align*}
  \item Let $n' = \phi(r^\rho)$ if $r^\rho \in N^{\rho_l}$ and $n'=
    r^\rho$ otherwise. The graph $g_2$ is obtained from $g_1$ by
    redirecting edges ending in $n$ to $n'$:
    \begin{align*}
      N^{g_2} = N^{g_1} \qquad
      \glab^{g_2} = \glab^{g_1}\qquad
      \gsuc^{g_2}_i(m) = 
      \begin{cases}
        \gsuc_i^{g_1}(m) &\text{if } \gsuc^{g_1}_i(m) \neq n\\
        n' &\text{if } \gsuc^{g_1}_i(m) = n
      \end{cases}
    \end{align*}
  \item The term graph $g_3$ is obtained by setting the root node
    $r'$, which is $r$ if $l = r^g$, and otherwise $r^g$. That is,
    $g_3 = \subgraph{g_2}{r'}$. This also means that all nodes not
    reachable from $r'$ are removed.
  \end{enumerate}
  This induces a reduction step $\psi\fcolon g \to g_3$. In order to indicate the
  applied rule $\rho$ and the root nodes $n,n'$ of the redex resp.\ the
  reduct, we write $\psi\fcolon g \to[n,\rho,n'] g_3$.
\end{definition}

Examples for term graph rewriting steps are shown in
Figure~\ref{fig:gredEx}. We revisit them in more detail in
Example~\ref{ex:gRed} in the next section.

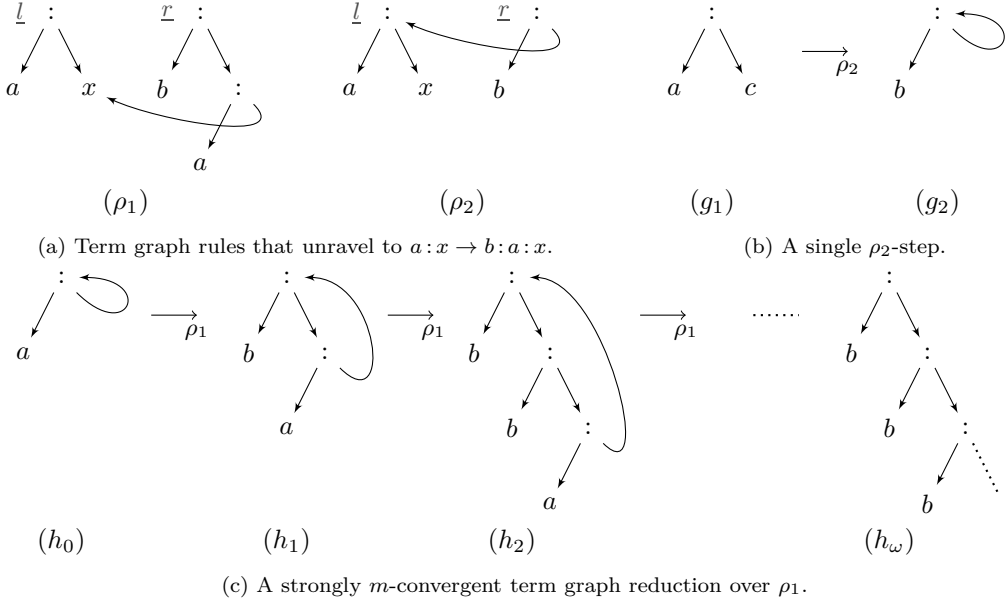
\begin{figure}
  \centering \subfloat[Term graph rules that unravel to $a \cons x
  \rightarrow b \cons a \cons x$.]{
      \begin{tikzpicture}
      \node [node name=180:\underline{l}] (l) {$\cons$}%
      child {%
        node (a) {$a$}%
      } child {%
        node (x) {$x$}%
      };%

      \node[node name=180:\underline{r},node distance=1.5cm,right=of l] (r) {$\cons$}%
      child {%
        node (b) {$b$}%
      } child {%
        node (c) {$\cons$}%
        child {%
          node (a') {$a$}%
        } child [missing]
      };%
      \draw%
      (c) edge[->,out=-45,in=-25] (x);%
      
      \node () at ($(l)!.5!(r) + (0,-2.5)$) {$(\rho_1)$};
      \node [node name=180:\underline{l},node distance=2cm,right=of r] (l) {$\cons$}%
      child {%
        node (a) {$a$}%
      } child {%
        node (x) {$x$}%
      };%

      \node[node name=180:\underline{r},node distance=1.5cm,right=of l] (r) {$\cons$}%
      child {%
        node (b) {$b$}%
      } child [missing];%
      \draw%
      (r) edge[->,out=-45,in=-25] (l);%
      
      \node () at ($(l)!.5!(r) + (0,-2.5)$) {$(\rho_2)$};
    \end{tikzpicture}
    \label{fig:grules}
  }
  \hspace{5mm}
  \subfloat[A single $\rho_2$-step.]{
      \begin{tikzpicture}
        
      \node (g1) {$\cons$}%
      child {%
        node {$a$}%
      } child {%
        node {$c$}
        };%

      \node () at ($(g1) + (0,-2.5)$) {$(g_1)$};

      \node [node distance=2.5cm,right=of g1] (g2) {$\cons$}%
      child {%
        node {$b$}%
      } child [missing];%
      \draw (g2) edge[->, min distance=10mm,out=-45,in=0] (g2);%

      \node () at ($(g2) + (0,-2.5)$) {$(g_2)$};
      \node (s1) at ($(g1)!.5!(g2)-(0,.5)$) {};

      \draw[single step] ($(s1)-(.3,0)$) -- ($(s1)+(.3,0)$)
      node[pos=1,below] {{\small$\rho_2$}};
    \end{tikzpicture}
    \label{fig:gsingleStep}
  }
  \hspace{1cm}%
  \subfloat[A strongly $\mrs$-convergent term graph reduction over $\rho_1$.]{%
      \begin{tikzpicture}
        
      \node (g1) {$\cons$}%
      child {%
        node (a) {$a$}%
      } child [missing];%
      \draw (g1) edge[->, min distance=10mm,out=-45,in=0] (g1);%

      \node () at ($(g1) + (0,-3.5)$) {$(h_0)$};
        
      \node [node distance=2.5cm,right=of g1] (g2) {$\cons$}%
      child {%
        node (b) {$b$}%
      } child {%
        node (c) {$\cons$}%
        child {%
          node (a) {$a$}%
        } child [missing]%
      };%
      \draw (c) edge[->, min distance=10mm,out=-45,in=0] (g2);%

      \node () at ($(g2) + (0,-3.5)$) {$(h_1)$};
        
      \node [node distance=2.5cm,right=of g2] (g3) {$\cons$}%
      child {%
        node (b) {$b$}%
      } child {%
        node (c) {$\cons$}%
        child {%
          node (b2) {$b$}%
        } child {%
          node (c2) {$\cons$}%
          child {%
            node (a) {$a$}%
          } child [missing]%
        }%
      };%
      \draw (c2) edge[->, min distance=10mm,out=-45,in=0] (g3);%

      \node () at ($(g3) + (0,-3.5)$) {$(h_2)$};
        
      \node [node distance=4.5cm,right=of g3] (go) {$\cons$}%
      child {%
        node (b) {$b$}%
      } child {%
        node (c) {$\cons$}%
        child {%
          node (b2) {$b$}%
        } child {%
          node (c2) {$\cons$}%
          child {%
            node (b3) {$b$}%
          } child[etc] {%
            node {} %
          }%
        }
      };%

      \node () at ($(go) + (0,-3.5)$) {$(h_\omega)$};

    \node (s1) at ($(g1)!.5!(g2)-(0,.5)$) {};
    \node (s2) at ($(g2)!.55!(g3)-(0,.5)$) {};
    \node (s3) at ($(g3)!.4!(go)-(0,.5)$) {};
    \node (s4) at ($(g3)!.7!(go)-(0,.5)$) {};

    \draw[single step] ($(s1)-(.3,0)$) -- ($(s1)+(.3,0)$)
    node[pos=1,below] {{\small$\rho_1$}};
    \draw[single step] ($(s2)-(.3,0)$) -- ($(s2)+(.3,0)$)
    node[pos=1,below] {{\small$\rho_1$}};
    \draw[single step] ($(s3)-(.3,0)$) -- ($(s3)+(.3,0)$)
    node[pos=1,below] {{\small$\rho_1$}};
    \draw[dotted,thick,-] ($(s4)-(.3,0)$) -- ($(s4)+(.3,0)$);
        
    \end{tikzpicture}
    \label{fig:gtransRed}
  }  
  \caption{Term graph rules and their reductions.}
  \label{fig:gredEx}
\end{figure}

Note that term graph rules do not provide a duplication
mechanism. Each variable is allowed to occur at most once. Duplication
must always be simulated by sharing. This means for example that
variables that should ``occur'' on the right-hand side must share the
occurrence of that variable on the left-hand side of the rule as seen
in the term graph rules in Figure~\ref{fig:grules}. This sharing can
be direct as in $\rho_1$ or indirect as in $\rho_2$. For variables
that are supposed to be duplicated on the right-hand side, for example
in the term rewrite rule $Y\,x \to x\,(Y\,x)$ that defines the fixed
point combinator $Y$ in an applicative language, we have to use
sharing in order to represent multiple occurrence of the same variable
as seen in the corresponding term graph rules in
Figure~\ref{fig:fixedPointCombA}.

As for term graphs, we also give a linear notation for term graph
rules:
\begin{definition}[linear notation of term graph rules]
  Let $\Sigma$ be a signature and $\oh\Sigma$ its extension as in
  Definition~\ref{def:linGraphNot}. A linear notation for a term graph
  rule over $\Sigma$ is a term rule $\rho\colon s \to t$ over
  $\oh\Sigma$ such that for each $n\in \calN$ that occurs in $\rho$
  there is exactly one occurrence of a function symbol of the form
  $\nn n
  f$ in $\rho$.

  The corresponding term graph rule $\rho'$ is defined as follows:
  Consider the term tree representations of $s$ and $t$. Let $l$ and
  $r$ be the root nodes of $s$ resp.\ $t$, and let $g$ be the disjoint
  union of $s$ and $t$. In $g$, redirect every edge to a node labelled
  $n$ to the unique node labelled $\nn n f$ for some
  $f\in\Sigma$. Then change all labellings of the form $\nn n f$ to
  $f$. In the resulting graph $g'$ do the following: For each $x\in
  \calV$ occurring in $g'$, redirect all edges to nodes labelled $x$
  to a single fresh node labelled $x$ provided $x \neq t$. If $x = t$,
  then redirect all edges to nodes labelled $x$ to the node $r$. Let
  $g''$ be the thus obtained graph after removing all nodes not
  reachable from $l$ or $r$. Then $\rho'$ is the term graph rule
  $(g'',l,r)$.
\end{definition}

As an example, the term graph rules in Figure~\ref{fig:grules} can be
written as $\rho_1\fcolon a\cons x \to b\cons a\cons  x$ resp.\
$\rho_2\fcolon a \nn n {:} x \to b\cons n$. Also note that each term
rule $\rho\fcolon l \to r$ can be interpreted as a linear notation for
a term graph rule $\rho'\fcolon l \to r$. This term graph rule $\rho'$ is, in
fact, the term graph rule with minimal sharing that unravels to $\rho$.

\subsection{Weak Convergence}
\label{sec:weak-convergence}

We start by first considering weak notions of convergences based on
the metric $\dda$ and the partial order $\lebotg$:
\begin{definition}
  Let $\calR = (\Sigma,R)$ be a GRS. Term graphs in $\calR$ range over
  $\ictgraphs$. Thus, we consider the applications of term graph to
  yield canonical term graphs. That is, when we write $g
  \to[n,\rho,n'] h$, we mean that there is a reduction step $g
  \to[n,\rho,\oh n] h'$ according to
  Definition~\ref{def:termGraphApp}, such that $h = \canon{h'}$ and
  $n' = \phi(\oh n)$ for the isomorphism $\phi\fcolon h' \to h$.
  
  \begin{enumerate}[(i)]
  \item A \emph{(transfinite) reduction} in $\calR$ is a sequence
    $(g_\iota \to[n_\iota,\rho_\iota] g_{\iota+1})_{i < \alpha}$ of
    rewriting steps in $\calR$. If $S$ is finite, we write $S\fcolon
    g_0 \fto{*} g_\alpha$.
  \item Let $S = (g_\iota \to_\calR g_{\iota+1})_{\iota < \alpha}$ be
    a reduction in $\calR$. $S$ is \emph{weakly $\mrs$-continuous},
    written $S\fcolon g_0 \wmacont[\calR]$, if the underlying sequence
    of term graphs $(g_\iota)_{\iota < \wsuc\alpha}$ is continuous,
    i.e.\ $\lim_{\iota\limto\lambda} g_\iota = g_\lambda$ for each
    limit ordinal $\lambda < \alpha$. $S$ \emph{weakly
      $\mrs$-converges} to $g \in \ictgraphs$ in $\calR$, written
    $S\fcolon g_0 \wmato[\calR] g$, if it is weakly $\mrs$-continuous
    and $\lim_{\iota\limto\wsuc\alpha} g_\iota = g$.
  \item Let $\calR_\bot$ be the GRS $(\Sigma_\bot, R)$ over the
    extended signature $\Sigma_\bot$ and $S = (g_\iota \to[\calR_\bot]
    g_{\iota+1})_{\iota < \alpha}$ a reduction in $\calR_\bot$. $S$ is
    weakly $\prs$-continuous, written $S\fcolon g_0 \wpato[\calR] g$,
    if $\liminf_{\iota<\lambda} g_i = g_\lambda$ for each limit
    ordinal $\lambda < \alpha$. $S$ \emph{weakly $\prs$-converges} to
    $g\in\ipctgraphs$ in $\calR$, written $S\fcolon g_0 \wpato[\calR]
    g$, if it is weakly $\prs$-continuous and
    $\liminf_{\iota<\wsuc\alpha} g_i = g$.
  \end{enumerate}

\end{definition}

Note that we have to extend the signature of $\calR$ to $\Sigma_\bot$
for the definition of weak $\prs$-convergence. Moreover, since the
partial order $\lebotg$ forms a complete semilattice on $\ipctgraphs$,
weak $\prs$-continuity coincides with weak $\prs$-convergence:
\begin{proposition}
  In a GRS, every weakly $\prs$-continuous reduction is weakly
  $\prs$-convergent.
\end{proposition}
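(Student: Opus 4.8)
The plan is to derive the statement immediately from the complete semilattice structure of $(\ipctgraphs,\lebotg)$ established in Theorem~\ref{thr:complSemilattice}. Given a GRS $\calR = (\Sigma,R)$ and a weakly $\prs$-continuous reduction $S = (g_\iota \to[\calR_\bot] g_{\iota+1})_{\iota < \alpha}$ in $\calR_\bot = (\Sigma_\bot, R)$, unfolding the definition of weak $\prs$-convergence shows that it remains only to produce some $g \in \ipctgraphs$ with $\liminf_{\iota < \wsuc\alpha} g_\iota = g$; the reduction then weakly $\prs$-converges to that $g$, being weakly $\prs$-continuous by hypothesis.

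So the single point to verify is that this limit inferior exists. By Theorem~\ref{thr:complSemilattice}, $(\ipctgraphs, \lebotg)$ is a complete semilattice, and --- as recalled in Section~\ref{sec:partial-orders} --- in any complete semilattice the limit inferior $\liminf_{\iota \limto \beta} a_\iota = \Lub_{\gamma<\beta}\bigl(\Glb_{\gamma \le \iota < \beta} a_\iota\bigr)$ is defined for every non-empty sequence: every non-empty subset has a glb, and the sequence $\bigl(\Glb_{\gamma \le \iota < \beta} a_\iota\bigr)_{\gamma < \beta}$ is monotone, hence directed, so it has a lub. I would apply this to the underlying sequence of term graphs $(g_\iota)_{\iota < \wsuc\alpha}$, which is non-empty as it contains $g_0$, obtaining $g := \liminf_{\iota < \wsuc\alpha} g_\iota \in \ipctgraphs$. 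If an explicit description is wanted, Corollary~\ref{cor:lebot1Liminf} gives the labelled quotient tree of $g$ directly.

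There is no real obstacle in this argument: the whole weight of the proof sits in the prior results that $\lebotg$ admits glbs of all non-empty sets (Proposition~\ref{prop:lebot1glb}) and lubs of directed sets (Theorem~\ref{thr:lebot1cpo}), which together yield the complete semilattice structure. This is precisely the graph analogue of the conservativity feature of $\prs$-convergence on terms highlighted before Theorem~\ref{thr:strongExt}.
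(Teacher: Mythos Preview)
Your proposal is correct and takes essentially the same approach as the paper: the paper's one-line proof simply cites Corollary~\ref{cor:lebot1Liminf} for the existence of the limit inferior, while you cite the underlying complete semilattice structure (Theorem~\ref{thr:complSemilattice}) and unpack why that guarantees the limit inferior exists. The substance is identical.
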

\begin{proof}
  Follows immediately from Corollary~\ref{cor:lebot1Liminf}.
\end{proof}

\begin{example}
\label{ex:gRed}  
Consider the term graph rule $\rho_1$ in Figure~\ref{fig:grules} that
unravels to the term rule $a : x \to b : a : x$ from
Example~\ref{ex:termRewr}. Starting with the term tree $a : c$,
depicted as $g_1$ in Figure~\ref{fig:gsingleStep}, we obtain the same
transfinite reduction as in Example~\ref{ex:termRewr}:
\[
S\fcolon a \cons c \to[\rho_1] b \cons a \cons c \to[\rho_1] b \cons b \cons a \cons c
\to[\rho_1] \;\dots\; h_\omega
\]
Also in this setting, $S$ both weakly $\mrs$- and $\prs$-converges to
the term tree $h_\omega$ shown in
Figure~\ref{fig:gtransRed}. Similarly, we can reproduce the weakly
$\prs$-converging but not weakly $\mrs$-converging reduction $T$ from
Example~\ref{ex:termRewr2}. Notice that $h_\omega$ is a rational term
tree as it can be obtained by unravelling the finite term graph $g_2$
depicted in Figure~\ref{fig:gsingleStep}. In fact, if we use the rule
$\rho_2$, we can immediately rewrite $g_1$ to $g_2$, which unravels to
$h_\omega$. In $\rho_2$, not only the variable $x$ is shared but the
whole left-hand side of the rule. This causes each redex of $\rho_2$
to be \emph{captured} by the right-hand side.

Figure~\ref{fig:gtransRed} indicates a transfinite reduction starting
with a cyclic term graph $h_0$ that unravels to the rational term $t =
a \cons a \cons a\cons \dots$. This reduction both weakly $\mrs$- and
$\prs$-converges to the rational term tree $h_\omega$ as well. Again,
by using $\rho_2$ instead of $\rho_1$, we can rewrite $h_0$ to the
cyclic term graph $g_2$ in one step.
\end{example}

As for TRSs, we have that weak $\mrs$-convergence implies weak
$\prs$-convergence.
\begin{theorem}
  Let $S$ be a reduction in a GRS $\calR$.
  \[
  \text{If}\quad S\fcolon g \wmato[\calR] h \qquad \text{then} \qquad
  S\fcolon g \wpato[\calR] h.
  \]
\end{theorem}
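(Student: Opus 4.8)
The theorem asserts that weak $\mrs$-convergence of a term graph reduction $S$ implies weak $\prs$-convergence to the same limit. This is the term-graph analogue of part of Theorem~\ref{thr:strongExt} for terms, and the natural engine for the proof is Theorem~\ref{thr:limLiminf}, which says that whenever a sequence of canonical term graphs converges in the metric space, its metric limit coincides with its limit inferior. The plan is to push this pointwise (ordinal-wise) comparison through the whole reduction, simultaneously establishing continuity and the final convergence, by transfinite induction on the length $\alpha$ of the prefixes of $S$.

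**The plan.** First I would invoke Proposition~\ref{prop:contConv} (the GRS version of "continuity = convergence of every proper prefix") to reduce the claim to the following: for every $\beta\le\wsuc\alpha$, the prefix $\prefix{S}{\beta}$ that $\mrs$-converges to $g_\beta$ also $\prs$-converges to $g_\beta$. Then I would prove this by induction on $\beta$. The successor case is immediate, since a single reduction step over $\Sigma$ is also a step over $\Sigma_\bot$, and appending one step preserves convergence to the new endpoint (the limit inferior of a closed sequence is its last element). The interesting case is $\beta$ a limit ordinal: by the induction hypothesis, $\prefix{S}{\beta}$ is weakly $\prs$-continuous, and weak $\mrs$-convergence of $\prefix{S}{\beta}$ gives $\lim_{\iota\limto\beta} g_\iota = g_\beta$. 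Since every $g_\iota$ ($\iota<\beta$) lies in $\ictgraphs$, Theorem~\ref{thr:limLiminf} applies to the sequence $(g_\iota)_{\iota<\beta}$ and yields $\liminf_{\iota\limto\beta} g_\iota = \lim_{\iota\limto\beta} g_\iota = g_\beta$. Hence $\prefix{S}{\beta}$ weakly $\prs$-converges to $g_\beta$, completing the induction. Applying this at $\beta = \wsuc\alpha$ gives $S\fcolon g\wpato[\calR] h$.

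**Main obstacle.** The only real subtlety is that Theorem~\ref{thr:limLiminf} as stated concerns a sequence that converges; one must be slightly careful that at each limit ordinal $\lambda<\beta$ the hypothesis "$(g_\iota)_{\iota<\beta}$ converges" already delivers metric continuity at $\lambda$, so that Theorem~\ref{thr:limLiminf} can be applied to the \emph{proper} prefixes as well as to $\prefix{S}{\beta}$ itself — this is exactly what Proposition~\ref{prop:contConv} packages. A second minor point is the signature bookkeeping: weak $\prs$-convergence is defined for reductions in $\calR_\bot$ over $\Sigma_\bot$, but every term graph occurring in $S$ is over $\Sigma\subseteq\Sigma_\bot$ and every step of $S$ is literally a step of $\calR_\bot$, so no extra work is needed there. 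Beyond this, the argument is a direct transfinite induction with Theorem~\ref{thr:limLiminf} doing all the heavy lifting.

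\begin{proof}
  By Proposition~\ref{prop:contConv} it suffices to show that every
  prefix $\prefix{S}{\beta}$ with $\beta \le \wsuc\alpha$ weakly
  $\prs$-converges to $g_\beta$. Note that every term graph occurring
  in $S$ is a term graph over $\Sigma$ and hence over $\Sigma_\bot$,
  and every step $g_\iota \to_\calR g_{\iota+1}$ is in particular a
  step in $\calR_\bot$; thus $\prefix{S}{\beta}$ is a reduction in
  $\calR_\bot$.

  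We proceed by induction on $\beta$. For $\beta = 0$ the statement is
  trivial. If $\beta = \gamma + 1$, then by the induction hypothesis
  $\prefix{S}{\gamma}$ weakly $\prs$-converges to $g_\gamma$, i.e.\ it
  is weakly $\prs$-continuous; appending the step $g_\gamma
  \to_{\calR_\bot} g_{\gamma+1}$ keeps it weakly $\prs$-continuous,
  and since the limit inferior of a closed sequence is its last
  element we have $\liminf_{\iota < \wsuc{\gamma}} g_\iota =
  g_{\gamma+1} = g_\beta$. Hence $\prefix{S}{\beta}$ weakly
  $\prs$-converges to $g_\beta$.

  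Now let $\beta$ be a limit ordinal. By the induction hypothesis,
  every proper prefix of $\prefix{S}{\beta}$ weakly $\prs$-converges
  to its endpoint, so $\prefix{S}{\beta}$ is weakly
  $\prs$-continuous. It remains to show that $\liminf_{\iota \limto
    \beta} g_\iota = g_\beta$. Since $S\fcolon g \wmato[\calR] h$ is
  weakly $\mrs$-continuous, the sequence $(g_\iota)_{\iota \le \beta}$
  is continuous, so $(g_\iota)_{\iota < \beta}$ converges to
  $g_\beta$. All term graphs $g_\iota$ with $\iota < \beta$ lie in
  $\ictgraphs$, so Theorem~\ref{thr:limLiminf} applies and gives
  \[
  \liminf_{\iota \limto \beta} g_\iota = \lim_{\iota \limto \beta}
  g_\iota = g_\beta.
  \]
  Hence $\prefix{S}{\beta}$ weakly $\prs$-converges to $g_\beta$.

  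Applying this with $\beta = \wsuc{\alpha}$ (and using that if
  $\alpha$ is a successor ordinal the claim already follows from the
  successor step, while if $\alpha$ is a limit ordinal it follows from
  the limit step, and if $\wsuc{\alpha} = \alpha$ trivially from weak
  $\mrs$-convergence together with the limit step) yields $S\fcolon g
  \wpato[\calR] h$.
\end{proof}
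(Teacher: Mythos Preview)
Your proof is correct and rests on the same key ingredient as the paper's, namely Theorem~\ref{thr:limLiminf}; the paper's own proof is simply the one-liner ``Follows straightforwardly from Theorem~\ref{thr:limLiminf}.'' Your transfinite induction and appeal to Proposition~\ref{prop:contConv} (which, strictly, is stated only for TRSs) are harmless but unnecessary: applying Theorem~\ref{thr:limLiminf} directly at each limit ordinal $\lambda \le \wsuc\alpha$ turns every instance of $\lim_{\iota\to\lambda} g_\iota = g_\lambda$ from weak $\mrs$-convergence into the corresponding $\liminf$ equality required for weak $\prs$-convergence, with no induction needed.
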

\begin{proof}
  Follows straightforwardly from Theorem~\ref{thr:limLiminf}.
\end{proof}

However, as we have indicated, weak $\mrs$-convergence is not the
total fragment of weak $\prs$-convergence as it is the case for
TRS. The GRS with the two rules $f(c,c) \to f(\nn n c,n)$ and $f(c,c)
\to f(c,c)$ yields the reduction sequence shown in
Figure~\ref{fig:convWeird}. This reduction weakly $\prs$-converges to
$f(c,c)$ but is not weakly $\mrs$-convergent. However, this peculiar
behaviour can be ruled out by considering strong convergence, which is
the subject of the following sections.

\subsection{Reduction Contexts}
\label{sec:reduction-contexts}

The idea of strong convergence is to conservatively approximate the
convergence behaviour somewhat independent from the actual rules that
are applied. Strong $\mrs$-convergence in TRSs requires that the depth
of the redexes tends to infinity thereby assuming everything at the
depth of the redex or below can potentially be affected by a reduction
step. Strong $\prs$-convergence, on the other hand, uses a better
approximation that only assumes that the whole redex can be changed by
a reduction not however its siblings. To this end strong
$\prs$-convergence uses a notion of reduction contexts -- essentially
the term minus the redex -- for the formation of limits. In order to
define a suitable notion of strong $\prs$-convergence on term graphs,
we have to devise a corresponding notion of reduction contexts. In
this section we shall devise such a notion and argue for its adequacy.

The following definition provides the basic construction that we use
to remove nodes from a term graph:
\begin{definition}[local truncation]
  Let $g \in \iptgraphs$ and $N \subseteq N^g$. The \emph{local
    truncation} of $g$ by $N$, denoted $\truncl{g}{N}$, is given as
  follows:
  \begin{align*}
    &N^{\truncl{g}{N}} \text{ is the least set $M$ satisfying}\quad
    \begin{aligned}
      (a) \;&r^g \in M, and\\
      (b) \;&n \in M\setminus N \implies \gsuc^g(n) \subseteq M.
    \end{aligned}\\%
    &r^{\truncl{g}{N}} = r^g\qquad%
    \glab^{\truncl{g}{N}} =
    \begin{cases}
      \glab^g(n) &\text{if }n\nin N\\
      \bot &\text{if }n\nin N
    \end{cases}\qquad%
    \gsuc^{\truncl{g}{N}}(n) =
    \begin{cases}
      \gsuc^g(n) &\text{if } n\nin N\\
      \emptyseq &\text{if } n\in N
    \end{cases}
  \end{align*}
  By abuse of notation, we write $\truncl{g}{n}$ instead of
  $\truncl{g}{\set{n}}$.
\end{definition}
The goal for the rest of this section is to establish that
$\truncl{g}{n}$ is an adequate notion of reduction context for a
reduction step $g \to[n] h$ applied at node $n$ in $g$. According to
he abstract notion of strong $\prs$-convergence \cite{bahr10rta}, this
requires that $\truncl{g}{n}\lebotg g,h$.

The following lemma shows that local truncations only remove positions
from a term graph but do not alter them:
\begin{lemma}
\label{lem:locTruncNodes}
  Let $g \in \iptgraphs$, $N \subseteq N^g$ and $\pi \in
  \pos{\truncl{g}{N}}$. Then $\nodeAtPos{g}{\pi} =
  \nodeAtPos{\truncl{g}{N}}{\pi}$.
\end{lemma}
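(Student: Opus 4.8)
The plan is to prove the statement by induction on the length of the position $\pi$. The intuition is that the node set $N^{\truncl{g}{N}}$ is built from $r^g$ by following successor edges, except that edges out of nodes in $N$ are dropped; but crucially the successor function is inherited unchanged on nodes not in $N$, so as long as a path exists in $\truncl{g}{N}$, it traces exactly the same sequence of nodes as in $g$.

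First I would set up the base case: if $\pi = \emptyseq$, then $\nodeAtPos{\truncl{g}{N}}{\pi} = r^{\truncl{g}{N}} = r^g = \nodeAtPos{g}{\pi}$, using that the root is unchanged by the local truncation. For the inductive step, suppose $\pi = \pi' \concat \seq{i}$ with $\pi \in \pos{\truncl{g}{N}}$. Then $\pi' \in \pos{\truncl{g}{N}}$ as well (positions are closed under prefixes), so by the induction hypothesis $n := \nodeAtPos{\truncl{g}{N}}{\pi'} = \nodeAtPos{g}{\pi'}$. Since $\pi' \concat \seq{i}$ is a position in $\truncl{g}{N}$, the node $n$ must have at least $i+1$ successors there, hence $\gsuc^{\truncl{g}{N}}(n) \neq \emptyseq$, which by the definition of local truncation forces $n \nin N$. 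Therefore $\gsuc^{\truncl{g}{N}}_i(n) = \gsuc^g_i(n)$, and thus
\[
\nodeAtPos{\truncl{g}{N}}{\pi} = \gsuc^{\truncl{g}{N}}_i(n) = \gsuc^g_i(n) = \nodeAtPos{g}{\pi},
\]
where the last equality uses $n = \nodeAtPos{g}{\pi'}$ together with the definition of $\nodeAtPos{g}{\cdot}$ along the path.

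The one point that needs a little care — and the closest thing to an obstacle — is justifying that $\pi \in \pos{g}$ at all, i.e.\ that the path in $\truncl{g}{N}$ is genuinely a path in $g$; but this comes for free from the same reasoning, since at each step the node traversed is not in $N$ (it has an outgoing edge being followed) and hence has identical successors in $g$. So one can either phrase the induction as simultaneously establishing $\pi \in \pos{g}$ and the node equality, or simply observe that the argument above already exhibits the required path in $g$. I expect the whole proof to be short and routine once the induction is set up this way.
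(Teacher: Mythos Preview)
Your proposal is correct and follows essentially the same approach as the paper: induction on the length of $\pi$, with the base case using $r^{\truncl{g}{N}} = r^g$ and the inductive step observing that the predecessor node must lie outside $N$ (since it has an outgoing edge in the truncation), so its successors agree in $g$ and $\truncl{g}{N}$. Your extra remark that $\pi \in \pos{g}$ falls out of the same argument is accurate and slightly more explicit than the paper, which tacitly assumes this when writing $\nodeAtPos{g}{\pi}$.
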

\begin{proof}
  We proceed by induction on the length of $\pi$. The case $\pi =
  \emptyseq$ follows from the definition $r^{\truncl{g}{N}} = r^g$. If
  $\pi = \pi'\concat\seq{i}$, we can use the induction hypothesis to
  obtain that $\nodeAtPos{g}{\pi'} =
  \nodeAtPos{\truncl{g}{N}}{\pi'}$. As $\pi' \concat\seq{i} \in
  \pos{\truncl{g}{N}}$, we know that $\nodeAtPos{\truncl{g}{N}}{\pi'}
  \nin N$. We can thus reason as follows:
  \[
  \nodeAtPos{g}{\pi} = \gsuc^g_i(\nodeAtPos{g}{\pi'}) =
  \gsuc^g_i(\nodeAtPos{\truncl{g}{N}}{\pi'}) =
  \gsuc^{\truncl{g}{N}}_i(\nodeAtPos{\truncl{g}{N}}{\pi'}) =
  \nodeAtPos{\truncl{g}{N}}{\pi}
  \]
\end{proof}

This leads immediately to the observation that local truncations
preserve sharing:
\begin{lemma}[local truncations preserve sharing]
  \label{lem:locTruncSim}%
  Let $g \in \iptgraphs$, $N \subseteq N^g$ and $\pi_1,\pi_2 \in
  \pos{\truncl{g}{N}}$. Then $\pi_1 \sim_g \pi_2$ iff $\pi_1 \sim_{\truncl{g}{N}} \pi_2$.
\end{lemma}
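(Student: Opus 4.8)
The plan is to reduce everything to the previous lemma (Lemma~\ref{lem:locTruncNodes}), which says that for every position $\pi \in \pos{\truncl{g}{N}}$ we have $\nodeAtPos{g}{\pi} = \nodeAtPos{\truncl{g}{N}}{\pi}$. Recall that the relation $\sim_h$ on an arbitrary term graph $h$ is defined by $\pi_1 \sim_h \pi_2$ iff $\nodeAtPos{h}{\pi_1} = \nodeAtPos{h}{\pi_2}$ (this is Remark~\ref{rem:canhom} applied to non-canonical term graphs). So the statement to be proved is essentially an immediate corollary, and the proof is a one-line chain of equalities rather than an induction of its own.

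Concretely, first I would fix $\pi_1, \pi_2 \in \pos{\truncl{g}{N}}$. By Lemma~\ref{lem:locTruncNodes} applied to $\pi_1$ and to $\pi_2$, we have $\nodeAtPos{\truncl{g}{N}}{\pi_1} = \nodeAtPos{g}{\pi_1}$ and $\nodeAtPos{\truncl{g}{N}}{\pi_2} = \nodeAtPos{g}{\pi_2}$. Hence
\[
\pi_1 \sim_{\truncl{g}{N}} \pi_2
\iff \nodeAtPos{\truncl{g}{N}}{\pi_1} = \nodeAtPos{\truncl{g}{N}}{\pi_2}
\iff \nodeAtPos{g}{\pi_1} = \nodeAtPos{g}{\pi_2}
\iff \pi_1 \sim_g \pi_2,
\]
which is exactly the claimed equivalence. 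The only subtlety worth flagging is that both $\pi_1 \in \pos{\truncl{g}{N}}$ and $\pi_2 \in \pos{\truncl{g}{N}}$ are needed: the middle equivalence requires that the two positions actually make sense as positions in $\truncl{g}{N}$ so that $\nodeAtPos{\truncl{g}{N}}{\pi_k}$ is defined, and this is part of the hypothesis of the lemma. (In the "$\pi_1 \sim_g \pi_2$ implies $\pi_1 \sim_{\truncl{g}{N}} \pi_2$" direction one might worry that $\sim_g$-equivalent positions need not both survive truncation, but here we only ever compare positions that are already assumed to lie in $\pos{\truncl{g}{N}}$, so this does not arise.)

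There is no real obstacle here: the work has all been done in Lemma~\ref{lem:locTruncNodes}. If anything, the only point requiring a word of care is invoking the extension of $\sim_{(\cdot)}$ to non-canonical term graphs via $\nodeAtPos{(\cdot)}{\cdot}$, since $\truncl{g}{N}$ as constructed need not be canonical; but this is precisely the convention introduced just after Remark~\ref{rem:canhom}, so it may be used without comment.
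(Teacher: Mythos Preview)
Your proposal is correct and is essentially identical to the paper's own proof: both simply unfold the definition of $\sim_h$ via $\nodeAtPos{h}{\cdot}$ and apply Lemma~\ref{lem:locTruncNodes} to each of $\pi_1,\pi_2$ to obtain the chain of equivalences. The additional remarks you make about needing $\pi_1,\pi_2\in\pos{\truncl{g}{N}}$ and about the convention for $\sim$ on non-canonical term graphs are accurate but not required for the argument.
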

\begin{proof}
  \quad  \vspace{-23pt}
  \begin{align*}
    \pi_1 \sim_g \pi_2%
    &\iff \nodeAtPos{g}{\pi_1} = \nodeAtPos{g}{\pi_2}\\%
    &\iff\nodeAtPos{\truncl{g}{N}}{\pi_1} =
    \nodeAtPos{\truncl{g}{N}}{\pi_2}\tag{Lemma~\ref{lem:locTruncNodes}}\\%
    &\iff \pi_1 \sim_{\truncl{g}{N}} \pi_2%
  \end{align*}
\end{proof}

Most importantly, we obtain the intuitively expected property that
local truncations yield smaller term graphs w.r.t.\ $\lebotg$:
\begin{lemma}
  \label{lem:lebotLocTrunc}
  For each $g \in \iptgraphs$ and $N\subseteq N^g$, we have
  $\truncl{g}{N} \lebotg g$.
\end{lemma}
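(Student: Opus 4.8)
The plan is to establish $\truncl{g}{N}\lebotg g$ by producing the required $\bot$-homomorphism $\phi\fcolon\truncl{g}{N}\homto_\bot g$, and the natural candidate is the inclusion map $\phi\fcolon N^{\truncl{g}{N}}\hookrightarrow N^g$ (recall $N^{\truncl{g}{N}}\subseteq N^g$ by construction). First I would note that $\phi$ preserves the root, since $r^{\truncl{g}{N}}=r^g$ by definition. Then, for the homomorphism condition: if $n\in N^{\truncl{g}{N}}$ has $\glab^{\truncl{g}{N}}(n)\nin\set{\bot}$, the definition of local truncation forces $n\nin N$, and hence $\glab^{\truncl{g}{N}}(n)=\glab^g(n)=\glab^g(\phi(n))$ and $\gsuc^{\truncl{g}{N}}_i(n)=\gsuc^g_i(n)=\gsuc^g_i(\phi(n))$ for all $i<\rank{g}{n}$; moreover $\gsuc^g_i(n)$ is again a node of $\truncl{g}{N}$ by the closure property (b) defining $N^{\truncl{g}{N}}$, so $\phi(\gsuc^{\truncl{g}{N}}_i(n))=\gsuc^g_i(\phi(n))$. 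That is exactly what is needed for $\phi$ to be a $\bot$-homomorphism.

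Alternatively, and perhaps more in keeping with the surrounding development, I would invoke the labelled-quotient-tree characterisation of $\bot$-homomorphisms from Lemma~\ref{lem:occrephom} (equivalently Corollary~\ref{cor:chaTgraphPoA}) with $\Delta=\set{\bot}$. For this I would first record that $\pos{\truncl{g}{N}}\subseteq\pos{g}$ — a straightforward induction on the length of a position, using that outgoing edges survive in $\truncl{g}{N}$ precisely at nodes outside $N$, where the successor function agrees with that of $g$; this is essentially already contained in the proof of Lemma~\ref{lem:locTruncNodes}. Condition~(a) of Lemma~\ref{lem:occrephom}, namely $\pi\sim_{\truncl{g}{N}}\pi'\implies\pi\sim_g\pi'$ for $\pi,\pi'\in\pos{\truncl{g}{N}}$, is then immediate from Lemma~\ref{lem:locTruncSim}. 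For condition~(b), given $\pi\in\pos{\truncl{g}{N}}$ with $\truncl{g}{N}(\pi)\nin\set{\bot}$, set $n=\nodeAtPos{\truncl{g}{N}}{\pi}$, which equals $\nodeAtPos{g}{\pi}$ by Lemma~\ref{lem:locTruncNodes}; since $\glab^{\truncl{g}{N}}(n)\neq\bot$ we again get $n\nin N$, hence $\truncl{g}{N}(\pi)=\glab^g(n)=g(\pi)$.

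I do not expect a real obstacle here: the only point requiring a little care is the coherence observation that the structure of $g$ restricted to $N^{\truncl{g}{N}}$ is a well-formed term graph and that outgoing edges are kept exactly at the nodes not in $N$ — but this is precisely what Lemmas~\ref{lem:locTruncNodes} and~\ref{lem:locTruncSim} have been set up to supply, so the argument reduces to a short verification of the two characterising conditions (or, in the explicit version, of the two homomorphism conditions plus the root condition).
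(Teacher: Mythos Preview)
Your proposal is correct. Your second approach---verifying conditions (a) and (b) of Corollary~\ref{cor:chaTgraphPoA} via Lemmas~\ref{lem:locTruncNodes} and~\ref{lem:locTruncSim}---is exactly the paper's proof, down to the reasoning chain for~(b): from $\truncl{g}{N}(\pi)\in\Sigma$ infer $\nodeAtPos{\truncl{g}{N}}{\pi}\nin N$, then use Lemma~\ref{lem:locTruncNodes} to transport the labelling back to~$g$. Your first approach, exhibiting the inclusion $N^{\truncl{g}{N}}\hookrightarrow N^g$ as a $\bot$-homomorphism directly, is a legitimate and slightly more elementary alternative that avoids the detour through positions and labelled quotient trees; it works for the same reason (nodes outside~$N$ keep their labels and successors verbatim), and what it loses is only the uniformity with the surrounding development, which consistently argues via the quotient-tree characterisation.
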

\begin{proof}
  We use Corollary~\ref{cor:chaTgraphPoA} to show
  this. (\ref{item:chaTGraphPoA1}) follows immediately from
  Lemma~\ref{lem:locTruncSim}. For (\ref{item:chaTGraphPoA2}), let
  $\pi \in \pos{\truncl{g}{N}}$ with $\truncl{g}{N}(\pi) \in
  \Sigma$. Hence, $\nodeAtPos{\truncl{g}{N}}{\pi} \nin N$ and we
  can reason as follows:
  \[
  \truncl{g}{N}(\pi) =
  \glab^{\truncl{g}{N}}(\nodeAtPos{\truncl{g}{N}}{\pi}) =
  \glab^g(\nodeAtPos{\truncl{g}{N}}{\pi}) \stackrel{\text{Lem.~\ref{lem:locTruncNodes}}}=
  \glab^g(\nodeAtPos{g}{\pi}) = g(\pi).
  \]
\end{proof}

The following property summarises the core property that we
require for an adequate notion of reduction context: The reduction
context of a reduction step is the maximal substructure that is
guaranteed to be preserved by the reduction.
\begin{lemma}
  \label{lem:stepLocTrunc}
  Given a graph reduction step $g \to[n,\rho,n'] h$ with
  $g,h\in\iptgraphs$, we have $\truncl{g}{n} \isom
  \truncl{h}{n'}$. The corresponding isomorphism is given by
    \[
  \phi(m) =
  \begin{cases}
    m &\text{if } m \neq n\\
    n' &\text{if } m = n
  \end{cases} \qquad \text{for all } m\in N^{\truncl{g}{n}}
  \]
\end{lemma}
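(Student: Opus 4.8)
The plan is to verify directly that the map $\phi$ exhibited in the statement is an isomorphism, reading off the relevant structure from Definition~\ref{def:termGraphApp} and using Lemma~\ref{lem:isomBij} to promote a bijective homomorphism to an isomorphism. I would first dispose of the degenerate case $n = r^g$: here $N^{\truncl{g}{n}} = \set{r^g}$ and $\truncl{g}{n}\isom\bot$, and since the root of $h$ is then $n'$, also $N^{\truncl{h}{n'}} = \set{n'}$ and $\truncl{h}{n'}\isom\bot$, so the asserted map $r^g\mapsto n'$ is the obvious isomorphism. For the main case $n\neq r^g$, Definition~\ref{def:termGraphApp}(iii) gives $r^h = r^g$, which already settles the root condition since then $\phi(r^g)=r^g=r^h$.

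The key preparatory observation is how Definition~\ref{def:termGraphApp} treats the nodes of $g$: in passing from $g$ through $g_1$ to $g_2$, every node of $g$ keeps its label, and keeps its successor list except that any edge of $g$ ending in $n$ is redirected to $n'$; the steps to $g_1$ and $g_3$ only add fresh nodes from $\rho\setminus\rho_l$ and then discard whatever has become unreachable (in particular $n$ itself, since $n\neq r^g$). Combined with the description of $N^{\truncl{g}{n}}$ as the least set containing $r^g$ and closed under taking successors of its members other than $n$ --- and the analogous description of $N^{\truncl{h}{n'}}$ --- I would show, by induction on the length of a position, that $\phi$ is well defined and bijective onto $N^{\truncl{h}{n'}}$: a path from $r^g$ in $g$ no proper prefix of which meets $n$ is turned edge by edge into a path from $r^h$ in $h$ no proper prefix of which meets $n'$ (redirecting a final edge into $n$ to $n'$), and vice versa, so $N^{\truncl{g}{n}}\setminus\set n$ and $N^{\truncl{h}{n'}}\setminus\set{n'}$ coincide while $\phi$ acts as the identity on them and sends $n$ to $n'$. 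Granting this correspondence of nodes and edges, the labelling condition is immediate --- for $m\neq n$ we get $\glab^{\truncl{g}{n}}(m)=\glab^g(m)=\glab^h(m)=\glab^{\truncl{h}{n'}}(\phi(m))$ using $m\neq n'$, and for $m=n$ both labels are $\bot$ --- and the successor condition follows the same way, the single case split ``edge into $n$ versus not'' matching the redirection $n\mapsto n'$. Since $\phi$ is then a bijective homomorphism, Lemma~\ref{lem:isomBij} makes it an isomorphism.

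The step I expect to be the delicate one is the middle argument that the path translation is consistent with the truncation on both sides at once: one must make sure that no node of the surrounding context is identified with $n'$ in a way that would break bijectivity or the labelling equation, which comes down to tracking how the redirection of edges into $n$ interacts with the sharing already present in $g$. Lemmas~\ref{lem:locTruncNodes} and~\ref{lem:locTruncSim} (local truncations move no nodes and preserve sharing) together with Lemma~\ref{lem:lebotLocTrunc} ($\truncl{g}{n}\lebotg g$, and symmetrically $\truncl{h}{n'}\lebotg h$) are the tools I would lean on to keep this bookkeeping under control.
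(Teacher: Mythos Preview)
Your approach is essentially the paper's: verify directly that $\phi$ is a homomorphism, check injectivity, and then invoke Lemma~\ref{lem:isomBij}. The paper does not split off the case $n=r^g$ (it is absorbed into the general argument, since then $N^{\truncl{g}{n}}\setminus\{n\}=\emptyset$), and it proves only the one inclusion $N^{\truncl{g}{n}}\setminus\{n\}\subseteq N^{\truncl{h}{n'}}$ by induction on depth, obtaining surjectivity for free from Lemma~\ref{lem:homSurj} via Lemma~\ref{lem:isomBij} rather than running the path correspondence in both directions as you propose.

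The one place where your plan is slightly off target is the ``delicate step''. The fact you need (and already use in your labelling argument when you write ``using $m\neq n'$'') is precisely that $m\in N^{\truncl g n}\setminus\{n\}$ implies $m\neq n'$. The lemmas you propose to lean on (Lemmas~\ref{lem:locTruncNodes}, \ref{lem:locTruncSim}, \ref{lem:lebotLocTrunc}) concern local truncations of a \emph{single} graph and do not supply this. What does is a one-line observation straight from Definition~\ref{def:termGraphApp}: the reduct root $n'$ is either a fresh node copied in from $N^\rho\setminus N^{\rho_l}$, or it is the image of $r^\rho\in N^{\rho_l}$ under the matching $\calV$-homomorphism and hence a node reachable from $n$ in $g$. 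In either case $n'$ cannot lie in $N^{\truncl g n}\setminus\{n\}$. With this in hand the rest of your plan goes through exactly as you describe; the issue is not really about ``sharing already present in $g$'' but simply about where $n'$ can live.
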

\begin{proof}
  At first, observe that $n'$, the root of the reduct, is either a
  fresh node from $\rho$, or a node reachable from $n$ in $g$. Hence,
  we know that
  \begin{equation}
  m \in N^{\truncl g n} \setminus \set{n} \quad \text{implies that}
  \quad m \neq n'.
  \tag{$*$}\label{eq:stepLocTrunc}
\end{equation}

  In order to prove that $\phi\fcolon N^{\truncl{g}{n}} \to
  N^{\truncl{h}{n'}}$ is well-defined, we have to show that
  $N^{\truncl{g}{n}}\setminus\set{n} \subseteq N^{\truncl{h}{n'}}$:
  Let $m \in N^{\truncl{g}{n}}\setminus\set{n}$. We will show by
  induction on $\depth{\truncl{g}{n}}{m}$, that $m \in
  N^{\truncl{h}{n'}}$.

  If $\depth{\truncl{g}{n}}{m} = 0$, then $m = r^{\truncl{g}{n}} = r^g
  = r^h = r^{\truncl{h}{n}} \in N^{\truncl{h}{n}}$, where $r^g = r^h$
  holds because $n \neq r^g$. If $\depth{\truncl{g}{n}}{m} > 0$, then
  there is some $m' \in N^{\truncl g n}$ with $\depth{\truncl g n}{m'}
  < \depth{\truncl g n}{m}$ and $\gsuc^{\truncl g n}_i(m') = m$ for
  some $i \in \nats$. Hence, $m' \neq n$, which means that also
  $\gsuc^g_i(m') = m$ and that, by induction hypothesis, $m' \in
  N^{\truncl{h}{n'}}$. Since, in a graph reduction step, only edges to
  the redex node $n$ are redirected, we have that $\gsuc^h_i(m') \neq
  \gsuc^g_i(m')$ iff $\gsuc^g_i(m') = n$. Thus, as $\gsuc^g_i(m') = m
  \neq n$, we have $\gsuc^h_i(m') = \gsuc^g_i(m') = m$. Moreover, by
  \eqref{eq:stepLocTrunc}, we know that $m' \neq n'$. Thus, $m =
  \gsuc^h_i(m') \in N^{\truncl{h}{n'}}$.

  Next, we show that $\phi$ is a homomorphism from $\truncl{g}{n}$ to
  $\truncl{h}{n'}$. The root condition is satisfied as follows:
  \[
  \phi(r^{\truncl{g}{n}}) = \phi(r^g) = 
  \left\{
  \begin{aligned}
    &r^g &&\text{if } r^g \neq n\\
    &n' &&\text{if } r^g = n
  \end{aligned}
  \right\}
  = r^h = r^{\truncl{h}{n'}}.
  \]

  For the labelling and successor condition, assume some $m \in
  N^{\truncl{g}{n}}$. If $m = n$, then $\phi(m) = n'$ and the
  labelling and successor condition follow immediately from the
  construction of $\truncl{g}{n}$ and $\truncl{h}{n'}$. If $m \neq n$,
  then $\phi(m) = m$ and, by \eqref{eq:stepLocTrunc}, $m \neq
  n'$. Since the labelling of nodes is not changed by a reduction
  step, we have
  \[
  \glab^{\truncl{h}{n'}}(\phi(m)) = \glab^{\truncl{h}{n'}}(m) =
  \glab^h(m) = \glab^g(m) = \glab^{\truncl{g}{n}}(m) =
  \glab^{\truncl{g}{n}}(\phi(m)).
  \]
  For the successor condition, first assume that $\gsuc^g_i(m) =
  n$. Then the edge to $n$ is redirected to $n'$ by the reduction step,
  i.e.\ $\gsuc^h_i(m) = n'$, and we have
  \[
  \gsuc^{\truncl{h}{n'}}_i(\phi(m)) = \gsuc^{\truncl{h}{n'}}_i(m) =
  \gsuc^{h}_i(m) = n' = \phi(n) = \phi(\gsuc^g_i(m)) = \phi(\gsuc^{\truncl{g}{n}}_i(m)).
  \]
  If, on the other hand, $\gsuc^g_i(m) \neq n$, the edge is retained,
  i.e.\ $\gsuc^h_i(m) = \gsuc^g_i(m)$, and we have
  \[
  \gsuc^{\truncl{h}{n'}}_i(\phi(m)) = \gsuc^{\truncl{h}{n'}}_i(m) =
  \gsuc^{h}_i(m) = \gsuc^{g}_i(m) = \phi(\gsuc^g_i(m)) =
  \phi(\gsuc^{\truncl{g}{n}}_i(m)).
  \]

  The injectivity of $\phi$ follows from the fact that $\phi(m) = n'$
  if $m = n$ and that, by \eqref{eq:stepLocTrunc}, $\phi(m) = m \neq
  n'$ if $m \neq n$. Hence, according Lemma~\ref{lem:isomBij}, $\phi$ is an
  isomorphism, i.e.\ $\truncl{g}{n} \isom \truncl{h}{n'}$.
\end{proof}

As an easy consequence of this, we obtain that $\truncl{g}{n}$ is
indeed an adequate notion of reduction context.
\begin{proposition}
  \label{prop:stepContext}
  Given a graph reduction step $g \to[n,\rho,n'] h$, we have
  $\truncl{g}{n} \lebotg g, h$.
\end{proposition}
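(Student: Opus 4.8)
The plan is to combine the two lemmas immediately preceding this proposition. First I would invoke Lemma~\ref{lem:lebotLocTrunc} with $N = \set{n}$ to obtain $\truncl{g}{n} \lebotg g$; this disposes of the first half of the claim with no further work. For the second half, $\truncl{g}{n} \lebotg h$, I would again apply Lemma~\ref{lem:lebotLocTrunc}, this time to the term graph $h$ with $N = \set{n'}$, which yields $\truncl{h}{n'} \lebotg h$. It then remains to transport this along the isomorphism supplied by Lemma~\ref{lem:stepLocTrunc}, which gives $\truncl{g}{n} \isom \truncl{h}{n'}$.

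To conclude $\truncl{g}{n} \lebotg h$ from $\truncl{g}{n} \isom \truncl{h}{n'}$ and $\truncl{h}{n'} \lebotg h$, I would note that a $\bot$-isomorphism is in particular a $\bot$-homomorphism, so composing the witnessing isomorphism $\truncl{g}{n} \homto_\bot \truncl{h}{n'}$ with the $\bot$-homomorphism $\truncl{h}{n'} \homto_\bot h$ that exists by definition of $\lebotg$ yields a $\bot$-homomorphism $\truncl{g}{n} \homto_\bot h$; that is exactly $\truncl{g}{n} \lebotg h$. Alternatively one can appeal to Corollary~\ref{cor:chaTgraphPoA}: isomorphic term graphs induce the same labelled quotient tree, so conditions (a) and (b) there witnessing $\truncl{h}{n'} \lebotg h$ carry over verbatim to $\truncl{g}{n}$.

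I do not expect any genuine obstacle here; the proposition is essentially a corollary of Lemmas~\ref{lem:lebotLocTrunc} and~\ref{lem:stepLocTrunc}. The only point requiring a moment's care is the bookkeeping about canonical representatives: a reduction step $g \to[n,\rho,n'] h$ is taken to produce a canonical term graph, so one should read the local truncation $\truncl{g}{n}$ through its canonical form (equivalently, through Remark~\ref{rem:canhom}) when invoking the labelled-quotient-tree characterisation. Since both $\lebotg$ and $\isom$ are stable under canonicalisation by Proposition~\ref{prop:canon}, this causes no difficulty.
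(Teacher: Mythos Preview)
Your proposal is correct and matches the paper's proof essentially verbatim: the paper invokes Lemma~\ref{lem:lebotLocTrunc} for $\truncl{g}{n} \lebotg g$, then combines Lemma~\ref{lem:stepLocTrunc} with Lemma~\ref{lem:lebotLocTrunc} to get $\truncl{g}{n} \isom \truncl{h}{n'} \lebotg h$. Your extra remarks on composing the $\bot$-homomorphisms and on canonical representatives are sound but unnecessary elaborations of what the paper compresses into a single line.
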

\begin{proof}
  Lemma~\ref{lem:lebotLocTrunc} yields $\truncl{g}{n} \lebotg g$. By
  Lemma~\ref{lem:stepLocTrunc} and Lemma~\ref{lem:lebotLocTrunc}, we
  get $\truncl{g}{n} \isom \truncl{h}{n'} \lebotg h$.
\end{proof}

The following lemma provides a convenient characterisation of local
truncations in terms of labelled quotient trees:
\begin{lemma}
  \label{lem:locTrunc}
  For each $g \in \iptgraphs$ and $n\in N^g$, the local truncation
  $\truncl{g}{n}$ has the following canonical labelled quotient tree
  $(P,l,\sim$):
  \begin{gather*}
    \begin{aligned}
      P &= \setcom{\pi \in \pos{g}}{\forall \pi' < \pi \colon \pi'
        \nin \nodePos{g}{n}}%
      \\%
      \sim &=\ \sim_g\cap\ P\times P\\
    \end{aligned}\qquad
    l(\pi) =
    \begin{cases}
      g(\pi) &\text{if } \pi \nin \nodePos{g}{n}\\
      \bot &\text{if } \pi \in \nodePos{g}{n}
    \end{cases}
    \quad \text{for all } \pi \in P
  \end{gather*}
  In particular, given $\pi \in \pos{g}$, we have that $g(\pi) =
  \truncl{g}{n}(\pi)$ if $\pi' \nin \nodePos{g}{n}$ for each $\forall
  \pi' \le \pi$.
\end{lemma}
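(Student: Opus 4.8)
The plan is to verify directly that the triple $(P,l,\sim)$ is the canonical labelled quotient tree induced by the term graph $\truncl{g}{n}$, i.e.\ that $\pos{\truncl{g}{n}} = P$, that $\truncl{g}{n}(\pi) = l(\pi)$ for all $\pi \in P$, and that $\sim_{\truncl{g}{n}}\ =\ \sim$. These three equalities are exactly what the lemma asserts (in the sense of Lemma~\ref{lem:occrep}), and the ``in particular'' clause will drop out of the labelling formula. The workhorse throughout is the already-established Lemma~\ref{lem:locTruncNodes}: every position $\pi$ of $\truncl{g}{n}$ is also a position of $g$ and satisfies $\nodeAtPos{\truncl{g}{n}}{\pi} = \nodeAtPos{g}{\pi}$.

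First I would prove $\pos{\truncl{g}{n}} = P$. For ``$\subseteq$'', take $\pi \in \pos{\truncl{g}{n}}$; by Lemma~\ref{lem:locTruncNodes} we have $\pi \in \pos{g}$, and if some $\pi' < \pi$ lay in $\nodePos{g}{n}$ then $\nodeAtPos{\truncl{g}{n}}{\pi'} = n$, so $\gsuc^{\truncl{g}{n}}(n) = \emptyseq$ would make it impossible to extend $\pi'$ to a longer position of $\truncl{g}{n}$, contradicting $\pi' < \pi$. Hence $\pi \in P$. For ``$\supseteq$'', I would induct on $\len{\pi}$: the empty position is covered by $r^{\truncl{g}{n}} = r^g$, and for $\pi = \pi'\concat\seq{i} \in P$ the prefix $\pi'$ again lies in $P$, hence in $\pos{\truncl{g}{n}}$ by induction, and the node $m = \nodeAtPos{g}{\pi'} = \nodeAtPos{\truncl{g}{n}}{\pi'}$ is different from $n$ (because $\pi' < \pi$ and $\pi \in P$). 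Therefore $\gsuc^{\truncl{g}{n}}(m) = \gsuc^g(m)$ still has length $\rank{g}{m} > i$, so $\pi \in \pos{\truncl{g}{n}}$.

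Once $\pos{\truncl{g}{n}} = P$ is available, the remaining two points are short. For the labelling, let $\pi \in P$ and $m = \nodeAtPos{\truncl{g}{n}}{\pi} = \nodeAtPos{g}{\pi}$; by the construction of local truncation, $\glab^{\truncl{g}{n}}(m) = \glab^g(m) = g(\pi)$ when $m \neq n$, i.e.\ when $\pi \nin \nodePos{g}{n}$, and $\glab^{\truncl{g}{n}}(m) = \bot$ when $m = n$, i.e.\ when $\pi \in \nodePos{g}{n}$ — which is precisely $l(\pi)$. The equality $\sim_{\truncl{g}{n}}\ =\ \sim$ is just Lemma~\ref{lem:locTruncSim} read over $P\times P$. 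The ``in particular'' statement then follows: if $\pi' \nin \nodePos{g}{n}$ for every $\pi' \le \pi$, then $\pi \in P$ and $\pi \nin \nodePos{g}{n}$, so $\truncl{g}{n}(\pi) = g(\pi)$.

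I do not expect a genuine obstacle, since Lemmas~\ref{lem:locTruncNodes} and~\ref{lem:locTruncSim} have done most of the work already. The one place that needs a little care is the ``$\supseteq$'' induction in the first step: one must check that cutting off the successors at $n$ (and only at $n$) does not delete any position of $P$, and this is exactly where the defining condition of $P$ — that no proper prefix of $\pi$ is a position of $n$ — is used, together with the observation that local truncation leaves $\gsuc$ unchanged at every node other than $n$.
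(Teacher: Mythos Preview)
Your proposal is correct and follows essentially the same approach as the paper: both establish $\pos{\truncl{g}{n}} = P$ (the ``$\supseteq$'' direction by induction on $\len\pi$, the ``$\subseteq$'' direction via Lemma~\ref{lem:locTruncNodes} and the fact that $\gsuc^{\truncl{g}{n}}(n)=\emptyseq$), derive the labelling equality from the construction plus Lemma~\ref{lem:locTruncNodes}, and obtain $\sim_{\truncl{g}{n}} = \sim$ from Lemma~\ref{lem:locTruncSim}. The only stylistic difference is that the paper phrases the ``$\subseteq$'' direction as an explicit induction on $\len\pi$, whereas you give the equivalent direct contradiction argument.
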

\begin{proof}
  The last statement above follows immediately from the preceding
  characterisation of $(P,l,\sim)$. We will show in the following that
  $(P,l,\sim)$ is equal to $(\pos{\truncl g
    n},\truncl{g}{n}(\cdot),\sim_{\truncl g n})$.

  By Lemma~\ref{lem:locTruncNodes} $\pos{\truncl{g}{n}} \subseteq
  \pos{g}$. Therefore, in order to prove that $\pos{\truncl{g}{n}}
  \subseteq P$, we assume some $\pi \in \pos{\truncl{g}{n}}$ and show
  by induction on the length of $\pi$ that no proper prefix of $\pi$
  is a position of $n$ in $g$. The case $\pi=\emptyseq$ is trivial as
  $\emptyseq$ has no proper prefixes. If $\pi = \pi' \concat\seq{i}$,
  we can assume by induction that $\pi' \in P$ since $\pi' \in
  \pos{\truncl{g}{n}}$. Consequently, no proper prefix of $\pi'$ is in
  $\nodePos{g}{n}$. It thus remains to be shown that $\pi'$ itself is
  not in $\nodePos{g}{n}$. Since $\pi' \concat\seq{i} \in
  \pos{\truncl{g}{n}}$, we know that
  $\gsuc^{\truncl{g}{n}}_i(\nodeAtPos{\truncl{g}{n}}{\pi'})$ is
  defined. Therefore, $\nodeAtPos{\truncl{g}{n}}{\pi'}$ cannot be $n$,
  and since, by Lemma~\ref{lem:locTruncNodes},
  $\nodeAtPos{\truncl{g}{n}}{\pi'} = \nodeAtPos{g}{\pi'}$, neither can
  $\nodeAtPos{g}{\pi'}$. In other words, $\pi' \nin \nodePos{g}{n}$.

  For the converse direction $P \subseteq \pos{\truncl{g}{n}}$, assume
  some $\pi \in P$. We will show by induction on the length of $\pi$,
  that then $\pi \in \pos{\truncl{g}{n}}$. The case $\pi = \emptyseq$
  is trivial. If $\pi = \pi' \concat \seq{i}$, then also $\pi' \in P$
  which, by induction, implies that $\pi' \in
  \pos{\truncl{g}{n}}$. Since $\pi \in \pos{g}$, according to
  Lemma~\ref{lem:locTruncNodes}, we have that $\rank{g}{\pi'} >
  i$. Let $m = \nodeAtPos{g}{\pi'}$. According to
  Lemma~\ref{lem:locTruncNodes}, $m =
  \nodeAtPos{\truncl{g}{n}}{\pi'}$. Since $\pi \in P$, we have that
  $\pi' \nin \nodePos{g}{n}$ and thus $m \neq n$. Hence, according to
  the definition of $\truncl{g}{n}$, $\gsuc^{\truncl{g}{n}}(m) =
  \gsuc^g(m)$ which implies that $\rank{\truncl{g}{n}}{\pi'} >
  i$. Consequently, $\pi \in \pos{\truncl{g}{n}}$.

  The equality $\sim\ =\ \sim_{\truncl{g}{n}}$ follows directly from
  Lemma~\ref{lem:locTruncSim} and the equality $P =
  \pos{\truncl{g}{n}}$.

  For the equality $l = \truncl{g}{n}(\cdot)$, consider some $\pi \in
  \pos{\truncl{g}{n}}$. Since $\nodeAtPos{g}{\pi} = n$ iff $\pi \in
  \nodePos{g}{n}$, we can reason as follows:
  \[
  \truncl{g}{n}(\pi) =
  \glab^{\truncl{g}{n}}(\nodeAtPos{\truncl{g}{n}}{\pi}) \stackrel{\text{Lem.~\ref{lem:locTruncNodes}}}{=}
  \glab^{\truncl{g}{n}}(\nodeAtPos{g}{\pi}) =
  \begin{cases}
    g(\pi) &\text{if } \pi  \nin \nodePos{g}{n}\\
    \bot &\text{if } \pi  \in \nodePos{g}{n}
  \end{cases}
  \]
\end{proof}

\subsection{Strong Convergence}
\label{sec:strong-convergence}

Now that we have an adequate notion of reduction context, we can
define strong $\prs$-convergence on term graphs analogously to strong
$\prs$-convergence on terms. For strong $\mrs$-convergence, we simply
take the same notion of depth that we already used for the definition
of strict truncation and thus the metric space.
\begin{definition}
  Let $\calR$ be a GRS.
  \begin{enumerate}[(i)]
  \item The \emph{reduction context} $c$ of a graph reduction step
    $\phi\fcolon g \to[n] h$ is the term graph
    $\canon{\truncl{g}{n}}$. We write $\phi\fcolon g \to[c] h$ to
    indicate the reduction context of a graph reduction step.
  \item Let $S = (g_\iota \to[n_\iota] g_{\iota+1})_{\iota<\alpha}$ be
    a reduction in $\calR$. $S$ is strongly $\mrs$-continuous in
    $\calR$, denoted $S\fcolon g_0 \macont[\calR]$, if $\lim_{\iota
      \limto \lambda} g_\iota = g_\lambda$ and $\lim_{\iota
      \limto\lambda}\depth{g_\iota}{n_\iota} = \infty$ for each limit
    ordinal $\lambda < \alpha$. $S$ strongly $\mrs$-converges to $g$
    in $\calR$, denoted $S\fcolon g_0 \mato[\calR] g$, if it is
    strongly $\mrs$-continuous, $\lim_{\iota \limto \wsuc\alpha}
    g_\iota = g$, and $\lim_{\iota
      \limto\alpha}\depth{g_\iota}{n_\iota} = \infty$ in case $\alpha$
    is a limit ordinal.
  \item Let $S = (g_\iota \to[c_\iota] g_{\iota+1})_{\iota<\alpha}$ be
    a reduction in $\calR$. $S$ is strongly $\prs$-continuous in
    $\calR$, denoted $S\fcolon g_0 \pacont[\calR]$, if $\liminf_{\iota
      \limto \lambda} c_\iota = g_\lambda$ for each limit ordinal
    $\lambda < \alpha$. $S$ strongly $\prs$-converges to $g$ in
    $\calR$, denoted $S\fcolon g_0 \pato[\calR] g$, if it is strongly
    $\prs$-continuous and either $g = \liminf_{\iota \limto \alpha}
    c_\iota$ or $g = g_\alpha$ in case $S$ is closed.
  \end{enumerate}
\end{definition}

Note that we have to extend the signature of $\calR$ to $\Sigma_\bot$
for the definition of strong $\prs$-convergence. However, we can
obtain the total fragment of strong $\prs$-convergence if we restrict
ourselves to total term graphs in $\ictgraphs$: A reduction $(g_\iota
\to[\calR_\bot] g_{\iota+1})_{\iota < \alpha}$ $\prs$-converging to
$g$ is called \emph{total} if $g$ as well as each $g_\iota$ is total,
i.e.\ an element of $\ictgraphs$.

Since the partial order $\lebotg$ forms a complete semilattice on
$\ipctgraphs$, strong $\prs$-continuity coincides with strong
$\prs$-convergence:
\begin{proposition}
  \label{prop:strPContConv}
  Every strongly $\prs$-continuous reduction is strongly $\prs$-convergent.
\end{proposition}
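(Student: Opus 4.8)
The plan is to reduce the statement to the fact, established in Theorem~\ref{thr:complSemilattice}, that $(\ipctgraphs,\lebotg)$ is a complete semilattice, so that the limit inferior of an arbitrary sequence of partial term graphs is always defined. The argument is the exact analogue of the earlier observation that every weakly $\prs$-continuous reduction is weakly $\prs$-convergent.

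First I would fix a strongly $\prs$-continuous reduction $S = (g_\iota \to[c_\iota] g_{\iota+1})_{\iota<\alpha}$ and distinguish two cases according to the two clauses in the definition of strong $\prs$-convergence. If $S$ is closed, i.e.\ $\alpha$ is $0$ or a successor ordinal, then by definition $S$ already strongly $\prs$-converges to $g_\alpha$; the assumed strong $\prs$-continuity of $S$ is all that is needed and there is nothing further to check.

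If $S$ is open, i.e.\ $\alpha$ is a limit ordinal, I would note that each reduction context $c_\iota = \canon{\truncl{g_\iota}{n_\iota}}$ lies in $\ipctgraphs$, since the local truncation relabels nodes to $\bot$; hence $(c_\iota)_{\iota<\alpha}$ is a sequence in the complete semilattice $(\ipctgraphs,\lebotg)$. By Theorem~\ref{thr:complSemilattice} --- equivalently, by the explicit construction in Corollary~\ref{cor:lebot1Liminf} --- the limit inferior $g = \liminf_{\iota\limto\alpha} c_\iota = \Lub_{\beta<\alpha}\Glb_{\beta\le\iota<\alpha} c_\iota$ exists. Together with the assumed strong $\prs$-continuity of $S$ this yields $S\fcolon g_0 \pato[\calR] g$, so $S$ is strongly $\prs$-convergent.

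There is essentially no real obstacle here: the whole reason for developing the partial order $\lebotg$ and proving that it forms a complete semilattice (Theorems~\ref{thr:lebot1cpo} and~\ref{thr:complSemilattice}) was precisely to guarantee that limit inferiors, and hence strong $\prs$-limits of continuous reductions, always exist. The only point worth making explicit is the case split on closed versus open reductions, which mirrors the definition of $\pato[\calR]$.
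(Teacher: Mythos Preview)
Your proposal is correct and takes essentially the same approach as the paper, which simply states that the result follows immediately from Corollary~\ref{cor:lebot1Liminf}. Your version just makes the case split and the role of the complete semilattice structure explicit, which the paper leaves implicit.
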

\begin{proof}
  Follows immediately from Corollary~\ref{cor:lebot1Liminf}.
\end{proof}

The following technical lemma confirms the intuition that changes
during a continuous reduction must be caused by a reduction step that
was applied at the position where the difference is observed or above.
\begin{lemma}
  \label{lem:pContCxt}
  Let $(g_\iota \to[n_\iota,\rho,m_\iota]
  g_{\iota+1})_{\iota<\alpha}$ be a strongly $\prs$-continuous
  reduction in a GRS with its reduction contexts $c_\iota =
  \canon{\truncl{g_\iota}{n_\iota}}$ such that there are $\beta \le
  \gamma < \alpha$ and $\pi \in \pos{c_\beta}\cap\pos{c_\gamma}$ with
  $c_\beta(\pi) \neq c_\gamma(\pi)$. Then there is a position $\pi'
  \le \pi$ and an index $\beta \le \iota \le \gamma$ such that $\pi'
  \in \nodePos{g_\iota}{n_\iota}$.
\end{lemma}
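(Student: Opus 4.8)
The plan is to argue by contradiction. Suppose no such $\pi' \le \pi$ and $\beta \le \iota \le \gamma$ exist, i.e.\ for every index $\iota$ with $\beta \le \iota \le \gamma$ and every prefix $\pi' \le \pi$ we have $\pi' \nin \nodePos{g_\iota}{n_\iota}$. The goal is to derive $c_\beta(\pi) = c_\gamma(\pi)$, contradicting the hypothesis. The key tool is Lemma~\ref{lem:locTrunc}, which describes the local truncation $\truncl{g_\iota}{n_\iota}$ (and hence $c_\iota$, which is just its canonical form): for $\pi \in \pos{g_\iota}$ with $\pi' \nin \nodePos{g_\iota}{n_\iota}$ for all $\pi' \le \pi$, we have $c_\iota(\pi) = g_\iota(\pi)$. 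So under our assumption, $c_\iota(\pi) = g_\iota(\pi)$ for every $\beta \le \iota \le \gamma$, and moreover (by the same lemma applied to each prefix $\pi''\le\pi$, using that the non-membership hypothesis is inherited by prefixes) $c_\iota(\pi'') = g_\iota(\pi'')$ for all $\pi'' \le \pi$. Thus it suffices to show $g_\iota(\pi'') = g_\beta(\pi'')$ for all $\pi'' \le \pi$ and all $\beta \le \iota \le \gamma$; in particular $g_\beta(\pi) = g_\gamma(\pi)$, giving $c_\beta(\pi) = c_\gamma(\pi)$.

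To show this stability of the labels along $\pi$, I would induct on $\iota$ from $\beta$ to $\gamma$, and within that handle the single step $g_\iota \to[n_\iota,\rho,m_\iota] g_{\iota+1}$ as follows. A reduction step only changes labels at nodes of $\rho$ that are copied in, redirects edges ending in $n_\iota$ to $m_\iota$, and deletes unreachable nodes; it never relabels an existing retained node. More precisely, I want the statement: if $\pi''\le\pi$ and $\pi''\notin\nodePos{g_\iota}{n_\iota}$ then $g_{\iota+1}$ still has the position $\pi''$, $\nodeAtPos{g_{\iota+1}}{\pi''}=\nodeAtPos{g_\iota}{\pi''}$, and $g_{\iota+1}(\pi'')=g_\iota(\pi'')$. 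This is essentially the reasoning already carried out inside the proof of Lemma~\ref{lem:stepLocTrunc} (nodes on a path that avoids the redex node $n_\iota$ are preserved with the same successors and labels), specialised to a path rather than to all of $\truncl{g_\iota}{n_\iota}$. Since our standing assumption says exactly that no prefix of $\pi$ is ever a position of the contracted node $n_\iota$ in the relevant range, every prefix $\pi''\le\pi$ survives each step unchanged, and an easy transfinite-style induction over the finite index range $[\beta,\gamma]$ — there is nothing special at limit ordinals here since we only move finitely many steps, but if one prefers one can also invoke strong $\prs$-continuity and Corollary~\ref{cor:lebot1Liminf} to handle limits passed through — yields $g_\gamma(\pi'') = g_\beta(\pi'')$ for all $\pi'' \le \pi$.

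Combining the two observations: $c_\beta(\pi) = g_\beta(\pi) = g_\gamma(\pi) = c_\gamma(\pi)$, contradicting $c_\beta(\pi)\neq c_\gamma(\pi)$. Hence the assumed $\pi'$ and $\iota$ must exist.

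**Main obstacle.** The delicate point is making the single-step preservation claim precise and correct: one must verify that a path $\pi''\le\pi$ all of whose prefixes avoid $n_\iota$ really is untouched — in particular that no node on it gets deleted (it stays reachable from the root) and that no edge along it gets redirected (redirection only affects edges whose target is $n_\iota$, and the target of the edge at step $j$ along $\pi''$ is $\nodeAtPos{g_\iota}{\prefix{\pi''}{j+1}}$, which by assumption is not $n_\iota$ unless $\prefix{\pi''}{j+1}\in\nodePos{g_\iota}{n_\iota}$, a prefix of $\pi$ — excluded). This is exactly the bookkeeping done in Lemma~\ref{lem:stepLocTrunc}, so I expect it to go through smoothly, but it is the step that needs care rather than the high-level contradiction argument.
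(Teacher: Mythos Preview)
Your overall strategy---assume no prefix of $\pi$ is ever a redex position in the range and derive $c_\beta(\pi)=c_\gamma(\pi)$---is sound, and it amounts to the contrapositive of the paper's direct transfinite induction on $\gamma$. Both arguments rest on the same three ingredients: Lemma~\ref{lem:locTrunc} to pass between $c_\iota$ and $g_\iota$ along a path that avoids $n_\iota$, Lemma~\ref{lem:stepLocTrunc} to push label-stability across a single step, and the liminf description (Corollary~\ref{cor:lebot1Liminf}) to pass through limit stages via strong $\prs$-continuity. Your contrapositive packaging is arguably tidier because you fix the stability claim for \emph{all} prefixes $\pi''\le\pi$ at once, whereas the paper makes a weaker standing assumption and then does a case split on whether $\pi\in\pos{c_\iota}$ at each successor stage.

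There is, however, one genuine slip: you write that the induction runs ``over the finite index range $[\beta,\gamma]$'' and that ``there is nothing special at limit ordinals here since we only move finitely many steps''. This is false. The ordinals $\beta$ and $\gamma$ live below an arbitrary ordinal $\alpha$, so the interval $[\beta,\gamma]$ may well contain limit ordinals (e.g.\ $\beta=0$, $\gamma=\omega$ when $\alpha>\omega$). The limit case is therefore not optional, and your parenthetical ``if one prefers'' must be upgraded to an actual case in the transfinite induction. Fortunately, your suggested treatment is correct: at a limit $\lambda$ with $\beta<\lambda\le\gamma$, the induction hypothesis gives $c_\iota(\pi'')=g_\beta(\pi'')$ for all $\beta\le\iota<\lambda$ and all $\pi''\le\pi$; since $g_\lambda=\liminf_{\iota\to\lambda}c_\iota$, Corollary~\ref{cor:lebot1Liminf} then yields $g_\lambda(\pi'')=g_\beta(\pi'')$ for all $\pi''\le\pi$, and the induction continues. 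Once you state this explicitly, the proof goes through.
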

\begin{proof}
  Throughout the poof, we can assume that
  \begin{equation}
    g_\iota(\pi) = c_\iota(\pi) \qquad \text{ if } \beta \le \iota \le \gamma
    \text{ and } \pi \in \pos{c_\iota}.
    \tag{$*$}
    \label{eq:pContCxt}
  \end{equation}
  If this would not be the case, then, by Lemma~\ref{lem:locTrunc},
  there is a $\pi'\le \pi$ such that $\pi'\in
  \nodePos{g_\iota}{n_\iota}$, i.e.\ the statement that we want to
  prove holds.

  We proceed with an induction on $\gamma$. The case $\gamma = \beta$
  is trivial.

  Let $\gamma = \iota+1> \beta$. We then consider two cases: If
  $\pi\nin \pos{c_{\iota}}$, we are done as this together with the
  assumption $\pi\in\pos{c_{\gamma}}$ implies, by the definition of
  reduction steps, that $\pi'\in\nodePos{g_{\iota}}{n_{\iota}}$ for
  some $\pi' < \pi$. If, on the other hand, $\pi\in \pos{c_{\iota}}$,
  then we can assume that $c_\beta(\pi) = c_{\iota}(\pi)$ since
  otherwise the proof goal follows immediately from the induction
  hypothesis. Consequently, we have that
  \[
  \truncl{g_\gamma}{m_{\iota}}(\pi)
  \stackrel{\text{Lem.~\ref{lem:stepLocTrunc}}}=
  \truncl{g_{\iota}}{n_{\iota}}(\pi) = c_{\iota}(\pi) =
  c_\beta(\pi) \neq c_\gamma(\pi) \stackrel{\eqref{eq:pContCxt}}=
  g_\gamma(\pi)
  \]
  The thus obtained inequality $\truncl{g_\gamma}{m_{\iota}}(\pi)
  \neq g_\gamma(\pi)$ implies, by Lemma~\ref{lem:locTrunc}, that there
  is a $\pi' \le \pi$ such that $\pi'\in
  \nodePos{g_\gamma}{m_{\iota}}$. According to
  Lemma~\ref{lem:stepLocTrunc} there is an isomorphism $\phi\fcolon
  \truncl{g_{\iota}}{n_{\iota}} \to \truncl{g_\gamma}{m_{\iota}}$
  with $\phi(n_{\iota}) = m_{\iota}$. This means, by
  Corollary~\ref{cor:isomOcc}, that $\nodePos{g_{\iota}}{n_{\iota}}
  = \nodePos{g_\gamma}{m_{\iota}}$. Hence, $\pi' \in
  \nodePos{g_{\iota}}{n_{\iota}}$.

  Let $\gamma$ be a limit ordinal. By \eqref{eq:pContCxt}, we know
  that $g_\gamma(\pi) = c_\gamma(\pi) \neq c_\beta(\pi)$. According to
  Corollary~\ref{cor:lebot1Liminf}, the inequality $g_\gamma(\pi) \neq
  c_\beta(\pi)$ is only possible if there is a $\pi'\le \pi$ and a
  $\beta\le \iota <\gamma$ such that $c_\iota(\pi') \neq
  c_\beta(\pi')$. Hence, we can invoke the induction hypothesis (for
  the position $\pi'$ instead of $\pi$) which immediately yields the
  proof goal.
\end{proof}

By combining the characterisation of the limit inferior from
Corollary~\ref{cor:lebot1Liminf} and the characterisation of local
truncations from Lemma~\ref{lem:locTrunc}, we obtain the following
characterisation of the limit of a strongly $\prs$-convergent
reduction:
\begin{lemma}
  \label{lem:strConvRes}
  Let $S = (g_\iota \to[n_\iota] g_{\iota+1})_{\iota<\alpha}$ be an
  open reduction in a GRS strongly $\prs$-converging to $g$. Then $g$
  has the following canonical labelled quotient tree $(P,l,\sim)$:
  \begin{align*}
    P &= \bigcup_{\beta<\alpha} \setcom{\pi \in \pos{g_\beta}}{\forall
      \pi' < \pi \forall \beta \le \iota < \alpha\colon \pi' \nin
      \nodePos{g_\iota}{n_\iota}}\\
    \sim &= \left(\bigcup_{\beta<\alpha}\bigcap_{\beta\le\iota<\alpha}
      \sim_{g_\iota}\right) \cap P \times P\\
    l(\pi) &=
    \begin{cases}
      g_\beta(\pi) &\text{if } \exists \beta < \alpha\forall
      \beta\le\iota<\alpha\colon \pi \nin \nodePos{g_\iota}{n_\iota}\\
      \bot &\text{otherwise}
    \end{cases}
    \quad \text{for all } \pi \in P
  \end{align*}
  In particular, given $\beta < \alpha$ and $\pi \in \pos{g_\beta}$,
  we have that $g(\pi) = g_\beta(\pi)$ if $\pi' \nin
  \nodePos{g_\iota}{n_\iota}$ for all $\pi' \le \pi$ and $\beta \le
  \iota < \alpha$.
\end{lemma}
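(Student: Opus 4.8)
The plan is to reduce everything to the explicit construction of the limit inferior in Corollary~\ref{cor:lebot1Liminf}, applied to the sequence of reduction contexts, and then to translate the ``eventual stability'' conditions that appear there — which speak about the labelled quotient trees of the $c_\iota$ — into conditions about redex positions in the $g_\iota$, using the characterisation of local truncations in Lemma~\ref{lem:locTrunc}.

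Concretely, I first observe that since $S$ is open and strongly $\prs$-converges to $g$, by definition $g = \liminf_{\iota\limto\alpha} c_\iota$, where $c_\iota = \canon{\truncl{g_\iota}{n_\iota}}$ are the reduction contexts. Applying Corollary~\ref{cor:lebot1Liminf} to $(c_\iota)_{\iota<\alpha}$ gives a labelled quotient tree $(P',l',\sim')$ of $g$ phrased in terms of $\pos{c_\iota}$, $\sim_{c_\iota}$ and $c_\iota(\cdot)$. By Lemma~\ref{lem:locTrunc}, for each $\iota$ we may substitute $\pos{c_\iota} = \setcom{\pi\in\pos{g_\iota}}{\pi'\nin\nodePos{g_\iota}{n_\iota}\text{ for all }\pi'<\pi}$, $\sim_{c_\iota}\ =\ \sim_{g_\iota}\cap\pos{c_\iota}\times\pos{c_\iota}$, and $c_\iota(\pi)=g_\iota(\pi)$ unless $\pi\in\nodePos{g_\iota}{n_\iota}$, in which case $c_\iota(\pi)=\bot$. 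It then remains to show that the resulting tree coincides with the claimed $(P,l,\sim)$; well-definedness of $(P,l,\sim)$ as a labelled quotient tree comes for free from that of $(P',l',\sim')$.

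For $P' = P$ I argue the two inclusions separately. The inclusion $P'\subseteq P$ is almost immediate: if $\pi\in P'$ with witnessing index $\beta$, then the reachability/congruence conditions underlying Corollary~\ref{cor:lebot1Liminf} force $\pi\in\pos{c_\iota}$ for all $\beta\le\iota<\alpha$, whence by Lemma~\ref{lem:locTrunc} no proper prefix of $\pi$ lies in $\nodePos{g_\iota}{n_\iota}$ for those $\iota$ — which is exactly membership in $P$. The converse $P\subseteq P'$ carries the real work: given $\pi\in\pos{g_\beta}$ with no proper prefix of $\pi$ a redex position from step $\beta$ onwards, Lemma~\ref{lem:locTrunc} yields $\pi\in\pos{c_\beta}$ directly, so the outstanding obligation is $c_\iota(\pi')=c_\beta(\pi')$ for all $\pi'<\pi$ and $\beta\le\iota<\alpha$. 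I prove this by a transfinite induction on $\iota$ that simultaneously establishes $\pi\in\pos{g_\iota}$ and $g_\iota(\pi')=g_\beta(\pi')$ for all $\pi'\le\pi$: successor steps are handled by the fact that a graph reduction step redirects only edges entering the redex node, so a position whose approach path meets no redex node is preserved together with its label; limit steps are handled by Lemma~\ref{lem:pContCxt}, which excludes any change of the $c$-label along $\pi$ in the absence of a redex at or above it, together with the ``in particular'' clause of Corollary~\ref{cor:lebot1Liminf}.

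Once $P' = P$ is established, the remaining equalities follow by the same bookkeeping. For $\sim$, note that every pair of positions in $P$ eventually lies in $\pos{c_\iota}$, so the restriction of $\bigcup_{\beta<\alpha}\bigcap_{\beta\le\iota<\alpha}\sim_{c_\iota}$ to $P\times P$ agrees with that of $\bigcup_{\beta<\alpha}\bigcap_{\beta\le\iota<\alpha}\sim_{g_\iota}$. For $l$, unfolding $c_\iota(\pi)$ via Lemma~\ref{lem:locTrunc} shows that $c_\iota(\pi)$ is eventually constant and different from $\bot$ exactly when $\pi$ eventually avoids $\nodePos{g_\iota}{n_\iota}$, in which case its stable value is $g_\beta(\pi)$ for a suitable $\beta$ by the transfinite induction above; in all other cases both $l'(\pi)$ and $l(\pi)$ are $\bot$. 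The final ``In particular'' claim is then the special case where the witnessing index is $\beta$ itself: if every $\pi'\le\pi$ avoids all $\nodePos{g_\iota}{n_\iota}$ with $\beta\le\iota<\alpha$, then $\pi\in P$ and the condition for $l(\pi)=g_\beta(\pi)$ is met. I expect the transfinite induction inside the $P\subseteq P'$ step to be the main obstacle — specifically, coordinating ``positions persist'' with ``labels stabilise'' through limit ordinals, where one must invoke the continuity hypothesis through Lemma~\ref{lem:pContCxt} and Corollary~\ref{cor:lebot1Liminf}; the successor and finitary bookkeeping is routine once the right combined induction hypothesis is fixed.
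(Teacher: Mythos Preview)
Your plan matches the paper's: reduce to Corollary~\ref{cor:lebot1Liminf} on the sequence of reduction contexts and translate via Lemma~\ref{lem:locTrunc}, with a transfinite induction for the inclusion $P\subseteq\pos g$. There is, however, one concrete slip. Your simultaneous induction claim ``$g_\iota(\pi')=g_\beta(\pi')$ for all $\pi'\le\pi$'' is too strong under the $P$-membership hypothesis, which only says that \emph{proper} prefixes of $\pi$ avoid the redex positions $\nodePos{g_\iota}{n_\iota}$; nothing stops $\pi$ itself from being the redex node at some step $\iota$, and then $g_{\iota+1}(\pi)\neq g_\iota(\pi)$ in general, so the successor step of your induction fails at $\pi'=\pi$. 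The paper instead proves only $c_\gamma(\pi')=c_\beta(\pi')$ for $\pi'<\pi$: the successor case goes through Lemma~\ref{lem:stepLocTrunc} (the isomorphism $\truncl{g_\iota}{n_\iota}\isom\truncl{g_{\iota+1}}{n'_\iota}$), and the limit case uses only the ``in particular'' clause of Corollary~\ref{cor:lebot1Liminf} --- Lemma~\ref{lem:pContCxt} is not needed there.

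This also affects your treatment of $l$. Your biconditional ``$c_\iota(\pi)$ is eventually constant and $\neq\bot$ exactly when $\pi$ eventually avoids $\nodePos{g_\iota}{n_\iota}$'' is false in the backward direction without the very stability you are trying to establish. The paper handles $l=g(\cdot)$ differently: it splits on whether $c_\iota(\pi)$ is eventually constant (the case split built into Corollary~\ref{cor:lebot1Liminf}); in the stable case one equality $g_\beta(\pi)=c_\beta(\pi)$ via Lemma~\ref{lem:locTrunc} suffices, and in the non-stable case Lemma~\ref{lem:pContCxt} plus the pigeonhole principle produce a single $\pi'\le\pi$ that is a redex position cofinally, after which $\pi'<\pi$ is ruled out by $\pi\in P$. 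The final ``in particular'' clause then falls out of the established description of $(P,l,\sim)$ rather than being used as an intermediate lemma.
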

\begin{proof}
  
  The last statement above follows immediately from the preceding
  characterisation of $(P,l,\sim)$. We will show in the following that
  $(P,l,\sim)$ is equal to $(\pos{g},g(\cdot),\sim_g)$.

  Let $c_\iota = \canon{\truncl{g_\iota}{n_\iota}}$ for each $\iota < \alpha$.
  At first we show that $\pos{g} \subseteq P$. To this end let $\pi
  \in \pos{g}$. Since $g = \liminf_{\iota\limto\alpha} c_\iota$, this
  means, by Corollary~\ref{cor:lebot1Liminf}, that
  \begin{gather*}
    \text{there is some } \beta< \alpha \text{ with } \pi \in
    \pos{c_\beta} \text{ and } c_\iota(\pi') = c_\beta(\pi')\text{ for
      all } \pi' < \pi \text{ and } \beta\le \iota <
    \alpha. \tag{1}\label{eq:strConvRes1}
  \end{gather*}
  Since, according to Lemma~\ref{lem:locTruncNodes}, $\pos{c_\beta}
  \subseteq \pos{g_\beta}$, we also have $\pi \in \pos{g_\beta}$. In
  order to prove that $\pi \in P$, we assume some $\pi' < \pi$ and
  $\beta \le \iota < \alpha$ and show that $\pi' \nin
  \nodePos{g_\iota}{n_\iota}$. Since $\pi'$ is a proper prefix of a
  position in $c_\beta$, we have that $c_\beta(\pi') \in \Sigma$. By
  \eqref{eq:strConvRes1}, also $c_\iota(\pi') \in \Sigma$. Hence,
  according to Lemma~\ref{lem:locTrunc}, $\pi' \nin
  \nodePos{g_\iota}{n_\iota}$.

  For the converse direction $P \subseteq \pos{g}$, we assume some
  $\pi \in P$ and show that then $\pi \in \pos{g}$. Since $\pi \in P$,
  we have that
  \begin{gather*}
    \text{there is some } \beta < \alpha \text{ with } \pi \in
    \pos{g_\beta} \text{ and } \pi' \nin \nodePos{g_\iota}{n_\iota}
    \text{ for all } \pi' < \pi \text{ and } \beta \le \iota
    <\alpha. \tag{2}\label{eq:strConvRes2}
  \end{gather*}
  In particular, we have that $\pi' \nin \nodePos{g_\beta}{n_\beta}$
  for all $\pi' < \pi$. Hence, by Lemma~\ref{lem:locTrunc}, $\pi \in
  \pos{c_\beta}$. According to Corollary~\ref{cor:lebot1Liminf}, it
  remains to be shown that $c_\gamma(\pi') = c_\beta(\pi')$ for all
  $\pi' <\pi$ and $\beta \le \gamma <\alpha$. We will do that by an
  induction on $\gamma$:
  
  The case $\gamma = \beta$ is trivial. For $\gamma = \iota + 1 >
  \beta$, let $g_\iota \to[n_\iota,\rho_\iota,n'_\iota] g_\gamma$ be
  the $\iota$-th reduction step and $\pi'<\pi$. By
  Lemma~\ref{lem:stepLocTrunc}, we then have $c_\iota \isom
  \truncl{g_\iota}{n_\iota} \isom \truncl{g_\gamma}{n'_\iota}$. We can
  thus reason as follows:
  \[
  c_\beta(\pi') \stackrel{\text{ind.\ hyp.}}{=} c_\iota(\pi') \stackrel{\text{Lem.~\ref{lem:stepLocTrunc}}}{=}
  \truncl{g_\gamma}{n'_\iota}(\pi') \stackrel{\text{Lem.~\ref{lem:locTrunc}}}{=} g_\gamma(\pi') \stackrel{\text{Lem.~\ref{lem:locTrunc}}}{=} 
  \truncl{g_\gamma}{n_\gamma}(\pi') = c_\gamma(\pi')
  \]
  The first application of Lemma~\ref{lem:locTrunc} above is justified
  by the fact that $\pi'<\pi\in\pos{c_\beta}$ and thus $c_\beta(\pi')
  \neq \bot$. The second application of Lemma~\ref{lem:locTrunc} is
  justified by \eqref{eq:strConvRes2}.

  If $\gamma > \beta$ is a limit ordinal, then $g_\gamma =
  \liminf_{\iota \limto \gamma} c_\iota$ and we can apply
  Corollary~\ref{cor:lebot1Liminf}. Since $\pi' \in \pos{c_\beta}$
  and, by induction hypothesis, $c_\iota(\pi'') = c_\beta(\pi'')$ for
  all $\pi'' \le \pi'$, $\beta \le \iota < \gamma$, we thus obtain
  that $g_\gamma(\pi') = c_\beta(\pi')$. Since, according to
  \eqref{eq:strConvRes2}, $\pi'' \nin \nodePos{g_\gamma}{n_\gamma}$
  for each $\pi'' \le \pi'$, we have by Lemma~\ref{lem:locTrunc} that
  $g_\gamma(\pi') = c_\gamma(\pi')$. Hence, $c_\gamma(\pi') =
  c_\beta(\pi')$.

  The inclusion $\sim_g\ \subseteq\ \sim$ follows immediately from
  Corollary~\ref{cor:lebot1Liminf} and the equality $P = \pos{g}$
  since $\sim_{c_\iota}\ \subseteq\ \sim_{g_\iota}$ for all $\iota <
  \alpha$ according to Lemma~\ref{lem:locTruncSim}.

  For the reverse inclusion $\sim\ \subseteq\ \sim_g$, assume that
  $\pi_1 \sim \pi_2$. That is, $\pi_1,\pi_2 \in P$ and there is some
  $\beta_0 < \alpha$ such that $\pi_1 \sim_{g_\iota} \pi_2$ for all
  $\beta_0 \le \iota < \alpha$. Since $\pi_1,\pi_2 \in P = \pos{g}$,
  we know, by Corollary~\ref{cor:lebot1Liminf}, that there are
  $\beta_1,\beta_2 < \alpha$ such that $\pi_k \in \pos{c_\iota}$ for
  all $\beta_k \le \iota < \alpha$. Let $\beta =
  \max\set{\beta_0,\beta_1,\beta_2}$. For each $\beta \le \iota <
  \alpha$, we then obtain that $\pi_1 \sim_{g_\iota} \pi_2$ and
  $\pi_1,\pi_2 \in \pos{c_\iota}$. By Lemma~\ref{lem:locTruncSim},
  this is equivalent to $\pi_1 \sim_{c_\iota} \pi_2$. Applying
  Corollary~\ref{cor:lebot1Liminf} then yields $\pi_1 \sim_g \pi_2$.

  Finally, we show that $l = g(\cdot)$. To this end, let $\pi \in
  P$. We distinguish two mutually exclusive cases. For the first case,
  we assume that
  \begin{gather}
    \text{there is some $\beta < \alpha$ such that $c_\iota(\pi) =
      c_\beta(\pi)$ for all $\beta \le \iota <
      \alpha$.}\tag{3}\label{eq:strConvRes3}
  \end{gather}
  By Corollary~\ref{cor:lebot1Liminf}, we know that then $g(\pi) =
  c_\beta(\pi)$. Next, assume that there is some $\beta'< \alpha$ with
  $\pi\nin\nodePos{g_\iota}{n_\iota}$ for all $\beta' \le \iota <
  \alpha$. W.l.o.g.\ we can assume that $\beta = \beta'$. Hence,
  $l(\pi) = g_\beta(\pi)$. Moreover, since $\pi \nin
  \nodePos{g_{\beta}}{n_{\beta}}$, we have that $g_\beta(\pi) =
  c_\beta(\pi)$ according to Lemma~\ref{lem:locTrunc}. We thus conclude that $l(\pi) = g_\beta(\pi) =
  c_\beta(\pi) = g(\pi)$. Now assume there is no such $\beta'$, i.e.\
  for each $\beta' < \alpha$ there is some $\beta' \le \iota < \alpha$
  with $\pi \in \nodePos{g_\iota}{n_\iota}$. Consequently, $l(\pi) =
  \bot$ and, by Lemma~\ref{lem:locTrunc}, we have for each $\beta' <
  \alpha$ some $\beta' \le \iota < \alpha$ such that $c_\iota(\pi) =
  \bot$. According to \eqref{eq:strConvRes3}, the latter implies that
  $c_\iota(\pi) = \bot$ for all $\beta \le \iota < \alpha$. By
  Corollary~\ref{cor:lebot1Liminf}, we thus obtain that $g(\pi) =
  \bot = l(\pi)$.

  Next, we consider the negation of \eqref{eq:strConvRes3}, i.e.\ that
  \begin{gather}
    \text{for all $\beta < \alpha$ there is a $\beta \le \iota <
      \alpha$ such that $\pi \in \pos{c_\iota}\cap \pos{c_\beta}$
      implies $c_\iota(\pi) \neq
      c_\beta(\pi)$.}\tag{4}\label{eq:strConvRes4}
  \end{gather}
  By Corollary~\ref{cor:lebot1Liminf}, we have that $g(\pi) =
  \bot$. Since $\pi\in P = \pos{g}$, we can apply
  Corollary~\ref{cor:lebot1Liminf} again to obtain a $\gamma < \alpha$
  with $\pi\in\pos{c_\iota}$ and $c_\iota(\pi') = c_{\gamma}(\pi')$
  for all $\pi' < \pi$ and $\gamma \le \iota < \alpha$. Combining this
  with \eqref{eq:strConvRes4} yields that for each $\gamma \le \beta <
  \alpha$ there is a $\beta \le \iota <\alpha$ with $c_\iota(\pi) \neq
  c_\beta(\pi)$. According to Lemma~\ref{lem:pContCxt}, this can only
  happen if there is a $\beta\le \gamma' \le \iota$ and a $\pi'\le\pi$
  such that $\pi' \in \nodePos{g_{\gamma'}}{n_{\gamma'}}$. Since $\pi$
  has only finitely many prefixes, we can apply the infinite pigeon
  hole principle to obtain a single prefix $\pi' \le \pi$ such that
  for each $\beta < \alpha$ there is some $\beta \le\iota < \alpha$
  with $\pi' \in \nodePos{g_\iota}{n_\iota}$. However, $\pi'$ cannot
  be a proper prefix of $\pi$ since this would imply that $\pi \nin
  P$. Thus we can conclude that for each $\beta < \alpha$ there is
  some $\beta \le\iota < \alpha$ such that $\pi \in
  \nodePos{g_\iota}{n_\iota}$. Hence, $l(\pi) = \bot = g(\pi)$.
\end{proof}

The benefit of strong $\prs$-convergence over strong
$\mrs$-convergence is that the former has a more fine-grained
characterisation of divergence. Strong $\prs$-convergence allows for
local divergence, i.e.\ parts of a term graph that do not become
persistent along a transfinite reduction. We will call such parts
volatile:
\begin{definition}[volatility]
  Let $S = (g_\iota \to[n_\iota] g_{\iota+1})_{\iota<\alpha}$ be an
  open graph reduction. A position $\pi \in \nats^*$ is said to be
  \emph{volatile} in $S$ if, for each $\beta < \alpha$, there is some
  $\beta \le \gamma < \alpha$ such that $\pi \in
  \nodePos{g_\gamma}{n_\gamma}$. If $\pi$ is volatile in $S$ and no proper
  prefix of $\pi$ is volatile in $S$, then $\pi$ is called
  \emph{outermost-volatile} in $S$.
\end{definition}

As for infinitary term rewriting~\cite{bahr10rta2}, local divergence in
a strongly $\prs$-converging reduction can be characterised by
volatile positions:
\begin{lemma}
  \label{lem:volBot}
  Let $S = (g_\iota \to[n_\iota] g_{\iota+1})_{\iota<\alpha}$ be an
  open reduction in a GRS strongly $\prs$-converging to $g$. Then, for
  every $\pi\in \nats^*$, we have the following:
  \begin{enumerate}[(i)]
  \item If $\pi$ is volatile in $S$, then $\pi \in \pos{g}$ implies $g(\pi) =
    \bot$.
    \label{item:volBot1}
  \item $g(\pi) = \bot$ iff
    \label{item:volBot2}
    \begin{enumerate}[(a)]
    \item $\pi$ is outermost-volatile in $S$, or
      \label{item:volBot2a}
    \item there is some $\beta < \alpha$ such that $g_\beta(\pi) = \bot$
      and $\pi' \nin \nodePos{g_\iota}{n_\iota}$ for all $\pi' \le
      \pi$ and $\beta \le \iota < \alpha$.
      \label{item:volBot2b}
    \end{enumerate}
  \item Let $g_\iota$ be total for all $\iota < \alpha$. Then $g(\pi)
    = \bot$ iff $\pi$ is outermost-volatile in $S$.
    \label{item:volBot3}
  \end{enumerate}
\end{lemma}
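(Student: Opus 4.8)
The plan is to read off all three parts from the characterisation of the limit $g$ supplied by Lemma~\ref{lem:strConvRes}. Let $(P,l,\sim)$ be the canonical labelled quotient tree of $g$ given there, so that $\pi \in \pos{g}$ iff $\pi \in P$ and $g(\pi) = l(\pi)$ for $\pi \in P$. Two facts will be used repeatedly: $\pi \in P$ means that there is some $\beta < \alpha$ with $\pi \in \pos{g_\beta}$ and $\pi' \nin \nodePos{g_\iota}{n_\iota}$ for all $\pi' < \pi$ and $\beta \le \iota < \alpha$; and $l(\pi) = \bot$ unless there is some $\beta < \alpha$ with $\pi \nin \nodePos{g_\iota}{n_\iota}$ for all $\beta \le \iota < \alpha$, in which case $l(\pi) = g_\beta(\pi)$. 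The workhorse is the ``in particular'' clause of Lemma~\ref{lem:strConvRes}: $g(\pi) = g_\beta(\pi)$ whenever $\pi \in \pos{g_\beta}$ and $\pi' \nin \nodePos{g_\iota}{n_\iota}$ for all $\pi' \le \pi$ and $\beta \le \iota < \alpha$.

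Part~(i) is then immediate: if $\pi$ is volatile and $\pi \in \pos{g} = P$, then for every $\beta < \alpha$ some $\beta \le \iota < \alpha$ has $\pi \in \nodePos{g_\iota}{n_\iota}$, which is exactly the negation of the side condition of the first case in the definition of $l$; hence $l(\pi) = \bot$, i.e.\ $g(\pi) = \bot$. For the two ``if'' directions of~(ii): in case~(b) we are given $\beta < \alpha$ with $g_\beta(\pi) = \bot$ (so $\pi \in \pos{g_\beta}$) and no prefix $\pi' \le \pi$ a redex position from $\beta$ on, whence the ``in particular'' clause yields $g(\pi) = g_\beta(\pi) = \bot$. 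In case~(a), $\pi$ is outermost-volatile; as $\pi$ has only finitely many proper prefixes and none of them is volatile, a common index $\beta_0 < \alpha$ satisfies $\pi' \nin \nodePos{g_\iota}{n_\iota}$ for all $\pi' < \pi$ and $\beta_0 \le \iota < \alpha$, and volatility of $\pi$ gives some $\gamma \ge \beta_0$ with $\pi \in \nodePos{g_\gamma}{n_\gamma} \subseteq \pos{g_\gamma}$; thus $\pi \in P = \pos{g}$, and part~(i), which applies since $\pi$ is in particular volatile, gives $g(\pi) = \bot$.

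For the ``only if'' direction of~(ii), suppose $g(\pi) = \bot$, hence $\pi \in P$ (fix a witness $\beta_0$) and $l(\pi) = \bot$. If $\pi$ is volatile, the $\beta_0$-condition forces every proper prefix of $\pi$ to be non-volatile, so $\pi$ is outermost-volatile --- case~(a). If $\pi$ is not volatile, choose $\beta_1$ with $\pi \nin \nodePos{g_\iota}{n_\iota}$ for all $\beta_1 \le \iota < \alpha$ and put $\beta = \max\set{\beta_0,\beta_1}$, so that no prefix $\pi' \le \pi$ is a redex position from $\beta$ on. The transfinite induction along $S$ already carried out in the proof of Lemma~\ref{lem:strConvRes} (Lemmas~\ref{lem:stepLocTrunc} and~\ref{lem:locTrunc} at successor steps, Corollary~\ref{cor:lebot1Liminf} at limits) shows that $\pi \in \pos{g_\iota}$ for all $\beta_0 \le \iota < \alpha$, in particular $\pi \in \pos{g_\beta}$; the ``in particular'' clause then gives $g_\beta(\pi) = g(\pi) = \bot$, i.e.\ case~(b). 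Part~(iii) follows at once: the ``if'' direction is case~(a) of~(ii), and for ``only if'' we get case~(a) or~(b) from~(ii), but~(b) is impossible when every $g_\iota$ is total, since then $g_\beta(\pi) \neq \bot$; hence $\pi$ is outermost-volatile.

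The only point requiring care is the position-persistence claim used in the last case --- that membership of $\pi$ in $P$ propagates $\pi$ into $\pos{g_\iota}$ from the witness index onward. This is not new: it is precisely the content of the ``$P \subseteq \pos{g}$'' part of the proof of Lemma~\ref{lem:strConvRes}, so it can be cited rather than redone.
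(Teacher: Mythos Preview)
Your proof is correct and follows essentially the same route as the paper's: all three parts are read off from the labelled quotient tree characterisation in Lemma~\ref{lem:strConvRes}, with the same case analysis for (\ref{item:volBot2}) and the same observation that (\ref{item:volBot2b}) is vacuous under totality for (\ref{item:volBot3}). You are in fact slightly more careful than the paper at one point: when applying the ``in particular'' clause of Lemma~\ref{lem:strConvRes} in the non-volatile branch of the ``only if'' direction, you explicitly justify $\pi \in \pos{g_\beta}$ via the position-persistence argument hidden in the proof of Lemma~\ref{lem:strConvRes}, whereas the paper simply invokes the lemma without isolating this hypothesis.
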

\begin{proof}
  (\ref{item:volBot1}) follows immediately from
  Lemma~\ref{lem:strConvRes}.

  (\ref{item:volBot2}) At first consider the ``only if'' direction:
  Suppose that $g(\pi) = \bot$. We will show that
  (\ref{item:volBot2b}) holds whenever (\ref{item:volBot2a}) does not
  hold. To this end, suppose that $\pi$ is not outermost-volatile in
  $S$. Since $g(\pi) = \bot$, we know that $g(\pi') \in \Sigma$ for
  all $\pi' < \pi$. By Clause (\ref{item:volBot1}), this implies that
  no prefix of $\pi$ is volatile. Consequently, $\pi$ itself is not
  volatile in $S$ either as it would be outermost-volatile
  otherwise. Hence, no prefix of $\pi$ is volatile in $S$, i.e.\ there
  is some $\beta < \alpha$ such that $\pi' \nin
  \nodePos{g_\iota}{n_\iota}$ for all $\pi' \le \pi, \beta\le \iota<
  \alpha$. Additionally, by Lemma~\ref{lem:strConvRes}, we obtain that
  $g_\beta(\pi) = g(\pi) = \bot$. That is, (\ref{item:volBot2b})
  holds.

  For the ``if'' direction, we show that both (\ref{item:volBot2a})
  and (\ref{item:volBot2b}) independently imply that $g(\pi) = \bot$:
  The implication from (\ref{item:volBot2b}) follows immediately from
  Lemma~\ref{lem:strConvRes}. For the implication from
  (\ref{item:volBot2a}), let $\pi$ be outermost-volatile in $S$. Since
  no proper prefix of $\pi$ is volatile in $S$, we find some $\beta <
  \alpha$ such that $\pi' \nin \nodePos{g_\iota}{n_\iota}$ for all
  $\pi' < \pi, \beta \le \iota < \alpha$. Since $\pi$ itself is
  volatile, there is some $\beta \le \gamma < \alpha$ such that $\pi
  \in \pos{g_\gamma}$. As we have, in particular, that $\pi' \nin
  \nodePos{g_\iota}{n_\iota}$ for all $\pi' < \pi, \gamma \le \iota <
  \alpha$, we have, by Lemma~\ref{lem:strConvRes}, that $\pi \in
  \pos{g}$. Consequently, according to Clause~(\ref{item:volBot1}), we
  have that $g(\pi) = \bot$.

  (\ref{item:volBot3}) is a special case of (\ref{item:volBot2}): If
  each $g_\iota$ is total, then (\ref{item:volBot2b}) cannot be true.
\end{proof}

With this in mind, we can characterise total reductions as exactly
those that lack volatile positions:
\begin{lemma}[total reductions]
  \label{lem:totalRed}
  Let $\calR$ be a GRS, $g$ a total term graph in $\calR$, and
  $S\fcolon g \pato[\calR] h$. $S\fcolon g \pato[\calR] h$ is total
  iff no prefix of $S$ has a volatile position.
\end{lemma}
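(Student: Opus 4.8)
The plan is to derive everything from Lemma~\ref{lem:volBot}, specifically part~(\ref{item:volBot3}): for a strongly $\prs$-convergent open reduction all of whose term graphs are total, the $\bot$-labelled positions of the limit are exactly the outermost-volatile positions. Before doing so I would fix the reading of the statement: since volatility is only defined for \emph{open} reductions, ``prefix of $S$'' is to be understood as ranging over the open prefixes $\prefix{S}{\lambda}$, i.e.\ those of length $\lambda \le \len{S}$ with $\lambda$ a limit ordinal (this includes $S$ itself precisely when $S$ is open). Each such $\prefix{S}{\lambda}$ is strongly $\prs$-continuous, hence by Proposition~\ref{prop:strPContConv} strongly $\prs$-convergent, and it converges to $g_\lambda$ for $\lambda < \len{S}$ and to $h$ for $\lambda = \len{S}$. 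I would also record one elementary fact: a single reduction step in $\calR_\bot$ cannot create a $\bot$, because the rules of $\calR_\bot$ are just the rules of $\calR$ and involve no $\bot$; so the result of contracting a redex in a total term graph is again total.

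For the ``if'' direction, assume $g = g_0$ is total and no open prefix of $S$ has a volatile position. I would show by transfinite induction that every term graph of $S$ is total: the base case is the hypothesis on $g_0$, the successor case is the elementary fact above, and at a limit stage $\lambda$ I apply Lemma~\ref{lem:volBot}(\ref{item:volBot3}) to the open prefix $\prefix{S}{\lambda}$ --- legitimate because the induction hypothesis has already made all earlier term graphs total --- concluding that $g_\lambda$ has no $\bot$ since $\prefix{S}{\lambda}$ has no (outermost-)volatile position. A final application of Lemma~\ref{lem:volBot}(\ref{item:volBot3}) to $S$ itself (or, when $S$ is closed, the observation $h = g_{\len{S}}$) shows $h$ is total, so $S$ is total.

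For the ``only if'' direction I would argue by contraposition: suppose some open prefix $\prefix{S}{\lambda}$ has a volatile position $\pi$. Since $\pi$ has only finitely many prefixes and they are linearly ordered by the prefix relation, the volatile prefixes of $\pi$ have a least element $\pi^*$, which is outermost-volatile in $\prefix{S}{\lambda}$. If $S$ were total, all term graphs of $\prefix{S}{\lambda}$ would be total, so Lemma~\ref{lem:volBot}(\ref{item:volBot3}) would give $g_\lambda(\pi^*) = \bot$ (resp.\ $h(\pi^*) = \bot$ if $\lambda = \len{S}$), contradicting the totality of $g_\lambda$ (resp.\ $h$). Hence $S$ is not total.

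I do not expect a genuinely hard step here: the argument is essentially a repackaging of Lemma~\ref{lem:volBot}(\ref{item:volBot3}). The two places that call for care are making precise the scope of ``prefix'' and ``volatile'' (volatility is an open-reduction notion, and a closed prefix exhibits no volatile position in the intended sense), and the bookkeeping of the transfinite induction in the ``if'' direction, where the limit step must be taken only after totality of all earlier term graphs is in hand. Extracting an outermost-volatile position from a volatile one is immediate from finiteness of the set of prefixes.
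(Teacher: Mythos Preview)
Your proposal is correct and follows essentially the same approach as the paper: both directions hinge on Lemma~\ref{lem:volBot}, and the ``if'' direction is a transfinite induction whose successor case uses that single steps preserve totality and whose limit case invokes Lemma~\ref{lem:volBot}. Your write-up is in fact more careful than the paper's, which leaves implicit the restriction to open prefixes, the reason why single steps preserve totality, and the passage from a volatile position to an outermost-volatile one.
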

\begin{proof}
  The ``only if'' direction follows straightforwardly from
  Lemma~\ref{lem:volBot}.

  We prove the ``if'' direction by induction on the length of $S$. If
  $\len{S} = 0$, then the totality of $S$ follows from the assumption
  of $g$ being total. If $\len{S}$ is a successor ordinal, then the
  totality of $S$ follows from the induction hypothesis since single
  reduction steps preserve totality. If $\len{S}$ is a limit ordinal,
  then the totality of $S$ follows from the induction hypothesis using
  Lemma~\ref{lem:volBot}.
\end{proof}

Next we want to compare strong $\mrs$- and $\prs$-convergence with the
ultimate goal of establishing the same relation between them as for
term rewriting (cf.\ Theorem~\ref{thr:strongExt}).

\begin{definition}[minimal positions]
  Let $g \in \itgraphs$ and $n \in N^g$. A position $\pi \in
  \nodePos{g}{n}$ is called minimal if no proper prefix $\pi' < \pi$
  is in $\nodePos{g}{n}$. The set of all minimal positions of $n$ in
  $g$ is denoted $\nodePosMin{g}{n}$.
\end{definition}

Minimal positions have the nice property that they are not affected by
term graph reductions:
\begin{lemma}
  \label{lem:stepDepth}
  Given a term graph reduction step $g \to[n,\rho,n'] h$, we have
  $\nodePosMin{g}{n} = \nodePosMin{h}{n'}$.
\end{lemma}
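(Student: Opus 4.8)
The plan is to reduce the statement to the isomorphism between reduction contexts already proved in Lemma~\ref{lem:stepLocTrunc}. The pivotal observation will be that the set of minimal positions of $n$ in $g$ is exactly the set of positions of $n$ in the local truncation $\truncl{g}{n}$, and likewise $\nodePosMin{h}{n'}$ coincides with the positions of $n'$ in $\truncl{h}{n'}$. Once this is in place, the lemma follows at once, since $\truncl{g}{n}$ and $\truncl{h}{n'}$ are isomorphic via a map sending $n$ to $n'$, and isomorphisms preserve the position set of every node.

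First I would establish the identity $\nodePos{\truncl{g}{n}}{n} = \nodePosMin{g}{n}$. By Lemma~\ref{lem:locTruncNodes}, every $\pi \in \pos{\truncl{g}{n}}$ satisfies $\nodeAtPos{\truncl{g}{n}}{\pi} = \nodeAtPos{g}{\pi}$, so such a $\pi$ is a position of $n$ in $\truncl{g}{n}$ iff it is a position of $n$ in $g$. Feeding this into the characterisation $\pos{\truncl{g}{n}} = \setcom{\pi \in \pos{g}}{\forall \pi' < \pi\colon \pi' \nin \nodePos{g}{n}}$ from Lemma~\ref{lem:locTrunc}, one sees that $\nodePos{\truncl{g}{n}}{n}$ is precisely the set of positions of $n$ in $g$ none of whose proper prefixes is a position of $n$, i.e.\ $\nodePosMin{g}{n}$. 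The same argument applied to $h$ and $n'$ gives $\nodePos{\truncl{h}{n'}}{n'} = \nodePosMin{h}{n'}$. Then I would invoke Lemma~\ref{lem:stepLocTrunc}, which provides an isomorphism $\phi\fcolon \truncl{g}{n} \to \truncl{h}{n'}$ with $\phi(n) = n'$; by the position-preservation part of Corollary~\ref{cor:isomOcc}~(\ref{item:isomOcc1}) we get $\nodePos{\truncl{h}{n'}}{n'} = \nodePos{\truncl{h}{n'}}{\phi(n)} = \nodePos{\truncl{g}{n}}{n}$. Chaining the three equalities yields $\nodePosMin{g}{n} = \nodePos{\truncl{g}{n}}{n} = \nodePos{\truncl{h}{n'}}{n'} = \nodePosMin{h}{n'}$, as desired.

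I do not anticipate a real obstacle, since the substantive work was done in Lemma~\ref{lem:stepLocTrunc}. The only point needing a line of care is that $\nodePos{\truncl{g}{n}}{n}$ be meaningful at all, i.e.\ that $n \in N^{\truncl{g}{n}}$ (and dually $n' \in N^{\truncl{h}{n'}}$); this is immediate because a shortest position of $n$ in $g$ has no proper prefix that is a position of $n$, hence survives in $\truncl{g}{n}$ by Lemma~\ref{lem:locTrunc}. In particular $\nodePosMin{g}{n}$ and $\nodePosMin{h}{n'}$ are always non-empty, so no degenerate case arises.
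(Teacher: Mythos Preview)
Your proof is correct and follows essentially the same approach as the paper: both arguments route through Lemma~\ref{lem:locTrunc} and Lemma~\ref{lem:locTruncNodes} to identify $\nodePosMin{g}{n}$ with $\nodePos{\truncl{g}{n}}{n}$, and then transport along the isomorphism of Lemma~\ref{lem:stepLocTrunc} via Corollary~\ref{cor:isomOcc}. The only cosmetic difference is that the paper traces one inclusion elementwise and appeals to symmetry for the other, whereas you first isolate the identity $\nodePos{\truncl{g}{n}}{n} = \nodePosMin{g}{n}$ and then chain three equalities.
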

\begin{proof}
  We will show that $\nodePosMin{g}{n} \subseteq
  \nodePosMin{h}{n'}$. The converse inclusion is symmetric. Let $\pi
  \in \nodePosMin{g}{n}$. Hence, $\pi' \nin \nodePos{g}{n}$ for all
  $\pi' < \pi$ which, by Lemma~\ref{lem:locTrunc}, implies that $\pi
  \in \pos{\truncl{g}{n}}$. According to
  Lemma~\ref{lem:locTruncNodes}, $\pi \in
  \nodePos{\truncl{g}{n}}{n}$. Since, by Lemma~\ref{lem:stepLocTrunc},
  there is an isomorphism $\phi\fcolon\truncl{g}{n} \homto
  \truncl{h}{n'}$ with $\phi(n) = n'$, we obtain, by
  Corollary~\ref{cor:isomOcc}, that $\pi \in
  \nodePos{\truncl{h}{n'}}{n'}$. By Lemma~\ref{lem:locTruncNodes},
  this implies $\pi \in \nodePos{h}{n'}$. Since $\pi \in
  \pos{\truncl{h}{n'}}$, we know, by Lemma~\ref{lem:locTrunc}, that
  $\pi' \nin\nodePos{h}{n'}$ for all $\pi' < \pi$. Combined, this
  means that $\pi \in \nodePosMin{h}{n'}$.
\end{proof}

In order to compare strong $\mrs$- and $\prs$-convergence, we consider
positions bounded by a certain depth.
\begin{definition}[bounded positions]
  Let $g \in \itgraphs$ and $d \in \nats$. We write $\posBound{d}{g}$
  for the set $\setcom{\pi \in \pos{g}}{\len{\pi} \le d}$ of
  positions of length at most $\pi$.
\end{definition}

Local truncations do not change positions bounded by the same depth or
above:
\begin{lemma}
  \label{lem:locTruncDepth}
  Let $g \in \iptgraphs$, $n \in N^G$ and $d \le \depth{g}{n}$. Then
  $\posBound{d}{\truncl{g}{n}} = \posBound{d}{g}$.
\end{lemma}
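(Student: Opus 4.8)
The plan is to prove the two inclusions separately, relying on the characterisation of the positions of the local truncation provided by Lemma~\ref{lem:locTrunc}.

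First I would note that the inclusion $\posBound{d}{\truncl{g}{n}} \subseteq \posBound{d}{g}$ is immediate: by Lemma~\ref{lem:locTruncNodes} (equivalently, by the description of $P$ in Lemma~\ref{lem:locTrunc}) we have $\pos{\truncl{g}{n}} \subseteq \pos{g}$, and the length bound ``$\len{\pi} \le d$'' is literally the same condition on both sides. So every position of length at most $d$ in $\truncl{g}{n}$ is such a position in $g$.

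For the converse inclusion, take $\pi \in \posBound{d}{g}$, i.e.\ $\pi \in \pos{g}$ with $\len{\pi} \le d$, and show $\pi \in \pos{\truncl{g}{n}}$ using the characterisation $\pos{\truncl{g}{n}} = \setcom{\pi \in \pos{g}}{\forall \pi' < \pi\colon \pi' \nin \nodePos{g}{n}}$ from Lemma~\ref{lem:locTrunc}. Suppose, towards a contradiction, that some proper prefix $\pi' < \pi$ satisfies $\pi' \in \nodePos{g}{n}$. Then $\len{\pi'} < \len{\pi} \le d \le \depth{g}{n}$, which contradicts $\depth{g}{n} = \min\setcom{\len{\pi''}}{\pi'' \in \nodePos{g}{n}} \le \len{\pi'}$. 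Hence no proper prefix of $\pi$ is a position of $n$ in $g$, so $\pi \in \pos{\truncl{g}{n}}$; since $\len{\pi} \le d$ still holds, $\pi \in \posBound{d}{\truncl{g}{n}}$.

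There is no real obstacle here; the only point worth keeping in mind is that a local truncation can only delete positions, never create or relocate them, which is precisely what Lemma~\ref{lem:locTrunc}/Lemma~\ref{lem:locTruncNodes} guarantee, so both inclusions reduce to the elementary length argument about $\depth{g}{n}$ above.
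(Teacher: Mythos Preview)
Your proof is correct and essentially identical to the paper's: both directions use Lemma~\ref{lem:locTrunc}, and for the nontrivial inclusion both argue that any proper prefix $\pi'$ of $\pi$ has $\len{\pi'} < \len{\pi} \le d \le \depth{g}{n}$ and hence cannot be a position of $n$. The only cosmetic difference is that you phrase this as a contradiction while the paper states it directly.
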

\begin{proof}
  $\posBound{d}{\truncl{g}{n}} \subseteq \posBound{d}{g}$ follows from
  Lemma~\ref{lem:locTrunc}. For the converse inclusion, assume some
  $\pi \in \posBound{d}{g}$. Since $\len\pi \le d \le \depth{g}{n}$,
  we know for each $\pi' < \pi$ that $\len{\pi'}< \depth{g}{n}$ and
  thus $\pi' \nin \nodePos{g}{n}$ . By Lemma~\ref{lem:locTrunc}, this
  implies that $\pi$ is in $\pos{\truncl{g}{n}}$ and thus also in
  $\posBound{d}{\truncl{g}{n}}$
\end{proof}

Additionally, reductions that only contract redexes at a depth $d$ or
below do not affect the positions bounded by $d$.
\begin{lemma}
  \label{lem:redBound}
  Let $S = (g_\iota \to[n_\iota] g_{\iota+1})_{\iota<\alpha}$ be a
  strongly $\prs$-convergent reduction in a GRS and $d \in \nats$ such
  that $\depth{g_\iota}{n_\iota} \ge d$ for all $\iota < \alpha$. Then
  $\posBound{d}{g_0} = \posBound{d}{g_\iota}$ for all $\iota \le
  \alpha$.
\end{lemma}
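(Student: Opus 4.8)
The plan is to prove the statement by transfinite induction on $\iota \le \alpha$, establishing $\posBound{d}{g_0} = \posBound{d}{g_\iota}$ in each case. The base case $\iota = 0$ is trivial, and the whole point of the lemma is already visible there: a position of length at most $d$ can only be affected by a reduction step applied at a node of depth at most $d$, and by hypothesis there are none.

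For the successor case $\iota = \kappa + 1$, write the step as $g_\kappa \to[n_\kappa,\rho,n'_\kappa] g_{\kappa+1}$. First I would invoke Lemma~\ref{lem:stepLocTrunc} to obtain an isomorphism $\phi\fcolon \truncl{g_\kappa}{n_\kappa} \to \truncl{g_{\kappa+1}}{n'_\kappa}$ with $\phi(n_\kappa) = n'_\kappa$; by Corollary~\ref{cor:isomOcc} this gives $\pos{\truncl{g_\kappa}{n_\kappa}} = \pos{\truncl{g_{\kappa+1}}{n'_\kappa}}$, and hence the two local truncations agree on positions of length at most $d$. Since $\depth{g_\kappa}{n_\kappa} \ge d$, Lemma~\ref{lem:locTruncDepth} yields $\posBound{d}{\truncl{g_\kappa}{n_\kappa}} = \posBound{d}{g_\kappa}$; and since Lemma~\ref{lem:stepDepth} gives $\nodePosMin{g_\kappa}{n_\kappa} = \nodePosMin{g_{\kappa+1}}{n'_\kappa}$, hence $\depth{g_{\kappa+1}}{n'_\kappa} = \depth{g_\kappa}{n_\kappa} \ge d$, the same lemma gives $\posBound{d}{\truncl{g_{\kappa+1}}{n'_\kappa}} = \posBound{d}{g_{\kappa+1}}$. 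Chaining these three equalities with the induction hypothesis $\posBound{d}{g_0} = \posBound{d}{g_\kappa}$ closes this case.

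For the limit case $\iota = \lambda$ (which also covers $\lambda = \alpha$ when $\alpha$ is a limit ordinal), note that $\prefix{S}{\lambda}$ is an open reduction that strongly $\prs$-converges to $g_\lambda$ (by strong $\prs$-continuity of $S$, resp.\ by $\prs$-convergence when $\lambda = \alpha$), so its limit is described by Lemma~\ref{lem:strConvRes}. The inclusion $\posBound{d}{g_\lambda} \subseteq \posBound{d}{g_0}$ is easy: if $\pi \in \posBound{d}{g_\lambda}$ then, by Lemma~\ref{lem:strConvRes}, $\pi \in \pos{g_\beta}$ for some $\beta < \lambda$, whence $\pi \in \posBound{d}{g_\beta} = \posBound{d}{g_0}$ by the induction hypothesis. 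For the converse, take $\pi \in \posBound{d}{g_0}$; by the induction hypothesis $\pi \in \pos{g_\beta}$ for every $\beta < \lambda$, in particular $\pi \in \pos{g_0}$. The key observation is that for every proper prefix $\pi' < \pi$ we have $\len{\pi'} < \len{\pi} \le d \le \depth{g_\iota}{n_\iota}$ for all $\iota < \lambda$, so $\len{\pi'} < \depth{g_\iota}{n_\iota}$ forces $\pi' \nin \nodePos{g_\iota}{n_\iota}$. Thus the side condition defining $P$ in Lemma~\ref{lem:strConvRes} is vacuously satisfied with $\beta = 0$, giving $\pi \in \pos{g_\lambda}$ and hence $\pi \in \posBound{d}{g_\lambda}$.

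The argument is essentially routine. The only steps that need care are the bookkeeping of the depth condition through a reduction step in the successor case (which is precisely what Lemma~\ref{lem:stepDepth} provides, via the identity of minimal position sets and hence of depths of redex and reduct), and the recognition that redexes at depth at least $d$ can never be a proper prefix of a position of length at most $d$; once that is noticed, the limit case follows directly from the explicit description of the limit in Lemma~\ref{lem:strConvRes}.
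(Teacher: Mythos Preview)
Your proposal is correct and follows essentially the same approach as the paper: transfinite induction, with the successor case handled via Lemma~\ref{lem:stepLocTrunc}, Lemma~\ref{lem:stepDepth}, and Lemma~\ref{lem:locTruncDepth}, and the limit case via the explicit description of the limit in Lemma~\ref{lem:strConvRes}. The only cosmetic difference is that the paper inducts on the length $\alpha$ of the reduction rather than on $\iota \le \alpha$, and it does not explicitly invoke Corollary~\ref{cor:isomOcc} (taking for granted that isomorphic term graphs share the same positions).
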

\begin{proof}
  We prove the statement by an induction on $\alpha$. The case $\alpha
  = 0$ is trivial.

  For $\alpha = \beta + 1$, let $g_\beta \to[n_\beta,n'_\beta]
  g_\alpha$ be the $\beta$-th step of $S$. Due to the induction
  hypothesis, it suffices to show that $\posBound{d}{g_0} =
  \posBound{d}{g_\alpha}$. By Lemma~\ref{lem:stepLocTrunc},
  $\truncl{g_\beta}{n_\beta} \isom \truncl{g_\alpha}{n'_\beta}$, and
  by Lemma~\ref{lem:stepDepth}, $\depth{g_\alpha}{n'_\beta} =
  \depth{g_\beta}{n_\beta} \ge d$. Hence, according to
  Lemma~\ref{lem:locTruncDepth}, we have both
  $\posBound{d}{\truncl{g_\beta}{n_\beta}} = \posBound{d}{g_\beta}$
  and $\posBound{d}{\truncl{g_\alpha}{n'_\beta}} =
  \posBound{d}{g_\alpha}$. We thus obtain the desired equation:
  \[
  \posBound{d}{g_0} \stackrel{\text{ind.\ hyp.}}=
  \posBound{d}{g_\beta}
  \stackrel{\text{Lem.~\ref{lem:locTruncDepth}}}=
  \posBound{d}{\truncl{g_\beta}{n_\beta}}
  \stackrel{\text{Lem.~\ref{lem:stepLocTrunc}}}=
  \posBound{d}{\truncl{g_\alpha}{n'_\beta}}
  \stackrel{\text{Lem.~\ref{lem:locTruncDepth}}}=
  \posBound{d}{g_\alpha}
  \]

  Lastly, let $\alpha$ be a limit ordinal. By the induction
  hypothesis, we only need to show $\posBound{d}{g_0} =
  \posBound{d}{g_\alpha}$. At first assume $\pi \in
  \posBound{d}{g_\alpha}$. Hence, by Lemma~\ref{lem:strConvRes}, there
  is some $\beta < \alpha$ such that $\pi \in
  \pos{g_\beta}$. Therefore, $\pi$ is in $\posBound{d}{g_\beta}$ and,
  by induction hypothesis, also in $\posBound{d}{g_0}$. Conversely,
  assume that $\pi \in \posBound{d}{g_0}$. Because
  $\depth{g_\iota}{n_\iota} \ge d$ for all $\iota < \alpha$, we have
  that $\pi' \nin \nodePos{g_\iota}{n_\iota}$ for all $\pi' <\pi$ and
  $\iota < \alpha$. According to Lemma~\ref{lem:strConvRes}, this
  implies that $\pi$ is in $\pos{g_\alpha}$ and thus also in
  $\posBound{d}{g_\alpha}$.
\end{proof}

The following two lemmas form the central properties that link strong
$\mrs$- and $\prs$-convergence:
\begin{lemma}
  \label{lem:volDepthInf}
  Let $S = (g_\iota \to[n_\iota] g_{\iota+1})_{\iota<\alpha}$ be a
  strongly $\prs$-convergent open reduction in a GRS. If $S$ has no
  volatile positions then $(\depth{g_\iota}{n_\iota})_{\iota<\alpha}$
  tends to infinity.
\end{lemma}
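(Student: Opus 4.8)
The plan is to argue by contradiction: assume $S$ has no volatile positions but $(\depth{g_\iota}{n_\iota})_{\iota<\alpha}$ does not tend to infinity, i.e.\ the set $D$ of natural numbers $d$ such that $\depth{g_\iota}{n_\iota}\le d$ holds for cofinally many $\iota<\alpha$ is non-empty. Let $d=\min D$. The degenerate case $d=0$ is immediate: then cofinally many $\iota$ satisfy $n_\iota = r^{g_\iota}$, so $\emptyseq\in\nodePos{g_\iota}{n_\iota}$ for cofinally many $\iota$, which makes $\emptyseq$ volatile in $S$ -- a contradiction. So the interesting case is $d\ge 1$.

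For $d\ge 1$, minimality of $d$ gives that $d-1\notin D$, so there is some $\beta_0<\alpha$ with $\depth{g_\iota}{n_\iota}\ge d$ for all $\beta_0\le\iota<\alpha$; at the same time, since $d\in D$, there are cofinally many $\iota\ge\beta_0$ with $\depth{g_\iota}{n_\iota}=d$. The next step is to pass to the suffix $S'$ of $S$ starting at index $\beta_0$: using the invariance of the limit inferior under suffixes (Proposition~\ref{prop:liminfSuffix}) one checks that $S'$ is still strongly $\prs$-continuous, hence strongly $\prs$-convergent by Proposition~\ref{prop:strPContConv}, and every reduction step of $S'$ is now at depth $\ge d$. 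Lemma~\ref{lem:redBound} then applies and yields $\posBound{d}{g_{\beta_0}} = \posBound{d}{g_\iota}$ for all $\iota$ with $\beta_0\le\iota<\alpha$; call this common set $P$. A short separate observation -- a routine induction on $d$, using that each node has only finitely many successors -- shows that $\posBound{d}{h}$ is finite for every term graph $h$, so $P$ is finite.

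Finally, for each of the cofinally many $\iota\ge\beta_0$ with $\depth{g_\iota}{n_\iota}=d$ choose a position $\pi_\iota\in\nodePos{g_\iota}{n_\iota}$ of length exactly $d$; then $\pi_\iota\in\posBound{d}{g_\iota}=P$. Since $P$ is finite and these indices form a cofinal subset of the limit ordinal $\alpha$, partitioning this subset according to the value $\pi_\iota\in P$ produces a single position $\pi^\ast\in P$ with $\pi^\ast=\pi_\iota$ for cofinally many $\iota$ (one block of a finite partition of a cofinal set of a limit ordinal is again cofinal). Hence $\pi^\ast\in\nodePos{g_\iota}{n_\iota}$ for cofinally many $\iota<\alpha$, i.e.\ $\pi^\ast$ is volatile in $S$, contradicting the hypothesis. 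This contradiction establishes that $(\depth{g_\iota}{n_\iota})_{\iota<\alpha}$ tends to infinity.

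The only genuinely delicate points are the two finiteness/cofinality facts feeding the pigeonhole step (finiteness of $\posBound{d}{\cdot}$ and "a finite partition of a cofinal set of a limit ordinal has a cofinal block"), together with the verification that a suffix of a strongly $\prs$-convergent reduction is again strongly $\prs$-convergent so that Lemma~\ref{lem:redBound} can legitimately be invoked on $S'$; the choice of $d$ as the \emph{minimal} element of $D$ is what makes the hypothesis of Lemma~\ref{lem:redBound} available on that suffix, and I expect that to be the key idea of the argument.
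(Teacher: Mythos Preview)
Your proof is correct and follows essentially the same approach as the paper's: take the minimal $d$ for which the depth is cofinally $\le d$, use minimality to find a suffix where all depths are $\ge d$, apply Lemma~\ref{lem:redBound} to that suffix to stabilise $\posBound{d}{\cdot}$, and then pigeonhole on this finite set to produce a volatile position. You are simply more explicit than the paper about a few points it glosses over (the $d=0$ case, the finiteness of $\posBound{d}{\cdot}$, and the verification via Proposition~\ref{prop:liminfSuffix} that the suffix remains strongly $\prs$-convergent so that Lemma~\ref{lem:redBound} applies).
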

\begin{proof}
  We will prove the contraposition. To this end, assume that
  $(\depth{g_\iota}{n_\iota})_{\iota<\alpha}$ does not tend to
  infinity. That is, there is some $d \in \nats$ such that for each
  $\gamma < \alpha$ there is a $\gamma \le \iota < \alpha$ with
  $\depth{g_\iota}{n_\iota} \le d$. Let $d^*$ be the smallest such
  $d$. Hence, there is a $\beta < \alpha$ such that
  $\depth{g_\iota}{n_\iota} \ge d^*$ for all $\beta \le \iota <
  \alpha$. Thus we can apply Lemma~\ref{lem:redBound} to the suffix of
  $S$ starting from $\beta$ to obtain that $\posBound{d^*}{g_\beta} =
  \posBound{d^*}{g_\iota}$ for all $\beta \le \iota < \alpha$. Since
  we find for each $\gamma < \alpha$ some $\gamma \le \iota < \alpha$
  with $\depth{g_\iota}{n_\iota} \le d^*$, we know that for each
  $\gamma < \alpha$ there is a $\gamma \le \iota < \alpha$ and a $\pi
  \in \posBound{d^*}{g_\beta}$ with $\pi \in
  \nodePos{g_\iota}{n_\iota}$. Because $\posBound{d^*}{g_\beta}$ is
  finite, the infinite pigeon hole principle yields single
  $\pi^*\in\posBound{d^*}{g_\beta}$ such that for each $\gamma <
  \alpha$ there is a $\gamma \le \iota <\alpha$ with $\pi^*\in
  \nodePos{g_\iota}{n_\iota}$. That is, $\pi^*$ is volatile in $S$.
\end{proof}

\begin{lemma}
  \label{lem:strongPMConv}
  Let $S = (g_\iota \to[n_\iota] g_{\iota+1})_{\iota < \alpha}$ be an
  open reduction in a GRS strongly $\prs$-converging to $g$. If
  $(\depth{g_\iota}{n_\iota})_{\iota<\alpha}$ tends to infinity, then
  $g \isom \lim_{\iota \limto \alpha} g_\iota$.
\end{lemma}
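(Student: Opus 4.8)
The plan is to identify both the strong $\prs$-limit $g$ and the metric limit through their labelled quotient trees and to show that they coincide. By hypothesis $S$ is open and strongly $\prs$-converges to $g$, so Lemma~\ref{lem:strConvRes} applies and describes $g$ by the labelled quotient tree $(P,l,\sim)$ given there. The first observation is that the side conditions occurring in that description become vacuous once $(\depth{g_\iota}{n_\iota})_{\iota<\alpha}$ tends to infinity: if $\pi \in \bigcap_{\beta \le \iota < \alpha}\pos{g_\iota}$ for some $\beta$, then after enlarging $\beta$ so that additionally $\depth{g_\iota}{n_\iota} > \len{\pi}$ for all $\iota \ge \beta$, we automatically have $\pi' \nin \nodePos{g_\iota}{n_\iota}$ for every $\pi' \le \pi$ and $\beta\le\iota<\alpha$. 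Hence $P$, $\sim$ and $l$ collapse to the forms $\bigcup_{\beta}\bigcap_{\beta\le\iota<\alpha}\pos{g_\iota}$, $\bigcup_{\beta}\bigcap_{\beta\le\iota<\alpha}\sim_{g_\iota}$ and $l(\pi)=g_\beta(\pi)$ for eventually stable $\beta$ --- exactly the shape of the metric limit in Theorem~\ref{thr:smetricComplete}. It thus suffices to show that $(g_\iota)_{\iota<\alpha}$ converges to $g$ in $(\ictgraphs,\dda)$, since then $\lim_{\iota\limto\alpha}g_\iota = g$ by definition (and Theorem~\ref{thr:smetricComplete} even re-derives the same labelled quotient tree).

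\textbf{Reduction to truncations.}
By Lemma~\ref{lem:cauchyTruncFun}(ii) it is enough to prove: for every $d \in \nats$ there is some $\beta < \alpha$ with $\trunca{g}{d} \isom \trunca{g_\iota}{d}$ for all $\beta \le \iota < \alpha$. Fix $d$ and choose $\beta$ with $\depth{g_\iota}{n_\iota} > d$ for all $\iota \ge \beta$. The core of the argument is to show that all $g_\iota$ with $\iota \ge \beta$, and $g$ itself, agree within depth $d$, i.e.\ agree on the positions of length $\le d$, on the relation $\sim$ restricted to such positions, and on the labels at positions of length $< d$. Since the strict truncation $\trunca{\cdot}{d}$ is determined up to isomorphism by precisely this data (Lemma~\ref{lem:truncaOccRep}, cf.\ Corollary~\ref{cor:truncaOccRep}), this yields the required isomorphisms and hence the claim.

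\textbf{Agreement within depth $d$.}
I would prove this by transfinite induction on $\iota \in [\beta,\alpha)$, with $g_\beta$ as reference. For a successor step $g_\iota \to[n_\iota,\rho_\iota,n'_\iota] g_{\iota+1}$ with $\iota \ge \beta$: since $\depth{g_\iota}{n_\iota} > d$, no prefix of a position of length $\le d$ is a position of $n_\iota$, so by Lemmas~\ref{lem:locTruncNodes}, \ref{lem:locTruncSim} and~\ref{lem:locTrunc} the reduction context $c_\iota = \canon{\truncl{g_\iota}{n_\iota}}$ agrees with $g_\iota$ within depth $d$; by Lemma~\ref{lem:stepDepth}, $\depth{g_{\iota+1}}{n'_\iota} = \depth{g_\iota}{n_\iota} > d$, so likewise $\truncl{g_{\iota+1}}{n'_\iota}$ agrees with $g_{\iota+1}$ within depth $d$; and $c_\iota \isom \truncl{g_{\iota+1}}{n'_\iota}$ by Lemma~\ref{lem:stepLocTrunc}. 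Chaining these, $g_{\iota+1}$ agrees with $g_\iota$, hence with $g_\beta$, within depth $d$. For a limit ordinal $\lambda$ with $\beta < \lambda < \alpha$ we have $g_\lambda = \liminf_{\iota\limto\lambda} c_\iota$, and since by the induction hypothesis all $c_\iota$ with $\beta \le \iota < \lambda$ agree with $g_\beta$ within depth $d$, the explicit limit-inferior construction of Corollary~\ref{cor:lebot1Liminf}, restricted to positions of length $\le d$, shows that $g_\lambda$ agrees with $g_\beta$ within depth $d$. The same computation applied at $\alpha$ (using $g = \liminf_{\iota\limto\alpha}c_\iota$) shows $g$ agrees with $g_\beta$ within depth $d$, which completes the proof.

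\textbf{Main obstacle.}
The delicate point is the limit-ordinal case: one must check that the limit-inferior construction of Corollary~\ref{cor:lebot1Liminf}, when all reduction contexts beyond $\beta$ already coincide within depth $d$, genuinely reproduces that common structure for positions, sharing and labels up to depth $d$; in particular the off-by-one mismatch between ``positions of length $\le d$'' (which govern $\pos{}$ and $\sim$ of the truncation) and ``labels at positions of length $< d$'' must be handled, which is exactly why it is convenient to arrange $\depth{g_\iota}{n_\iota} > d$ rather than $\ge d$. Everything else --- the collapse of the side conditions in Lemma~\ref{lem:strConvRes}, the reduction-context lemmas, and the final appeal to Lemma~\ref{lem:cauchyTruncFun} --- is routine bookkeeping. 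An entirely equivalent alternative route would be to first show that $(g_\iota)_{\iota<\alpha}$ is Cauchy (using Lemma~\ref{lem:redBound} for the stabilisation of bounded positions together with the argument above for labels and sharing), then invoke completeness of $(\ictgraphs,\dda)$ from Theorem~\ref{thr:smetricComplete}, and finally match the two labelled quotient trees; this merely trades the direct convergence argument for an explicit use of completeness, at the cost of essentially the same within-depth-$d$ analysis.
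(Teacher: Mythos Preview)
Your approach is correct, but it is organised differently from the paper's. The paper does not perform a transfinite induction at all: it simply writes down the labelled quotient tree of $h := \lim_{\iota\to\alpha} g_\iota$ via the formulas of Theorem~\ref{thr:smetricComplete}, writes down that of $g$ via Lemma~\ref{lem:strConvRes} and Corollary~\ref{cor:lebot1Liminf}, and checks the three components (positions, $\sim$, labelling) coincide by invoking the depth hypothesis pointwise. In effect the paper's argument is your ``alternative route'' in its most compressed form --- it even skips the explicit Cauchy verification, which is harmless in context because in both uses of the lemma inside Theorem~\ref{thr:graphExt} the metric limit is already known to exist. Your transfinite-induction route has the merit of being self-contained (it establishes convergence of $(g_\iota)_{\iota<\alpha}$ to $g$ rather than presupposing the limit), at the cost of redoing by hand stabilisation facts that are otherwise packaged in Lemma~\ref{lem:redBound} and the explicit liminf formula.

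One small point worth tightening: your claim that ``the strict truncation $\trunca{\cdot}{d}$ is determined up to isomorphism by precisely this data'' is true, but it is not literally what Lemma~\ref{lem:truncaOccRep} or Corollary~\ref{cor:truncaOccRep} say --- the labelled quotient tree in Lemma~\ref{lem:truncaOccRep} may contain positions of length $>d$ through sharing. The cleanest justification goes back to Definition~\ref{def:trunca}: the nodes of $\trunca{g}{d}$ are exactly the nodes of $g$ at depth $\le d$, and those nodes together with their depths, labels, and successors from nodes at depth $<d$ are determined by the positions of length $\le d$, by $\sim_g$ restricted to them, and by the labels at positions of length $<d$.
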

\begin{proof}
  Let $h = \lim_{\iota \limto \alpha} g_\iota$ and let
  $(c_\iota)_{\iota<\alpha}$ be the reduction contexts of $S$. We will
  prove that $g \isom h$ by showing that their respective labelled
  quotient trees coincide.

  For the inclusion $\pos{g} \subseteq \pos{h}$, assume some $\pi \in
  \pos{g}$. According to Corollary~\ref{cor:lebot1Liminf}, there is
  some $\beta < \alpha$ such that $\pi \in \pos{c_\beta}$ and
  $c_\iota(\pi) = c_\beta(\pi)$ for all $\pi'<\pi$ and $\beta \le
  \iota < \alpha$. Thus, $\pi \in \pos{c_\iota}$ for all $\beta \le
  \iota < \alpha$. Since $c_\iota \isom \truncl{g_\iota}{n_\iota}$
  and, therefore, by Lemma~\ref{lem:locTrunc}, $\pos{c_\iota} =
  \pos{g_\iota}$, we have that $\pi \in \pos{g_\iota}$ for all $\beta
  \le \iota < \alpha$. This implies, by
  Theorem~\ref{thr:smetricComplete}, that $\pi \in \pos{h}$.

  For the converse inclusion $\pos{h}\subseteq\pos{g}$, assume some
  $\pi \in \pos{h}$. According to Theorem~\ref{thr:smetricComplete},
  there is some $\beta < \alpha$ such that $\pi \in \pos{g_\iota}$ for
  all $\beta \le \iota < \alpha$. Since $(\depth{g_\iota}{n_\iota})_{\iota<\alpha}$
  tends to infinity, we find some $\beta \le \gamma < \alpha$ such that
  $\depth{g_\iota}{n_\iota} \ge \len{\pi}$ for all $\gamma \le \iota
  < \alpha$, i.e.\ $\pi' \nin \nodePos{g_\iota}{n_\iota}$ for all
  $\pi'<\pi$. This means, by Lemma~\ref{lem:strConvRes},
  that $\pi \in \pos{g}$.

  By Lemma~\ref{lem:strConvRes} and Theorem~\ref{thr:smetricComplete},
  $\sim_g\ =\ \sim_h$ follows from the equality $\pos{g} = \pos{h}$.

  In order to show the equality $g(\cdot) = g(\cdot)$, assume some
  $\pi \in \pos{h}$. According to Theorem~\ref{thr:smetricComplete},
  there is some $\beta<\alpha$ such that $h(\pi) = g_{\iota}(\pi)$ for
  all $\beta \le \iota < \alpha$. Additionally, since
  $(\depth{g_\iota}{n_\iota})_{\iota<\alpha}$ tends to infinity, there is some
  $\beta\le\gamma<\alpha$ such that $\depth{g_\iota}{n_\iota} > \len\pi$ for
  all $\gamma \le \iota < \alpha$, i.e. $\pi \nin
  \nodePos{g_\iota}{n_\iota}$. Thus, by Lemma~\ref{lem:strConvRes},
  $g(\pi)=g_{\gamma}(\pi)$. Since $h(\pi) = g_{\gamma}(\pi)$,
  we can conclude that $g(\pi) = h(\pi)$.
\end{proof}

The following property, which relates strong $\mrs$-convergence and
-continuity, follows from the fact that our notion of strong
$\mrs$-convergence on term graphs instantiates the abstract model of
strong $\mrs$-convergence from our previous work~\cite{bahr10rta}:
\begin{lemma}
  \label{lem:mContConv}
  Let $S = (g_\iota \to[n_\iota] g_{\iota+1})_{\iota<\alpha}$ be an
  open strongly $\mrs$-continuous reduction in a GRS. If
  $(\depth{g_\iota}{n_\iota})_{\iota<\alpha}$ tends to infinity, then
  $S$ is strongly $\mrs$-convergent.
\end{lemma}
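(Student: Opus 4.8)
The plan is to exploit the fact that, for an \emph{open} reduction $S$ (so $\alpha$ is a limit ordinal), the only ingredient of strong $\mrs$-convergence that is not already among the hypotheses is the existence of $\lim_{\iota\limto\alpha} g_\iota$: $S$ is assumed strongly $\mrs$-continuous, and $(\depth{g_\iota}{n_\iota})_{\iota<\alpha}$ is assumed to tend to infinity, which is exactly the extra condition at $\alpha$ required for $S\fcolon g_0 \mato g$. Hence it suffices to produce a limit. Since $(\ictgraphs,\dda)$ is complete (Theorem~\ref{thr:smetricComplete}), I would show that $(g_\iota)_{\iota<\alpha}$ is Cauchy, and by Lemma~\ref{lem:cauchyTruncFun} this reduces to proving: for every $d\in\nats$ there is a $\beta<\alpha$ such that the strict truncations $\trunca{g_\iota}{d}$ are constant up to isomorphism for $\beta\le\iota<\alpha$.

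So fix $d\in\nats$ and choose $\beta<\alpha$ with $\depth{g_\iota}{n_\iota}\ge d$ for all $\beta\le\iota<\alpha$ (possible since the redex depths tend to infinity). I would then prove $\trunca{g_\iota}{d}\isom\trunca{g_\beta}{d}$ for all $\beta\le\iota<\alpha$ by transfinite induction on $\iota$. The base case $\iota=\beta$ is trivial. For the successor case, consider a step $g_{\iota'}\to[n_{\iota'},\rho,n'_{\iota'}]g_{\iota'+1}$ with $\depth{g_{\iota'}}{n_{\iota'}}\ge d$; the goal is $\trunca{g_{\iota'+1}}{d}\isom\trunca{g_{\iota'}}{d}$. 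For this I would first establish the auxiliary fact that $\trunca{\truncl{g}{n}}{d}\isom\trunca{g}{d}$ whenever $d\le\depth{g}{n}$: indeed $\truncl{g}{n}$ differs from $g$ only at $n$ and at the nodes reachable only through $n$, and the latter all sit at depth $>d$ (a shortest path to such a node passes through $n$), while $n$ itself sits at depth $\ge d$ both before and after the local truncation, so the strict truncation at $d$ collapses all the differences — Lemmas~\ref{lem:locTruncNodes}, \ref{lem:locTruncDepth} and \ref{lem:locTrunc} supply the required control over positions, labels and sharing. Combining this with Lemma~\ref{lem:stepLocTrunc} ($\truncl{g_{\iota'}}{n_{\iota'}}\isom\truncl{g_{\iota'+1}}{n'_{\iota'}}$) and Lemma~\ref{lem:stepDepth} (which gives $\depth{g_{\iota'+1}}{n'_{\iota'}}=\depth{g_{\iota'}}{n_{\iota'}}\ge d$, since minimal positions and hence depths are preserved by a step) yields the chain $\trunca{g_{\iota'}}{d}\isom\trunca{\truncl{g_{\iota'}}{n_{\iota'}}}{d}\isom\trunca{\truncl{g_{\iota'+1}}{n'_{\iota'}}}{d}\isom\trunca{g_{\iota'+1}}{d}$, and the induction hypothesis closes the successor case. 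For a limit $\iota$ with $\beta<\iota<\alpha$, strong $\mrs$-continuity gives $\lim_{\gamma\limto\iota}g_\gamma=g_\iota$, so by Lemma~\ref{lem:cauchyTruncFun} there is a $\gamma$ with $\beta\le\gamma<\iota$ and $\trunca{g_\iota}{d}\isom\trunca{g_\gamma}{d}$; combining with the induction hypothesis for $\gamma$ finishes this case.

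This shows $(g_\iota)_{\iota<\alpha}$ is Cauchy, hence converges to some $g\in\ictgraphs$, and therefore $S\fcolon g_0 \mato g$. (Alternatively, one could simply invoke the abstract metric model of \cite{bahr10rta} — the same one behind Proposition~\ref{prop:strConvDepth} and Theorem~\ref{thr:strongExt} — after checking that strong $\mrs$-convergence on term graphs instantiates it; but I would prefer the direct argument above, since it uses only facts already proved here.) The \textbf{main obstacle} is the auxiliary fact $\trunca{\truncl{g}{n}}{d}\isom\trunca{g}{d}$ for $d\le\depth{g}{n}$: one must verify that the edge redirections of a step at a deep redex, together with the sharing condition encoded in the labelled quotient tree of a strict truncation (Lemma~\ref{lem:truncaOccRep}), genuinely leave everything at depth $<d$ untouched — which is precisely where the depth bound on the redex, and its invariance under the step, are used.
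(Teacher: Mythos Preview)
Your argument is correct, but it takes a genuinely different route from the paper. The paper's proof is a one-line citation: the lemma is observed to be a special case of Proposition~5.5 from \cite{bahr10rta}, i.e.\ exactly the alternative you mention in parentheses at the end. You instead give a direct, self-contained proof that $(g_\iota)_{\iota<\alpha}$ is Cauchy in $(\ictgraphs,\dda)$, via a transfinite induction showing that strict truncations $\trunca{g_\iota}{d}$ stabilise once the redex depths exceed~$d$. The auxiliary fact $\trunca{\truncl{g}{n}}{d}\isom\trunca{g}{d}$ for $d\le\depth{g}{n}$ is correct and falls out of the labelled-quotient-tree characterisations in Lemmas~\ref{lem:truncaOccRep} and~\ref{lem:locTrunc}: every proper prefix of a position in $\pos{\trunca{g}{d}}$ hits only nodes of depth $<d$, hence never hits~$n$, so these positions lie in $\pos{\truncl{g}{n}}$ with unchanged labels and sharing, and conversely. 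Your use of Lemmas~\ref{lem:stepLocTrunc} and~\ref{lem:stepDepth} to bridge the two sides of a step, and of strong $\mrs$-continuity at inner limit ordinals, is also fine.

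What each approach buys: the paper's citation is of course much shorter and highlights that the statement is an instance of a general phenomenon for strong convergence in abstract metric models; your direct argument avoids the external dependency and makes the proof entirely internal to the paper, at the cost of establishing the extra auxiliary fact about the interaction of strict and local truncations (which is not stated elsewhere in the paper, though all the ingredients are present).
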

\begin{proof}
  This is a special case of Proposition~5.5 from \cite{bahr10rta}.
\end{proof}

Now, we have everything in place to prove that strong
$\prs$-convergence conservatively extends strong $\mrs$-convergence.
\begin{theorem}
  \label{thr:graphExt}
  Let $\calR$ be a GRS and $S$ a reduction in $\calR$. We then have
  that
  \[
  S\fcolon g \mato[\calR] h \qquad \text{iff} \qquad S\fcolon g
  \pato[\calR] h \text{ is total}.
  \]
\end{theorem}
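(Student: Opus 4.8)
The statement to prove is the graph analogue of Theorem~\ref{thr:strongExt}: a reduction $S$ in a GRS strongly $\mrs$-converges to $h$ iff $S$ strongly $\prs$-converges to $h$ as a total reduction. My plan is to prove the two directions separately, inducting on the length of $S$ to reduce everything to the open (limit) case, where the real content lies. The induction base and successor case are routine in both directions: for length $0$ there is nothing to show, and single reduction steps preserve totality and do not affect either mode of convergence at the successor step (here one invokes Proposition~\ref{prop:contConv}, whose graph-rewriting counterpart is implicit in the way continuity is defined as convergence of every proper prefix). So the crux is: assuming the equivalence holds for all proper prefixes, show it holds at a limit ordinal $\alpha = \len{S}$.

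\medskip\noindent\textbf{From strong $\mrs$-convergence to total strong $\prs$-convergence.} Suppose $S\fcolon g \mato[\calR] h$. By definition of strong $\mrs$-convergence, $(\depth{g_\iota}{n_\iota})_{\iota<\alpha}$ tends to infinity and $\lim_{\iota\limto\alpha} g_\iota = h$; by the induction hypothesis every proper prefix already strongly $\prs$-converges totally, so $S$ is strongly $\prs$-continuous and, by Proposition~\ref{prop:strPContConv}, strongly $\prs$-convergent, say to $g'$. Since redex depths tend to infinity, $S$ has no volatile positions (a volatile position would have to lie at redex depth infinitely often, contradicting the depths tending to infinity), so by Lemma~\ref{lem:totalRed} the $\prs$-convergent reduction $S$ is total. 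Finally, Lemma~\ref{lem:strongPMConv} gives $g' \isom \lim_{\iota\limto\alpha} g_\iota = h$, so $S\fcolon g \pato[\calR] h$ is total.

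\medskip\noindent\textbf{From total strong $\prs$-convergence to strong $\mrs$-convergence.} Suppose $S\fcolon g \pato[\calR] h$ is total. By the induction hypothesis, every proper prefix strongly $\mrs$-converges; in particular $S$ is strongly $\mrs$-continuous provided the redex depths tend to infinity at each limit ordinal below $\alpha$ — which again follows by applying the argument to each proper prefix. Since $S$ is total, Lemma~\ref{lem:totalRed} tells us no prefix of $S$ has a volatile position, and then Lemma~\ref{lem:volDepthInf} yields that $(\depth{g_\iota}{n_\iota})_{\iota<\alpha}$ tends to infinity. Combining strong $\mrs$-continuity with the depths tending to infinity, Lemma~\ref{lem:mContConv} gives that $S$ is strongly $\mrs$-convergent, say to $h'$, and by Lemma~\ref{lem:strongPMConv} (applied to the $\prs$-convergent $S$) we get $h = \liminf$-limit $\isom \lim_{\iota\limto\alpha} g_\iota = h'$. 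Hence $S\fcolon g \mato[\calR] h$.

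\medskip\noindent\textbf{Main obstacle.} The bookkeeping around the induction is the only delicate part: strong $\mrs$-/$\prs$-continuity are defined by a condition at \emph{every} limit ordinal $\lambda \le \alpha$ resp.\ $\lambda < \alpha$, so one must be careful to feed the induction hypothesis to every proper prefix to establish continuity before invoking the convergence lemmas, and to track that totality is inherited by all prefixes (Lemma~\ref{lem:totalRed} again). The genuinely substantive steps — the equivalence ``no volatile positions $\iff$ redex depths tend to infinity'' (Lemmas~\ref{lem:volDepthInf} and, for the converse implication in the $\mrs\Rightarrow\prs$ direction, the volatility characterisation) and the identification of the two limits (Lemma~\ref{lem:strongPMConv}) — have already been isolated as lemmas, so the proof of the theorem itself is essentially an assembly of these pieces plus the straightforward transfinite induction.
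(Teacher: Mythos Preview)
Your proof is correct and follows essentially the same transfinite-induction scaffolding as the paper, invoking the same key lemmas (Proposition~\ref{prop:strPContConv}, Lemmas~\ref{lem:volDepthInf}, \ref{lem:mContConv}, \ref{lem:strongPMConv}) in the same roles. The one minor difference is that in the ``only if'' direction you establish totality via the volatile-positions argument and Lemma~\ref{lem:totalRed}, whereas the paper simply observes that totality is immediate because strong $\mrs$-convergence is defined on $\ictgraphs$ (so every $g_\iota$ and the metric limit $h$ are already total); your route works but is a small detour.
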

\begin{proof}
  Let $S = (g_\iota \to[n_\iota] g_{\iota+1})_{\iota<\alpha}$. At first, we
  prove the ``only if'' direction by induction on $\alpha$:

  The case $\alpha = 0$ is trivial. If $\alpha$ is a successor
  ordinal, the statement follows immediately from the induction
  hypothesis.

  Let $\alpha$ be a limit ordinal. Since $S\fcolon g\mato g_\alpha$,
  we know that $\prefix{S}{\gamma}\fcolon g \mato g_\gamma$ for all
  $\gamma < \alpha$. Hence, we can apply the induction hypothesis in
  order to obtain that $\prefix{S}{\gamma}\fcolon g \pato g_\gamma$
  for each $\gamma < \alpha$. Consequently, $S$ is strongly
  $\prs$-continuous, which means, by
  Proposition~\ref{prop:strPContConv}, that $S$ strongly
  $\prs$-converges to some term graph $h'$. However, since $S$
  strongly $\mrs$-converges, we know that
  $(\depth{g_\iota}{n_\iota})_{\iota<\alpha}$ tends to
  infinity. Consequently, we can apply Lemma~\ref{lem:strongPMConv} to
  obtain that $h' = \lim_{\iota\limto\alpha} = h$, i.e.\ $S\fcolon g
  \pato h$. Since $S\fcolon g \mato h$, of course, $S\fcolon g \pato
  h$ must be total.

  We will also prove the ``if'' direction by induction on $\alpha$:
  Again, the case $\alpha = 0$ is trivial and the case that $\alpha$
  is a successor ordinal follows immediately from the induction
  hypothesis.

  Let $\alpha$ be a limit ordinal. Since $S$ is strongly
  $\prs$-convergent, we know that $\prefix{S}{\gamma}\fcolon g \pato
  g_\gamma$ is total for each $\gamma < \alpha$. Therefore, we can
  apply the induction hypothesis to obtain that
  $\prefix{S}{\gamma}\fcolon g \mato g_\gamma$ for each $\gamma <
  \alpha$. Hence, $S$ is strongly $\mrs$-continuous. Since $S$ is
  total, we know from Lemma~\ref{lem:volBot}, that $S$ has no volatile
  positions. Hence, by Lemma~\ref{lem:volDepthInf},
  $(\depth{g_\iota}{n_\iota})_{\iota<\alpha}$ tends to infinity. Together with the
  strong $\mrs$-continuity of $S$, this yields, according to
  Lemma~\ref{lem:mContConv}, that $S$ strongly $\mrs$-converges to
  some $h'$. With Lemma~\ref{lem:strongPMConv}, we can then conclude
  that $h' = h$, i.e.\ $S\fcolon g \mato h$.
\end{proof}

\section{Terms vs.\ Term Graphs}
\label{sec:terms-vs.-term}

Term graph rewriting is an efficient implementation technique for term
rewriting that uses pointers in order to avoid duplication. This is
also used as the basis for the implementation of functional
programming languages. A prominent example is the implementation of
the fixed point combinator $Y$ defined by the term rule $\rho_0\fcolon
Y\, x \to x\, (Y\, x)$, where we write function application as
juxtaposition. Written as a term graph rule $\rho_1\fcolon Y\, x \to
x\, (Y\, x)$ depicted in Figure~\ref{fig:fixedPointCombA}, we can see
that the two occurrences of the variables $x$ on the right-hand side
are shared. In fact, since term graph rewriting does not provide a
mechanism for duplication, this is the only way to represent non-right
linear rules. With the rule $\rho_1$ applied repeatedly as shown in
Figure~\ref{fig:fixedPointCombC}, more and more pointers to the same
occurrence of the function symbol $f$ are created. This reduction
strongly $\mrs$-converges to the infinite term graph $g_\omega = \nn n
f\,(n\,(n\,(\dots)))$, which has infinitely many edges to the
$f$-node. Note, however, that the term graph rule $\rho_1$ is not
maximally shared. If we apply the maximally shared rule $\rho_2\fcolon
\nn n {(Y\, x)} \to x\, n$, we obtain in one step the cyclic term
graph $h_0 = \nn n {(f\, n)}$. This is, in fact, how the fixed point
combinator is typically implemented in functional programming
languages~\cite{jones87book,turner79spe}. Although, the resulting term
graphs $g_\omega$ and $h_0$ are different, they both unravel to the
same term $f\,(f\,(f\,(\dots)))$.

\begin{figure}
  \centering%
  \subfloat[Term graph rules that unravel to $Y\, x \rightarrow x\, (Y\,
  x)$.]{%
    \label{fig:fixedPointCombA}%
    \begin{tikzpicture}%
      \node[node name=180:\underline{l}] (l) {$@$}%
      child {%
        node (Y) {$Y$}%
      } child {%
        node (f) {$x$}%
      };%
      \node[node distance=1.5cm,right=of l,node name=180:\underline{r}] (r) {$@$}%
      child[missing]%
      child {%
        node (a) {$@$}%
        child {%
          node {$Y$}%
        } child[missing]
      };%
      \draw%
      (r) edge[->,out=-135,in=35] (f)
      (a) edge[->,out=-45,in=-35] (f);%
      \node at ($(l)!.5!(r) + (0,-2.5)$) {$(\rho_1)$};
    \end{tikzpicture}
    \quad
    \begin{tikzpicture}
      \node[node name=180:\underline{l}] (l) {$@$}%
      child {%
        node (Y) {$Y$}%
      } child {%
        node (f) {$x$}%
      };%
      \node[node distance=1.5cm,right=of l,node name=180:\underline{r}] (r) {$@$};%
      \draw%
      (r) edge[->,out=-45,in=-15] (l);%
      \node at ($(l)!.5!(r) + (0,-2.5)$) {$(\rho_2)$};
      \draw%
      (r) edge[->,out=-135,in=35] (f);%
    \end{tikzpicture}%
  }%
  \quad
  \subfloat[A single $\rho_2$-step.]{%
    \label{fig:fixedPointCombB}%
    \begin{tikzpicture}
      \node (g1) {$@$}%
      child {%
        node {$Y$}%
      } child {%
        node {$f$}
        };%

      \node at ($(g1) + (0,-2.5)$) {$(g_0)$};      
      \node [node distance=2.5cm,right=of g1] (g2) {$@$}%
      child {%
        node {$f$}%
      } child [missing];%
      \draw (g2) edge[->, min distance=10mm,out=-45,in=0] (g2);%

      \node at ($(g2) + (0,-2.5)$) {$(h_0)$};
      \node (s1) at ($(g1)!.5!(g2)-(0,.5)$) {};

      \draw[single step] ($(s1)-(.3,0)$) -- ($(s1)+(.3,0)$)
      node[pos=1,below] {{\small$\rho_2$}};
    \end{tikzpicture}
  }%

    \subfloat[A strongly $\mrs$-convergent term graph reduction over
    $\rho_1$.]{%
    \label{fig:fixedPointCombC}%
      \begin{tikzpicture}  
      \node (g1) {$@$}%
      child {%
        node (a) {$Y$}%
      } child {%
        node {$f$}
      };%

      \node at ($(g1) + (0,-3.5)$) {$(g_0)$};
        
      \node [node distance=2.5cm,right=of g1] (g2) {$@$}%
      child {%
        node (f) {$f$}%
      } child {%
        node (c) {$@$}%
        child {%
          node (a) {$Y$}%
        } child [missing]%
      };%
      \draw (c) edge[->, min distance=7mm,out=-45,in=-45] (f);%

      \node () at ($(g2) + (0,-3.5)$) {$(g_1)$};
        
      \node [node distance=2.5cm,right=of g2] (g3) {$@$}%
      child {%
        node (f) {$f$}%
      } child {%
        node (c) {$@$}%
        child [missing]%
        child {%
          node (c2) {$@$}%
          child {%
            node (a) {$Y$}%
          } child [missing]%
        }%
      };%
      \draw (c) edge[->, min distance=3mm,out=-125,in=-45] (f);%
      \draw (c2) edge[->, min distance=10mm,out=-45,in=-75] (f);%

      \node () at ($(g3) + (0,-3.5)$) {$(g_2)$};
        
      \node [node distance=4.5cm,right=of g3] (go) {$@$}%
      child {%
        node (f) {$f$}%
      } child {%
        node (c) {$@$}%
        child [missing]%
        child {%
          node (c2) {$@$}%
          child [missing]%
          child[etc] {%
            node {} %
          }%
        }
      };%
      \draw (c) edge[->, min distance=3mm,out=-125,in=-45] (f);%
      \draw (c2) edge[->, min distance=10mm,out=-125,in=-75] (f);%

      \node () at ($(go) + (0,-3.5)$) {$(g_\omega)$};

    \node (s1) at ($(g1)!.5!(g2)-(0,.5)$) {};
    \node (s2) at ($(g2)!.55!(g3)-(0,.5)$) {};
    \node (s3) at ($(g3)!.4!(go)-(0,.5)$) {};
    \node (s4) at ($(g3)!.7!(go)-(0,.5)$) {};

    \draw[single step] ($(s1)-(.3,0)$) -- ($(s1)+(.3,0)$)
    node[pos=1,below] {{\small$\rho_1$}};
    \draw[single step] ($(s2)-(.3,0)$) -- ($(s2)+(.3,0)$)
    node[pos=1,below] {{\small$\rho_1$}};
    \draw[single step] ($(s3)-(.3,0)$) -- ($(s3)+(.3,0)$)
    node[pos=1,below] {{\small$\rho_1$}};
    \draw[dotted,thick,-] ($(s4)-(.3,0)$) -- ($(s4)+(.3,0)$);
        
    \end{tikzpicture}
  }%
  \caption{Implementation of the fixed point combinator as a term
    graph rewrite rule.}
  \label{fig:fixedPointComb}
\end{figure}
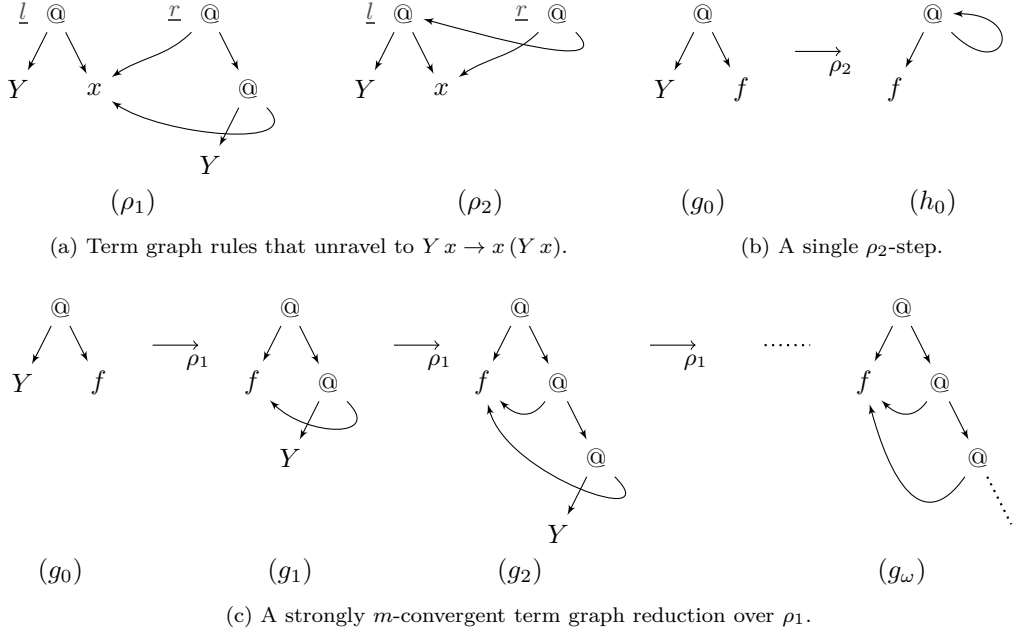

In this section, we will study the relationship between GRSs and the
corresponding TRSs they simulate. In particular, we will show the
soundness of GRSs w.r.t.\ strong convergence and a restricted form of
completeness. To this end we make use of the isomorphism between terms
and canonical term trees as outlined at the end of
Section~\ref{sec:canon-term-graphs}.

Note that term trees have an obvious characterisation in terms of
their equivalence on positions:
\begin{fact}
  \label{fact:labelledTree}
  A term graph $g \in \itgraphs$ is a term tree iff\, $\sim_g$ is the
  identity relation, i.e.\ $\pi_1 \sim_g \pi_2$ iff $\pi_1 = \pi_2$.
\end{fact}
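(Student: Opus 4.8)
The plan is to simply unfold the two definitions involved and observe that the stated equivalence is essentially immediate. Recall from Definition~\ref{def:tgraphOcc} that $g$ is a \emph{term tree} precisely when every node $n \in N^g$ has exactly one position, i.e.\ $\nodePos{g}{n}$ is a singleton; and recall from the discussion preceding Remark~\ref{rem:canhom} that $\pi_1 \sim_g \pi_2$ abbreviates $\nodeAtPos{g}{\pi_1} = \nodeAtPos{g}{\pi_2}$ for $\pi_1,\pi_2 \in \pos{g}$. The overall strategy is thus to prove the two implications separately, each by directly translating ``singleton position set'' into ``$\sim_g$-equivalence forces positional equality'' and back.

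First I would handle the ``only if'' direction. Assuming $g$ is a term tree, take $\pi_1 \sim_g \pi_2$, set $n = \nodeAtPos{g}{\pi_1} = \nodeAtPos{g}{\pi_2}$, so that $\pi_1,\pi_2 \in \nodePos{g}{n}$, and conclude $\pi_1 = \pi_2$ because $\nodePos{g}{n}$ is a singleton. The reverse inclusion $\pi_1 = \pi_2 \implies \pi_1 \sim_g \pi_2$ is just reflexivity of $\sim_g$, which holds for every term graph, so together these give that $\sim_g$ is the identity relation.

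Next, for the ``if'' direction, assume $\sim_g$ is the identity on $\pos{g}$ and fix a node $n \in N^g$. Since every node of a term graph is reachable from the root $r^g$, the set $\nodePos{g}{n}$ is non-empty. If $\pi_1,\pi_2 \in \nodePos{g}{n}$, then $\nodeAtPos{g}{\pi_1} = n = \nodeAtPos{g}{\pi_2}$, hence $\pi_1 \sim_g \pi_2$, hence $\pi_1 = \pi_2$ by assumption; so $\nodePos{g}{n}$ is a singleton, and $g$ is a term tree.

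There is no genuine obstacle here — the statement is a definitional reformulation. The only two points worth a moment's care are: that ``$\sim_g$ is the identity relation'' must be read relative to $\pos{g}$, the domain on which $\sim_g$ is defined; and that the non-emptiness of $\nodePos{g}{n}$, which is what upgrades ``at most one position'' to ``exactly one position'' in the ``if'' direction, is supplied precisely by the reachability condition built into the definition of a term graph in Definition~\ref{def:tgraph}.
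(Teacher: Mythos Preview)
Your proof is correct. The paper states this as a \textbf{Fact} without supplying any proof, treating it as an immediate consequence of the definitions. Your argument fills in exactly the details one would expect: unfolding Definition~\ref{def:tgraphOcc}(v) and the definition of $\sim_g$, and noting that reachability from the root guarantees $\nodePos{g}{n}\neq\emptyset$. There is nothing to compare against, and nothing to correct.
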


When giving the labelled quotient tree for a term tree $t$ we can thus
omit the equivalence $\sim_t$. We refer to the remaining pair
$(\pos{t},t(\cdot))$ as \emph{labelled tree}.

Recall that the unravelling $\unrav g$ of a term graph $g$ is the
uniquely determined term $t$ such that there is a homomorphism from
$t$ to $g$. Labelled trees give a concrete characterisation of
unravellings:
\begin{proposition}
  \label{prop:unravTree}%
  The unravelling $\unrav{g}$ of a term graph $g \in \itgraphs$ is
  given by the labelled tree $(P,l)$ with $P = \pos{g}$ and $l(\pi) =
  g(\pi)$ for all $\pi \in P$.
\end{proposition}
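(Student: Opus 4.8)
The plan is to construct the term tree described by $(P,l)$ and verify that it satisfies the defining property of the unravelling, namely that there is a homomorphism from it to $g$. First I would check that the triple $(\pos{g}, g(\cdot), {=})$ --- i.e.\ $(P,l)$ equipped with the identity relation on $P$ --- is a labelled quotient tree over $\Sigma$ in the sense of Definition~\ref{def:occRep}. The congruence condition is vacuous, since the equivalence relation is the identity. For the reachability condition, suppose $\pi\concat\seq{i} \in \pos{g}$. By Definition~\ref{def:tgraphOcc}, $\pi\concat\seq{i}$ is a path in the underlying graph of $g$ starting at $r^g$; hence so is its prefix $\pi$, giving $\pi \in \pos{g} = P$, and by Definition~\ref{def:graphPath} we have $i < \rank{g}{\nodeAtPos{g}{\pi}} = \srank{\glab^g(\nodeAtPos{g}{\pi})} = \srank{g(\pi)} = \srank{l(\pi)}$. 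So $(P,l,{=})$ is a labelled quotient tree.

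By Lemma~\ref{lem:occrep}, there is then a unique canonical term graph $t \in \ictgraphs$ with $\pos{t} = P = \pos{g}$, with $t(\pi) = l(\pi) = g(\pi)$ for all $\pi \in P$, and with $\sim_t$ the identity on $P$. By Fact~\ref{fact:labelledTree}, $t$ is a term tree, hence --- under the identification of canonical term trees with terms from Section~\ref{sec:canon-term-graphs} --- a term. It remains to show that there is a homomorphism $\phi\fcolon t \homto g$; once this is done, the definition of the unravelling (the unique term $t$ with a homomorphism $t \homto g$) forces $\unrav{g} = t$, which is exactly the term given by the labelled tree $(P,l)$.

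For the existence of $\phi$ I would apply Lemma~\ref{lem:occrephom} with $\Delta = \emptyset$. Since $\pos{t} = \pos{g}$, condition (b) of that lemma reduces to $t(\pi) = g(\pi)$ for all $\pi \in \pos{t}$, which holds by construction. Condition (a) asks that $\pi \sim_t \pi' \implies \pi \sim_g \pi'$ for all $\pi,\pi' \in \pos{t}$; but $\pi \sim_t \pi'$ means $\pi = \pi'$, and $\sim_g$ is reflexive, so this is immediate. Hence $\phi\fcolon t \homto g$ exists and $\unrav{g} = t$.

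There is no real obstacle here; the two points that need a little care are (i) deriving the reachability axiom of labelled quotient trees from the fact that positions of a term graph are graph-theoretic paths, so that the labelling $l = g(\cdot)$ is genuinely compatible with the arities prescribed by $\Sigma$, and (ii) invoking the characterisation of Lemma~\ref{lem:occrephom} so that the homomorphism $t \homto g$ comes essentially for free once $\pos{t} = \pos{g}$ has been established. If a self-contained uniqueness argument is wanted in place of appealing to the definition of $\unrav{\cdot}$, one can additionally note that any homomorphism from a term tree $t'$ to $g$ forces $\pos{t'} = \pos{g}$ (the inclusion $\subseteq$ is built into Lemma~\ref{lem:occrephom}, and the reverse inclusion follows by induction on the length of positions using the labelling condition to transfer arities), whence $t'$ has the labelled quotient tree $(P,l,{=})$ and thus equals $t$ by Lemma~\ref{lem:occrep} and Proposition~\ref{prop:canon}.
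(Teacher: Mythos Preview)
Your proposal is correct and follows essentially the same route as the paper's proof: construct the term tree from the labelled quotient tree $(\pos{g}, g(\cdot), {=})$ and then invoke Lemma~\ref{lem:occrephom} to obtain the homomorphism into $g$. The paper's argument is more terse --- it compresses your verification of the labelled quotient tree axioms and the hypotheses of Lemma~\ref{lem:occrephom} into the single observation that the identity relation is reflexive and contained in $\sim_g$ --- but the underlying structure is identical.
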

\begin{proof}
  Since the implicit equivalence $\sim_{\unrav{g}}$ is reflexive and a
  subrelation of $\sim_g$, the triple $(P,l,\sim_{\unrav{g}})$ is a
  labelled quotient tree. Let $t$ be the term represented by
  $(P,l)$. By Lemma~\ref{lem:occrephom}, there is a homomorphism from
  $t$ to $g$. Thus, $\unrav{g} = t$.
\end{proof}

Before start investigating the correspondences between term rewriting
and term graph rewriting, we need to transfer the notions of
left-linearity and orthogonality to GRSs:
\begin{definition}[left-linearity, orthogonality
  \cite{barendregt87parle}]
  Let $\calR = (\Sigma,R)$ be a GRS.
  \begin{enumerate}[(i)]
  \item A rule $\rho \in R$ is called \emph{left-linear} if its
    left-hand side $\lhs\rho$ is a term tree. The GRS $\calR$ is
    called \emph{left-linear} if all its rules are left-linear.
  \item A $\rho$-redex $g$ and a $\rho'$-redex $g'$ in a common term
    graph, with matching $\calV$-homomorphisms $\phi$ resp.\ $\phi'$
    are \emph{disjoint}, if $r^g \neq \phi'(n)$ for all non-$\calV$
    nodes $n$ in $\lhs\rho'$ and, symmetrically, $r^{g'} \neq \phi(n)$
    for all non-$\calV$ nodes $n$ in $\lhs\rho$. In other words, the
    root of either redexes must not be matched by the respective other
    rule.
  \item The GRS $\calR$ is called \emph{non-overlapping} if all its redexes
    are pairwise disjoint.
  \item The GRS $\calR$ is called \emph{orthogonal} if it is
    left-linear and non-overlapping.
  \end{enumerate}
\end{definition}

It is obvious that the unravelling $\unrav\calR$ of a GRS is
left-linear if $\calR$ is left-linear, that and $\unrav\calR$ is
orthogonal if $\calR$ is orthogonal.

We have to single out a particular kind of term graph redexes that
manifest a peculiar behaviour.
\begin{definition}[circular redex]
  Let $\rho = (g,l,r)$ be a term graph rule. A $\rho$-redex is called
  \emph{circular} if $l$ and $r$ are distinct but the matching
  $\calV$-homomorphism $\phi$ maps them to the same node, i.e.\ $l\neq
  r$ but $\phi(l) = \phi(r)$.
\end{definition}

Kennaway et al.\ \cite{kennaway94toplas} show that circular redexes
only reduce to themselves:
\begin{proposition}
  \label{prop:circId}
  For every circular $\rho$-redex $\subgraph{g}{n}$, we have $g
  \to[\rho,n] h$.
\end{proposition}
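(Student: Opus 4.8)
The plan is to unwind Definition~\ref{def:termGraphApp} and observe that, for a circular redex, each of the three intermediate constructions leaves $g$ unchanged, so that the reduct is $\canon{g} = g$ (recall that in a GRS term graphs are taken modulo canonical representation, and $g \in \ictgraphs$ is already canonical). Write $\rho = (N^\rho,\glab^\rho,\gsuc^\rho,l^\rho,r^\rho)$ and let $\phi\fcolon \rho_l \homto_\calV \subgraph{g}{n}$ be the matching $\calV$-homomorphism, so that $\phi(l^\rho) = n$ by the root condition.

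First I would show that $r^\rho \in N^{\rho_l}$. Since the redex is circular, the equation $\phi(l^\rho) = \phi(r^\rho)$ is meaningful, which presupposes that $r^\rho$ lies in the domain $N^{\rho_l}$ of $\phi$. Consequently $r^\rho$ is reachable from $l^\rho$ in $g^\rho$, hence every node reachable from $r^\rho$ is reachable from $l^\rho$; by the definition of a term graph rule (all nodes of $g^\rho$ are reachable from $l^\rho$ or $r^\rho$) this gives $N^\rho = N^{\rho_l}$. Therefore step (i) of Definition~\ref{def:termGraphApp} adds nothing, i.e.\ $g_1 = g$. Moreover, since $r^\rho \in N^{\rho_l}$, the reduct root is $n' = \phi(r^\rho)$, and circularity together with the root condition for $\phi$ yields $n' = \phi(r^\rho) = \phi(l^\rho) = n$.

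Next, in step (ii) the only modification is redirecting edges that end in $n$ towards $n'$; as $n' = n$, this redirection is the identity, so $g_2 = g_1 = g$. Finally, in step (iii) the new root $r'$ is $n'$ if $n = r^g$ and $r^g$ otherwise; since $n' = n$, in both cases $r' = r^g$, whence $g_3 = \subgraph{g_2}{r^g} = g$ (all nodes of a term graph are reachable from its root). Taking canonical forms, the reduct is $\canon{g_3} = \canon{g} = g$, i.e.\ $g \to[\rho,n] g$.

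The argument is essentially bookkeeping against the definition of a reduction step; the only point requiring a moment's care is the observation that a circular redex forces $r^\rho \in N^{\rho_l}$ and hence $N^\rho = N^{\rho_l}$, which is what makes the ``copy in the right-hand side'' step vacuous. Everything else follows from the coincidence $n' = n$ of the redex root and the reduct root.
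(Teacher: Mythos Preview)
The paper does not supply its own proof of this proposition; it merely attributes the result to Kennaway et al.\ \cite{kennaway94toplas}. Your argument is a correct direct verification from Definition~\ref{def:termGraphApp}: the circularity hypothesis $\phi(l^\rho)=\phi(r^\rho)$ forces $r^\rho\in N^{\rho_l}$, hence $N^\rho\setminus N^{\rho_l}=\emptyset$ so that step~(i) is vacuous, and the coincidence $n'=\phi(r^\rho)=\phi(l^\rho)=n$ makes the redirection in step~(ii) and the root choice in step~(iii) identities. This is exactly the kind of bookkeeping one expects; there is nothing to add.
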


However, contracting the unravellings of a circular redex also yields
the same term:
\begin{lemma}
  \label{lem:circUnravRed}
  Let $g$ be a term graph with a circular $\rho$-redex rooted in
  $n$. Then $\unrav g \to[\unrav\rho,\pi] \unrav g$ for all $\pi \in
  \nodePos{g}{n}$.
\end{lemma}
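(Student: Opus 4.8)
The plan is to establish that a single contraction step $\unrav g \to[\unrav\rho,\pi] \unrav g$ is valid for each $\pi \in \nodePos{g}{n}$, i.e.\ that $\pi$ is a redex occurrence of $\unrav\rho$ in $\unrav g$ and that contracting it leaves the term unchanged. By Proposition~\ref{prop:unravTree}, $\unrav g$ is the labelled tree $(\pos g, g(\cdot))$, so its subterm at $\pi$ is exactly $\unrav{\subgraph g n}$. Since $\subgraph g n$ is a $\rho$-redex, there is a $\calV$-homomorphism $\phi\fcolon \lhs\rho \homto_\calV \subgraph g n$; composing with the homomorphism $\unrav{\lhs\rho} \homto \lhs\rho$ (which exists by definition of unravelling) and using that $\unrav{\lhs\rho}$ is the unravelling of the left-hand side, one sees that $\unrav{\subgraph g n}$ is an instance $\unrav{\lhs\rho}\sigma$ of the left-hand side of the term rule $\unrav\rho = \unrav{\lhs\rho} \to \unrav{\rhs\rho}$, where $\sigma$ assigns to each variable the unravelling of the subgraph at the $\phi$-image of the corresponding variable node. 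Hence $\pi$ is indeed a $\unrav\rho$-redex occurrence in $\unrav g$.

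First I would make the substitution $\sigma$ explicit: for a variable $v$ occurring at the unique variable node $n_v$ of $\lhs\rho$, set $\sigma(v) = \unrav{\subgraph g {\phi(n_v)}}$. Then I would verify $\unrav{\subgraph g n} = \unrav{\lhs\rho}\sigma$ by an easy induction on positions, using the homomorphism conditions for $\phi$ together with Proposition~\ref{prop:unravTree}. The key leverage from circularity is Proposition~\ref{prop:circId}: $\subgraph g n$ being a circular $\rho$-redex means $\phi(l^\rho) = \phi(r^\rho) = n$, and $g \to[\rho,n] h$ with $h \isom g$. I would use this to pin down the contractum of $\unrav\rho$ at $\pi$. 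Since $r^\rho$ is matched to $n$ itself, the right-hand side $\unrav{\rhs\rho}\sigma$ "feeds back" into the same subgraph: unfolding $\unrav{\rhs\rho}$, every path eventually either stays inside the copied non-$\lhs$ part of $\rho$ or reaches a variable node or the node $r^\rho$, and $r^\rho$ unravels back to $\unrav{\subgraph g n}$. A clean way to see the result is: $g \to[n,\rho,n'] h$ with $n' = \phi(r^\rho) = n$ (because $l^\rho \neq r^\rho$ but $\phi(l^\rho)=\phi(r^\rho)$, the reduct root is $\phi(r^\rho)=n$), so the reduction step produces a term graph isomorphic to $g$; contracting $\pi$ in $\unrav g$ by the term rule $\unrav\rho$ must therefore produce a term whose labelled tree agrees with that of $\unrav g$, and by Proposition~\ref{prop:canon} (applied to term trees) this forces $\unrav g \to[\unrav\rho,\pi] \unrav g$.

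The main obstacle I anticipate is making rigorous the claim that "contracting $\pi$ with $\unrav\rho$ reproduces $\unrav g$" — in particular handling the recursion in $\unrav{\rhs\rho}\sigma$ cleanly. The cleanest route is probably to avoid manipulating the possibly-infinite term $\unrav{\rhs\rho}\sigma$ directly and instead argue via term graphs: the term graph obtained from $\subgraph g n$ by contracting its circular redex is, by Proposition~\ref{prop:circId}, isomorphic to $\subgraph g n$ itself; unravelling commutes with contraction in the sense that $\unrav{(\substAtPos{g}{\pi}{\rhs{(\rho)}\sigma\text{-graph}})} = \substAtPos{\unrav g}{\pi}{\unrav{\rhs\rho}\sigma}$ (which follows from Proposition~\ref{prop:unravTree}); hence the term-level contractum at $\pi$ equals $\unrav g$. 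I would state this commutation as a short auxiliary observation before the main argument. Care is also needed that $\pi$ ranges over \emph{all} positions of $n$, not just minimal ones — but since $\unrav g$ places an isomorphic copy of $\unrav{\subgraph g n}$ at every $\pi \in \nodePos g n$, the argument at each such $\pi$ is identical, so this adds no real difficulty.
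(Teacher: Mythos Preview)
Your proposal sets up the matching substitution correctly, but the route you take to show that the contractum equals $\atPos{\unrav g}{\pi}$ is both more complicated than necessary and contains a genuine gap. The commutation you propose, $\unrav{(\substAtPos{g}{\pi}{\cdot})} = \substAtPos{\unrav g}{\pi}{\cdot}$, is not well-formed: $\substAtPos{g}{\pi}{\cdot}$ is a term operation, not a term-graph operation, and a position $\pi$ in a term graph may share its node with many other positions. If instead you interpret the left side as the result of the graph step $g \to[n,\rho] h$, then Proposition~\ref{prop:circId} gives $h \isom g$ and hence $\unrav h = \unrav g$, but this tells you nothing about a \emph{single} term-level step at one fixed position $\pi$; the graph step corresponds to contracting all occurrences of $n$ simultaneously. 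So the ``unravelling commutes with contraction'' lemma you plan to state does not deliver what you need.

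The paper's argument avoids all of this by exploiting a structural consequence of circularity that you overlooked: since $\phi$ is a $\calV$-homomorphism $\lhs\rho \homto_\calV \subgraph g n$, it is only defined on nodes reachable from $l^\rho$; but $\phi(r^\rho) = n$ is defined, so $r^\rho$ lies in $\lhs\rho$. Hence there is a path $\pi^*$ from $l^\rho$ to $r^\rho$, and the unravelled rule $\unrav\rho$ has the very special shape $l \to \atPos{l}{\pi^*}$, i.e.\ its right-hand side is a subterm of its left-hand side. Moreover, because $\calV$-homomorphisms preserve paths and $\phi(l^\rho)=\phi(r^\rho)=n$, the same $\pi^*$ is a cycle at $n$ in $g$, so $\pi\in\nodePos g n$ implies $\pi\concat\pi^*\in\nodePos g n$ and thus $\atPos{\unrav g}{\pi} = \atPos{\unrav g}{\pi\concat\pi^*}$. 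Contracting at $\pi$ therefore replaces $\atPos{\unrav g}{\pi}$ by $\atPos{\unrav g}{\pi\concat\pi^*}$, which is the same term. This makes the ``main obstacle'' you anticipated disappear entirely: there is no need to analyse $\unrav{\rhs\rho}\sigma$ as a possibly infinite object, nor to invoke Proposition~\ref{prop:circId}. If you want to stay closer to your own setup, note that the same observation salvages it directly: since $r^\rho\in N^{\lhs\rho}$ and $\phi(r^\rho)=n$, the restriction of $\phi$ is a $\calV$-homomorphism $\rhs\rho \homto_\calV \subgraph g n$, so by the same position-induction you already planned for the left-hand side you get $\unrav{\rhs\rho}\sigma = \unrav{\subgraph g n} = \atPos{\unrav g}{\pi}$.
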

\begin{proof}
  Since there is a circular $\rho$-redex, we know that the right-hand
  side root $r^\rho$ is reachable from the left-hand side root
  $l^\rho$ of $\rho$. Let $\pi^*$ be a path from $l^\rho$ to
  $r^\rho$. Because $\subgraph{g}{n}$ is a circular redex, the
  corresponding matching $\calV$-homomorphism maps both $l^\rho$ and
  $r^\rho$ to $n$. Since $\Delta$-homomorphisms preserve paths, we
  thus know that $\pi^*$ is also a path from $n$ to itself in $g$. In
  other words $\pi\in \nodePos{g}{n}$ implies $\pi\concat\pi^*\in
  \nodePos{g}{n}$. Consequently, for each $\pi\in \nodePos{g}{n}$ we
  have that $\atPos{\unrav g}{\pi} = \atPos{\unrav
    g}{\pi\concat\pi^*}$.

  Since there is a path $\pi^*$ from $l^\rho$ to $r^\rho$, the
  unravelling $\unrav\rho$ of $\rho$ is of the form $l \to
  \atPos{l}{\pi^*}$. Hence, we know that each application of
  $\unrav\rho$ at a position $\pi$ in some term $t$ replaces the
  subterm at $\pi$ with the subterm at $\pi\concat\pi^*$ in $t$, i.e.\
  $t \to[\unrav\rho,\pi]
  \substAtPos{t}{\pi}{\atPos{t}{\pi\concat\pi^*}}$.

  Combining the two findings above, we obtain that
  \[
  \unrav{g} \to[\unrav\rho,\pi]
  \substAtPos{\unrav g}{\pi}{\atPos{\unrav g}{\pi\concat\pi^*}}
  = \substAtPos{\unrav g}{\pi}{\atPos{\unrav g}{\pi}}
  = \unrav g
  \quad \text{for all }\pi \in \nodePos{g}{n}
  \]
\end{proof}

The following two properties due to Kennaway et
al.~\cite{kennaway94toplas} show how single term graph rewrite steps
relate to term reductions in the corresponding unravelling.
\begin{proposition}
  \label{prop:unravNF}
  Given a left-linear GRS $\calR$ and a term graph $g$ in $\calR$, it
  holds that $g$ is a normal form in $\calR$ iff $\unrav{g}$ is a
  normal form in $\unrav{\calR}$.
\end{proposition}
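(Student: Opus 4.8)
The plan is to prove the biconditional rule by rule: for each $\rho\in R$,
\[
\rho\text{ is applicable to }g\text{ at some node}\quad\iff\quad\unrav\rho\text{ is applicable to }\unrav g\text{ at some position.}
\]
Since $\unrav R=\setcom{\unrav\rho}{\rho\in R}$, taking the disjunction over $\rho\in R$ turns this into the claim, because a term graph (resp.\ term) fails to be a normal form exactly when some rule is applicable somewhere. Fix $\rho=(h,l,r)$. By left-linearity $\lhs\rho$ is a term tree, hence $\unrav{\lhs\rho}=\lhs\rho$; moreover, since a term graph rule carries at most one node per variable, this term is \emph{linear}, and $\unrav\rho=\unrav{\lhs\rho}\to\unrav{\rhs\rho}$.

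The first ingredient I would record is the elementary identity: for every $g\in\itgraphs$ and $\pi\in\pos g$,
\[
\atPos{\unrav g}{\pi}=\unrav{\subgraph{g}{\nodeAtPos{g}{\pi}}}.
\]
Indeed, the unravelling homomorphism $\psi\fcolon\unrav g\homto g$ sends the unique node of the term tree $\unrav g$ at position $\pi$ to $\nodeAtPos g\pi$ (by Lemma~\ref{lem:occrephom}, as $\pi$ is a position of both nodes); restricting $\psi$ to the nodes below it yields a homomorphism from the term tree $\atPos{\unrav g}{\pi}$ into $\subgraph{g}{\nodeAtPos g\pi}$, so uniqueness of the unravelling (Proposition~\ref{prop:unravTree}) gives the identity.

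For the ``graph $\Rightarrow$ term'' direction, let $\phi\fcolon\lhs\rho\homto_\calV\subgraph g n$ be a matching $\calV$-homomorphism and pick any $\pi\in\nodePos g n$. Since $\phi$ preserves positions (Lemma~\ref{lem:occrephom}), for the unique variable node $n_x$ of $\lhs\rho$ labelled $x$ every occurrence of $n_x$ in $\lhs\rho$ is an occurrence of $\phi(n_x)$ in $\subgraph g n$; using the identity above one checks that $\unrav{\subgraph g n}=\unrav{\lhs\rho}\sigma$, where $\sigma(x)=\unrav{\subgraph{g}{\phi(n_x)}}$ is well defined precisely because the several $x$-occurrences all sit above $\phi(n_x)$. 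Hence $\atPos{\unrav g}{\pi}=\unrav{\lhs\rho}\sigma$, so $\unrav g\to[\unrav\rho,\pi]\substAtPos{\unrav g}{\pi}{\unrav{\rhs\rho}\sigma}$, i.e.\ $\unrav\rho$ is applicable to $\unrav g$ at $\pi$. For the ``term $\Rightarrow$ graph'' direction, suppose $\atPos{\unrav g}{\pi}=\unrav{\lhs\rho}\sigma$ for some substitution $\sigma$ and $\pi\in\pos{\unrav g}=\pos g$; by the identity this reads $\unrav{\subgraph{g}{\nodeAtPos g\pi}}=\unrav{\lhs\rho}\sigma$. There is a $\calV$-homomorphism $\chi\fcolon\unrav{\lhs\rho}\homto_\calV\unrav{\lhs\rho}\sigma$, the pattern-into-instance map (identity on the skeleton of $\unrav{\lhs\rho}$, each variable node sent to the root of the matching $\sigma$-subtree). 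Composing $\chi$ with the unravelling homomorphism $\unrav{\lhs\rho}\sigma=\unrav{\subgraph{g}{\nodeAtPos g\pi}}\homto\subgraph{g}{\nodeAtPos g\pi}$, using $\unrav{\lhs\rho}=\lhs\rho$ and the fact that composites of $\calV$-homomorphisms are $\calV$-homomorphisms (cf.\ the proof of Proposition~\ref{prop:catgraph}), yields a matching $\calV$-homomorphism $\lhs\rho\homto_\calV\subgraph{g}{\nodeAtPos g\pi}$, so by Definition~\ref{def:termGraphApp} $\rho$ is applicable to $g$ at $\nodeAtPos g\pi$.

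I expect the delicate point to be the ``term $\Rightarrow$ graph'' step: it genuinely uses $\lhs\rho=\unrav{\lhs\rho}$, i.e.\ left-linearity, to turn the pattern-into-instance homomorphism into a matching of the (unshared) graph rule; dropping left-linearity breaks exactly this implication, since a term redex on $\unrav g$ need not reflect the sharing prescribed by $\rho$. The ``graph $\Rightarrow$ term'' step is more routine but still requires care in verifying that the variable images of the matching $\calV$-homomorphism assemble into a well-defined term substitution, which is where the ``at most one node per variable'' restriction is used.
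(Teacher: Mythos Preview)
Your argument is essentially correct, but note that the paper does not actually prove this proposition: it is stated as a result ``due to Kennaway et al.~\cite{kennaway94toplas}'' and left without proof. So there is no ``paper's own proof'' to compare against; what you have written is a self-contained direct argument that the paper simply imports from the literature.

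A couple of minor remarks on your write-up. First, in the ``graph $\Rightarrow$ term'' direction you speak of ``the several $x$-occurrences'', but under left-linearity $\lhs\rho$ is a term tree and, combined with the at-most-one-node-per-variable constraint, it is a \emph{linear} term: each variable has exactly one occurrence. So well-definedness of $\sigma$ is in fact trivial here, not something that needs the multiple-occurrence argument you sketch. Second, the step $\nodeAtPos{\subgraph g n}{\pi}=\phi(\nodeAtPos{\lhs\rho}{\pi})$ for $\pi\in\pos{\lhs\rho}$ deserves one line of justification: it follows by induction on $\len\pi$ from the successor condition of the $\calV$-homomorphism, using that variable nodes are nullary and hence never occur as internal nodes of a path. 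With these cosmetic fixes your proof stands, and in particular your identification of left-linearity as exactly what makes the ``term $\Rightarrow$ graph'' direction go through (via $\unrav{\lhs\rho}=\lhs\rho$) is the right diagnosis.
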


\begin{theorem}
  \label{thr:graphStepTermConv}
  Let $\calR$ be a left-linear GRS with a reduction step $g
  \to[n,\rho] h$. Then $S\fcolon \unrav{g} \mato[\unrav\calR]
  \unrav{h}$ such that the depth of every redex reduced in $S$ is
  greater or equal to $\depth{g}{n}$. In particular, if the
  $\rho$-redex $\subgraph{g}{n}$ is not circular, then $S$ is a
  complete development of the set of redex occurrences
  $\nodePos{g}{n}$ in $\unrav g$.
\end{theorem}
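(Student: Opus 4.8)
The plan is to treat the two cases of the \emph{in particular} clause separately, the non-circular case being the substantial one.

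If the $\rho$-redex $\subgraph{g}{n}$ is circular, then by Proposition~\ref{prop:circId} the step $g \to[n,\rho] h$ forces $h \isom g$, so, working with canonical term graphs, $h = g$ and hence $\unrav{g} = \unrav{h}$. We may then take $S$ to be the empty reduction; the requirement on the depths of redexes reduced in $S$ is vacuous. (Alternatively, Lemma~\ref{lem:circUnravRed} shows that contracting $\unrav\rho$ at any $\pi \in \nodePos{g}{n}$ returns $\unrav{g}$, so a reduction enumerating $\nodePos{g}{n}$ by non-decreasing length also works and strongly $\mrs$-converges to $\unrav{g} = \unrav{h}$.)

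For the non-circular case the first task is to pin down the $\unrav\rho$-redexes in $\unrav{g}$. Using Proposition~\ref{prop:unravTree}, so that $\unrav{g}$ is the labelled tree $(\pos{g}, g(\cdot))$, together with the elementary observation that for $\pi \in \nodePos{g}{n}$ one has $\pi\concat\tau \in \pos{g} \iff \tau \in \pos{\subgraph{g}{n}}$ with matching labels, we obtain $\atPos{\unrav{g}}{\pi} = \unrav{\subgraph{g}{n}}$ for every $\pi \in \nodePos{g}{n}$. Left-linearity enters here: since $\lhs\rho$ is a term tree, $\unrav{\lhs\rho}$ is just $\lhs\rho$ read as a (linear) term, and the matching $\calV$-homomorphism $\phi\fcolon \rho_l \homto_\calV \subgraph{g}{n}$ determines a substitution $\sigma$ (with $\sigma(x) = \unrav{\subgraph{g}{\phi(n_x)}}$ for the unique variable node $n_x$ labelled $x$) such that $\unrav{\rho_l}\sigma = \unrav{\subgraph{g}{n}}$; hence each $\pi \in U := \nodePos{g}{n}$ is an occurrence of a $\unrav\rho$-redex. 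Moreover these occurrences are pairwise non-overlapping: if $\pi_1 < \pi_2$ in $U$ then, because $\unrav{\rho_l}$ is a finite term, $\pi_2$ cannot sit inside the pattern part of the $\pi_1$-redex (a standard subterm-descending argument would otherwise produce a self-overlap forcing $\phi(l^\rho) = \phi(r^\rho)$, i.e.\ circularity), so $\pi_2$ lies strictly below a variable position of the $\pi_1$-redex. We now define $S$ to be the complete development of $U$ in $\unrav{g}$, performed \emph{breadth-first}: at each stage contract a residual of $U$ of minimal depth.

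It then remains to verify three things. \emph{(i) Depth bound:} every $\pi \in U$ has $\len\pi \ge \depth{g}{n}$ by definition of $\depth{g}{n}$, and a residual of a $U$-redex under contraction of a (non-overlapping) $U$-redex lies either at its old position or strictly below the contracted redex, hence never at depth below $\depth{g}{n}$; so every redex reduced in $S$ has depth $\ge \depth{g}{n}$. \emph{(ii) $S$ strongly $\mrs$-converges:} finitely many residuals of $U$ occur at each depth (finite branching bounds $\setcom{\pi\in\pos g}{\len\pi=d}$, and each contraction spawns finitely many residuals per depth); contracting a depth-$d$ redex creates no new depth-$d$ redex, and — this is exactly where non-circularity is needed — moves every strictly-nested residual to depth $> d$ (the degenerate case where it would stay at depth $d$ is precisely $\rho_r = x$ with $\sigma(x) = \subgraph gn$, which is a circular redex); hence after finitely many steps all remaining redexes are at depth $> d$, the redex depths tend to infinity, and $S$ is strongly $\mrs$-convergent by the depth criterion (Proposition~\ref{prop:strConvDepth}) together with an easy induction over the prefixes of $S$. \emph{(iii) $\lim S = \unrav h$:} comparing labelled trees via Proposition~\ref{prop:unravTree}: positions of $\unrav g$ through which no element of $U$ passes are untouched by $S$ and coincide with the corresponding positions of $\unrav h$, since $\truncl{g}{n} \isom \truncl{h}{n'}$ (Lemma~\ref{lem:stepLocTrunc}) and $\nodePosMin{g}{n} = \nodePosMin{h}{n'}$ (Lemma~\ref{lem:stepDepth}); below a minimal position of $n$ resp.\ $n'$, the development turns the copy $\unrav{\subgraph{g}{n}} = \unrav{\rho_l}\sigma$ into the contractum $\unrav{\rho_r}\sigma$, which equals $\unrav{\subgraph{h}{n'}}$, matching $\unrav h$ there.

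The main obstacle is point (iii), specifically showing that completely developing the (possibly transfinitely nested) residuals of $U$ inside a single redex copy yields exactly $\unrav{\subgraph{h}{n'}}$ — i.e.\ that unravelling commutes with contraction for left-linear rules even when $\rho_r$ is shared or cyclic and when the redex itself is cyclic. This requires reconciling the residual bookkeeping of the development, the edge redirections performed by the term-graph step in Definition~\ref{def:termGraphApp}, and the metric limit construction of Theorem~\ref{thr:smetricComplete}; once the redex occurrences and the breadth-first development are set up as above, everything else is routine.
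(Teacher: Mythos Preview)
The paper does not prove this theorem; it is stated without proof and attributed to Kennaway et al.\ \cite{kennaway94toplas} (see the sentence immediately preceding Proposition~\ref{prop:unravNF}). So there is no in-paper argument to compare against; your sketch is an attempt at reconstructing the cited result.

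Your overall architecture---handle circular redexes via Proposition~\ref{prop:circId}/Lemma~\ref{lem:circUnravRed}, and in the non-circular case perform a breadth-first complete development of the redex occurrences $\nodePos{g}{n}$ in $\unrav g$, then verify the depth bound, strong $\mrs$-convergence, and that the limit is $\unrav h$---is exactly the shape one expects, and is in line with how Kennaway et al.\ proceed. One point in your write-up is not correct as stated, though: the claim that distinct positions $\pi_1 < \pi_2$ in $\nodePos{g}{n}$ cannot overlap at a non-variable pattern position, and that such an overlap would ``force $\phi(l^\rho) = \phi(r^\rho)$, i.e.\ circularity''. Consider $\rho_l = f(f(x))$ and a term graph with a node $n$ labelled $f$ whose $0$-th successor is $n$ itself; then $\emptyseq$ and $\seq{0}$ are both in $\nodePos{g}{n}$, and $\seq{0}$ is a non-variable position of $\rho_l$, so the two redex occurrences genuinely overlap in the pattern. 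This has nothing to do with $r^\rho$, so no circularity is implied. The original proof does not rely on the occurrences in $\nodePos{g}{n}$ being pairwise non-overlapping in this sense; rather, the residual analysis works because all these occurrences are copies of one and the same term-graph redex, and the development can be organised layer by layer using the minimal positions $\nodePosMin{g}{n}$ (which \emph{are} pairwise disjoint). Your points (i)--(iii) go through once you rephrase the bookkeeping in terms of $\nodePosMin{g}{n}$ and the stratification by depth, as you in fact already do implicitly when invoking Lemma~\ref{lem:stepLocTrunc} and Lemma~\ref{lem:stepDepth} for (iii).
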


The goal of the following two sections is to generalise the above
soundness theorem to strong $\mrs$- and $\prs$-convergence.

\subsection{Strong $\mrs$-Convergence}
\label{sec:metric-space}

At first we shall study correspondences w.r.t.\ strong
$\mrs$-convergence. To this end, we first show that the metric $\dda$
on term graphs generalises the metric $\dd$ on terms.

\begin{lemma}
  \label{lem:truncaTree}
  Let $t \in \ipterms$ and $d \in \nats \cup \set{\infty}$. The strict
  truncation $\trunca{t}{d}$ is given by the labelled tree $(P,l)$
  with
  \begin{center}
    \begin{inparaenum}[(a)]
    \item $P = \setcom{\pi \in \pos{t}}{\len{\pi} \le
        d}$,\label{item:truncaTreeI}
      \qquad
    \item $l(\pi) =
      \begin{cases}
        t(\pi) & \text{ if }\len{\pi} < d\\
        \bot & \text{ if }\len{\pi} \ge d
      \end{cases}$
      \label{item:truncaTreeII}
    \end{inparaenum}
  \end{center}
\end{lemma}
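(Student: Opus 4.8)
The plan is to specialise the general characterisation of strict truncation from Lemma~\ref{lem:truncaOccRep} to the case of term trees, using the correspondence between terms and canonical term trees established at the end of Section~\ref{sec:canon-term-graphs}. Since a term $t \in \ipterms$ is a canonical term tree, its equivalence relation $\sim_t$ is the identity by Fact~\ref{fact:labelledTree}, so every position $\pi$ is the unique position of its node, i.e.\ $\nodePos{t}{\nodeAtPos{t}{\pi}} = \set{\pi}$. The strict truncation $\trunca{t}{d}$ is again a term tree (it is a subgraph of a tree with some outgoing edges removed and some nodes relabelled, which cannot introduce sharing), so by Fact~\ref{fact:labelledTree} it suffices to determine its set of positions and its labelling, and I may present the answer as a labelled tree $(P,l)$, omitting the equivalence.

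First I would apply Lemma~\ref{lem:truncaOccRep}~(\ref{item:truncaOccRepI}): the position set is $P = \setcom{\pi \in \pos{t}}{\forall \pi_1 < \pi\; \exists \pi_2 \sim_t \pi_1 \text{ with } \len{\pi_2} < d}$. Because $\sim_t$ is the identity, the only $\pi_2$ with $\pi_2 \sim_t \pi_1$ is $\pi_1$ itself, so the condition collapses to ``$\forall \pi_1 < \pi\colon \len{\pi_1} < d$'', which, since the lengths of proper prefixes of $\pi$ are exactly $0,1,\dots,\len{\pi}-1$, is equivalent to $\len{\pi} \le d$. Hence $P = \setcom{\pi \in \pos{t}}{\len{\pi} \le d}$, which is part~(\ref{item:truncaTreeI}). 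Next I would apply Lemma~\ref{lem:truncaOccRep}~(\ref{item:truncaOccRepII}): $l(\pi) = t(\pi)$ if there is some $\pi' \sim_t \pi$ with $\len{\pi'} < d$, and $l(\pi) = \bot$ otherwise. Again by identity of $\sim_t$, the only candidate for $\pi'$ is $\pi$, so the condition is just $\len{\pi} < d$; thus $l(\pi) = t(\pi)$ when $\len{\pi} < d$ and $l(\pi) = \bot$ when $\len{\pi} \ge d$ (for $\pi \in P$, i.e.\ $\len{\pi} \le d$, the second case means exactly $\len{\pi} = d$), which is part~(\ref{item:truncaTreeII}). Finally, by Lemma~\ref{lem:truncaOccRep}~(\ref{item:truncaOccRepIII}), $\sim\; =\; \sim_t \cap\, P\times P$ is still the identity, confirming that $\trunca{t}{d}$ is indeed a term tree and that the labelled tree $(P,l)$ determines it up to isomorphism via Lemma~\ref{lem:occrep}.

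I do not expect any real obstacle here; the proof is a routine unpacking of Lemma~\ref{lem:truncaOccRep} once one observes that identity of $\sim_t$ trivialises the ``$\exists \pi_2 \sim_t \pi_1$'' quantifiers. The only point requiring a word of care is the implicit claim that $\trunca{t}{d}$ is a term tree — i.e.\ that truncation does not create sharing — which follows immediately from the computed equivalence $\sim\; =\; \sim_t \cap\, P\times P$ being the identity, or alternatively from the observation that the underlying graph of $\trunca{t}{d}$ is obtained from the tree $t$ by deleting edges and relabelling nodes, operations that preserve the tree property. One could also give a direct inductive argument mirroring the term-tree recursion $\trunca{f(t_1,\dots,t_k)}{d+1} = f(\trunca{t_1}{d},\dots,\trunca{t_k}{d})$, but going through Lemma~\ref{lem:truncaOccRep} is shorter and matches the labelled-quotient-tree style used throughout this section.
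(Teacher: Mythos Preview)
Your proposal is correct and follows exactly the same approach as the paper, which simply states that the result is immediate from Lemma~\ref{lem:truncaOccRep} and Fact~\ref{fact:labelledTree}. Your write-up is just a careful unpacking of that one-line proof: use Fact~\ref{fact:labelledTree} to replace $\sim_t$ by the identity, then simplify the three clauses of Lemma~\ref{lem:truncaOccRep} accordingly.
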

\begin{proof}
  Immediate from Lemma~\ref{lem:truncaOccRep} and
  Fact~\ref{fact:labelledTree}.
\end{proof}
This shows that the metric $\dda$ restricted to terms coincides with
the metric $\dd$ on terms. Moreover, we can use this in order to
relate the metric distance between term graphs and the metric distance
between their unravellings.
\begin{lemma}
  \label{lem:unravMetric}
  For all $g,h \in \itgraphs$, we have that $\dda(g,h) \ge
  \dda(\unrav{g},\unrav{h})$.
\end{lemma}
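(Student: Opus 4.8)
The plan is to reduce the claimed inequality between distances to an inequality between truncation-based similarities, and then to prove the latter by comparing labelled (quotient) trees.

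Since $\dda(g,h) = 2^{-\similara{g}{h}}$ and $x\mapsto 2^{-x}$ is antitone, the assertion $\dda(g,h)\ge\dda(\unrav g,\unrav h)$ is equivalent to $\similara{g}{h}\le\similara{\unrav g}{\unrav h}$. Both sets $\setcom{d}{\trunca{g}{d}\isom\trunca{h}{d}}$ and $\setcom{d}{\trunca{\unrav g}{d}\isom\trunca{\unrav h}{d}}$ contain $0$ (Proposition~\ref{prop:truncaDown}), so it suffices to prove, for every $d\in\nats\cup\set\infty$,
\[
\trunca{g}{d}\isom\trunca{h}{d}\quad\implies\quad\trunca{\unrav g}{d}\isom\trunca{\unrav h}{d}.
\]
For $d=\infty$ this is immediate: $\trunca{g}{\infty}\isom\trunca{h}{\infty}$ means $g\isom h$, hence $\unrav g=\unrav h$ by uniqueness of the unravelling. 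So fix $d\in\nats$ and assume $\trunca{g}{d}\isom\trunca{h}{d}$.

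Next I would make both target truncations explicit. By Proposition~\ref{prop:unravTree}, $\unrav g$ is the term tree with $\pos{\unrav g}=\pos g$ and $\unrav g(\pi)=g(\pi)$, and likewise for $h$. Hence, by Lemma~\ref{lem:truncaTree} (together with Fact~\ref{fact:labelledTree} and Lemma~\ref{lem:occrep}, which present term trees as canonical objects determined by their labelled trees), $\trunca{\unrav g}{d}$ and $\trunca{\unrav h}{d}$ are the term trees whose position sets are $\posBound{d}{g}$ and $\posBound{d}{h}$ and whose labellings are $g(\cdot)$ resp.\ $h(\cdot)$ on positions of length $<d$ and $\bot$ on positions of length $d$. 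Thus $\trunca{\unrav g}{d}\isom\trunca{\unrav h}{d}$ reduces to showing $\posBound{d}{g}=\posBound{d}{h}$ together with $g(\pi)=h(\pi)$ for all $\pi\in\posBound{d}{g}$ with $\len\pi<d$; at length exactly $d$ both labels are $\bot$, so nothing is to be checked, and the $\sim$-component of the target isomorphism is automatic since both sides are term trees.

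Finally I would read off these two statements from the hypothesis. By Corollary~\ref{cor:isomOcc}(ii), $\trunca{g}{d}\isom\trunca{h}{d}$ yields $\pos{\trunca{g}{d}}=\pos{\trunca{h}{d}}$ and $\trunca{g}{d}(\pi)=\trunca{h}{d}(\pi)$ on that common set. Intersecting with the positions of length $\le d$ and invoking Corollary~\ref{cor:truncaOccRep}(i), which identifies $\posBound{d}{g}$ with the length-$\le d$ positions of $\trunca{g}{d}$ (and similarly for $h$), gives $\posBound{d}{g}=\posBound{d}{h}$. For the labels, Corollary~\ref{cor:truncaOccRep}(ii) gives $\trunca{g}{d}(\pi)=g(\pi)$ and $\trunca{h}{d}(\pi)=h(\pi)$ whenever $\len\pi<d$, so $g(\pi)=\trunca{g}{d}(\pi)=\trunca{h}{d}(\pi)=h(\pi)$ there. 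This completes the implication, and hence the lemma. I do not expect a genuine obstacle; the only point requiring care is the bookkeeping between the two depth regimes $\len\pi<d$ and $\len\pi=d$, and the observation that unravellings are term trees, so that no sharing information survives into the comparison.
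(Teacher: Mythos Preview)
Your proof is correct and follows essentially the same approach as the paper: both reduce to showing that $\trunca{g}{d}\isom\trunca{h}{d}$ implies $\trunca{\unrav g}{d}\isom\trunca{\unrav h}{d}$, and both establish this by comparing the labelled (quotient) trees via Proposition~\ref{prop:unravTree}, Lemma~\ref{lem:truncaTree}, and the position/label correspondence of Corollary~\ref{cor:truncaOccRep} (the paper cites Lemma~\ref{lem:truncaOccRep} directly). The only cosmetic differences are that you prove the implication for all $d$ rather than just $d=\similara{g}{h}$, and you make the automatic $\sim$-component for term trees explicit.
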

\begin{proof}
  Let $d = \similara{g}{h}$. Hence, $\trunca{g}{d} \isom
  \trunca{h}{d}$ and we can assume that the corresponding labelled
  quotient trees as characterised by Lemma~\ref{lem:truncaOccRep}
  coincide. We only need to show that $\trunca{\unrav{g}}{d} \isom
  \trunca{\unrav{h}}{d}$ since then $\similara{\unrav{g}}{\unrav{h}}
  \ge d$ and thus $\dda(\unrav{g},\unrav{h}) \le 2^{-d} = \dda(g,h)$. In order
  to show this, we show that the labelled trees of
  $\trunca{\unrav{g}}{d}$ and $\trunca{\unrav{h}}{d}$ as characterised
  by Lemma~\ref{lem:truncaTree} coincide. For the set of positions we
  have the following:
  \begin{align*}
    &\pi \in \pos{\trunca{\unrav{g}}{d}}\\%
    \iff &\pi \in \pos{\unrav{g}},\quad \len{\pi} \le d%
    \tag{Lemma~\ref{lem:truncaTree}}\\%
    \iff &\pi \in \pos{g},\quad \len{\pi} \le d%
    \tag{Proposition~\ref{prop:unravTree}}\\%
    \iff &\pi \in \pos{\trunca{g}{d}},\quad \len{\pi} \le d%
    \tag{Lemma~\ref{lem:truncaOccRep}}\\%
    \iff &\pi \in \pos{\trunca{h}{d}},\quad \len{\pi} \le d%
    \tag{$\trunca{g}{d} \isom \trunca{h}{d}$}\\%
    \iff &\pi \in \pos{h},\quad \len{\pi} \le d%
    \tag{Lemma~\ref{lem:truncaOccRep}}\\%
    \iff &\pi \in \pos{\unrav{h}},\quad\len{\pi} \le d%
    \tag{Proposition~\ref{prop:unravTree}}\\%
    \iff &\pi \in \pos{\trunca{\unrav{h}}{d}}%
    \tag{Lemma~\ref{lem:truncaTree}}
  \end{align*}
  In order to show that the labellings are equal, consider some $\pi
  \in \pos{\trunca{\unrav{g}}{d}}$ and assume at first that $\len{\pi}
  \ge d$. By Lemma~\ref{lem:truncaTree}, we then have $\left(\trunca{\unrav{g}}{d}\right)(\pi) = \bot =
  \left(\trunca{\unrav{h}}{d}\right)(\pi)$.
  Otherwise, if $\len{\pi} < d$, we obtain that
    
  \begin{align*}
    \left(\trunca{\unrav{g}}{d}\right)(\pi)%
    &\stackrel{\text{Lem.~\ref{lem:truncaTree}}}{=}%
    \unrav{g}(\pi)%
    \stackrel{\text{Prop.~\ref{prop:unravTree}}}{=}%
    g(\pi)%
    \stackrel{\text{Lem.~\ref{lem:truncaOccRep}}}{=}%
    \trunca{g}{d}(\pi)%
    \\%
    &\stackrel{\trunca{g}{d} \isom \trunca{h}{d}}{=}%
    \trunca{h}{d}(\pi)%
    \stackrel{\text{Lem.~\ref{lem:truncaOccRep}}}{=}%
    h(\pi)%
    \stackrel{\text{Prop.~\ref{prop:unravTree}}}{=}%
    \unrav{h}(\pi)%
    \stackrel{\text{Lem.~\ref{lem:truncaTree}}}{=}%
    \left(\trunca{\unrav{h}}{d}\right)(\pi)
  \end{align*}
\end{proof}

This immediately yields that Cauchy sequences are preserved by
unravelling:
\begin{lemma}
  \label{lem:unravCauchy}
  If $(g_\iota)_{\iota<\alpha}$ is a Cauchy sequence in
  $(\ictgraphs,\dda)$, then so is $(\unrav{g_\iota})_{\iota<\alpha}$.
\end{lemma}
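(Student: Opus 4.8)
The plan is to derive this directly from Lemma~\ref{lem:unravMetric}, which already does all the work: it states $\dda(g,h) \ge \dda(\unrav{g},\unrav{h})$ for all term graphs $g,h$. So the unravelling map is (non-expansive, hence) uniformly continuous, and uniformly continuous maps preserve Cauchy sequences. Concretely, I would argue as follows.

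Let $(g_\iota)_{\iota<\alpha}$ be Cauchy in $(\ictgraphs,\dda)$ and fix $\epsilon \in \realsp$. By the definition of a Cauchy sequence there is some $\beta < \alpha$ such that $\dda(g_\gamma,g_{\gamma'}) < \epsilon$ for all $\beta < \gamma < \gamma' < \alpha$. Applying Lemma~\ref{lem:unravMetric} to each such pair $g_\gamma, g_{\gamma'}$ yields $\dda(\unrav{g_\gamma},\unrav{g_{\gamma'}}) \le \dda(g_\gamma,g_{\gamma'}) < \epsilon$ for all $\beta < \gamma < \gamma' < \alpha$. Since $\epsilon$ was arbitrary, $(\unrav{g_\iota})_{\iota<\alpha}$ is Cauchy in $(\ictgraphs,\dda)$ as well, which is what we had to show.

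There is no real obstacle here; the content has been isolated into Lemma~\ref{lem:unravMetric}, and the remaining step is just the standard observation that a non-expansive map between metric spaces sends Cauchy sequences to Cauchy sequences. The only points to be slightly careful about are purely notational: $\unrav{g_\iota}$ is again an element of $\ictgraphs$ (a canonical term tree, via the identification of $\iterms$ with a subset of $\ictgraphs$ explained in Section~\ref{sec:canon-term-graphs}), so the statement typechecks, and the restriction of $\dda$ to these term trees is the relevant distance.
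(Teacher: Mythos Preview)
Your proposal is correct and takes essentially the same approach as the paper, which simply states that the lemma follows immediately from Lemma~\ref{lem:unravMetric}. You have just spelled out the standard $\epsilon$-$\beta$ argument that the paper leaves implicit.
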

\begin{proof}
  This follows immediately from Lemma~\ref{lem:unravMetric}.
\end{proof}

Additionally, also limits are preserved by unravellings.
\begin{proposition}
  \label{prop:unravLim}
  For every sequence $(g_\iota)_{\iota<\alpha}$ in $(\ictgraphs,\dda)$,
  we have that $\lim_{\iota\limto\alpha} g_\iota = g$ implies
  $\lim_{\iota\limto\alpha} \unrav{g_\iota} = \unrav g$
\end{proposition}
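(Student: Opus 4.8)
The plan is to reduce the statement about limits to the already-established facts about Cauchy sequences and the characterisation of limits via strict truncations, together with the characterisation of the unravelling via labelled trees (Proposition~\ref{prop:unravTree}). First I would observe that since $\lim_{\iota\limto\alpha} g_\iota = g$, the sequence $(g_\iota)_{\iota<\alpha}$ is in particular Cauchy, so by Lemma~\ref{lem:unravCauchy} the sequence $(\unrav{g_\iota})_{\iota<\alpha}$ is also Cauchy, hence it converges to some term $t$ in the complete metric space $(\ictgraphs,\dda)$ (Theorem~\ref{thr:smetricComplete}). It remains to show $t = \unrav g$; equivalently, by Proposition~\ref{prop:unravTree}, that the labelled tree of $t$ coincides with $(\pos{g},g(\cdot))$.

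The key step is then a direct comparison of truncations. By Lemma~\ref{lem:cauchyTruncFun}(ii), to show $\lim_{\iota\limto\alpha}\unrav{g_\iota} = \unrav g$ it suffices to show that for each $d\in\nats$ there is some $\beta<\alpha$ with $\trunca{\unrav{g_\iota}}{d} \isom \trunca{\unrav g}{d}$ for all $\beta\le\iota<\alpha$. Since $\lim_{\iota\limto\alpha} g_\iota = g$, Lemma~\ref{lem:cauchyTruncFun}(ii) gives some $\beta<\alpha$ with $\trunca{g_\iota}{d} \isom \trunca g{d}$ for all $\beta\le\iota<\alpha$. I would then run essentially the same chain of equivalences as in the proof of Lemma~\ref{lem:unravMetric}: using Lemma~\ref{lem:truncaTree} (truncation of a term tree), Proposition~\ref{prop:unravTree} (positions of the unravelling equal positions of the term graph, labellings agree), and Lemma~\ref{lem:truncaOccRep} (labelled quotient tree of a strict truncation), one shows that $\trunca{g_\iota}{d} \isom \trunca g{d}$ implies $\trunca{\unrav{g_\iota}}{d} \isom \trunca{\unrav g}{d}$ for each such $\iota$: the position sets of the two truncated trees coincide because they are exactly the positions of length $\le d$ in $g_\iota$ resp.\ $g$, which agree; and the labellings agree because for $\len\pi<d$ they equal $g_\iota(\pi)$ resp.\ $g(\pi)$, which agree, while for $\len\pi\ge d$ they are both $\bot$.

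Actually, a cleaner route that avoids reproving the equivalence chain is to invoke Lemma~\ref{lem:unravMetric} abstractly: it gives $\dda(\unrav{g_\iota},\unrav g) \le \dda(g_\iota,g)$ for all $\iota$. Since $\lim_{\iota\limto\alpha} g_\iota = g$ means $\dda(g_\iota,g)\to 0$ as $\iota\to\alpha$, we get $\dda(\unrav{g_\iota},\unrav g)\to 0$, i.e.\ $\lim_{\iota\limto\alpha}\unrav{g_\iota} = \unrav g$ directly from the definition of convergence in a metric space. I would write the proof this way, as it is shortest and cites only Lemma~\ref{lem:unravMetric}.

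\begin{proof}
  By Lemma~\ref{lem:unravMetric}, we have $\dda(\unrav{g_\iota},\unrav{g}) \le
  \dda(g_\iota,g)$ for each $\iota < \alpha$. Since $(g_\iota)_{\iota<\alpha}$
  converges to $g$, we have, for each $\epsilon \in \realsp$, some $\beta <
  \alpha$ with $\dda(g_\iota,g) < \epsilon$ for all $\beta < \iota < \alpha$,
  and therefore also $\dda(\unrav{g_\iota},\unrav{g}) < \epsilon$ for all
  $\beta < \iota < \alpha$. Hence $\lim_{\iota\limto\alpha} \unrav{g_\iota} =
  \unrav{g}$.
\end{proof}

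The only subtlety worth double-checking — and this is the part I expect could trip one up — is that Lemma~\ref{lem:unravMetric} is stated for total term graphs $g,h\in\itgraphs$, while here the sequence lives in $\ictgraphs$; but canonical term graphs are a subset of $\itgraphs$, so the lemma applies verbatim, and $\unrav{g_\iota},\unrav g$ are again elements of $\ictgraphs$ (canonical term trees), so the distances on the right-hand side are well-defined.
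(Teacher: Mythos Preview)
Your final proof is correct and takes a genuinely more direct route than the paper. The paper argues via the explicit labelled quotient tree characterisation of limits from Theorem~\ref{thr:smetricComplete}: it reads off $\pos{g}$ and $g(\cdot)$ as $\liminf$'s, transfers them to the unravellings via Proposition~\ref{prop:unravTree}, invokes Lemma~\ref{lem:unravCauchy} to ensure the unravelled sequence is Cauchy, and then applies Theorem~\ref{thr:smetricComplete} once more to identify its limit. Your argument bypasses all of this by observing that Lemma~\ref{lem:unravMetric} makes unravelling a non-expansive map, and non-expansive maps preserve limits by a one-line $\epsilon$-argument. What your approach buys is brevity and conceptual clarity; what the paper's approach buys is that it stays within the concrete labelled-quotient-tree machinery used throughout, and does not rely on Lemma~\ref{lem:unravMetric} (though that lemma is in any case needed for Lemma~\ref{lem:unravCauchy}, so there is no real saving). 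Your remark about $\itgraphs$ versus $\ictgraphs$ is well placed and correctly resolved.
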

\begin{proof}
  According to Theorem~\ref{thr:smetricComplete}, we have that
  $\pos{g} = \liminf_{\iota\limto\alpha} \pos{g_\iota}$, and that
  $g(\pi) = g_\beta(\pi)$ for some $\beta < \alpha$ with $g_\iota(\pi)
  = g_\beta(\pi)$ for all $\beta \le \iota < \alpha$. By
  Proposition~\ref{prop:unravTree}, we then obtain $\pos{\unrav{g}} =
  \liminf_{\iota\limto\alpha} \pos{\unrav{g_\iota}}$, and that $\unrav
  g(\pi) = \unrav{g_\beta}(\pi)$ for some $\beta < \alpha$ with
  $\unrav{g_\iota}(\pi) = \unrav{g_\beta}(\pi)$ for all $\beta \le
  \iota < \alpha$. Since by Lemma~\ref{lem:unravCauchy},
  $(\unrav{g_\iota})_{\iota<\alpha}$ is Cauchy, we can apply
  Theorem~\ref{thr:smetricComplete} to obtain that
  $\lim_{\iota\limto\alpha} \unrav{g_\iota} = \unrav g$.
\end{proof}

We can now show that term graph reductions are sound w.r.t.\
reductions in the unravelled system.
\begin{theorem}
  \label{thr:mConvSound}
  Let $\calR$ be a left-linear GRS. If $g \mato[\calR] h$, then
  $\unrav{g} \mato[\unrav{\calR}] \unrav{h}$.
\end{theorem}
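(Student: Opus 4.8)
The plan is to prove the statement by induction on the length $\alpha$ of the underlying reduction $S = (g_\iota \to[n_\iota,\rho_\iota] g_{\iota+1})_{\iota<\alpha}$, simulating $S$ step by step. By Theorem~\ref{thr:graphStepTermConv}, every single step $g_\iota \to[n_\iota,\rho_\iota] g_{\iota+1}$ unravels to a reduction $S_\iota\fcolon \unrav{g_\iota} \mato[\unrav\calR] \unrav{g_{\iota+1}}$ in which every contracted redex has depth at least $\depth{g_\iota}{n_\iota}$. First I would set $T_\alpha = \Concat_{\iota<\alpha} S_\iota$ and prove, by induction on $\alpha$, that $T_\alpha$ is a well-defined strongly $\mrs$-convergent reduction $\unrav{g_0} \mato[\unrav\calR] \unrav{g_\alpha}$; instantiating $g_0 = g$ and $g_\alpha = h$ then gives the claim. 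Here I would repeatedly use the routine fact that the concatenation of two strongly $\mrs$-convergent reductions is again strongly $\mrs$-convergent: its term sequence stays continuous, and at the single junction limit ordinal as well as at the end the required redex depths tend to infinity because each of the two parts strongly converges.

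The base case $\alpha = 0$ is trivial, and the successor case $\alpha = \delta+1$ is immediate: $T_{\delta+1} = T_\delta \cdot S_\delta$, where $T_\delta\fcolon \unrav{g_0}\mato[\unrav\calR]\unrav{g_\delta}$ by the induction hypothesis and $S_\delta\fcolon\unrav{g_\delta}\mato[\unrav\calR]\unrav{g_{\delta+1}}$ by Theorem~\ref{thr:graphStepTermConv}, so the concatenation fact applies.

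The interesting case is $\alpha$ a limit ordinal. For strong $\mrs$-continuity of $T_\alpha$ I would invoke Proposition~\ref{prop:contConv}: every proper prefix of $T_\alpha$ has the form $T_\delta \cdot \prefix{S_\delta}{\epsilon}$ for some $\delta < \alpha$ and $\epsilon < \len{S_\delta}$, and this is strongly $\mrs$-convergent by the induction hypothesis, by Proposition~\ref{prop:contConv} applied to $S_\delta$, and by the concatenation fact; hence $T_\alpha$ is strongly $\mrs$-continuous. For convergence I distinguish two sub-cases. If $\len{T_\alpha}$ is not a limit ordinal, then $T_\alpha$ is closed and its last term is $\unrav{g_\delta}$ for all sufficiently large $\delta<\alpha$; since $S\fcolon g\mato[\calR] h$ gives $\lim_{\iota\limto\alpha} g_\iota = g_\alpha$, Proposition~\ref{prop:unravLim} yields $\lim_{\iota\limto\alpha}\unrav{g_\iota} = \unrav{g_\alpha}$, and as the $\unrav{g_\iota}$ are eventually constant this common value equals $\unrav{g_\alpha}$, so $T_\alpha$ converges to $\unrav{g_\alpha}$. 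If $\len{T_\alpha}$ is a limit ordinal, I would show that the redex-depth sequence of $T_\alpha$ tends to infinity: a redex contracted inside the block $S_\delta$ has depth at least $\depth{g_\delta}{n_\delta}$, and since $S$ strongly $\mrs$-converges and $\alpha$ is a limit, $(\depth{g_\iota}{n_\iota})_{\iota<\alpha}$ tends to infinity, so past any prefix $T_{\delta_0}$ all contracted redexes lie arbitrarily deep. By Proposition~\ref{prop:strConvDepth}, $T_\alpha$ therefore strongly $\mrs$-converges to some term $t$.

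It then remains to identify $t$ with $\unrav{g_\alpha}$. Since each $T_\delta$ is a prefix of $T_\alpha$ converging to $\unrav{g_\delta}$, continuity of $T_\alpha$ forces the term of $T_\alpha$ at position $\len{T_\delta}$ to be $\unrav{g_\delta}$, and the positions $\len{T_\delta}$ for $\delta<\alpha$ are cofinal in $\len{T_\alpha}$; hence $(\unrav{g_\delta})_{\delta<\alpha}$ is a cofinal subsequence of the term sequence of $T_\alpha$. By Proposition~\ref{prop:convSubseq} it converges to $t$, while by Proposition~\ref{prop:unravLim} it converges to $\unrav{g_\alpha}$, so $t = \unrav{g_\alpha}$ by uniqueness of limits in the metric space, giving $T_\alpha\fcolon \unrav g \mato[\unrav\calR] \unrav h$ in the outer limit case and completing the induction. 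The main obstacle is precisely this limit case: the ordinal bookkeeping in the transfinite concatenation, the step of pushing the redex depths of $T_\alpha$ to infinity from the depth bound in Theorem~\ref{thr:graphStepTermConv} together with strong $\mrs$-convergence of $S$, and pinning down the limit via the cofinal-subsequence argument combining Propositions~\ref{prop:convSubseq} and~\ref{prop:unravLim}.
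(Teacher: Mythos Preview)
Your proposal is correct and follows essentially the same approach as the paper's proof: induction on the length of $S$, simulating each step via Theorem~\ref{thr:graphStepTermConv}, establishing continuity of the concatenation through Proposition~\ref{prop:contConv}, pushing redex depths to infinity via the depth bound and Proposition~\ref{prop:strConvDepth}, and identifying the limit by the cofinal-subsequence argument using Propositions~\ref{prop:convSubseq} and~\ref{prop:unravLim}. Your explicit case split on whether $\len{T_\alpha}$ is a limit ordinal is a small refinement that the paper glosses over (it tacitly applies Proposition~\ref{prop:strConvDepth} as if the concatenation were open), but otherwise the arguments coincide.
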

\begin{proof}
  Let $S = (g_\iota \to[d_\iota] g_{\iota+1})_{\iota<\alpha}$ be a
  reduction strongly $\mrs$-converging to $g_\alpha$ in $\calR$, i.e.\
  $S\fcolon g_0 \mato[\calR] g_\alpha$. According to
  Theorem~\ref{thr:graphStepTermConv}, there is for each $\iota <
  \alpha$ a reduction $T_\iota \fcolon \unrav{g_\iota}
  \mato[\unrav{\calR}] \unrav{g_{\iota+1}}$ such that
  \[
  \text{all steps in $T_\iota$ contract a redex at depth $\ge
    d_\iota$.}\tag{$*$}
  \label{eq:depth}
  \]
  Define for each $\delta \le \alpha$ the concatenation $U_\delta =
  \Concat_{\iota<\delta} T_\iota$. We will show that $U_\delta\fcolon
  \unrav{g_0} \mato[\unrav{\calR}] \unrav{g_\delta}$ for each $\delta
  \le \alpha$ by induction on $\delta$. The theorem is then obtained
  by instantiating $\delta = \alpha$.

  The case $\delta = 0$ is trivial. If $\delta = \delta' + 1$, then we
  have by induction hypothesis that $U_{\delta'}\fcolon \unrav{g_0}
  \mato[\unrav{\calR}] \unrav{g_{\delta'}}$. Since $T_{\delta'}\fcolon
  \unrav{g_{\delta'}} \mato[\unrav{\calR}] \unrav{g_{\delta}}$, and
  $U_\delta = U_{\delta'} \concat T_{\delta'}$, we have
  $U_\delta\fcolon \unrav{g_0} \mato[\unrav{\calR}] \unrav{g_\delta}$.

  For the case that $\delta$ is a limit ordinal, let $U_\delta =
  (t_\iota \to[e_\iota] t_{\iota+1})_{\iota<\beta}$. For each $\gamma
  < \beta$ we find some $\gamma' < \delta$ with
  $\prefix{U_\delta}{\gamma} < U_{\gamma'}$. By induction hypothesis,
  we can assume that $U_{\gamma'}$ is strongly
  $\mrs$-continuous. According to Proposition~\ref{prop:contConv},
  this means that the proper prefix $\prefix{U_\delta}{\gamma}$
  strongly $\mrs$-converges to $t_\gamma$. This shows that each proper
  prefix $\prefix{U_\delta}{\gamma}$ of $U_\delta$ strongly
  $\mrs$-converges to $t_\gamma$. Hence, by
  Proposition~\ref{prop:contConv}, $U_\delta$ is strongly
  $\mrs$-continuous.

  Since $S$ is strongly $\mrs$-convergent, $(d_\iota)_{\iota<\delta}$
  tends to infinity. By \eqref{eq:depth}, also
  $(e_\iota)_{\iota<\alpha}$ tends to infinity. Hence, $U_\delta$ is
  strongly $\mrs$-convergent according to
  Proposition~\ref{prop:strConvDepth}. Let $t$ be the term $U_\delta$
  is strongly $\mrs$-converging to, i.e.\ $\lim_{\iota \limto \beta}
  t_\iota = t$. Since $(\unrav{g_\iota})_{\iota<\delta}$ is a cofinal
  subsequence of $(t_\iota)_{\iota<\beta}$, we have by
  Proposition~\ref{prop:convSubseq} that $\lim_{\iota\limto\delta}
  \unrav{g_\iota} = t$. Since $S$ is strongly $\mrs$-convergent, we
  also have that $\lim_{\iota \limto \delta} g_\iota = g_\delta$. By
  Proposition~\ref{prop:unravLim}, this yields that $\lim_{\iota
    \limto \delta} \unrav{g_\iota} = \unrav{g_\delta}$. Consequently,
  we have that $t = \unrav{g_\delta}$, i.e.\ $U_\delta\fcolon
  \unrav{g_0} \mato[\unrav{\calR}] \unrav{g_\delta}$.
\end{proof}

Unfortunately, we will not be able to obtain a full completeness
result. Even the weak completeness that was considered by Kennaway et
al.~\cite{kennaway94toplas} for finitary term graph reductions does
not hold for infinitary term graph reductions. This weaker
completeness property is satisfied by infinitary term graph rewriting
iff
\[
\unrav g \mato[\unrav\calR] t \implies \text{ there is a term graph } h
\text{ with } g \mato[\calR]
h,\quad t \mato[\unrav\calR] \unrav{h}
\]
Kennaway et al.~\cite{kennaway94toplas} consider an informal notion of
infinitary term graph rewriting and give a counterexample for the
above weak completeness property. This counterexample also applies to
strongly $\mrs$-convergent term graph reductions:
\begin{example}
  \label{ex:complCounterEx}
  We consider an infinite alphabet $\Sigma$ with $\ul n \in
  \Sigma^{(2)}$ for each $n\in\nats$. Let $g$ be the term graph
  depicted in Figure~\ref{fig:complCounterExA}. The root node is
  labelled $\ul 0$ and each node labelled $\ul n$ has as its left
  successor itself and as its right successor a node labelled
  $\ul{n+1}$. Let $\calR$ be the GRS that for each natural number $n$
  has a rule $\rho_n\colon \ul n(x,y) \to \ul{n+1}(x,y)$. A single
  reduction step in $\calR$ increments the label of exactly one
  node. Figure~\ref{fig:complCounterExB} shows the unravelling of
  $g$. In each row of $\unrav g$, the rightmost node has the largest
  label. However, each node to its left can be incremented by
  performing finitely many reduction steps in the TRS $\unrav \calR$
  so that it has the same label as the rightmost node. Doing this for
  each row yields a reduction strongly $\mrs$-converging to the term
  $t$ depicted in Figure~\ref{fig:complCounterExC}. Note that for each
  $n\in\nats$, there are only finitely many occurrences of $\ul n$ in
  $t$. Therefore, also the number of occurrences of labels $\ul m$
  with $m < n$ is finite for each $n\in \nats$. Since it is only
  possible to obtain a node labelled $\ul n$ by repeated reduction on
  a node labelled $\ul m$ with $m<n$, this means that every term $t'$
  with $t \mato[\unrav\calR] t'$ also has only finitely many
  occurrences of $\ul n$ for any $n\in\nats$. On the other hand, there
  is no strongly $\mrs$-converging reduction from $g$ to a term graph
  that unravels to a term with finitely many occurrences of $\ul n$
  for each $n\in\nats$. This is because the structure of $g$ cannot be
  changed by reductions in $\calR$. In particular, the loops in $g$
  are maintained.
\end{example}

\begin{figure}
  \centering
  \subfloat[A term graph $g$.]{%
    \label{fig:complCounterExA}%
    \hspace{1cm}
    \begin{tikzpicture}
      \node (n0) {$\ul 0$}%
      child {%
        node (n1) {$\ul 1$}%
        child {%
          node (n2) {$\ul 2$}%
          child {%
            node (n3) {$\ul 3$}%
            child[etc] {%
              node {} %
            }%
          }%
        }%
      };%
      \foreach \n in {n0,n1,n2,n3} {%
        \draw (\n) edge[->, out=-125,in=180,min distance=5mm] (\n);%
      }%
    \end{tikzpicture}
    \hspace{1cm}
  }%
  \quad%
  \subfloat[The unravelling of $g$.]{%
    \label{fig:complCounterExB}%
    \begin{tikzpicture}[%
      level 1/.style={sibling distance=20mm},%
      level 2/.style={sibling distance=10mm},%
      level 3/.style={sibling distance=5mm},%
      ]%
      \node {$\ul 0$}%
      child {%
        node {$\ul 0$}%
        child {%
          node {$\ul 0$}%
          child {%
            node {$\ul 0$}%
            child [etc] {%
              node {}%
            } child [etc] { %
              node {}%
            }%
          } child {%
            node {$\ul 1$}%
            child [etc] {%
              node {}%
            } child [etc] { %
              node {}%
            }%
          }%
        } child {%
          node {$\ul 1$}%
          child {%
            node {$\ul 1$}%
            child [etc] {%
              node {}%
            } child [etc] { %
              node {}%
            }%
          } child { %
            node {$\ul 2$}%
            child [etc] {%
              node {}%
            } child [etc] { %
              node {}%
            }%
          }%
        }%
      } child {%
        node {$\ul 1$}%
        child {%
          node {$\ul 1$}%
          child {%
            node {$\ul 1$}%
            child [etc] {%
              node {}%
            } child [etc] { %
              node {}%
            }%
          } child { %
            node {$\ul 2$}%
            child [etc] {%
              node {}%
            } child [etc] { %
              node {}%
            }%
          }%
        } child { %
          node {$\ul 2$}%
          child {%
            node {$\ul 2$}%
            child [etc] {%
              node {}%
            } child [etc] { %
              node {}%
            }%
          } child { %
            node {$\ul 3$}%
            child [etc] {%
              node {}%
            } child [etc] { %
              node {}%
            }%
          }%
        }%
      };%
    \end{tikzpicture}
  }%
  \quad%
  \subfloat[Term reduct $t$ of $\unrav g$.]{%
    \label{fig:complCounterExC}%
    \begin{tikzpicture}[%
      level 1/.style={sibling distance=20mm},%
      level 2/.style={sibling distance=10mm},%
      level 3/.style={sibling distance=5mm},%
      ]%
      \node {$\ul 0$}%
      child {%
        node {$\ul 1$}%
        child {%
          node {$\ul 2$}%
          child {%
            node {$\ul 3$}%
            child [etc] {%
              node {}%
            } child [etc] { %
              node {}%
            }%
          } child {%
            node {$\ul 3$}%
            child [etc] {%
              node {}%
            } child [etc] { %
              node {}%
            }%
          }%
        } child {%
          node {$\ul 2$}%
          child {%
            node {$\ul 3$}%
            child [etc] {%
              node {}%
            } child [etc] { %
              node {}%
            }%
          } child { %
            node {$\ul 3$}%
            child [etc] {%
              node {}%
            } child [etc] { %
              node {}%
            }%
          }%
        }%
      } child {%
        node {$\ul 1$}%
        child {%
          node {$\ul 2$}%
          child {%
            node {$\ul 3$}%
            child [etc] {%
              node {}%
            } child [etc] { %
              node {}%
            }%
          } child { %
            node {$\ul 3$}%
            child [etc] {%
              node {}%
            } child [etc] { %
              node {}%
            }%
          }%
        } child { %
          node {$\ul 2$}%
          child {%
            node {$\ul 3$}%
            child [etc] {%
              node {}%
            } child [etc] { %
              node {}%
            }%
          } child { %
            node {$\ul 3$}%
            child [etc] {%
              node {}%
            } child [etc] { %
              node {}%
            }%
          }%
        }%
      };%
    \end{tikzpicture}
  }%
  \caption{Counterexample for weak completeness.}
  \label{fig:complCounterEx}
\end{figure}
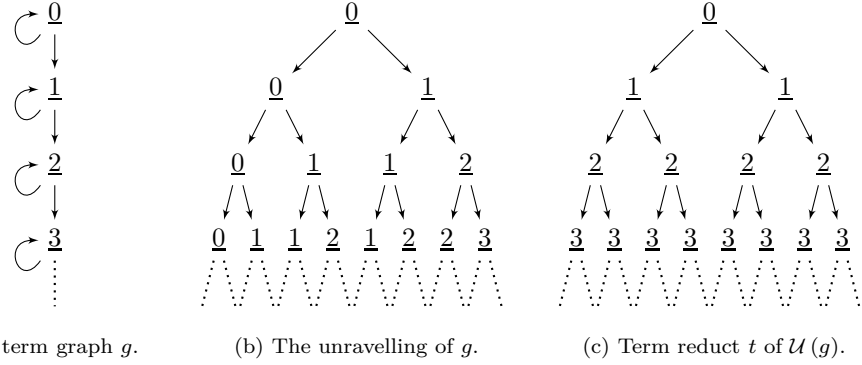
The above counterexample requires an infinite set of function symbols
and rules. Kennaway et al.~\cite{kennaway94toplas} sketch a variant of
this example that gets along with only two function symbols and one
rule. However, after closer inspection one can see that this system is
not a counterexample! We do not know whether a restriction to finitely
many rules may in fact yield weak completeness of infinitary term
graph rewriting.

We think, however, that a completeness property w.r.t.\ normalising
reductions can be obtained. To this end, consider the following
property of strong $\mrs$-convergence in TRSs:
\begin{theorem}[\cite{kennaway95ic}]
  \label{thr:nfMConv}
  Every orthogonal TRS has the normal form property w.r.t.\ strong
  $\mrs$-convergence. That is, for each term $t$ with $t \mato t_1$
  and $t \mato t_2$, we have $t_1 = t_2$, whenever $t_1,t_2$ are
  normal forms.
\end{theorem}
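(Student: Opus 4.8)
The statement is quoted from Kennaway et al.~\cite{kennaway95ic}, and the plan is to sketch the standard argument, which derives the normal form property from infinitary \emph{confluence modulo hypercollapsing subterms}. Call a term \emph{hypercollapsing} if, from every proper prefix of every reduction out of it, there is a continuation that eventually contracts a collapsing redex at the root again; equivalently, a term each of whose reducts reduces to one whose root symbol is erased by a collapsing-rule step. Let $\equiv$ denote the least congruence on $\iterms$ that identifies any two hypercollapsing terms.

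The plan has three steps. First, I would invoke the main theorem of~\cite{kennaway95ic}: every orthogonal TRS is confluent modulo $\equiv$ with respect to strong $\mrs$-convergence, i.e.\ whenever $t \mato t_1$ and $t \mato t_2$ there are $s_1,s_2$ with $t_1 \mato s_1$, $t_2 \mato s_2$ and $s_1 \equiv s_2$. Second, I would observe that a normal form $u$ contains no redex, hence no hypercollapsing subterm (a hypercollapsing term is in particular reducible), so that the only strongly $\mrs$-convergent reduction starting from $u$ is the empty one converging back to $u$, and moreover $u \equiv v$ with $u,v$ both normal forms forces $u = v$, because $\equiv$ can only relate terms that differ at positions lying inside hypercollapsing subterms, of which a normal form has none. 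Third, I would combine the two: given $t \mato t_1$ and $t \mato t_2$ with $t_1,t_2$ normal forms, confluence modulo $\equiv$ yields $s_1,s_2$ with $t_i \mato s_i$ and $s_1 \equiv s_2$; since each $t_i$ is a normal form we get $s_i = t_i$, hence $t_1 \equiv t_2$, and then $t_1 = t_2$ by the second step. (What is actually used is only the weaker "unique normal forms" property, but in the infinitary orthogonal setting this is established via the same confluence-modulo result, so it is cleanest to route through it.)

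Steps two and three are routine once the vocabulary is fixed; the substantial work is Step one, and that is where I expect the main obstacle. Proving infinitary confluence modulo $\equiv$ needs the Compression Lemma (every strongly $\mrs$-convergent reduction can be reorganised into one of length at most $\omega$) together with an infinitary version of the Parallel Moves Lemma for orthogonal systems, and, crucially, a careful argument that projecting one transfinite reduction over another can fail to converge only inside subterms that are hypercollapsing — this localisation of divergence is the technical heart and is precisely what makes $\equiv$ (rather than plain equality) necessary, since orthogonal infinitary TRSs are not confluent in general. As the present paper uses this theorem only as an off-the-shelf ingredient for relating term graph reductions to their unravellings, I would present it exactly as above: cite~\cite{kennaway95ic} for Step one and give the short derivation of Steps two and three.
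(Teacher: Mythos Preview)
The paper gives no proof of this theorem at all: it is stated with the citation \cite{kennaway95ic} and used as a black box in the subsequent corollary. Your recognition of this (in your final paragraph) is accurate, and your sketch of the underlying argument from Kennaway et al.\ --- confluence modulo the congruence generated by hypercollapsing subterms, combined with the observation that normal forms contain no hypercollapsing subterms --- is the standard route and is correct. Since the paper itself offers nothing beyond the citation, there is nothing further to compare.
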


We then obtain as a corollary that a term graph has the same normal
forms as its unravelling, provided it has one:
\begin{corollary}
  For every orthogonal GRS $\calR$, we have that two normalising
  reductions $g \mato[\calR] h$ and $\unrav g \mato[\unrav\calR] t$ imply
  that $t = \unrav{h}$.
\end{corollary}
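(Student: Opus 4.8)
The plan is to chain together three already-established facts: the soundness of term graph rewriting w.r.t.\ strong $\mrs$-convergence (Theorem~\ref{thr:mConvSound}), the preservation of normal forms under unravelling (Proposition~\ref{prop:unravNF}), and the normal form property for orthogonal TRSs w.r.t.\ strong $\mrs$-convergence (Theorem~\ref{thr:nfMConv}). No new machinery is needed; the statement is a corollary of these results.

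First I would record that an orthogonal GRS is in particular left-linear, and that, as already observed after the definition of orthogonality, $\unrav\calR$ is an orthogonal TRS whenever $\calR$ is an orthogonal GRS. Since $g \mato[\calR] h$ is a normalising reduction, $h$ is a normal form in $\calR$, so by Proposition~\ref{prop:unravNF} the term $\unrav h$ is a normal form in $\unrav\calR$. Likewise, since $\unrav g \mato[\unrav\calR] t$ is normalising, $t$ is a normal form in $\unrav\calR$.

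Next I would apply Theorem~\ref{thr:mConvSound} to the reduction $g \mato[\calR] h$, using left-linearity of $\calR$, to obtain a strongly $\mrs$-convergent reduction $\unrav g \mato[\unrav\calR] \unrav h$. Now $\unrav g \mato[\unrav\calR] \unrav h$ and $\unrav g \mato[\unrav\calR] t$ are both strongly $\mrs$-convergent reductions in the orthogonal TRS $\unrav\calR$ that end in normal forms. Theorem~\ref{thr:nfMConv} then yields $t = \unrav h$, which is exactly the claim.

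I do not expect any real obstacle here: the only points requiring care are that ``normalising'' is being used in the sense of ``reducing to a normal form'', and that the hypotheses of the three cited results (left-linearity for Theorem~\ref{thr:mConvSound} and Proposition~\ref{prop:unravNF}, orthogonality for Theorem~\ref{thr:nfMConv}) are all subsumed by the orthogonality assumption on $\calR$.
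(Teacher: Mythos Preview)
Your proposal is correct and follows essentially the same approach as the paper's proof: apply Theorem~\ref{thr:mConvSound} to obtain $\unrav g \mato[\unrav\calR] \unrav h$, use Proposition~\ref{prop:unravNF} to see that $\unrav h$ is a normal form, and then invoke Theorem~\ref{thr:nfMConv}. The only difference is that you spell out a few hypotheses (left-linearity from orthogonality, orthogonality of $\unrav\calR$, $t$ being a normal form) that the paper leaves implicit.
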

\begin{proof}
  According to Theorem~\ref{thr:mConvSound}, $g \mato[\calR] h$
  implies $\unrav{g} \mato[\unrav\calR] \unrav{h}$. Note that, by
  Proposition~\ref{prop:unravNF}, $\unrav{h}$ is a normal form in
  $\unrav{\calR}$. Hence, the reduction $\unrav{g} \mato[\unrav\calR]
  \unrav{h}$ together with $\unrav g \mato[\unrav\calR] t$ implies $t =
  \unrav{h}$ according to Theorem~\ref{thr:nfMConv}.
\end{proof}

We conjecture that this can be generalised such that infinitary term
graph rewriting is complete w.r.t.\ normalising reductions. That is,
if $\unrav g \mato[\unrav\calR] t$ with $t$ a normal form of
$\unrav\calR$, then there is a term graph $h$ with $\unrav h =t$ and
$g \mato[\calR] h$.

\subsection{Strong $\prs$-Convergence}
\label{sec:common-partial-order}

In this section, we replicate the results that we have obtained in the
preceding section for the case of strong $\prs$-convergence. Since
$\prs$-convergence is a conservative extension of $\mrs$-convergence,
cf.\ Theorems~\ref{thr:strongExt} and \ref{thr:graphExt}, this will in
fact generalise the soundness and completeness results for infinitary
term graph rewriting.

At first we derive a characterisation of the partial order $\lebotg$
on terms:
\begin{lemma}
  \label{lem:poTrees}
  Given two terms $s,t \in \ipterms$, we have $s \lebotg t$ iff
  $s(\pi) = t(\pi)$ for all $\pi \in \pos{s}$ with $g(\pi) \in \Sigma$.
\end{lemma}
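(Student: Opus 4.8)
The plan is to obtain this as a direct specialisation of the labelled‑quotient‑tree characterisation of $\lebotg$ that is already available, namely Corollary~\ref{cor:chaTgraphPoA}, using only the extra fact that term trees carry trivial sharing. First I would note the bookkeeping: by the identification of terms with canonical term trees made at the end of Section~\ref{sec:canon-term-graphs}, the partial terms $s$ and $t$ are elements of $\ipctgraphs$, so Corollary~\ref{cor:chaTgraphPoA} applies verbatim. It states that $s \lebotg t$ holds iff (a) $\pi \sim_s \pi' \implies \pi \sim_t \pi'$ for all $\pi,\pi' \in \pos{s}$, and (b) $s(\pi) = t(\pi)$ for all $\pi \in \pos{s}$ with $s(\pi) \in \Sigma$.

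Next I would dispose of condition (a). Since $s$, being a term, is a term tree, Fact~\ref{fact:labelledTree} tells us that $\sim_s$ is the identity relation on $\pos{s}$; hence the premise $\pi \sim_s \pi'$ of (a) forces $\pi = \pi'$, and then the conclusion $\pi \sim_t \pi'$ follows from reflexivity of $\sim_t$. Thus (a) holds automatically, irrespective of $t$, and the equivalence in Corollary~\ref{cor:chaTgraphPoA} collapses to: $s \lebotg t$ iff (b) holds. Observing that "$s(\pi) \in \Sigma$'' is exactly "$s(\pi) \neq \bot$'', this is precisely the claimed characterisation. (I would also silently read the $g(\pi)$ appearing in the statement as $s(\pi)$, which is evidently the intended expression.)

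There is no real obstacle here: the lemma is just Corollary~\ref{cor:chaTgraphPoA} with the smaller graph restricted to be a tree, and the argument is two lines once that corollary and Fact~\ref{fact:labelledTree} are in hand. The only point deserving a moment's care is making the membership $s,t \in \ipctgraphs$ explicit so that the corollary is legitimately invoked; everything else is immediate.
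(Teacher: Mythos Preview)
Your argument is correct and matches the paper's own proof, which simply cites Corollary~\ref{cor:chaTgraphPoA} and Fact~\ref{fact:labelledTree} as making the statement immediate. Your extra remarks about the typo $g(\pi)$ versus $s(\pi)$ and the explicit verification that condition~(a) is vacuous for term trees are helpful elaborations of exactly what the paper leaves implicit.
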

\begin{proof}
  Immediate from Corollary~\ref{cor:chaTgraphPoA} and Fact~\ref{fact:labelledTree}.
\end{proof}
This shows that the partial order $\lebotg$ on term graphs generalises
the partial order on terms.

From this we easily obtain that the partial order $\lebotg$ as well as
its induced limits are preserved by unravelling:
\begin{theorem}
  \label{thr:poUnravel}
  In the partially ordered set $(\ipctgraphs,\lebotg)$ the following
  holds:
  \begin{enumerate}[(i)]
  \item Given two term graphs $g,h$, we have that $g \lebotg h$
    implies $\unrav{g} \lebotg \unrav{h}$.
  \item For each directed set $G$, we have
    that $\unrav{\Lub_{g\in G} g} = \Lub_{g\in G} \unrav g$.
  \item For each non-empty set $G$, we have that $\unrav{\Glb_{g\in G}
      g} = \Glb_{g\in G} \unrav g$.
  \item For each sequence $(g_\iota)_{\iota < \alpha}$, we have that
    $\unrav{\liminf_{\iota\limto\alpha}g_\iota} =
    \liminf_{\iota\limto\alpha}\unrav{g_\iota}$.
  \end{enumerate}
\end{theorem}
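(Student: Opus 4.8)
The plan is to reduce all four claims to the explicit labelled-quotient-tree descriptions of $\lebotg$ and of its lubs, glbs, and limits inferior, combined with Proposition~\ref{prop:unravTree}, which says that the unravelling of a canonical term graph $g$ is exactly the labelled tree $(\pos{g}, g(\cdot))$ — that is, unravelling retains the positions and the labelling and merely discards the sharing relation $\sim_g$. For part (i), I would start from $g \lebotg h$ and invoke Corollary~\ref{cor:chaTgraphPoA}, which in particular gives $g(\pi) = h(\pi)$ for all $\pi \in \pos{g}$ with $g(\pi) \in \Sigma$. By Proposition~\ref{prop:unravTree} this says precisely that $\unrav{g}(\pi) = \unrav{h}(\pi)$ for all $\pi \in \pos{\unrav{g}}$ with $\unrav{g}(\pi) \in \Sigma$, which by Lemma~\ref{lem:poTrees} is exactly $\unrav{g} \lebotg \unrav{h}$.

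For parts (ii)--(iv) I would exploit a single observation: in each of the relevant constructions --- Theorem~\ref{thr:lebot1cpo} for the lub of a directed set, Proposition~\ref{prop:lebot1glb} for the glb of a non-empty set, and Corollary~\ref{cor:lebot1Liminf} for the limit inferior of a sequence --- the position set $P$ and the labelling $l$ of the result are defined purely in terms of the position sets and labellings of the constituent term graphs; only the sharing relation $\sim$ of the result mentions the $\sim$'s of the constituents. Since unravelling keeps exactly the positions and the labelling, I would substitute $\pos{\unrav{g_\iota}} = \pos{g_\iota}$ and $\unrav{g_\iota}(\cdot) = g_\iota(\cdot)$ into those formulas and read off that the labelled tree underlying $\Lub_{g\in G}\unrav{g}$ (resp.\ $\Glb_{g\in G}\unrav{g}$, resp.\ $\liminf_{\iota\limto\alpha}\unrav{g_\iota}$), obtained by applying the same theorem to the unravellings, coincides with the labelled tree $(\pos{\Lub G},(\Lub G)(\cdot))$ underlying $\unrav{\Lub G}$ (resp.\ the analogue for $\Glb$ and $\liminf$). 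Two hypotheses must be verified before applying those theorems on the right-hand sides: for (ii), that $\{\unrav{g} : g \in G\}$ is again directed, which follows from part (i); for (iii) and (iv) only non-emptiness is needed, which is inherited. I would also confirm that the lub/glb/liminf of a family of term trees, computed in $(\ipctgraphs,\lebotg)$, is again a term tree so that the stated equalities make sense: by Fact~\ref{fact:labelledTree} each $\sim_{\unrav{g_\iota}}$ is the identity, and a union or an intersection of identity relations is again the identity, so the result is a term tree by Fact~\ref{fact:labelledTree} once more.

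I do not expect a genuine obstacle here; the essential content is simply that unravelling is ``forget the sharing relation'', which makes it commute automatically with any limit construction whose position set and labelling are sharing-independent. The only real work is bookkeeping: lining up the three different labelled-quotient-tree formulas on both sides and checking carefully that their position-set and labelling clauses never refer to $\sim$, so the trickiest point is purely a matter of transcription rather than of mathematics.
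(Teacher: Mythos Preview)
Your proposal is correct and follows essentially the same route as the paper: part~(i) is proved identically via Corollary~\ref{cor:chaTgraphPoA}, Proposition~\ref{prop:unravTree}, and Lemma~\ref{lem:poTrees}, and parts~(ii)--(iii) are reduced to the explicit labelled-quotient-tree formulas in Theorem~\ref{thr:lebot1cpo} and Proposition~\ref{prop:lebot1glb} via Proposition~\ref{prop:unravTree}, with your additional care about directedness and the result being a term tree filling in details the paper leaves implicit. The only minor deviation is that for~(iv) you appeal directly to Corollary~\ref{cor:lebot1Liminf}, whereas the paper simply notes that~(iv) follows from~(ii) and~(iii); both are immediate.
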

\begin{proof}
  (i) By Corollary~\ref{cor:chaTgraphPoA}, $g \lebotg h$ implies that
  $g(\pi) = h(\pi)$ for all $\pi \in \pos{g}$ with $g(\pi) \in
  \Sigma$. By Proposition~\ref{prop:unravTree}, we then have $\unrav
  g(\pi) = \unrav h(\pi)$ for all $\pi \in \pos{\unrav g}$ with
  $\unrav g(\pi) \in \Sigma$ which, by Lemma~\ref{lem:poTrees},
  implies $\unrav g \lebotg \unrav h$. 
  
  By a similar argument (ii) and (iii) follow from the
  characterisation of least upper bounds and greatest lower bounds in
  Theorem~\ref{thr:lebot1cpo} resp.\ Proposition~\ref{prop:lebot1glb}
  by using Proposition~\ref{prop:unravTree}.

  (iv) Follows from (ii) and (iii).
\end{proof}

In order to proof the soundness w.r.t.\ strong $\prs$-convergence we
need a stronger variant of Theorem~\ref{thr:graphStepTermConv} that
does not only make a statement about the depth of the redexes
contracted in the term reduction, but also the corresponding reduction
contexts. 

\begin{theorem}
  \label{thr:graphStepTermCxt}
  Let $\calR$ be a left-linear GRS with a reduction step $g \to[c]
  h$. Then there is a non-empty reduction $S = (t_\iota \to[c_\iota]
  t_{\iota+1})_{\iota<\alpha}$ with $S\fcolon \unrav{g}
  \pato[\unrav{\calR}] \unrav{h}$ such that $\unrav{c} =
  \Glb_{\iota<\alpha} c_\iota$.
\end{theorem}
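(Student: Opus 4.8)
The plan is to describe both $\unrav{c}$ and the glb $\Glb_{\iota<\alpha}c_\iota$ as labelled trees and show that $\unrav{c}$ is at once a lower bound of the reduction contexts $c_\iota$ and an upper bound of every lower bound of them. Throughout, write the given step as $g \to[n,\rho,n'] h$, so that $c = \canon{\truncl{g}{n}}$ and hence, by iso-invariance of unravelling, $\unrav{c} = \unrav{\truncl{g}{n}}$. Combining Lemma~\ref{lem:locTrunc} with Proposition~\ref{prop:unravTree} gives the concrete description I will use: $\unrav{c}$ is the labelled tree whose positions are exactly the $\pi\in\pos{g}$ having \emph{no proper prefix} in $\nodePos{g}{n}$, labelled by $\unrav{g}(\pi)$ when $\pi\nin\nodePosMin{g}{n}$ and by $\bot$ when $\pi\in\nodePosMin{g}{n}$; equivalently, $\unrav{c}$ is $\unrav{g}$ with $\bot$ substituted at every minimal occurrence of $n$.

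Next I would fix $S$. If the $\rho$-redex is not circular, Theorem~\ref{thr:graphStepTermConv} supplies $S\fcolon\unrav{g}\mato[\unrav{\calR}]\unrav{h}$ which is a \emph{complete development} of the occurrence set $\nodePos{g}{n}$ in $\unrav{g}$; since $n$ is reachable from $r^g$, this set is non-empty, so $S$ is non-empty. If the redex is circular, Proposition~\ref{prop:circId} gives $h\isom g$, hence $\unrav{h}=\unrav{g}$, and Lemma~\ref{lem:circUnravRed} lets me build $S$ by contracting $\unrav{\rho}$ once at each position of $\nodePosMin{g}{n}$; these positions form an antichain of $\pos{\unrav{g}}$, and since $\Sigma$ has finite arities there are only finitely many positions of any given length, so I can enumerate $\nodePosMin{g}{n}$ by increasing length and make the redex depths tend to infinity. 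In both cases every term occurring in $S$ is total, so by Theorem~\ref{thr:strongExt} $S$ is also a strongly $\prs$-convergent reduction $\unrav{g}\pato[\unrav{\calR}]\unrav{h}$, as required. Let $(c_\iota)_{\iota<\alpha}$ be its reduction contexts, so $c_\iota$ is $t_\iota$ with $\bot$ placed at the contracted position $\pi_\iota$ and the subtree below removed.

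The two structural facts I would isolate are: (a) every contracted position $\pi_\iota$ has some $v\in\nodePosMin{g}{n}$ as a prefix — in the non-circular case this is the standard fact that residuals of a redex set $V$ always occur at positions having an element of $V$, hence the minimal such (which lies in $\nodePosMin{g}{n}$), as a prefix; in the circular case it is immediate by construction — and (b) each $v\in\nodePosMin{g}{n}$ is actually contracted at some step $\iota_0$ (non-circular: a complete development contracts every element of $\nodePos{g}{n}$, and minimal occurrences, being an antichain, survive as residuals of themselves until contracted; circular: by construction). From (a), a position $\pi$ with no prefix in $\nodePosMin{g}{n}$ is never touched by $S$, so $c_\iota(\pi) = t_\iota(\pi) = \unrav{g}(\pi)$ for all $\iota$; together with the description of $\unrav{c}$ above and Lemma~\ref{lem:poTrees}, this yields $\unrav{c}\lebotg c_\iota$ for every $\iota$. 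For the converse, let $d\lebotg c_\iota$ for all $\iota$; by Lemma~\ref{lem:poTrees} it suffices to show $d(\pi) = \unrav{c}(\pi)$ whenever $d(\pi)\in\Sigma$. Such a $\pi$ is a position of every $c_\iota$ with $c_\iota(\pi)=d(\pi)\in\Sigma$; by (b) it can have no prefix in $\nodePosMin{g}{n}$, since at the step $\iota_0$ contracting such a prefix $v$ we would have $c_{\iota_0}(v)=\bot$ if $v=\pi$, and $\pi\nin\pos{c_{\iota_0}}$ if $v<\pi$ — either way a contradiction. Hence $\pi$ is a position of $\unrav{c}$ with $\unrav{c}(\pi)=\unrav{g}(\pi)=c_\iota(\pi)=d(\pi)$, so $d\lebotg\unrav{c}$. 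Therefore $\unrav{c}=\Glb_{\iota<\alpha}c_\iota$.

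The main obstacle is fact~(a) in the non-circular case: it rests on the residual theory of (possibly transfinite) complete developments for left-linear systems. I would either regard it as already packaged in Theorem~\ref{thr:graphStepTermConv}, or reprove it by induction on the length of $S$, tracking how the residual positions of $\nodePos{g}{n}$ evolve under a single step. Everything else — the labelled-tree reading of $\unrav{c}$, the enumeration argument for the circular case, and the two $\lebotg$-comparisons via Lemma~\ref{lem:poTrees} — is routine. (Alternatively, one could compute $\Glb_{\iota<\alpha}c_\iota$ directly from its labelled-quotient-tree description in Proposition~\ref{prop:lebot1glb} and match it against $\unrav{c}$; the lower-bound/greatest-lower-bound route above is simply more economical.)
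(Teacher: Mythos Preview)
Your proposal is correct and follows essentially the same approach as the paper: both invoke Theorem~\ref{thr:graphStepTermConv} for the non-circular case and Lemma~\ref{lem:circUnravRed} for the circular case, describe $\unrav{c}$ via Lemma~\ref{lem:locTrunc} and Proposition~\ref{prop:unravTree} as $\unrav{g}$ with $\bot$ at the minimal positions $\nodePosMin{g}{n}$, and then use your facts~(a) and~(b) (which the paper states more tersely) to pin down the glb. The only cosmetic difference is that the paper computes $\Glb_{\iota<\alpha} c_\iota$ by first exhibiting $\unrav{c} = \Glb_{\pi\in\nodePosMin{g}{n}} c_{\iota(\pi)}$ via Proposition~\ref{prop:lebot1glb} and then combining with the lower-bound inequality, whereas you argue directly that any common lower bound $d$ satisfies $d \lebotg \unrav{c}$; both routes are equivalent and rest on the same two structural facts.
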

\begin{proof}
  By Theorem~\ref{thr:graphStepTermConv}, there is a reduction
  $S\fcolon \unrav{g} \mato[\unrav{\calR}] \unrav{h}$. At first we
  assume that the redex $\subgraph{g}{n}$ contracted in $g \to[n] h$
  is not a circular redex. Hence $S$, is complete development of the
  set of redex occurrences $\nodePos{g}{n}$ in $\unrav g$. By
  Theorem~\ref{thr:strongExt}, we then obtain $S\fcolon \unrav{g}
  \pato[\unrav{\calR}] \unrav{h}$. From Lemma~\ref{lem:locTrunc} and
  Proposition~\ref{prop:unravTree} it follows that
  $\unrav{\truncl{g}{n}}$ is obtained from $\unrav{g}$ by replacing
  each subterm of $\unrav{g}$ at a position in $\nodePosMin{g}{n}$,
  i.e.\ a minimal position of $n$, by $\bot$. Since each step $t_\iota
  \to[\pi_\iota] t_{\iota+1}$ in $S$ contracts a redex at a position
  $\pi_\iota$ that has a prefix in $\nodePosMin{g}{n}$, we have
  $\unrav{\truncl{g}{n}} \lebotg \substAtPos{t_\iota}{\pi_\iota}{\bot}
  = c_\iota$. Moreover, for each $\pi\in \nodePosMin{g}{n}$ there is a
  step at $\iota(\pi) < \alpha$ in $S$ that takes place at $\pi$. From
  Proposition~\ref{prop:lebot1glb}, it is thus clear that
  $\unrav{\truncl{g}{n}} = \Glb_{\pi \in P} c_{\iota(\pi)}$. Together
  with $\unrav{\truncl{g}{n}} \lebotg c_\iota$ for all $\iota <
  \alpha$, this yields $\unrav{\truncl{g}{n}} = \Glb_{\iota < \alpha}
  c_{\iota}$. Then $\unrav{c} = \Glb_{\iota < \alpha} c_{\iota}$
  follows from the fact that $c \isom \truncl{g}{n}$.

  If the $\rho$-redex $\subgraph{g}{n}$ contracted in $g \to[\rho,n]
  h$ is a circular redex, then $g = h$ according to
  Proposition~\ref{prop:circId}. However, by
  Lemma~\ref{lem:circUnravRed}, each $\unrav{\rho}$-redex at positions
  in $\nodePos{g}{n}$ in $\unrav g$ reduces to itself as well. Hence,
  we get a reduction $\unrav{g} \pato[\unrav{\rho}] \unrav{h}$ via a
  complete development of the redexes at the minimal positions
  $\nodePosMin{g}{n}$. The equality $\unrav{c} = \Glb_{\iota < \alpha}
  c_{\iota}$ then follows as for the first case above.
\end{proof}

Before we prove the soundness of strongly $\prs$-converging term graph
reductions, we show the following technical lemma:

\begin{lemma}
  \label{lem:liminfSegm}
  Let $(a_\iota)_{\iota<\alpha}$ be a sequence in a complete
  semilattice $(A,\le)$ and $(\gamma_\iota)_{\iota<\delta}$ a strictly
  monotone sequence in the ordinal $\alpha$ such that
  $\Lub_{\iota<\delta}\gamma_\iota = \alpha$. Then
  \[
  \liminf_{\iota\limto\alpha} a_\iota =
  \liminf_{\beta\limto\delta}\left(\Glb_{\gamma_\beta \le \iota <
    \gamma_{\beta+1}} a_\iota\right).
  \]
\end{lemma}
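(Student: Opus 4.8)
The statement says that chopping the sequence $(a_\iota)_{\iota<\alpha}$ into consecutive blocks $[\gamma_\beta,\gamma_{\beta+1})$, taking the glb of each block, and then taking the limit inferior of the resulting $\delta$-sequence gives back the original limit inferior. Write $b_\beta = \Glb_{\gamma_\beta \le \iota < \gamma_{\beta+1}} a_\iota$ for $\beta < \delta$. Unfolding the definition, $\liminf_{\iota\limto\alpha} a_\iota = \Lub_{\mu<\alpha}\Glb_{\mu\le\iota<\alpha} a_\iota$ and $\liminf_{\beta\limto\delta} b_\beta = \Lub_{\nu<\delta}\Glb_{\nu\le\beta<\delta} b_\beta$. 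Abbreviate $d_\mu = \Glb_{\mu\le\iota<\alpha} a_\iota$ and $e_\nu = \Glb_{\nu\le\beta<\delta} b_\beta$; I must show $\Lub_{\mu<\alpha} d_\mu = \Lub_{\nu<\delta} e_\nu$. The plan is to prove this by establishing that the two directed families $\{d_\mu\}_{\mu<\alpha}$ and $\{e_\nu\}_{\nu<\delta}$ are mutually cofinal under $\le$, i.e.\ each $d_\mu$ lies below some $e_\nu$ and vice versa; antisymmetry of $\le$ then finishes it, exactly as in the proof of Proposition~\ref{prop:liminfSuffix}.

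The first key observation is that for any $\nu<\delta$ we have $e_\nu = \Glb_{\nu\le\beta<\delta} b_\beta = \Glb_{\nu\le\beta<\delta}\Glb_{\gamma_\beta\le\iota<\gamma_{\beta+1}} a_\iota = \Glb_{\gamma_\nu\le\iota<\alpha} a_\iota = d_{\gamma_\nu}$, using associativity of glbs and the facts that the intervals $[\gamma_\beta,\gamma_{\beta+1})$ for $\beta\ge\nu$ partition $[\gamma_\nu,\alpha)$ — which holds precisely because $(\gamma_\iota)_{\iota<\delta}$ is strictly monotone with $\Lub_{\iota<\delta}\gamma_\iota=\alpha$ (so the intervals are pairwise disjoint, cover $[\gamma_\nu,\alpha)$, and there is no "tail" left over). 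Thus the family $\{e_\nu\}_{\nu<\delta}$ is literally the subfamily $\{d_{\gamma_\nu}\}_{\nu<\delta}$ of $\{d_\mu\}_{\mu<\alpha}$. Since $\delta\neq 0$ would be needed only to keep both lubs over non-empty index sets; if $\alpha=0$ both sides are the glb/lub of the empty family handled by the complete-semilattice structure, and I will note $\alpha=0$ forces the statement trivially (or assume $\alpha>0$ as the interesting case, matching how $\liminf$ is used in the paper).

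Given the identity $e_\nu = d_{\gamma_\nu}$, one inclusion is immediate: $\{d_{\gamma_\nu}\}_{\nu<\delta}\subseteq\{d_\mu\}_{\mu<\alpha}$, so $\Lub_{\nu<\delta} e_\nu \le \Lub_{\mu<\alpha} d_\mu$. For the reverse, I use that $(d_\mu)_{\mu<\alpha}$ is monotone increasing in $\mu$ (if $\mu\le\mu'$ then $[\mu',\alpha)\subseteq[\mu,\alpha)$, hence $d_\mu\le d_{\mu'}$), together with cofinality of $(\gamma_\nu)_{\nu<\delta}$ in $\alpha$: for each $\mu<\alpha$ there is $\nu<\delta$ with $\mu\le\gamma_\nu$, and then $d_\mu\le d_{\gamma_\nu}=e_\nu\le\Lub_{\nu'<\delta}e_{\nu'}$. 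Taking the lub over $\mu$ gives $\Lub_{\mu<\alpha} d_\mu \le \Lub_{\nu<\delta} e_\nu$. By antisymmetry the two lubs coincide, which is the claim.

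I do not anticipate a serious obstacle; the only point requiring care is the bookkeeping that $\Lub_{\iota<\delta}\gamma_\iota = \alpha$ really does mean the blocks exhaust $[\gamma_\nu,\alpha)$ with nothing left over — in particular that there is no maximal $\gamma_\beta$ with $\gamma_\beta<\alpha$ and no successor block reaching up to $\alpha$ (this is automatic from "strictly monotone with supremum $\alpha$", but should be spelled out, splitting on whether $\delta$ is a successor or limit ordinal). The appeal to associativity/commutativity of arbitrary glbs in a complete semilattice is standard and I will cite the semilattice structure from Section~\ref{sec:partial-orders} rather than reprove it.
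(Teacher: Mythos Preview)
Your approach mirrors the paper's: both hinge on the identity $e_\nu = d_{\gamma_\nu}$ (the paper's displayed equation $(*)$) and then use monotonicity of $(d_\mu)$ together with cofinality of $(\gamma_\nu)$ in $\alpha$. However, there is a genuine gap exactly where you flag ``bookkeeping'': the intervals $[\gamma_\beta,\gamma_{\beta+1})$ for $\beta\ge\nu$ need \emph{not} cover $[\gamma_\nu,\alpha)$. Strict monotonicity with supremum $\alpha$ does not rule out jumps at limit ordinals \emph{inside} $\delta$: if $\lambda<\delta$ is a limit and $\Lub_{\iota<\lambda}\gamma_\iota < \gamma_\lambda$, then every ordinal in $[\Lub_{\iota<\lambda}\gamma_\iota,\gamma_\lambda)$ lies in no block. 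The case split you propose (on whether $\delta$ itself is a successor or a limit) does not address this. In fact the lemma as stated is false: take $A=\{0,1\}$, $\alpha=\delta=\omega^2$, $\gamma_{\omega k+n}=\omega\cdot 2k+n$, and set $a_\iota=1$ on the block positions $\omega\cdot 2k+n$ but $a_\iota=0$ on the gaps $[\omega(2k-1),\omega\cdot 2k)$; then every $b_\beta=1$, so the block-liminf is $1$, yet every tail of $(a_\iota)$ meets a gap, so $\liminf_{\iota\limto\alpha} a_\iota=0$.

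The paper's own proof contains the identical flaw (the sentence ``we find for each $\gamma_\beta\le\gamma<\alpha$ some $\beta\le\beta'<\delta$ such that $\gamma_{\beta'}\le\gamma<\gamma_{\beta'+1}$'' is precisely the unjustified covering claim). The correct fix is to add the hypothesis that $(\gamma_\iota)$ is continuous at limits, i.e.\ $\gamma_\lambda=\Lub_{\iota<\lambda}\gamma_\iota$ for every limit $\lambda<\delta$; this holds in the lemma's only application (Theorem~\ref{thr:pConvSound}), where the $\gamma_\iota$ are lengths of prefixes of a transfinite concatenation and hence continuous by construction. With that extra hypothesis your argument, and the paper's, goes through verbatim.
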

\begin{proof}
  At first we show that
  \begin{equation}
    \Glb_{\beta\le\beta'<\delta}\Glb_{\gamma_{\beta'}\le \iota <
      \gamma_{\beta'+1}} a_\iota = \Glb_{\gamma_\beta \le \iota <
      \alpha} a_\iota\qquad\text{for all } \beta<\delta
    \tag{$*$}
    \label{eq:liminfSegm}
  \end{equation}
  by using the antisymmetry of the partial order $\le$ on $A$.
  
  Since for all $\beta \le \beta' < \delta$, we have that
  $\Glb_{\gamma_{\beta'}\le \iota<\gamma_{\beta'+1}} a_\iota \ge
  \Glb_{\gamma_\beta \le \iota < \alpha} a_\iota$, we obtain that
  $\Glb_{\beta\le\beta'<\delta}\Glb_{\gamma_{\beta'}\le \iota<\gamma_{\beta'+1}} a_\iota \ge
  \Glb_{\gamma_\beta \le \iota < \alpha} a_\iota$
  
  On the other hand, since $(\gamma_\iota)_{\iota<\delta}$ is strictly
  monotone and $\Lub_{\iota<\delta} \gamma_\iota = \alpha$, we find
  for each $\gamma_\beta \le \gamma < \alpha$ some $\beta\le \beta' <
  \delta$ such that $\gamma_{\beta'} \le \gamma < \gamma_{\beta'+1}$
  and, thus, $\Glb_{\gamma_{\beta'}\le \iota < \gamma_{\beta'+1}}
  a_\iota \le a_\gamma$. Consequently, we obtain that
  $\Glb_{\beta\le\beta'<\delta}\Glb_{\gamma_{\beta'}\le \iota <
    \gamma_{\beta'+1}} a_\iota \le \Glb_{\gamma_\beta\le\iota<\alpha}
  a_{\iota}$.
  
  With the thus obtained equation \eqref{eq:liminfSegm}, it remains to
  be shown that $\Lub_{\beta<\alpha}\Glb_{\beta\le\iota<\alpha}
  a_\iota= \Lub_{\beta'<\delta}\Glb_{\gamma_\beta'\le\iota<\alpha}
  a_\iota$. Again, we use the antisymmetry of $\le$.

  Since $\Glb_{\iota< \delta}\gamma_\iota = \alpha$, we find for each
  $\beta < \alpha$ some $\beta' < \delta$ with $\gamma_{\beta'} \ge
  \beta$. Consequently, we have for each $\beta < \alpha$ some $\beta'
  < \delta$ with $\Glb_{\beta\le\iota<\alpha} a_\iota \le
  \Glb_{\gamma_\beta'\le\iota<\alpha} a_\iota$. Hence
  $\Lub_{\beta<\alpha}\Glb_{\beta\le\iota<\alpha} a_\iota \le
  \Lub_{\beta'<\delta}\Glb_{\gamma_\beta'\le\iota<\alpha} a_\iota$.

  On the other hand, since for each $\beta' < \delta$ there is a
  $\beta < \alpha$ (namely $\beta = \gamma_\beta$) with
  $\Glb_{\beta\le\iota<\alpha} a_\iota =
  \Glb_{\gamma_\beta'\le\iota<\alpha} a_\iota$, we also have
  $\Lub_{\beta<\alpha}\Glb_{\beta\le\iota<\alpha} a_\iota \ge
  \Lub_{\beta'<\delta}\Glb_{\gamma_\beta'\le\iota<\alpha} a_\iota$.
\end{proof}
\begin{theorem}
  \label{thr:pConvSound}
  Let $\calR$ be a left-linear GRS. If $g \pato[\calR] h$, then
  $\unrav{g} \pato[\unrav{\calR}] \unrav{h}$.
\end{theorem}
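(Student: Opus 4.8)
The plan is to mirror the proof of Theorem~\ref{thr:mConvSound}, replacing the metric machinery by its partial-order counterparts. Let $S = (g_\iota \to[c_\iota] g_{\iota+1})_{\iota<\alpha}$ be a reduction with $S\fcolon g_0 \pato[\calR] g_\alpha$ and $g_\alpha = h$. For each $\iota < \alpha$ I would apply Theorem~\ref{thr:graphStepTermCxt} to the step $g_\iota \to[c_\iota] g_{\iota+1}$ to obtain a non-empty reduction $T_\iota = (t^\iota_\xi \to[c^\iota_\xi] t^\iota_{\xi+1})_{\xi<\alpha_\iota}$ with $T_\iota\fcolon \unrav{g_\iota} \pato[\unrav\calR] \unrav{g_{\iota+1}}$ and $\unrav{c_\iota} = \Glb_{\xi<\alpha_\iota} c^\iota_\xi$. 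Then, exactly as in the metric proof, set $U_\delta = \Concat_{\iota<\delta} T_\iota$ for each $\delta \le \alpha$ and prove by transfinite induction on $\delta$ that $U_\delta\fcolon \unrav{g_0} \pato[\unrav\calR] \unrav{g_\delta}$; instantiating $\delta = \alpha$ yields the theorem.

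The base case $\delta = 0$ is trivial. For $\delta = \delta'+1$ the induction hypothesis gives $U_{\delta'}\fcolon \unrav{g_0}\pato[\unrav\calR]\unrav{g_{\delta'}}$ and $T_{\delta'}\fcolon\unrav{g_{\delta'}}\pato[\unrav\calR]\unrav{g_\delta}$, and here I would invoke the routine closure property that a concatenation of two strongly $\prs$-convergent reductions is again strongly $\prs$-convergent to the final term; this follows from Propositions~\ref{prop:contConv}, \ref{prop:liminfSuffix} and \ref{prop:strPContConv} (by an inner induction on $\len{T_{\delta'}}$, using Proposition~\ref{prop:liminfSuffix} to discard the prefix $U_{\delta'}$ when $T_{\delta'}$ is open) and is implicit in the abstract framework of \cite{bahr10rta}. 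For $\delta$ a limit ordinal, note first that since every $T_\iota$ is non-empty the sequence $\gamma_\iota = \len{U_\iota}$ is strictly monotone with $\Lub_{\iota<\delta}\gamma_\iota = \len{U_\delta}$ (by the recursive definition of $\Concat$ at limit stages). Every proper prefix $\prefix{U_\delta}{\mu}$ has $\mu < \gamma_{\delta'}$ for some $\delta' < \delta$ and hence coincides with a prefix of $U_{\delta'}$, which strongly $\prs$-converges by the induction hypothesis; so by Proposition~\ref{prop:contConv}, $U_\delta$ is strongly $\prs$-continuous, and by Proposition~\ref{prop:strPContConv} it strongly $\prs$-converges to $t = \liminf_{\mu\limto\gamma_\delta} d_\mu$, where $(d_\mu)_{\mu<\gamma_\delta}$ is the sequence of reduction contexts of $U_\delta$.

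It remains to identify $t$ with $\unrav{g_\delta}$. The contexts of $U_\delta$ restricted to the $\iota$-th segment $\gamma_\iota \le \mu < \gamma_{\iota+1}$ are precisely $c^\iota_0, c^\iota_1, \dots$, so $\Glb_{\gamma_\iota \le \mu < \gamma_{\iota+1}} d_\mu = \Glb_{\xi<\alpha_\iota} c^\iota_\xi = \unrav{c_\iota}$ by Theorem~\ref{thr:graphStepTermCxt}. Applying Lemma~\ref{lem:liminfSegm} with the segmentation $(\gamma_\iota)_{\iota<\delta}$ gives
\[
t = \liminf_{\mu\limto\gamma_\delta} d_\mu = \liminf_{\iota\limto\delta}\Bigl(\Glb_{\gamma_\iota \le \mu < \gamma_{\iota+1}} d_\mu\Bigr) = \liminf_{\iota\limto\delta}\unrav{c_\iota}.
\]
By Theorem~\ref{thr:poUnravel}(iv), $\liminf_{\iota\limto\delta}\unrav{c_\iota} = \unrav{\liminf_{\iota\limto\delta} c_\iota}$, and since $S$ is strongly $\prs$-continuous at $\delta$ (resp.\ strongly $\prs$-converges to $g_\alpha$ when $\delta = \alpha$ is a limit), $\liminf_{\iota\limto\delta} c_\iota = g_\delta$. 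Hence $t = \unrav{g_\delta}$, closing the induction.

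The bookkeeping with $\Concat$ and the successor-case closure property are routine; the part I expect to need the most care is the limit case — arranging the segmentation so that Lemma~\ref{lem:liminfSegm} applies verbatim, matching each segment's greatest lower bound with $\unrav{c_\iota}$ via Theorem~\ref{thr:graphStepTermCxt}, and ensuring that taking $\liminf$ commutes with $\unrav{\cdot}$ via Theorem~\ref{thr:poUnravel}. As a byproduct, combined with Theorems~\ref{thr:strongExt} and \ref{thr:graphExt} this also re-derives the soundness result of the previous subsection for strong $\mrs$-convergence.
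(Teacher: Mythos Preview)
Your proposal is correct and follows essentially the same route as the paper's proof: both apply Theorem~\ref{thr:graphStepTermCxt} to each step, concatenate the resulting reductions, and in the limit case use Lemma~\ref{lem:liminfSegm} together with Theorem~\ref{thr:poUnravel}(iv) to identify the limit inferior of the concatenated contexts with $\unrav{g_\delta}$. You are slightly more explicit about the non-emptiness of the $T_\iota$ ensuring that the segmentation sequence $(\gamma_\iota)$ is strictly monotone, which is indeed needed for Lemma~\ref{lem:liminfSegm}.
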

\begin{proof}
  Let $S = (g_\iota \to[c_\iota] g_{\iota+1})_{\iota<\alpha}$ be a
  reduction strongly $\prs$-converging to $g_\alpha$ in $\calR$, i.e.\
  $S\fcolon g_0 \pato[\calR] g_\alpha$. According to
  Theorem~\ref{thr:graphStepTermCxt}, there is for each $\gamma <
  \alpha$ a strongly $\prs$-converging reduction $T_\gamma \fcolon
  \unrav{g_\gamma} \pato[\unrav{\calR}] \unrav{g_{\gamma+1}}$ such
  that
  \[
  \text{$\Glb_{\iota<\len{T_\gamma}}c'_\iota = \unrav{c_\gamma}$ for
    $(c'_\iota)_{\iota<\len{T_\gamma}}$ the reduction contexts in
    $T_\gamma$.}\tag{$*$}
  \label{eq:cxt}
  \]
  Define for each $\delta \le \alpha$ the concatenation $U_\delta =
  \Concat_{\iota<\delta} T_\iota$. We will show that $U_\delta\fcolon
  \unrav{g_0} \pato[\unrav{\calR}] \unrav{g_\delta}$ for each $\delta
  \le \alpha$ by induction on $\delta$. The theorem is then obtained
  for the case $\delta = \alpha$.

  The case $\delta = 0$ is trivial. If $\delta = \delta' + 1$, then we
  have by induction hypothesis that $U_{\delta'}\fcolon \unrav{g_0}
  \pato[\unrav{\calR}] \unrav{g_{\delta'}}$. Since $T_{\delta'}\fcolon
  \unrav{g_{\delta'}} \pato[\unrav{\calR}] \unrav{g_{\delta}}$, and
  $U_\delta = U_{\delta'} \concat T_{\delta'}$, we have
  $U_\delta\fcolon \unrav{g_0} \pato[\unrav{\calR}] \unrav{g_\delta}$.

  For the case that $\delta$ is a limit ordinal, let $U_\delta =
  (t_\iota \to[c'_\iota] t_{\iota+1})_{\iota<\beta}$. For each $\gamma
  < \beta$ we find some $\delta' < \delta$ with
  $\prefix{U_\delta}{\gamma} < U_{\delta'}$. By induction hypothesis,
  we can assume that $U_{\delta'}$ is strongly
  $\prs$-continuous. According to Proposition~\ref{prop:contConv},
  this means that the proper prefix $\prefix{U_\delta}{\gamma}$
  strongly $\prs$-converges to $t_\gamma$. This shows that each proper
  prefix $\prefix{U_\delta}{\gamma}$ of $U_\delta$ strongly
  $\prs$-converges to $t_\gamma$. Hence, by
  Proposition~\ref{prop:contConv}, $U_\delta$ is strongly
  $\prs$-continuous.

  In order to show that $U_\delta\fcolon \unrav{g_0}
  \pato[\unrav{\calR}] \unrav{g_\delta}$, it remains to be shown that
  $\liminf_{\iota \limto \beta} c'_\iota = \unrav{g_\delta}$. Since
  $S$ is strongly $\prs$-converging, we know that
  $\liminf_{\iota\limto\delta} c_\iota = g_\delta$. By
  Theorem~\ref{thr:poUnravel}, we thus have
  $\liminf_{\iota\limto\delta} \unrav{c_\iota} = \unrav{g_\delta}$. By
  \eqref{eq:cxt} and the construction of $U_\delta$, the sequence of
  reduction contexts $(c'_\iota)_{\iota<\beta}$ consists of segments
  whose glb is the unravelling of a corresponding reduction context
  $c_\gamma$. More precisely, there is a strictly monotone sequence
  $(\gamma_\iota)_{\iota<\delta}$ with $\gamma_0=0$ and
  $\Lub_{\iota<\delta}\gamma_\iota=\beta$ such that $\unrav{c_\iota} =
  \Glb_{\gamma_\iota \le \gamma < \gamma_{\iota+1}} c'_\gamma$ for all
  $\iota<\delta$. Thus, we can complete the proof as follows:
  \[
  \unrav{g_\delta} = \liminf_{\iota\limto\delta} \unrav{c_\iota} =
  \liminf_{\iota\limto\delta} \Glb_{\gamma_\iota \le \gamma <
    \gamma_{\iota+1}} c'_\gamma
  \stackrel{\text{Lem.~\ref{lem:liminfSegm}}}= \liminf_{\iota \limto
    \beta} c'_\iota
  \]
\end{proof}

Note that the counterexample from Example~\ref{ex:complCounterEx} is
not applicable to strong $\prs$-convergence. Since in the considered
system every term graph resp.\ every term is a redex, we can reduce
every term graph resp.\ every term to $\bot$ by an infinite strongly
$\prs$-converging reduction. We therefore conjecture that the weak
completeness property does hold for strongly $\prs$-convergent term
graph reductions. That is, for every $\unrav{g} \pato[\unrav\calR] t$
there is some $h$ with $\unrav h = t$ such that $g \pato[\calR] h$.

However, we can use the confluence of strongly-$\prs$-converging term
reductions in order to obtain a weak form of completeness for
normalising reductions.
\begin{theorem}[\cite{bahr10rta2}]
  \label{thr:pConvCR}
  Every orthogonal term rewriting system is confluent w.r.t.\ strong
  $\prs$-convergence. That is, $t \pato t_1$ and $t \pato t_2$ implies
  $t_1 \pato t'$ and $t_2 \pato t'$.
\end{theorem}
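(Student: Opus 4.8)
The plan is to obtain this infinitary confluence the same way one obtains finitary confluence of orthogonal systems --- from a \emph{strip lemma} by a tiling argument --- but now with transfinite rows and columns, the partial-order model being exactly what makes the limit steps go through. A useful preliminary is a \emph{compression lemma} for strong $\prs$-convergence: any reduction $t \pato s$ of length $\alpha$ can be replaced by one of length at most $\omega$. Left-linearity is the key here: a step performed deep in a long reduction depends only on a finite prefix of the term, so such steps can be permuted towards the front and the whole reduction collapsed to length $\le\omega$. With compression available it suffices to prove confluence for reductions of length $\le\omega$.

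The core is an infinitary parallel-moves (strip) lemma: if $t \to t'$ is a single step contracting a redex at position $\pi$ and $U\colon t \pato u$ is strongly $\prs$-convergent, then there is a term $w$ with $t' \pato w$ and $u \pato w$. I would prove it by \emph{projecting} the two reductions past each other. Projecting the step over $U$ amounts to performing, in $u$, the complete development of the residuals of $\pi$; projecting $U$ over the step $t\to t'$ yields a reduction $U'\colon t'\pato w$ whose steps are the residuals of the steps of $U$. Orthogonality guarantees that residuals are well defined (no overlaps) and that \emph{finite} developments terminate; for the transfinite complete development of the --- possibly infinitely many, pairwise parallel --- residuals of $\pi$ one argues that it is strongly $\prs$-\emph{continuous}, since contracting a parallel family of residuals behaves like one (possibly infinite) parallel step whose reduction contexts lie below both endpoints. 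As $(\ipterms,\lebot)$ is a complete semilattice, such a continuous reduction converges, and its limit is the required $w$; the symmetric computation gives $t'\pato w$.

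For full confluence, suppose $S\colon t \pato t_1$ and $T\colon t\pato t_2$, with $S$ of length $\le\omega$ and steps $(t_\iota \to t_{\iota+1})$. I would build a matrix of term reductions whose zeroth column is $T$ and zeroth row is $S$, filling each successor square by the strip lemma and each limit row and limit column by taking limits inferior. That the limit rows and columns are again strongly $\prs$-continuous follows from Proposition~\ref{prop:contConv} (continuity is just convergence of all proper prefixes) together with completeness of the semilattice, so that no reduction in the matrix can diverge, while Proposition~\ref{prop:liminfSuffix} keeps the bookkeeping about suffixes consistent. The last column then witnesses $t_1 \pato t'$ and the last row $t_2 \pato t'$, where $t'$ is the corner term of the matrix.

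The genuinely hard steps are (i) the compression lemma for strong $\prs$-convergence --- permuting deep steps to the front and checking that the truncated reduction still $\prs$-converges to the same partial term --- and (ii) in the strip lemma and in the diagonal of the matrix, verifying that the transfinite complete developments are strongly $\prs$-\emph{continuous}, i.e.\ that their sequences of reduction contexts have exactly the expected limit inferior; this is where orthogonality and left-linearity really bite. A subtlety that makes this harder than the metric case is that a collapsing rule can move a residual arbitrarily close to the root over a transfinite development, so one cannot control anything by the depth of redexes alone and must reason at the level of reduction contexts --- which is precisely why the analogous statement for strong $\mrs$-convergence fails (cf.\ Theorem~\ref{thr:strongExt}) and the partial-order model is needed.
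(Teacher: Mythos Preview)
The paper does not prove this theorem at all: it is stated with the citation \cite{bahr10rta2} and used as a black box (to derive the subsequent corollary about normalising reductions). There is therefore no proof in the present paper to compare your proposal against.

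Your outline is broadly the right shape for how the result is established in the cited work: a compression lemma reducing to reductions of length at most $\omega$, an infinitary strip/parallel-moves lemma based on residuals and complete developments in orthogonal systems, and a transfinite tiling argument whose limit rows and columns exist because $(\ipterms,\lebot)$ is a complete semilattice. You also correctly identify the crucial point that distinguishes the $\prs$-setting from the $\mrs$-setting, namely that collapsing rules can push residuals towards the root so that depth-based control fails, whereas reasoning with reduction contexts and limits inferior succeeds. One caution: the sentence ``no reduction in the matrix can diverge'' is doing a lot of work --- in the tiling, the delicate part is not merely that each column/row $\prs$-converges (that is automatic from the complete semilattice), but that the limits along the diagonal \emph{cohere}, i.e.\ that the bottom-right corner computed along rows agrees with the one computed along columns. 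That coherence argument is where most of the technical effort in \cite{bahr10rta2} lies, and your sketch gestures at it without spelling it out.
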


\begin{corollary}
  For every orthogonal GRS $\calR$, we have that two normalising
  reductions $g \pato[\calR] g'$ and $t \pato[\unrav\calR] t'$ imply
  that $t' = \unrav{g'}$.
\end{corollary}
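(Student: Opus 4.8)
The plan is to obtain this as a direct consequence of the soundness result for strong $\prs$-convergence (Theorem~\ref{thr:pConvSound}) together with the confluence of orthogonal TRSs w.r.t.\ strong $\prs$-convergence (Theorem~\ref{thr:pConvCR}); here $t$ is to be read as $\unrav{g}$, so that the second reduction is $\unrav{g} \pato[\unrav\calR] t'$. First I would apply Theorem~\ref{thr:pConvSound} to $g \pato[\calR] g'$ to get $\unrav{g} \pato[\unrav\calR] \unrav{g'}$. Since $\calR$ is orthogonal it is in particular left-linear, so from $g'$ being a normal form of $\calR$ and Proposition~\ref{prop:unravNF} it follows that $\unrav{g'}$ is a normal form of $\unrav\calR$.

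Next I would transfer orthogonality to the term level: as remarked after the definition of orthogonality for GRSs, $\unrav\calR$ is orthogonal whenever $\calR$ is. Applying Theorem~\ref{thr:pConvCR} to the two reductions $\unrav{g} \pato[\unrav\calR] \unrav{g'}$ and $\unrav{g} \pato[\unrav\calR] t'$ then yields a term $u$ with $\unrav{g'} \pato[\unrav\calR] u$ and $t' \pato[\unrav\calR] u$. Finally, since both $\unrav{g'}$ and $t'$ are normal forms, neither admits any reduction step, so the two reductions just produced are the empty reductions and hence $\unrav{g'} = u = t'$, which is the assertion.

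There is essentially no hard step in this argument; the only points requiring a brief remark are (a) the transfer of orthogonality from $\calR$ to $\unrav\calR$, which is needed to invoke Theorem~\ref{thr:pConvCR} and is already noted in the text, and (b) the observation that a strongly $\prs$-convergent reduction starting from a normal form must be trivial (of length $0$), so that the common reduct $u$ coincides with both $\unrav{g'}$ and $t'$. Both are immediate from the definitions, so the proof is just the chaining of Theorem~\ref{thr:pConvSound}, Proposition~\ref{prop:unravNF}, and Theorem~\ref{thr:pConvCR}.
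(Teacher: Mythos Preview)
Your proposal is correct and follows essentially the same route as the paper: apply Theorem~\ref{thr:pConvSound} to obtain $\unrav{g} \pato[\unrav\calR] \unrav{g'}$, use Proposition~\ref{prop:unravNF} to conclude that $\unrav{g'}$ is a normal form, and then invoke Theorem~\ref{thr:pConvCR} (confluence of orthogonal TRSs under strong $\prs$-convergence) to identify the two normal forms. Your explicit remarks on transferring orthogonality to $\unrav\calR$ and on why the common reduct coincides with both normal forms simply make precise what the paper leaves implicit.
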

\begin{proof}
  According to Theorem~\ref{thr:pConvSound}, $g \pato[\calR] g'$
  implies $\unrav{g} \pato[\unrav\calR] \unrav{g'}$. Note that, by
  Proposition~\ref{prop:unravNF}, $\unrav{g'}$ is a normal form in
  $\unrav{\calR}$. Hence, the reduction $\unrav{g} \pato[\unrav\calR]
  \unrav{g'}$ together with $t \mato[\unrav\calR] t'$ implies $t' =
  \unrav{g'}$ according to Theorem~\ref{thr:pConvCR}.
\end{proof}

\section{Discussion}
\label{sec:discussion}

The main contribution of this paper is the establishment of an
appropriate calculus of infinitary term graph rewriting. We have shown
that strong $\mrs$-convergence as well as its conservative extension
in the form of strong $\prs$-convergence provide an adequate
theoretical underpinning of such a calculus. The simplicity of the
underlying metric resp.\ partial order structure of term graphs
contrasts the intricate structures that we have proposed in our
previous work~\cite{bahr11rta}. There, we have been focused
exclusively on weak convergence and the peculiar properties of weak
convergence made it necessary to carefully define the underlying
structures to be quite rigid. As a consequence, a number of
intuitively converging term graph reductions do not converge in that
calculus. An example is the reduction illustrated in
Figure~\ref{fig:fixedPointCombC}, which in the rigid calculus does not
$\mrs$-converge at all and $\prs$-converges only to the partial term
graph $\nn n \bot\,(n\,(n\,(\dots)))$. In the calculus that we have
presented in this paper, this term graph reduction strongly $\mrs$-
and thus $\prs$-converges to the term graph $\nn n
f\,(n\,(n\,(\dots)))$ depicted in Figure~\ref{fig:fixedPointCombC}.

The new approach that we have presented in this paper -- built upon
simple generalisations of the metric resp.\ the partial order on terms
to term graphs -- is less rigid and captures an intuitive notion of
convergence in the form of strong convergence. We have argued for its
appropriateness by independently developing two modes of convergence
-- $\mrs$- and $\prs$-convergence -- and showing that both yield the
same limits when restricted to total term graphs. Moreover, we have
shown the adequacy of our infinitary calculus by establishing its
soundness w.r.t.\ infinitary term rewriting.

We have also made the first steps towards a completeness result by
showing that normalising reductions of term graph rewriting systems
and their corresponding term rewriting systems are equivalent modulo
unravelling. We conjecture that this can be extended to a full
completeness property of normalising reductions.

\end{document}